\newtheorem{theorem}{Theorem}
\newtheorem{lemma}{Lemma}
\newtheorem{fact}{Fact}
\newtheorem{definition}{Definition}
\newtheorem{proposition}{Proposition}
\newcommand{\PAP}{\textsc{PAP}\xspace}
\newcommand{\tecpc}{\textsc{$2$ECPC}\xspace}
\newcommand{\tpp}{\textsc{TPP}\xspace}
\newcommand{\OPT}{\textsc{OPT}\xspace}
\newcommand{\opt}{{\rm opt}}
\newcommand{\credits}{{\rm credits}\xspace}
\newcommand{\RED}{\textsc{RED}\xspace}
\newcommand{\red}{{\rm red}\xspace}
\newcommand{\ALG}{\textsc{ALG}\xspace}
\newcommand{\eps}{\varepsilon}
\newcommand{\apxrn}{\ensuremath{1.9412}}
\newcommand{\apxr}{\ensuremath{1.9412}}
\newcommand{\apxrf}{\ensuremath{1.9955}}
\newcommand{\cost}{{\rm cost}}
\newcommand{\mcP}{\ensuremath{\mathcal{P}}\xspace}
\newcommand{\mcp}{\ensuremath{\mathcal{P}}\xspace}
\title{Improved Approximation Algorithms for Path and Forest Augmentation via a Novel Relaxation}
\author{
Felix Hommelsheim \\ University of Bremen \\ \texttt{fhommels@uni-bremen.de}
}
\date{}
\begin{document}
\pagenumbering{gobble}
\maketitle

\begin{abstract}
The Forest Augmentation Problem (FAP) asks for a minimum set of additional edges (links) that make a given forest 2-edge-connected while spanning all vertices. 
A key special case is the Path Augmentation Problem (PAP), where the input forest consists of vertex-disjoint paths.
Grandoni, Jabal Ameli, and Traub [STOC'22] recently broke the long-standing 2-approximation barrier for FAP, achieving a 1.9973-approximation. A crucial component of this result was their 1.9913-approximation for PAP; the first better-than-2 approximation for PAP.

In this work, we improve these results and provide a \apxr-approximation for PAP, which implies a \apxrf-approximation for FAP.
One of our key innovations is a $(\frac{7}{4} + \varepsilon)$-approximation preserving reduction to so-called structured instances, which simplifies the problem and enables our improved approximation.
Additionally, we introduce a new relaxation inspired by 2-edge covers and analyze it via a corresponding packing problem, where the relationship between the two problems is similar to the relationship between 2-edge covers and 2-matchings. 
Using a factor-revealing LP, we bound the cost of our solution to the packing problem w.r.t.\ the relaxation and derive a strong initial solution.
We then transform this solution into a feasible PAP solution, combining techniques from FAP and related connectivity augmentation problems, along with new insights. 
A key aspect of our approach is leveraging the properties of structured PAP instances to achieve our final approximation guarantee.
Our reduction framework and relaxation may be of independent interest in future work on connectivity augmentation problems.

\end{abstract}

\clearpage

\tableofcontents

\newpage

\pagenumbering{arabic}

\section{Introduction}

Designing reliable networks that can withstand failures is a fundamental challenge in real-world applications. 
The field of \emph{survivable network design} (SND) aims to construct cost-efficient networks that maintain connectivity even when some edges fail. 
Many natural SND problems are \NP-hard (even \APX-hard), justifying the study of approximation algorithms.
A landmark result in this area is Jain's 2-approximation algorithm~\cite{jain2001factor}, which applies to a broad class of connectivity problems. 
However, despite extensive research, breaking the 2-approximation barrier remains a major open question, even for many special cases of SND.

In this paper, we focus on two fundamental network augmentation problems: The Forest Augmentation Problem (FAP) and the Path Augmentation Problem (PAP). 
In the Forest Augmentation Problem we are given an undirected graph $G = (V, F \cup L)$, where $F$ is a forest and $L \subseteq \binom{V}{2}$ is a set of additional edges (links) and the goal is to find a minimum-cardinality subset $S \subseteq L$ such that $(V, F \cup S)$ is 2-edge-connected\footnote{A graph is $k$-edge-connected (kEC) if it remains connected after the removal of an arbitrary subset of $k-1$ edges.}. 
The assumption that $F$ is a forest is no restriction, as one can obtain an equivalent instance by contracting each 2-edge-connected component of $F$ into a single vertex.
The Path Augmentation Problem (PAP) is a special case of FAP where the given forest is composed of vertex-disjoint paths. 
We formally denote an instance of PAP by $G=(V, E(\mathcal{P}) \cup L)$, where $\mathcal{P}$ is the set of vertex-disjoint paths and $E(\mathcal{P})$ denotes its corresponding edge set.
The goal is to find a minimum-cardinality subset $S \subseteq L$ such that $(V, E(\mcP) \cup S)$ is 2-edge-connected. 
Both problems are \APX-hard~\cite{CL99,F98}, ruling out the possibility of a Polynomial-Time Approximation Scheme (PTAS), unless $\P = \NP$. 
Until recently, no polynomial-time algorithm achieving a better-than-2 approximation was known for either problem.

Grandoni, Jabal Ameli, and Traub~\cite{grandoni2022breaching} were the first to break this barrier for PAP and FAP, achieving a 1.9913-approximation for PAP\footnote{In~\cite{grandoni2022breaching} a different result is shown with a bi-criteria approximation ratio. However, the result stated here for \PAP is given implicitly in~\cite{grandoni2022breaching}.}. 
This result, combined with another of their techniques, led to a 1.9973-approximation for FAP.

\subsection{Our Contribution}
We present improved approximation algorithms for FAP and PAP.
Our main result is a substantial improvement for PAP.

\begin{theorem}
\label{thm:PAP:main}
    There is a polynomial-time $\apxr$-approximation algorithm for PAP.
\end{theorem}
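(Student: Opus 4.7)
The plan is to combine an approximation-preserving reduction with a novel LP-based relaxation and a packing analogue. First, I would establish that it suffices to approximate \PAP on so-called \emph{structured} instances. The reduction should be $(7/4 + \eps)$-approximation preserving in the sense that any algorithm achieving ratio $\alpha \ge 7/4 + \eps$ on structured instances lifts to an $\alpha$-approximation on arbitrary \PAP instances. The mechanism would be to repeatedly identify local sub-configurations that admit a cheap augmentation (using ideas reminiscent of the known $7/4$-type approximations for Tree Augmentation), resolve them, contract the resolved parts, and certify that what remains satisfies the structural invariants, for example lower bounds on path lengths, exclusions of certain bad link types, or a normal form for shadow/contraction operations.

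On a structured instance I would then introduce the new relaxation inspired by 2-edge covers: rather than insisting on full 2-edge-connectivity of $(V, E(\mcP) \cup S)$, one requires a weaker local condition on how many selected links are incident to each path fragment, in the same way a 2-edge cover weakens a 2-factor. Paired with this, I would define a corresponding \emph{packing} problem whose relation to the relaxation mirrors that between 2-matchings and 2-edge covers. The packing problem should be solvable in polynomial time, e.g.\ via matroid intersection or a combinatorial augmenting-path procedure, yielding an initial link set $S_0$. A factor-revealing LP whose variables encode local contributions of each link configuration would then bound $|S_0|$ against $\opt(\PAP)$ by maximizing over all feasible local patterns, giving an explicit constant $c < \apxr$.

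Finally, I would convert $S_0$ into a feasible \PAP solution by augmenting it with carefully chosen extra links. The toolbox here consists of credit-based arguments, contraction steps, and shadow-completion ideas drawn from prior \FAP and \tpp{}-type work, enriched by new insights that exploit the structural hypotheses on the reduced instance to keep the number of added links controlled. Composing the three contributions --- reduction loss, the LP-bound from the factor-revealing LP, and the completion overhead --- yields the target ratio $\apxr$. The principal obstacle I expect is the completion step: lifting a 2-edge-cover-type object to a genuinely 2-edge-connected spanning subgraph without sacrificing the savings quantified by the LP. This is precisely where the structural-instance assumptions must be fully exploited, and where the interplay between the packing optimum, the factor-revealing LP, and the completion scheme must be made tight enough to outperform the $1.9913$ ratio of \cite{grandoni2022breaching}.
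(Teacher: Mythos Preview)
Your high-level outline mirrors the paper's four-step architecture (reduction to structured instances, relaxation/packing, factor-revealing LP, completion via credits), but there is one concrete gap that would cause your argument to fail as stated.

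You assert that the packing problem ``should be solvable in polynomial time, e.g.\ via matroid intersection or a combinatorial augmenting-path procedure.'' In the paper the packing problem is the Track Packing Problem (\tpp), and it is shown to be \NP-hard (by a reduction from $P_2$-packing), so an exact polynomial-time solution is not available. The paper does \emph{not} solve the packing problem optimally; instead it restricts attention to tracks with $|Q(T)|\le 2$, views these as a set-packing instance with sets of size at most~$4$, and applies the $3/(k+1+\eps)$-approximation of~\cite{furer2014approximating}. Moreover, a single approximate packing is not enough: the paper runs three different algorithms (one of which guesses an auxiliary parameter~$\omega_1^e$, producing up to $|\mcP|$ candidate solutions), and outputs the one minimizing a carefully designed cost function $\cost(H)=|S|+\credits(H)$. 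The factor-revealing LP then bounds this minimum, not the packing value directly, against $\opt$. Without this multi-algorithm/guessing layer the LP does not certify $\apxr$.

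A secondary point: your description of the reduction mechanism (``local sub-configurations that admit a cheap augmentation \ldots\ using ideas reminiscent of the known $7/4$-type approximations for Tree Augmentation'') does not match what the paper actually does. The reduction handles strongly $\alpha$-contractible subgraphs, three specific separator types (path-, $P^2$-, $C^2$-separators), and an excess of degenerate paths via a branching step; none of this borrows from TAP techniques. Also, the reduction incurs an additive $3\eps$ loss in the ratio (an $\alpha$-approximation on structured instances becomes $(\alpha+3\eps)$ on general instances), which must be absorbed by slack in the LP bound; you should account for that.
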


To achieve this improvement, we introduce a new $\smash{(\frac{7}{4} + \varepsilon)}$-approximation preserving reduction to structured instances, which is key to our approach. 
Additionally, we develop a new relaxation of PAP, called 2-Edge Cover with Path Constraints (2ECPC), in which we have to compute a minimum-cardinality set of links $X \subseteq L$ such that $(V, E(\mcP) \cup X)$ is a 2-edge cover with the additional property that each path $P \in \mcP$ has two outgoing links.
Instead of directly obtaining an approximative solution for 2ECPC, we consider a novel packing problem, called the Track Packing Problem (TPP).
TPP is in some sense the corresponding packing problem to the covering problem 2ECPC, where the relationship between the two problems is inspired by the relationship between 2-edge covers and 2-matchings (precise definitions follow).
We compute up to $|\mcP|+2$ many solutions to TPP and analyze their cost using a factor-revealing linear program (LP). 
The resulting approximate solution serves as a suitable starting point for our framework. 
Later, this solution is turned into a feasible solution, highly relying on the fact that the input instance is structured.
Our techniques also combine insights from~\cite{grandoni2022breaching} with methods used in other augmentation problems, refined with new ideas tailored for structured instances of PAP. 
We believe that both the reduction to structured instances and the new relaxation 2ECPC as well as the corresponding packing problem TPP may be of independent interest for other connectivity augmentation problems.

By combining \Cref{thm:PAP:main} with a result from~\cite{grandoni2022breaching}, we also give an improved approximation algorithm for FAP. 

\begin{theorem}
\label{thm:FAP:main}
    There is a polynomial-time \apxrf-approximation algorithm for FAP.
\end{theorem}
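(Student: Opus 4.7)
The plan is to deduce Theorem~\ref{thm:FAP:main} as a direct consequence of our \PAP result by invoking the \FAP-to-\PAP reduction framework of Grandoni, Jabal Ameli, and Traub~\cite{grandoni2022breaching}. In their work, they establish a black-box reduction showing that any polynomial-time $\alpha$-approximation algorithm for \PAP (for $\alpha$ sufficiently close to $2$) yields a $\beta(\alpha)$-approximation for \FAP, where $\beta(\alpha) < 2$ whenever $\alpha < 2$. The function $\beta$ is explicit and arises from combining the \PAP-subroutine with a complementary procedure that handles the ``non-path'' portion of the forest (for instance, via a 2-approximation or a tailored rounding for tree components), aggregated through a convex-combination-type argument.

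My plan would proceed in three short steps. First, I would invoke the reduction of~\cite{grandoni2022breaching}: given an \FAP instance, it produces in polynomial time a decomposition of the forest into a \PAP-subinstance together with a residual part, plus a solution-combining procedure that glues a \PAP solution with a solution on the residual part into a feasible \FAP solution. Second, I would instantiate the reduction by plugging in the algorithm from Theorem~\ref{thm:PAP:main}, setting $\alpha = \apxr$. Third, I would verify that the resulting bound $\beta(\apxr)$ evaluates to at most $\apxrf$, using the explicit form of $\beta$ from the prior work and re-optimizing any free balancing parameters inside the combination for our improved value of $\alpha$.

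The main obstacle here is not conceptual but numerical and bookkeeping-oriented: one must substitute $\alpha = \apxr$ into the (somewhat intricate) expression for $\beta$ coming from the combination of subroutines, re-tune the internal parameters of the reduction so that the convex combination is balanced at the new operating point, and confirm that the resulting guarantee is at most $\apxrf$. No new algorithmic ideas are required beyond Theorem~\ref{thm:PAP:main} and the framework already established in~\cite{grandoni2022breaching}; the improvement in the \FAP ratio is obtained essentially \emph{for free} from our improved \PAP ratio via this black-box combination, with the gap between $\apxr$ and $\apxrf$ reflecting the unavoidable slack introduced by the non-path components of the forest.
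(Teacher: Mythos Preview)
There is a genuine gap in your proposal. The reduction of Grandoni, Jabal Ameli, and Traub is \emph{not} a black box that takes an $\alpha$-approximation for \PAP and outputs a $\beta(\alpha)$-approximation for \FAP. What their framework actually requires is a bi-criteria $(\rho,K)$-approximation for \PAP, namely an algorithm returning at most $\rho\cdot\opt + K\cdot(\opt - n_{\mathrm{comp}})$ links, where $n_{\mathrm{comp}}$ is the number of connected components of the input forest. This is then combined (via \Cref{lem:red-FAP-PAP}) with the TAP-based bound of \Cref{lem:tap}, and one optimizes over $\delta = n_{\mathrm{comp}}/\opt$. A plain $\alpha$-approximation is only a $(\alpha,0)$-approximation; plugging $\alpha=\apxr$ into this pipeline yields a \FAP ratio strictly worse than $\apxrf$ (roughly $1.9996$), so your plan as stated does not reach the claimed bound.

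The paper does \emph{not} obtain \Cref{thm:FAP:main} from \Cref{thm:PAP:main} at all. Instead it goes back inside the algorithm and re-analyzes the starting solution produced by Algorithm~B with a different accounting: using an adaptation of a lemma from~\cite{grandoni2022breaching} bounding the number of lonely leaf vertices by $4(\opt-n_{\mathrm{comp}})$, one can lower the per-leaf credit by $\tfrac14$ and pay for the deficit out of an additive $\opt-n_{\mathrm{comp}}$ budget (\Cref{lem:starting-solution:bound:credits:FAP} and \Cref{lem:starting-solution:bound:credits:FAP2}). This yields a $(\tfrac74+\varepsilon,\,1)$-approximation for \PAP, hence a $(\tfrac74+\varepsilon,\,\tfrac{10}{4}+\varepsilon)$-approximation for \FAP via \Cref{lem:red-FAP-PAP}, and balancing against \Cref{lem:tap} at $\delta\approx 0.902$ gives $\apxrf$. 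So the \FAP improvement is not ``for free'' from the headline \PAP ratio; it needs this separate bi-criteria analysis of the starting solution.
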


\subsection{Previous and Related Work}

FAP generalizes several well-known network design problems, many of which have seen improvements beyond the 2-approximation barrier. One important special case is the Tree Augmentation Problem (TAP), where $F$ is a tree. Over the past decade, a sequence of results~\cite{adjiashvili2018beating, cheriyan2018approximating, cheriyan2018approximating2, cheriyan2008integrality, cohen20131+, even20091, fiorini2018approximating, frederickson1981approximation,  grandoni2018improved, khuller1993approximation,    kortsarz2015simplified, kortsarz2018lp, nagamochi2003approximation, nutov2021tree} has led to a 1.393-approximation by Cecchetto, Traub, and Zenklusen~\cite{cecchetto2021bridging}. 
The weighted version of TAP has also been extensively studied, culminating in a $1.5 + \varepsilon$ approximation by Traub and Zenklusen~\cite{traub2022better, traub2022local, traub2022A}, which even holds for the more general connectivity augmentation problem.
In the connectivity augmentation problem we are given a \mbox{$k$-edge-connected} graph~$G$ and additional links $L$ and the task is to compute a set of links $X \subseteq L$ of minimum size such that $G \cup X$ is $(k+1)$-edge-connected.
Before, the barrier of $2$ for connectivity augmentation was breached by Byrka, Grandoni, and Jabal Ameli~\cite{DBLP:journals/siamcomp/ByrkaGA23}.

Another relevant special case of FAP is the unweighted 2-edge-connected spanning subgraph problem (unweighted 2-ECSS), where $F = \emptyset$.
Several works~\cite{cheriyan2001improving, garg2023improved, hunkenschroder20194,  khuller1994biconnectivity,  kobayashi2023approximation,   sebHo2014shorter} have led to better-than-2 approximations.
Recently, Bosch-Calvo, Grandoni, and Jabal Ameli~\cite{bosch20245} and Garg, Hommelsheim, and Lindermayr~\cite{garg2024two} improved the ratio to $\frac{5}{4}$ and $\frac{5}{4} + \varepsilon$, respectively; see also \cite{2ECSS-merge}.

The Matching Augmentation Problem (MAP) is the special case of PAP in which $F$ is a matching, i.e., each path consists of a single edge. 
Cheriyan, Dippel, Grandoni, Khan, and Narayan introduced this problem and gave a $\frac{7}{4}$-approximation~\cite{cheriyan2020matching}.
Cheriyan, Cummings, Dippel, and Zhu~\cite{cheriyan2023improved} improved this to $\frac{5}{3}$ and Garg, Hommelsheim, and Megow improved this further to $\frac{13}{8}$~\cite{garg2023matching}.
An LP-based approximation algorithm for MAP with worse guarantee yet simpler analysis has been developed by Bamas, Drygala, and Svensson~\cite{DBLP:conf/ipco/BamasDS22}.

Beyond edge connectivity, many of these problems have natural vertex connectivity counterparts, where the objective is to compute a 2-vertex-connected spanning subgraph, ensuring connectivity despite a single vertex failure. 
For the vertex-connected version of unweighted 2-ECSS, several algorithms obtain better-than-2 approximations~\cite{CT00, GVS93, HV17,  khuller1994biconnectivity}. 
The current best-known result is a $\frac{4}{3}$-approximation by Bosch-Calvo, Grandoni, and Jabal Ameli~\cite{BGJ23}. 
Similarly, better-than-2 approximations have been achieved for the vertex-connectivity version of TAP, with recent improvements in~\cite{AHS23,N20waoa}.

\subsection{Organization of the Paper}

The paper is organized as follows. 
In \Cref{sec:overview}, we provide an overview of our approach, outlining the key techniques used to achieve our improved approximation guarantees. 
Sections~\ref{sec:structured}-\ref{sec:gluing} present the four main components of our algorithm in detail. 
Finally, in \Cref{sec:preliminaries:FAP-PAP}, we describe how our results for PAP imply the improved approximation for FAP, i.e., \Cref{thm:FAP:main}.

\subsection{Notation}
\label{sec:preliminaries:notation}

We assume that all instances of optimization problems studied in this paper admit a feasible solution, which can be tested in polynomial time.
Throughout this paper we consider an instance of PAP given by $G=(V, E(\mathcal{P}) \cup L)$,
where $\mcP$ is a collection of disjoint paths, $E(\mcP)$ denotes their edge set, and $L \subseteq \left(\begin{smallmatrix}V \\ 2\end{smallmatrix}\right)$ is a set of additional links. 
We define $E = E(\mathcal{P}) \cup L$ as the set of edges of $G$.
For $\mcp' \subseteq \mcp$ we write $V^*(\mathcal{P}')$ to denote the set of endpoints of all paths in $\mathcal{P}'$, i.e., vertices that have degree 1 w.r.t.\ $E(\mcp')$.
We further define the following subsets of links: $L^* \subseteq L$ is the set of links $uv$ such that $u, v \in V^*(\mcP)$. 
We use $\hat{L} \subseteq L^*$ to denote the set of links of $L^*$ whose endpoints belong to the same path, while $\bar{L} = L^* \setminus \hat{L}$ contains links of $L^*$ that connect different paths.
For a subgraph $H$ of $G$ we write $L(H)$ to denote the set of links contained in $H$.
For $P \in \mcP$ and $u, v \in V(P)$, let $P^{uv}$ be the path from $u$ to $v$ on $P$.
We denote by $\OPT(G)$ some optimal solution of $G$ and by $\opt(G)$ the cost of the optimal solution, i.e., the number of links in $\OPT(G)$. If $G$ is clear from context, we simply write $\OPT$ and $\opt$, respectively.

Further, we use standard graph notation.
Given a set of vertices $U \subseteq V$, we let $G|U$ denote the multi-graph obtained from $G$ by contracting $U$ to a single vertex. 
For a subgraph $H$ of $G$ we simply write $G|H$ instead of $G|V(H)$.
More generally, given a set $\mathcal{T}= \{T_1, ..., T_k \}$ of vertex disjoint sets $T_1, ..., T_k \subseteq V(G)$, $G|\{T_1, ..., T_k\}$ (or $G|\mathcal{T}$ in short) denotes the graph obtained by contracting each $T_i$ into a single vertex.
We say that $X \subseteq V$ is a separator if $G \setminus X$ is disconnected.
For $V' \subseteq V(G)$ we define $N_G(V')$ as the neighborhood of $V'$ in $G$ and $\delta(V')$ refers to the edge set with exactly one endpoint in $V'$.
A graph is $k$-edge-connected if it remains connected after the removal of any subset of edges of size at most $k-1$ and abbreviate it by $k$EC.
A bridge in $G$ is an edge whose removal from $G$ increases the number of connected components.
A block is a maximal 2-edge-connected subgraph containing at least 2 vertices.
A block $B$ is called \emph{simple} if there is some $P \in \mathcal{P}$ such that $V(B) \subseteq V(P)$, it is called \emph{small} if there are two distinct paths $P_1, P_2 \in \mathcal{P}$ such that $V(B) = V(P_1) \cup V(P_2)$, and \emph{large} if there are three distinct paths $P_1, P_2, P_3 \in \mathcal{P}$ such that $V(P_1) \cup V(P_2) \cup V(P_3) \subseteq V(B)$.
A connected component that is 2EC is called a 2EC \emph{component}.
We call 2EC components small, if they contain exactly 2 links and large if they contain at least 3 links.
A subgraph $H$ is a 2-edge cover of $G$ if $|\delta_H(v)| \geq 2$ for each $v \in V(G)$.

\section{Overview of Our Approach}
\label{sec:overview}
\begin{figure}[t]
	\begin{center}
 
\resizebox{.8\linewidth}{!}{
\begin{tikzpicture}
    \node[shape=circle,draw=black, inner sep=3pt, label = left:$a_1$] (A1) at (1,0.5) {};
    \node[shape=circle,draw=black, fill=black, inner sep=2pt, label = left:$a_2$] (A2) at (1,1.5) {};
    \node[shape=circle,draw=black, fill=black, inner sep=2pt, label = below right:$a_3$] (A3) at (1,2.5) {};
    \node[shape=circle,draw=black, inner sep=3pt, label = left:$a_4$] (A4) at (1,3.5) {};

    \node[shape=circle,draw=black, inner sep=3pt, label = below left:$b_1$] (B1) at (3,0) {};
    \node[shape=circle,draw=black, fill=black, inner sep=2pt, label = left:$b_2$] (B2) at (3,1) {};
    \node[shape=circle,draw=black, fill=black, inner sep=2pt, label = right:$b_3$] (B3) at (3,2) {};
    \node[shape=circle,draw=black, inner sep=3pt, label = above left:$b_4$] (B4) at (3,3) {};

    \node[shape=circle,draw=black, inner sep=3pt, label = left:$c_1$] (C1) at (6,1) {};
    \node[shape=circle,draw=black, fill=black, inner sep=2pt, label = below:$c_2$] (C2) at (7.5,1) {};
    \node[shape=circle,draw=black, inner sep=3pt, label = right:$c_3$] (C3) at (9,1) {};
    
    \node[shape=circle,draw=black, inner sep=3pt, label = below:$i_1$] (I1) at (5.5,4) {};
    \node[shape=circle,draw=black, fill=black, inner sep=2pt, label = below right:$i_2$] (I2) at (7,4) {};
    \node[shape=circle,draw=black, inner sep=3pt, label = below left:$i_3$] (I3) at (8.5,4) {};
    
    \node[shape=circle,draw=black, inner sep=3pt, label = above left:$j_1$] (J1) at (5.5,5) {};
    \node[shape=circle,draw=black, fill=black, inner sep=2pt, label = below:$j_2$] (J2) at (7,5) {};
    \node[shape=circle,draw=black, inner sep=3pt, label = above right:$j_3$] (J3) at (8.5,5) {};
    
    \node[shape=circle,draw=black, inner sep=3pt, label = below left:$d_1$] (D1) at (6,0) {};
    \node[shape=circle,draw=black, fill=black, inner sep=2pt, label = above left:$d_2$] (D2) at (7,0) {};
    \node[shape=circle,draw=black, fill=black, inner sep=2pt, label = above right:$d_3$] (D3) at (8,0) {};
    \node[shape=circle,draw=black, inner sep=3pt, label = right:$d_4$] (D4) at (9,0) {};

    \node[shape=circle,draw=black, inner sep=3pt, label = above:$f_1$] (F1) at (12,1) {};
    \node[shape=circle,draw=black, inner sep=3pt, label = above:$f_2$] (F2) at (14,1) {};
    \node[shape=circle,draw=black, inner sep=3pt, label = below:$e_1$] (E1) at (12,0) {};
    \node[shape=circle,draw=black, fill=black, inner sep=2pt, label = below:$e_2$] (E2) at (13,0) {};
    \node[shape=circle,draw=black, inner sep=3pt, label = below:$e_3$] (E3) at (14,0) {};
    
    \node[shape=circle,draw=black, inner sep=3pt, label = above:$g_1$] (G1) at (15,4) {};
    \node[shape=circle,draw=black, inner sep=3pt, label = above:$g_2$] (G2) at (16,4) {};
    \node[shape=circle,draw=black, inner sep=3pt, label = right:$h_1$] (H1) at (16.5,3) {};
    \node[shape=circle,draw=black, fill=black, inner sep=2pt, label = right:$h_2$] (H2) at (16,2) {};
    \node[shape=circle,draw=black, inner sep=3pt, label = right:$h_3$] (H3) at (15.5,1) {};

    \path [dashed] (A1) edge node[left] {} (A2);
    \path [dashed] (A2) edge node[left] {} (A3);
    \path [dashed] (A3) edge node[left] {} (A4);
    
    \path [dashed] (B1) edge node[left] {} (B2);
    \path [dashed] (B2) edge node[left] {} (B3);
    \path [dashed] (B3) edge node[left] {} (B4);
    
    \path [dashed] (C1) edge node[left] {} (C2);
    \path [dashed] (C2) edge node[left] {} (C3);
    
    \path [dashed] (D1) edge node[left] {} (D2);
    \path [dashed] (D2) edge node[left] {} (D3);
    \path [dashed] (D3) edge node[left] {} (D4);
    
    \path [dashed] (E1) edge node[left] {} (E2);
    \path [dashed] (E2) edge node[left] {} (E3);
    
    \path [dashed] (F1) edge node[left] {} (F2);

    \path [dashed] (G1) edge node[left] {} (G2);
    
    \path [dashed] (H1) edge node[left] {} (H2);
    \path [dashed] (H2) edge node[left] {} (H3);
    
    \path [dashed] (I1) edge node[left] {} (I2);
    \path [dashed] (I2) edge node[left] {} (I3);
    
    \path [dashed] (J1) edge node[left] {} (J2);
    \path [dashed] (J2) edge node[left] {} (J3);
    
    \path [-] (A2) edge node[left] {} (B3);
    
    \path[-] (A4)  edge [bend left=45] node {} (A3);
    \path[-] (A4)  edge [bend right=35] node {} (A2);
    
    \path [-] (A1) edge node[left] {} (B1);
    
    \path [-] (A3) edge node[left] {} (B3);
    \path [-] (A3) edge node[left] {} (B4);
    
    \path [-] (C3) edge node[left] {} (D4);
    \path[-] (B1)  edge [bend right=35] node {} (D2);
    \path [-] (B2) edge node[left] {} (D1);
    \path [-] (B4) edge node[left] {} (C1);
    
    \path[-] (C1)  edge [bend left=35] node {} (C3);
    \path[-] (C2)  edge [bend left=35] node {} (F1);
    
    \path[-] (D1)  edge [bend right=35] node {} (E3);
    \path[-] (D3)  edge [bend right=35] node {} (E1);
    
    \path [-] (F1) edge node[left] {} (E1);
    \path [-] (F2) edge node[left] {} (E3);
    
    \path [-] (F1) edge node[left] {} (G1);
    \path [-] (E3) edge node[left] {} (H3);
    
    \path [-] (H2) edge node[left] {} (G1);
    \path [-] (H1) edge node[left] {} (G2);
    
    \path [-] (I1) edge node[left] {} (B4);
    \path [-] (I2) edge node[left] {} (C1);
    \path [-] (I3) edge node[left] {} (C3);
    
    \path[-] (J1)  edge [bend left=35] node {} (J3);
    \path [-] (J1) edge node[left] {} (I1);
    \path [-] (I3) edge node[left] {} (J3);
    \path [-] (I2) edge node[left] {} (J3);
    \path[-] (J2)  edge [bend right=15] node {} (B4);
    \path[-] (J2)  edge [bend right=15] node {} (C1);

\end{tikzpicture}
}
	\end{center}
 \vspace*{-0.5cm}
	\caption{Non-structured instance of PAP. Dashed edges are in $E(\mathcal{P})$, solid edges are links of $L$.
    Filled vertices are interior vertices of paths from $\mcP$, empty vertices are endvertices of paths in $\mcP$.
    The path $P = b_1 b_2 b_3 b_4 \in \mathcal{P}$ is a separator, the cycle $a_2 a_3 a_4 a_2$ is a 1-contractible subgraph, the path $c_1 c_2 c_3 d_4 d_3 d_2 d_1$ is a $P^2$-separator, the cycle $e_1 e_2 e_3 f_2 f_1 e_1$ is a $C^2$-separator, and the path $j_1 j_2 j_3$ is a degenerate path.}
	\label{fig:structured}
\end{figure}
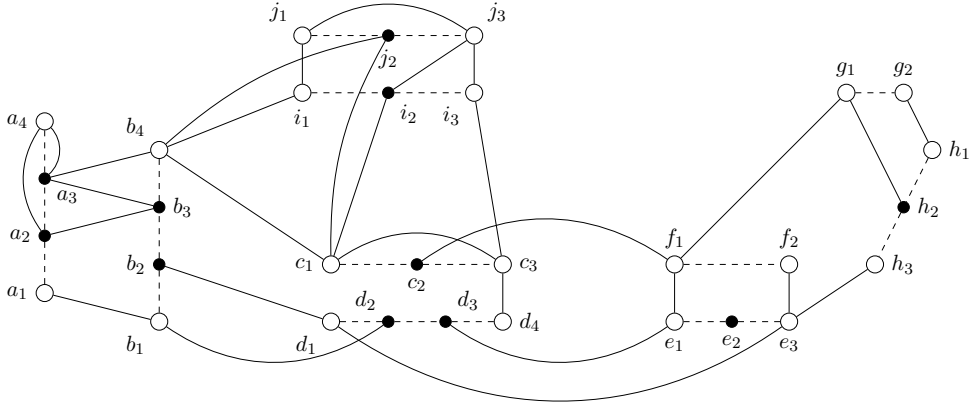

In this section, we provide an overview of our algorithmic approach to prove Theorem~\ref{thm:PAP:main}.
At a high level, our approach is vaguely similar to approaches recently introduced for FAP and PAP~\cite{grandoni2022breaching}, unweighted 2-ECSS~\cite{2ECSS-merge, bosch20245, garg2023improved, garg2024two} and MAP~\cite{cheriyan2020matching,cheriyan2023improved,garg2023matching}.
Our approach has four main steps.

\textbf{1. Reduction to Structured Instances.}
The first step is a reduction to structured instances. 
Specifically, we show that for any $\alpha \geq \frac{7}{4}$ and any small constant $\varepsilon > 0$, if an $\alpha$-approximation exists for structured instances, then a $(\alpha + \varepsilon)$-approximation exists for arbitrary instances. These structured instances possess key properties that enable and simplify our subsequent algorithmic steps. This step has not been utilized in the previous work on FAP and PAP. 

\textbf{2. Computing a Low-Cost Starting Solution.}
The second step includes one of our key innovations. 
Here, we compute a \emph{low-cost} starting solution $S \subseteq L$ such that the graph \mbox{$H = (V, E(\mcP) \cup S)$} is not necessarily feasible but has important structural properties. 
In later steps, we will turn this starting solution into a feasible solution.
To guide this process, we introduce a credit scheme that assigns to any spanning subgraph $H'$ of $G$ credits to vertices, edges, blocks, and components of $H'$. 
The total sum of credits of $H'$ is denoted by $\credits(H')$ and we define $\cost(H') = |H' \cap L| + \credits(H)$, i.e., the number of links in $H'$ plus the total sum of credits assigned to $H'$.
This scheme helps to maintain a cost measure that accounts not only for the number of selected links but also for structural penalties.
Roughly speaking, the number of credits assigned to a solution reflects its distance to a feasible solution.
To compute the starting solution, we introduce a novel relaxation of \PAP, which we use as a lower bound on $\opt$ and which is related to 2-edge covers. 
Instead of computing a solution to the relaxation directly, we consider a closely related packing problem, where the relationship between the packing problem and the relaxation of \PAP is similar to the relationship between 2-matchings and 2-edge covers\footnote{For a graph $G=(V, E)$ a 2-matching $M \subseteq E$ satisfies $\delta_M(v) \leq 2$ for all $v \in V$. A classical result states that one can compute a minimum 2-edge cover by computing a maximum 2-matching and greedily adding edges to each vertex $v \in V$ with $\delta_M(v) < 2$; see~\cite{schrijver2003combinatorial} for more details.}.
Although it turns out that the packing problem and the relaxation are still \NP-hard,
we can compute a solution $S$ to the packing problem such that $H_0=(V, E(\mcP) \cup S)$ satisfies $\cost(H_0) \leq \apxr \cdot \opt$.
In fact, we compute up to $|\mcP| +2$ many solutions $H$ and output the one minimizing $\cost(H)$. 
We analyze the cost of this solution using a factor-revealing LP.
This (most likely infeasible) solution is used as a starting solution.

In the remainder, we step-by-step turn $H_0$ into solutions $H_1, H_2, ...$ that are 'closer to being feasible', while always keeping the invariant that the new solution $H_{i+1}$ satisfies \mbox{$\cost(H_{i+1}) \leq \cost(H_{i})$}. 
Since the credits are non-negative and the starting solution satisfies \mbox{$\cost(H_0) \leq \apxr \cdot \opt$}, eventually we will have a feasible solution using at most $\apxr \cdot \opt$ many links.
There are two reasons why some solution $H$ is not yet feasible: 1) there is a bridge in $H$, and 2) there are several connected components in $H$.
The next two steps deal with these reasons for infeasibility.

\textbf{3. Bridge-Covering.}
The third step, called bridge-covering, ensures that the intermediate solution $H'$ is bridgeless.
Our goal is to add (and sometimes remove) links to (from) $H$ such that the resulting graph $H'$ is bridgeless, while keeping the invariant $\cost(H') \leq \cost(H)$. 
At this stage, the structured nature of~$G$ plays a crucial role, allowing us to argue the existence of cheap augmenting link sets needed to eliminate bridges. 

\textbf{4. Gluing.}
In the fourth and final step, called gluing, we transform $H$ into a fully 2-edge-connected spanning subgraph. 
At this point, $H$ consists of multiple 2-edge-connected components that must be merged. 
We achieve this by iteratively adding well-chosen cycles in the \emph{component graph}, ensuring that each step maintains feasibility and preserves the approximation guarantee. 
The component graph $\Tilde{G}$ arises from $G$ by contracting each component of $H$ to a single vertex.
The structured nature of $G$ again plays a key role in proving that such cycles always exist.

The remainder of this section provides a more detailed overview of these four steps, along with lemmas formalizing each transformation. 
The subsequent subsections are dedicated to the individual steps, leading to the proof of Theorem~\ref{thm:PAP:main}, which follows from the four main lemmas established before.

\subsection{Reduction to Structured Instances}
\label{sec:overview:preprocessing}

Unlike previous algorithms for FAP and PAP~\cite{grandoni2022breaching}, we introduce a reduction to structured instances, which plays a crucial role in our approach. 
Informally, a structured instance is one that avoids certain degenerate configurations that would complicate the design of good approximation algorithms in our approach. 
We formally define structured instances and establish that solving PAP on structured instances suffices to achieve a good approximation algorithm for general instances.
We prove that for all $\alpha \geq \frac{7}{4}$ and any small $\varepsilon > 0$, if there exists an $\alpha$-approximation for PAP on structured instances, then there is an $(\alpha + \varepsilon)$-approximation for arbitrary instances. 
This guarantees that improvements on structured instances directly translate into improvements for the general case.

To define structured instances, we introduce several graph properties, see \Cref{fig:structured}.
First, structured instances do not contain contractible subgraphs.
Roughly speaking, a subgraph is (strongly) $\alpha$-contractible if at least a $\frac{1}{\alpha}$ fraction of its links must appear in any feasible solution, allowing us to safely contract it without a loss in the approximation guarantee. 
We ensure that such subgraphs can be identified and handled efficiently.
Formally, we define $\alpha$-contractible subgraphs as follows.

\begin{restatable}[$\alpha$-contractable subgraph]{definition}{defContractibleSubgraph}
    Let $\alpha \geq 1$ and $t \geq 1$ be fixed constants. 
    Given a $2$-edge-connected graph $G$, a collection $\mathcal{H}$ of vertex-disjoint $2$-edge-connected subgraphs $H_1, H_2, ..., H_k$ of $G$, such that $G|\mathcal{H}$ is also an instance of \PAP, is called \emph{weakly $(\alpha, t , k)$-contractible} if $1 \leq |L(H_i)| \leq t$ for every $i \in [k]$ and there exists some optimal solution that contains at least $\frac{1}{\alpha} |\bigcup_{i \in [k]} L(H_i)|$ links from $\bigcup_{i \in [k]} L(G[V(H_i)])$.
    Furthermore, $\mathcal{H}$ is \emph{strongly} $(\alpha, t , k)$-contractible if additionally every 2EC spanning subgraph of $G$ contains at least $\frac{1}{\alpha} |\bigcup_{i \in [k]} L(H_i)|$ links from $\bigcup_{i \in [k]} L(G[V(H_i)])$.
\end{restatable}

Note that every strongly $(\alpha, t , k)$-contractible subgraph is also a weakly $(\alpha, t , k)$-contractible subgraph.
We say that $\mathcal{H}$ is an $(\alpha, t , k)$-contractible subgraph if it is a weakly or strongly $(\alpha, t , k)$-contractible subgraph.
If $\alpha, t, k$ are clear from the context, we simply refer to $\alpha$-contractible subgraphs or even contractible subgraphs.

Second, structured instances avoid certain small separators. 
Specifically, we prevent the existence of path-separators, $P^2$-separators, and $C^2$-separators. 
If an instance contains such a separator, we decompose it into smaller subinstances, solve each recursively, and merge their solutions while maintaining the desired approximation guarantee.
Formal definitions are given in \Cref{sec:structured}.

A final obstacle to structuring instances is the presence of too many degenerate paths, which are paths $P \in \mcP$ with limited connectivity to the rest of the graph. 
When the fraction of degenerate paths exceeds a threshold $\varepsilon |\mcP|$, we identify certain well-connected subgraphs that allow us to perform controlled contractions and recursive decomposition, ensuring that the structured instance assumption holds.
Formally, degenerate paths are defined as follows.

\begin{restatable}[Degenerate Path]{definition}{defDegeneratePathSeparator}
    Let $P', P \in \mathcal{P}$ be two distinct paths with endpoints $u', v'$ and $ u,  v$, respectively, such that $u' u \in L$, $ v' v \in L$ and $u' v' \in L$. 
    We say that $P'$ is a degenerate path if $N(u'), N(v') \subseteq V(P') \cup V(P)$.
\end{restatable}

Using these definitions, we define structured instances.

\begin{restatable}[($\alpha, \eps)$-structured Instance]{definition}{defStructuredGraph}
\label{def:structured-graph}
We call an instance $G = (V, \mathcal{P} \cup L)$ of PAP $(\alpha, \eps)$-structured if the following properties are satisfied:
\begin{itemize}
    \item[(P0)] \label{prop:structured:size} $ \opt (G) > \frac{3}{\eps}$.  
    \item[(P1)] \label{prop:structured:contractible}
     $G$ does not contain strongly $(\alpha, t, 1)$-contractible subgraphs with $t \leq 10$ many links.
    \item[(P2)] \label{prop:structured:outgoing-1} Every endpoint of a path $P \in \mathcal{P}$ has a link to some other path $P' \in \mathcal{P}$, $P' \neq P$. 
    \item[(P3)] \label{prop:structured:path}
    $G$ does not contain path-separators.
    \item[(P4)] \label{prop:structured:double-path}
    $G$ does not contain $P^2$-separators.
    \item[(P5)] \label{prop:structured:cycle} 
     $G$ does not contain $C^2$-separators.
     \item[(P6)] \label{prop:structured:degenerate-path} 
     $G$ contains at most $\eps \cdot |\mathcal{P}|$ many degenerate paths.
    \item[(P7)] \label{prop:structured:isolatednodes} 
     Every vertex $v \in V$ is on some path $P \in \mathcal{P}$ that contains at least 2 vertices.
\end{itemize}
\end{restatable}
If $\alpha$ and $\varepsilon$ are clear from the context, we simply say $G$ is structured.
Our main lemma in this section is the following.

\begin{restatable}[]{lemma}{lemstructuredmain}
\label{lem:structured:main}
    For all $\alpha \geq \frac{7}{4}$ and constant $\eps \in \big(0, \frac{1}{12}\big]$, if there exists a polynomial-time $\alpha$-approximation algorithm for \PAP on ($\alpha, \eps)$-structured instances, then there exists a polynomial-time $(\alpha + 3 \eps)$-approximation algorithm for \PAP on arbitrary instances.
\end{restatable}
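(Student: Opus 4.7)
The plan is to design a polynomial-time recursive reduction algorithm $\mathcal{A}^*$ that, given an arbitrary PAP instance $G$, tests the properties (P0)--(P7) in a fixed order. Whenever a property is violated, $\mathcal{A}^*$ applies a tailored reduction rule producing one or more smaller instances (measured in, say, $|V(G)|$), solves each recursively, and lifts the solutions back. If none of the properties is violated, then $G$ is $(\alpha,\eps)$-structured and we invoke the hypothesized $\alpha$-approximation $\mathcal{A}$. Termination follows because each non-trivial reduction strictly decreases the size measure. The whole analysis is an induction on this measure proving the invariant $|\mathcal{A}^*(G)| \leq (\alpha + 3\eps)\cdot \opt(G)$.

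The reductions split naturally into three classes. The \emph{ratio-preserving} class covers (P1), (P3), (P4), (P5) and (P7): a strongly $(\alpha,10,1)$-contractible subgraph $H$ is detected by enumerating all $O(|V|^{O(1)})$ candidates of constant size, its at most 10 links are paid for, and $H$ is contracted, losing at most a factor $\alpha$ by definition; a path, $P^2$-, or $C^2$-separator $X$ yields a decomposition of $G$ into pieces $G_1,\ldots,G_r$ attached along $X$, and a disjoint union of (or a small re-connection among) optimal solutions in the recursive calls yields a feasible solution whose cost is at most the sum $\sum_i \opt(G_i) \leq \opt(G) + O(1)$, again preserving the ratio; isolated vertices from (P7) only occur in degenerate slots and are resolved either by infeasibility detection or by trivial contraction. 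The \emph{base case} is (P0): when $\opt(G) \leq 3/\eps$, a feasible solution has constant size, so we enumerate all link subsets of size at most $3/\eps$ in time $|L|^{O(1/\eps)}$ and return the best feasible one, which is exact. Property (P2) can be handled similarly (infeasibility / trivial contraction of a solitary path).

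The \emph{lossy} reduction handles (P6). If more than $\eps|\mcP|$ degenerate paths are present, I plan to eliminate all of them at once: for each degenerate path $P'$ with partner $P$ and the three guaranteed links $u'u$, $v'v$, $u'v'$, I buy at most three suitably chosen links per degenerate $P'$ (incorporating $P'$ safely into a 2-edge-connected substructure with $P$), and then contract the resulting block, yielding a reduced instance $G'$ with $|\mcP(G')| \leq |\mcP(G)| - \sum (\text{degenerate paths eliminated})$. The crucial observation is that $\opt(G) \geq |\mcP(G)|$ because every path endpoint has degree at least two in any feasible solution and each link serves at most two endpoints. Consequently, the extra links bought pay at most $3\eps|\mcP(G)| \leq 3\eps\cdot \opt(G)$. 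Applied \emph{once} on $G$, this charges the $3\eps$ loss to $\opt(G)$ directly; the resulting instance $G'$ is then processed by the ratio-preserving reductions and, eventually, by $\mathcal{A}$ on a structured instance, yielding cost at most $\alpha\cdot \opt(G') \leq \alpha \cdot \opt(G)$.

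The main obstacle is the bookkeeping for the separator-based reductions (P3)--(P5) and ensuring that the $|\mcP|$-counting argument for (P6) remains tight across the recursion. In particular, I must show that the sum $\sum_i |\mcP(G'_i)|$ over the recursion tree never exceeds $|\mcP(G)|$ by more than an additive $O(1/\eps)$ term (absorbed by (P0)), so that a single amortized charge of $3\eps\cdot \opt(G)$ covers all invocations of the (P6) rule. This will follow from the fact that each separator splits paths cleanly and each contraction weakly reduces $|\mcP|$. Assembling the cost inequalities for all three reduction classes then yields $|\mathcal{A}^*(G)| \leq (\alpha + 3\eps)\cdot \opt(G)$, completing the proof.
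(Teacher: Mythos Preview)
Your recursive framework matches the paper's, but your handling of (P6) has a genuine gap. You claim the extra links bought are at most $3\eps|\mcP|$, yet the rule fires precisely when the number $\ell$ of degenerate paths \emph{exceeds} $\eps|\mcP|$, and $\ell$ can be as large as $|\mcP|$. More fundamentally, no single contraction achieves ratio $\alpha$: if you contract the two-link cycle $C_i=P_i'\cup P_i+u_i'u_i+v_i'v_i$ for every $i$ and $\opt$ happens to use only the link $u_i'v_i'$ inside each $C_i$, the overhead over $\alpha\cdot\opt$ is $(2-\alpha)\ell$; if you contract the one-link cycle $K_i=P_i'+u_i'v_i'$ and $\opt$ avoids every $u_i'v_i'$, the overhead is $\ell$. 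The paper's (P6) step is instead \emph{ratio-preserving via branching}: it recurses on both $G|\mathcal{K}$ and $G|\mathcal{C}$ and keeps the cheaper output. If a $\beta$-fraction of the degenerate paths have exactly one $\opt$-link inside $C_i$, then $\opt$ uses at least $(2-\beta)\ell$ links inside $\bigcup C_i$ and at least $\beta\ell$ inside $\bigcup K_i$, so for $\alpha\ge \tfrac{7}{4}$ one of $\mathcal{C}$ or $\mathcal{K}$ is always weakly $\alpha$-contractible. The threshold $\ell\ge\eps|\mcP|$ is used only to keep the two-way branching polynomial (each branch shrinks the number of non-symmetric paths by an $\eps$-fraction), not for the ratio.

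A smaller gap: your separator reductions incur $+O(1)$ gluing links per split, and these accumulate along a recursion tree of potentially linear depth, so ``preserving the ratio'' is not automatic. The paper handles this with the sharper invariant $\red(G)\le\alpha\cdot\opt(G)+3\eps\cdot\opt(G)-3$; the $-3$ from each child cancels the gluing cost at the parent, and the lower bounds $\opt(G_i|Q)\ge 3$ or $4$ baked into the $P^2$/$C^2$-separator definitions ensure that when one side is small and solved exactly, the slack $(\alpha-1)\cdot\opt(G_i)$ still covers the glue. The $3\eps$ in the final ratio therefore arises from this bookkeeping (it buys the $-3$ at the leaves via $\opt>3/\eps$), not from (P6).
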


We give a rough overview on how we obtain \Cref{lem:structured:main}. 
Assume that $G$ is not structured. 
In this overview, we focus only on three obstacles that prevent $G$ from being structured:
1) $G$ contains a contractible subgraph, 2) $G$ contains a forbidden separator, or 3) $G$ contains more than $\varepsilon |\mcP|$ many degenerate paths. 
If $G$ contains one of these subgraphs, we can decompose $G$ into at most 2 smaller subinstances, solve each recursively, and merge their solutions while maintaining the approximation guarantee.
We outline our approach for each of these three cases.

\textbf{Property (P1) is not met.}
We can safely contract each component of an $\alpha$-contractible subgraph $\mathcal{H}$ to a single vertex and work with $G|\mathcal{H}$, as long as we only aim for an $\alpha$-approximation. 
Indeed, any $\alpha$-approximate solution for $G|\mathcal{H}$ can be turned to an $\alpha$-approximate solution for $G$ by adding the links of $\mathcal{H}$.
In general it is not straightforward to argue that a 2EC subgraph is weakly $\alpha$-contractible.
However, under certain conditions we can argue that this is indeed the case.
We will show that we can find strongly $(\alpha, t , 1)$-contractible subgraphs in polynomial time if $t$ is constant, i.e., if there is only one component and there are only a constant number of links in the contractible subgraph. 
Hence, after our preprocessing, we can assume that $G$ does not contain any strongly $(\alpha, t , 1)$-contractible subgraphs for some constant $t$ (here $t = 10$ suffices).

\textbf{Property (P3), (P4), or (P5) is not met.}
Assume that $G$ contains a forbidden separator $X$, resulting in $G \setminus X$ to decompose into two disjoint vertex sets $V_1$ and $V_2$ and there are no edges between $V_1$ and $V_2$.
Here, we construct two sub-instances $G_1 = (G \setminus V_2)|X$ and $G_2 = (G \setminus V_1)|X$ arising from $G$ by removing either $V_1$ or $V_2$ and contracting $X$.
We recursively solve these sub-instances approximately and combine both solutions while preserving the desired approximation guarantee.
By introducing a proper notion of \emph{size} of an instance, one can ensure that the sum of the sizes of the two sub-instances is smaller than the size of the original instance and hence this step runs in polynomial time.

\textbf{Property (P6) is not met.}
One of the main innovations in our reduction to structured graphs compared to previous work deals with degenerate paths.
For degenerate paths, we can not do the same strategy as for separators and instead do the following: 
Let $\mathcal{P}' = \{P_1', ..., P_\ell' \}$ be a set of degenerate paths.
We define two collections $\mathcal{K}$ and $\mathcal{C}$ of $\ell$ vertex-disjoint 2-edge-connected subgraphs, each subsuming $V(\mcP')$.
We show that either $\mathcal{C}$ is weakly $(\alpha, 2, \ell)$-contractible or $\mathcal{K}$ is weakly $(\alpha, 1, \ell)$-contractible for any $\alpha \geq \frac 74$.
Hence, either an $\alpha$-approximate solution for $G|\mathcal{K}$ or an $\alpha$-approximate solution for $G|\mathcal{C}$ can be turned into an $\alpha$-approximate solution for $G$ by adding the edges contained in $\mathcal{K}$ or $\mathcal{C}$, respectively.

Unfortunately, in general this recursive procedure leads to an exponential running time if $|\mcP'|$ is small: We need to recursively solve the two sub-instances $G|\mathcal{K}$ and $G|\mathcal{C}$ that both may have roughly the same size as the original instance. 
Since we might need to apply this reduction again in later steps in our reduction to structured graphs, this procedure can result in an exponential running time. 
However, we circumvent this by only applying the above procedure if some constant fraction~$\varepsilon$ of the paths in~$\mathcal{P}$ is degenerate, and hence each sub-instance is significantly smaller than the original one, which overall results in a polynomial running time.

In light of \Cref{lem:structured:main}, it remains to give a $\apxr$-approximation for \PAP on structured instances.
Hence, for the remaining steps, we assume the instance $G=(V, E(\mathcal{P}) \cup L)$ is structured.

\subsection{Credit Scheme, Relaxation, and Starting Solution}

\label{sec:overview:credit-starting}

Throughout our algorithm, we maintain a temporary solution $S \subseteq L$, which is not necessarily a feasible solution. 
To track progress and guide modifications, we assign certain credits to parts of the current graph $H = (V, E(\mcP) \cup S)$. 
These credits are used to ensure that every transformation preserves an approximation bound while gradually converting $H$ into a feasible solution.
We first define key structural components and introduce a cost function that incorporates both the number of selected links and assigned credits.

Let $H$ be a spanning subgraph of $G$, and let $C$ be a connected component of $H$. 
A component is called \emph{complex} if it contains at least one bridge. 
For any such $C$, we define the graphs $G_C$ and $H_C$ as follows:
$G_C$ is the graph obtained from $G$ by contracting every connected component of $H$ except for $C$, as well as every block of $C$.
$H_C$ is the corresponding contraction of $H$.
The structure of $H_C$ consists of a tree $T_C$ along with singleton vertices. 
A vertex in $G_C$ or $H_C$ that corresponds to a block in $H$ is called a \emph{block vertex}. 
If a block vertex corresponds to a simple block, it is called a \emph{simple block vertex}.
We identify the edges/links in $G$ and the corresponding edges/links in $G^C$ and $H^C$, respectively.

A component $C$ is called \emph{trivial} if it consists of a single path $P \in \mcP$ (i.e., $V(C) = V(P)$). 
Otherwise, it is called \emph{nontrivial}. 
A vertex $v \in V$ is called \emph{lonely} if it does not belong to any 2-edge-connected component or block of $H$. 
For two paths $P, P' \in P$ with endpoints $u,v$ and $u', v'$, respectively, we say that $uu' \in L$ is an \emph{expensive link} if $N_G(v) \subseteq (V(P) \cup V(P')) \setminus \{v'\}$. 
If additionally $uu' \in L(H)$ is a bridge in $H$ and $v$ is a lonely leaf in $H$, then $v$ is called an \emph{expensive leaf vertex}.

For a given $H = (V, E(\mathcal{P}) \cup S)$, we assign credits according to the following rules.

\begin{itemize}
\item[(A)] Every lonely vertex $v$ that is a leaf of some complex component $C$ receives 1 credit.
\begin{itemize}
\item[(i)] It receives an additional $\frac{1}{4}$ credit if $v$ is an expensive leaf vertex.
\item[(ii)] It receives an additional $\frac{1}{4}$ credit if there is a path in $H_C$ from $v$ to a simple block vertex of $C$ without passing through a non-simple block vertex.
\end{itemize}
\item[(B)] Every bridge $\ell \in S$ (i.e., a link that is not part of any 2-edge-connected component of $H$) receives $\frac{3}{4}$ credits.
\item[(C)] Every complex component receives 1 credit.
\item[(D)] Every non-simple 2-edge-connected block receives 1 credit.
\item[(E)] Each 2-edge-connected component receives:
\begin{itemize}
\item 2 credits if it is large or contains a degenerate path.
\item $\frac{3}{2}$ credits if it is small and does not contain a degenerate path.
\end{itemize}
\end{itemize}

For any spanning subgraph $H = (V, E(\mathcal{P}) \cup S)$, we define the total credit as $\credits(H)$, which is the sum of all assigned credits under these rules.
We define $\cost(H) = \credits(H) + |S|$.
To prove the approximation guarantee, we show that we maintain the following credit invariant.
    
\begin{restatable}[]{invariant}{invariantcredit}
    \label{invariant:credit}
    $\cost(H) \leq \apxrn \cdot \opt(G)$.
\end{restatable}

Furthermore, we maintain the following two invariants.

\begin{restatable}[]{invariant}{invariantlonely}
    \label{invariant:degree-lonely-vertex}
    Every lonely vertex of $H$ has degree at most $2$ in $H$. Every simple 2EC block $B$ is part of some complex component $C$ and has degree exactly $2$ in $H^C$. 
    Furthermore, for any complex component $C$, no two simple block vertices of $C$ are adjacent to each other in $H^C$.
\end{restatable}

\begin{restatable}[]{invariant}{invariantblock}
    \label{invariant:block-size}
    For every non-simple 2EC block $B$ of $H$ there are two distinct paths $P, P' \in \mathcal{P}$ such that $V(P) \cup V(P') \subseteq V(B)$.
\end{restatable}

Our goal is to design a polynomial-time algorithm that computes a solution $S \subseteq L$ such that $H= (V, E(\mcp) \cup S)$ satisfies all invariants above. This is formalized in the following lemma.
\begin{restatable}[]{lemma}{lemmamainstarting}
    \label{lem:main:starting-solution-satisfies-invariants}
    In polynomial time we can compute a set of links $S \subseteq L$ such that $H= (V, E(\mcp) \cup S)$ satisfies the Invariants~\ref{invariant:credit}-\ref{invariant:block-size}.
\end{restatable}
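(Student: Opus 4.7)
The plan rests on the new relaxation \tecpc\ and its associated packing problem \tpp. First, I would show that the minimum cost of a \tecpc\ solution of $G$ is at most $\opt(G)$: any feasible \PAP\ solution provides degree at least $2$ at every vertex and at least two links leaving every path, so every \PAP\ solution is \tecpc-feasible. Analogously to how a minimum 2-edge cover is obtained from a maximum 2-matching by greedy augmentation at underfull vertices, I would define \tpp\ so that a maximum \tpp\ solution can be extended to a \tecpc-feasible set $S\subseteq L$ whose size is bounded by (roughly) the number of tracks used plus a small correction; this reduces approximating \tecpc\ to approximating \tpp.

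Since \tpp\ and \tecpc\ are still \NP-hard, the second step is to enumerate up to $|\mathcal{P}|+2$ candidate \tpp\ solutions, each corresponding to a different guess of a structural parameter of $\OPT$ that will appear as the free variable in the factor-revealing LP. For each guess, compute a \tpp\ solution in polynomial time via an appropriate matroid or matching subroutine, extend it to a \tecpc\ set $S$ by greedy augmentation, and form $H = (V, E(\mathcal{P})\cup S)$.

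In a third step, I would post-process each $H$ locally to enforce Invariants~\ref{invariant:degree-lonely-vertex} and~\ref{invariant:block-size}: drop redundant link incidences at lonely vertices until each has degree at most~$2$; absorb singleton-path 2EC blocks into neighboring structure until every non-simple 2EC block spans at least two whole paths of $\mathcal{P}$; and reroute so that no two simple block vertices are adjacent in the contracted graph $H^C$ of any complex component $C$. The structured-instance hypothesis, in particular the absence of strongly $(\alpha,t,1)$-contractible subgraphs for $t\le 10$ from property (P1), ensures that each local correction is cost-nonincreasing under the credit scheme. Output the $H$ of minimum $\cost(H)$ among the $|\mathcal{P}|+2$ candidates.

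Finally, to certify Invariant~\ref{invariant:credit}, I would set up a factor-revealing LP whose variables are per-type densities in the \tpp\ solution (tracks closing into small or large 2EC components, tracks with lonely endpoints, tracks on simple versus non-simple blocks, tracks containing degenerate paths, and so on), whose constraints encode \tpp\ feasibility together with the enumerated guesses, and whose objective is $\max(\credits(H)+|S|)/\opt(G)$. Taking the cheapest of the $|\mathcal{P}|+2$ candidates averages the worst guess out of the objective, pushing the LP optimum down to $\apxr$. The main obstacle is this LP analysis: the credit scheme distinguishes many cases with different per-unit charges ($1$, $\tfrac{3}{4}$, $\tfrac{3}{2}$, $2$, plus $\tfrac{1}{4}$ bonuses for expensive leaf vertices and nearby simple blocks), and making all of them jointly conspire to $\apxr$ requires a careful case split. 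A secondary difficulty is proving cost-nonincreasingness of the post-processing for Invariants~\ref{invariant:degree-lonely-vertex}--\ref{invariant:block-size}: each local modification must balance removed and added links against credit changes for lonely vertices, bridges, blocks, and components, with the structural properties of $G$ providing the lever for the balance.
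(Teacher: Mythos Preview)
Your high-level outline matches the paper's strategy (relate \tecpc\ to \tpp, enumerate a polynomial number of candidate packings, pick the cheapest, and certify the ratio via a factor-revealing LP), but two of the concrete steps you propose diverge from what the paper does and constitute real gaps.

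First, the paper does \emph{not} post-process to enforce Invariants~\ref{invariant:degree-lonely-vertex} and~\ref{invariant:block-size}. Instead it restricts attention to tracks with $|Q(T)|\in\{1,2\}$ (i.e., $1$ or $3$ links), so that any union of disjoint tracks yields a graph $H$ in which every complex component is a path in $H^C$, every block is simple, and no two simple block vertices are adjacent (a size-$3$ track inserts exactly one simple block between two bridges). The invariants then hold \emph{by construction} (Lemma~\ref{lem:starting-solution:bound:credits}). Your proposed local fixes---dropping links at high-degree lonely vertices, ``absorbing'' simple blocks, rerouting to separate simple block vertices---are not obviously cost-nonincreasing under the credit scheme: removing a link at a lonely vertex can turn it into a leaf requiring $\ge 1$ new credit, and merging blocks may require \emph{adding} links. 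Property~(P1) alone does not justify these moves; the paper never needs them.

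Second, your LP plan is too loose to reach $\apxr$. The paper's bound hinges on three \emph{specific} algorithms: (A) maximum matching on $\bar L$ followed by a $\tfrac{3}{5+\varepsilon}$ set-packing approximation on size-$2$ tracks among unmatched endpoints; (B) the same, but the first matching is constrained to contain at most $\omega_1^e$ expensive links, where $\omega_1^e$ is the guessed number of expensive size-$1$ tracks in $\OPT_P$ (this is the parameter enumerated over $|\mcP|+1$ values); and (C) a direct $\tfrac{3}{5+\varepsilon}$ set-packing on all tracks with $|Q(T)|\le 2$. The guess of $\omega_1^e$ is what yields the crucial inequality $\omega_0\ge \beta_0^e$ bounding expensive leaf credit against the optimum's slack; a generic ``matroid or matching subroutine'' with an unspecified guessed parameter will not produce the inequalities of Lemma~\ref{lem:starting-solution:bound:ALGA-B-C-parameters} that make the LP infeasible at $\rho=\apxr$.
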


Next, we outline how to compute such a solution.
Typical approaches for special cases of \PAP, such as unweighted $2$-ECSS and MAP, use some relaxation of the problem, for which an optimum solution can be computed in polynomial time and such a solution serves as a good starting solution. 
A typical relaxation is the $2$-edge cover problem, in which the task is to find a minimum-cost edge set such that each vertex is incident to at least 2 edges. 
For \emph{structured instances} of these special cases of \PAP, optimum 2-edge covers are usually not too far away from an optimum solution of the original problem, and hence they serve as a good lower bound. 
However, in the general case of \PAP, a $2$-edge cover can be quite far away from an optimum solution.
For example, if the 2-edge cover of an instance of \PAP only contains edges of $E(\mathcal{P})$ and $\hat{L}$ (recall that $\hat{L}$ is the set of links that connect endpoints of the same path), then this edge set only 2-edge connects the paths individually, but it does not help to establish overall $2$-edge-connectivity. 
This motivates us to define a more suitable relaxation for \PAP, called 2-Edge Cover with Path Constraints (\tecpc).

\paragraph{A relaxation of \PAP: The 2-Edge Cover with Path Constraints.}

\begin{figure}[t]
\begin{minipage}{.49\textwidth}
	\begin{center}
 
\resizebox{\linewidth}{!}{
\begin{tikzpicture}
    \node[shape=circle,draw=black, inner sep=3pt, label = left:$u_1$] (A1) at (4,1) {};
    \node[shape=circle,draw=black, fill=black, inner sep=2pt, label = left:$w_1$] (A2) at (4,2) {};
    \node[shape=circle,draw=black, inner sep=3pt, label = left:$v_1$] (A3) at (4,3) {};

    \node[shape=circle,draw=black, inner sep=3pt, label = below:$u_2$] (B1) at (5,0.5) {};
    \node[shape=circle,draw=black, fill=black, inner sep=2pt, label = left:$w_2$] (B2) at (5,1.5) {};
    \node[shape=circle,draw=black, inner sep=3pt, label = above:$v_2$] (B3) at (5,2.5) {};

    \node[shape=circle,draw=black, inner sep=3pt, label = left:$u_3$] (C1) at (6,1) {};
    \node[shape=circle,draw=black, fill=black, inner sep=2pt, label = above:$w_3$] (C2) at (7.5,1) {};
    \node[shape=circle,draw=black, inner sep=3pt, label = right:$v_3$] (C3) at (9,1) {};
    
    \node[shape=circle,draw=black, inner sep=3pt, label = below:$u_8$] (G1) at (5.5,4) {};
    \node[shape=circle,draw=black, fill=black, inner sep=2pt, label = below:$w_8$] (G2) at (7,4) {};
    \node[shape=circle,draw=black, inner sep=3pt, label = below:$v_8$] (G3) at (8.5,4) {};
    
    \node[shape=circle,draw=black, inner sep=3pt, label = above:$u_7$] (H1) at (5.5,5) {};
    \node[shape=circle,draw=black, fill=black, inner sep=2pt, label = above:$w_7$] (H2) at (7,5) {};
    \node[shape=circle,draw=black, inner sep=3pt, label = above:$v_7$] (H3) at (8.5,5) {};
    
    \node[shape=circle,draw=black, inner sep=3pt, label = left:$u_4$] (D1) at (6,0) {};
    \node[shape=circle,draw=black, fill=black, inner sep=2pt, label = below:$w_4$] (D2) at (7.5,0) {};
    \node[shape=circle,draw=black, inner sep=3pt, label = right:$v_4$] (D3) at (9,0) {};

    \node[shape=circle,draw=black, inner sep=3pt, label = right:$u_6$] (F1) at (9.5,3) {};
    \node[shape=circle,draw=black, fill=black, inner sep=2pt, label = right:$w_6$] (F2) at (9.5,4) {};
    \node[shape=circle,draw=black, inner sep=3pt, label = right:$v_6$] (F3) at (9.5,5) {};
    \node[shape=circle,draw=black, inner sep=3pt, label = right:$u_5$] (E1) at (10,0) {};
    \node[shape=circle,draw=black, fill=black, inner sep=2pt, label = right:$w_5$] (E2) at (10,1) {};
    \node[shape=circle,draw=black, inner sep=3pt, label = right:$v_5$] (E3) at (10,2) {};
    
    \node[shape=circle,draw=black, inner sep=3pt, label = left:$u_9$] (J1) at (12,0) {};
    \node[shape=circle,draw=black, fill=black, inner sep=2pt, label = left:$w_9$] (J2) at (12,1) {};
    \node[shape=circle,draw=black, inner sep=3pt, label = left:$v_9$] (J3) at (12,2) {};
    
    \node[shape=circle,draw=black, inner sep=3pt, label = right:$u_{10}$] (K1) at (13,0) {};
    \node[shape=circle,draw=black, fill=black, inner sep=2pt, label = right:$w_{10}$] (K2) at (13,1) {};
    \node[shape=circle,draw=black, inner sep=3pt, label = right:$v_{10}$] (K3) at (13,2) {};

    \node[shape=circle,draw=black, inner sep=3pt, label = above:$u_{11}$] (L1) at (11,4) {};
    \node[shape=circle,draw=black, fill=black, inner sep=2pt, label = above:$w_{11}$] (L2) at (12,4) {};
    \node[shape=circle,draw=black, inner sep=3pt, label = above:$v_{11}$] (L3) at (13,4) {};
    
    \path [dashed] (A1) edge node[left] {} (A2);
    \path [dashed] (A2) edge node[left] {} (A3);
    
    \path [dashed] (B1) edge node[left] {} (B2);
    \path [dashed] (B2) edge node[left] {} (B3);
    
    \path [dashed] (C1) edge node[left] {} (C2);
    \path [dashed] (C2) edge node[left] {} (C3);
    
    \path [dashed] (D1) edge node[left] {} (D2);
    \path [dashed] (D2) edge node[left] {} (D3);
    
    \path [dashed] (E1) edge node[left] {} (E2);
    \path [dashed] (E2) edge node[left] {} (E3);
    
    \path [dashed] (F1) edge node[left] {} (F2);
    \path [dashed] (F2) edge node[left] {} (F3);

    \path [dashed] (G1) edge node[left] {} (G2);
    \path [dashed] (G2) edge node[left] {} (G3);
    
    \path [dashed] (H1) edge node[left] {} (H2);
    \path [dashed] (H2) edge node[left] {} (H3);
    
    \path [dashed] (I1) edge node[left] {} (I2);
    \path [dashed] (I2) edge node[left] {} (I3);
    
    \path [dashed] (J1) edge node[left] {} (J2);
    \path [dashed] (J2) edge node[left] {} (J3);
    
    \path [dashed] (K1) edge node[left] {} (K2);
    \path [dashed] (K2) edge node[left] {} (K3);
    
    \path [dashed] (L1) edge node[left] {} (L2);
    \path [dashed] (L2) edge node[left] {} (L3);
    
    \path[-] (A1)  edge [bend left=75] node {} (A3);
    \path[-] (C1)  edge [bend left=45] node {} (C3);
    \path[-] (D1)  edge [bend right=45] node {} (D3);
    \path[-] (L1)  edge [bend left=55] node {} (L3);

    \path[-] (A2)  edge [bend left=25] node {} (B3);
    \path[-] (A2)  edge [bend right=25] node {} (B1);
    
    \path [-] (J3) edge node[left] {} (L2);
    \path [-] (K3) edge node[left] {} (L2);

    \path [-] (G1) edge node[left] {} (H1);
    \path [-] (C2) edge node[left] {} (D2);
    \path[-] (C2)  edge [bend left=0] node {} (F1);
    \path[-] (D2)  edge [bend right=25] node {} (E1);
    \path [-] (E3) edge node[left] {} (J1);

    \path [-] (H3) edge node[left] {} (F2);
    \path [-] (G3) edge node[left] {} (H2);
    \path [-] (F3) edge node[left] {} (G2);
    \path [-] (K1) edge node[left] {} (J2);

\end{tikzpicture}
}
	\end{center}
 \vspace*{-0.5cm}

    \end{minipage}
    \hspace*{0.3cm}
\begin{minipage}{.49\textwidth}
	\begin{center}
 
\resizebox{\linewidth}{!}{
\begin{tikzpicture}
    \node[shape=circle,draw=black, inner sep=3pt, label = left:$u_1$] (A1) at (4,1) {};
    \node[shape=circle,draw=black, fill=black, inner sep=2pt, label = left:$w_1$] (A2) at (4,2) {};
    \node[shape=circle,draw=black, inner sep=3pt, label = left:$v_1$] (A3) at (4,3) {};

    \node[shape=circle,draw=black, inner sep=3pt, label = below:$u_2$] (B1) at (5,0.5) {};
    \node[shape=circle,draw=black, fill=black, inner sep=2pt, label = left:$w_2$] (B2) at (5,1.5) {};
    \node[shape=circle,draw=black, inner sep=3pt, label = above:$v_2$] (B3) at (5,2.5) {};

    \node[shape=circle,draw=black, inner sep=3pt, label = left:$u_3$] (C1) at (6,1) {};
    \node[shape=circle,draw=black, fill=black, inner sep=2pt, label = above:$w_3$] (C2) at (7.5,1) {};
    \node[shape=circle,draw=black, inner sep=3pt, label = right:$v_3$] (C3) at (9,1) {};
    
    \node[shape=circle,draw=black, inner sep=3pt, label = below:$u_8$] (G1) at (5.5,4) {};
    \node[shape=circle,draw=black, fill=black, inner sep=2pt, label = below:$w_8$] (G2) at (7,4) {};
    \node[shape=circle,draw=black, inner sep=3pt, label = below:$v_8$] (G3) at (8.5,4) {};
    
    \node[shape=circle,draw=black, inner sep=3pt, label = above:$u_7$] (H1) at (5.5,5) {};
    \node[shape=circle,draw=black, fill=black, inner sep=2pt, label = above:$w_7$] (H2) at (7,5) {};
    \node[shape=circle,draw=black, inner sep=3pt, label = above:$v_7$] (H3) at (8.5,5) {};
    
    \node[shape=circle,draw=black, inner sep=3pt, label = left:$u_4$] (D1) at (6,0) {};
    \node[shape=circle,draw=black, fill=black, inner sep=2pt, label = below:$w_4$] (D2) at (7.5,0) {};
    \node[shape=circle,draw=black, inner sep=3pt, label = right:$v_4$] (D3) at (9,0) {};

    \node[shape=circle,draw=black, inner sep=3pt, label = right:$u_6$] (F1) at (9.5,3) {};
    \node[shape=circle,draw=black, fill=black, inner sep=2pt, label = right:$w_6$] (F2) at (9.5,4) {};
    \node[shape=circle,draw=black, inner sep=3pt, label = right:$v_6$] (F3) at (9.5,5) {};
    \node[shape=circle,draw=black, inner sep=3pt, label = right:$u_5$] (E1) at (10,0) {};
    \node[shape=circle,draw=black, fill=black, inner sep=2pt, label = right:$w_5$] (E2) at (10,1) {};
    \node[shape=circle,draw=black, inner sep=3pt, label = right:$v_5$] (E3) at (10,2) {};
    
    \node[shape=circle,draw=black, inner sep=3pt, label = left:$u_9$] (J1) at (12,0) {};
    \node[shape=circle,draw=black, fill=black, inner sep=2pt, label = left:$w_9$] (J2) at (12,1) {};
    \node[shape=circle,draw=black, inner sep=3pt, label = left:$v_9$] (J3) at (12,2) {};
    
    \node[shape=circle,draw=black, inner sep=3pt, label = right:$u_{10}$] (K1) at (13,0) {};
    \node[shape=circle,draw=black, fill=black, inner sep=2pt, label = right:$w_{10}$] (K2) at (13,1) {};
    \node[shape=circle,draw=black, inner sep=3pt, label = right:$v_{10}$] (K3) at (13,2) {};

    \node[shape=circle,draw=black, inner sep=3pt, label = above:$u_{11}$] (L1) at (11,4) {};
    \node[shape=circle,draw=black, fill=black, inner sep=2pt, label = above:$w_{11}$] (L2) at (12,4) {};
    \node[shape=circle,draw=black, inner sep=3pt, label = above:$v_{11}$] (L3) at (13,4) {};
    
    \path [dashed] (A1) edge node[left] {} (A2);
    \path [dashed] (A2) edge node[left] {} (A3);
    
    \path [dashed] (B1) edge node[left] {} (B2);
    \path [dashed] (B2) edge node[left] {} (B3);
    
    \path [dashed] (C1) edge node[left] {} (C2);
    \path [dashed] (C2) edge node[left] {} (C3);
    
    \path [dashed] (D1) edge node[left] {} (D2);
    \path [dashed] (D2) edge node[left] {} (D3);
    
    \path [dashed] (E1) edge node[left] {} (E2);
    \path [dashed] (E2) edge node[left] {} (E3);
    
    \path [dashed] (F1) edge node[left] {} (F2);
    \path [dashed] (F2) edge node[left] {} (F3);

    \path [dashed] (G1) edge node[left] {} (G2);
    \path [dashed] (G2) edge node[left] {} (G3);
    
    \path [dashed] (H1) edge node[left] {} (H2);
    \path [dashed] (H2) edge node[left] {} (H3);
    
    \path [dashed] (I1) edge node[left] {} (I2);
    \path [dashed] (I2) edge node[left] {} (I3);
    
    \path [dashed] (J1) edge node[left] {} (J2);
    \path [dashed] (J2) edge node[left] {} (J3);
    
    \path [dashed] (K1) edge node[left] {} (K2);
    \path [dashed] (K2) edge node[left] {} (K3);
    
    \path [dashed] (L1) edge node[left] {} (L2);
    \path [dashed] (L2) edge node[left] {} (L3);
    
    \path[-] (A1)  edge [bend left=75] node {} (A3);
    \path[-] (C1)  edge [bend left=45] node {} (C3);
    \path[-] (D1)  edge [bend right=45] node {} (D3);
    \path[-] (L1)  edge [bend left=55] node {} (L3);

    \path[-] (A2)  edge [bend left=25] node {} (B3);
    \path[-] (A2)  edge [bend right=25] node {} (B1);
    
    \path [-] (J3) edge node[left] {} (L2);
    \path [-] (K3) edge node[left] {} (L2);

    \path [-] (G1) edge node[left] {} (H1);
    \path [-] (C2) edge node[left] {} (D2);
    \path[-] (C2)  edge [bend left=0] node {} (F1);
    \path[-] (D2)  edge [bend right=25] node {} (E1);
    \path [-] (E3) edge node[left] {} (J1);

\end{tikzpicture}
}
	\end{center}
 \vspace*{-0.5cm}
    \end{minipage}
    \caption{Dashed edges are in $E(\mathcal{P})$, solid edges are links of $L$.
    Filled vertices are interior vertices of paths from $\mcP$, empty vertices are endvertices of paths in $\mcP$.
    Left: Feasible solution $Y$ for 2ECPC, since $E(\mcP) \cup Y$ is a 2-edge cover with $\delta_Y(P) \geq 2$ for each $P \in \mcP$. Right: Solution $X = \{T_1, T_2, T_3, T_4, T_5\}$ for the corresponding TPP instance with two tracks containing 1 link ($T_1 = \{ u_7u_8 \}$ and $T_2 = \{ v_5 u_9 \}$), two tracks containing 3 links ($T_3 = \{u_2w_1, u_1v_1, w_1 v_2\}$ and $T_4=\{v_9 w_{11}, u_{11} v_{11}, w_{11} v_{10}\}$) and one track containing 5 links ($T_5=\{u_5 w_4, u_4 v_4, w_4 w_3, u_3 v_3, w_3 u_6\}$).
    Note that $|Y| = 17 = 2 |\mcP| - 5 = 2|\mcP| - |X|$.}
	\label{fig:s2ecpc-tpp}
\end{figure}
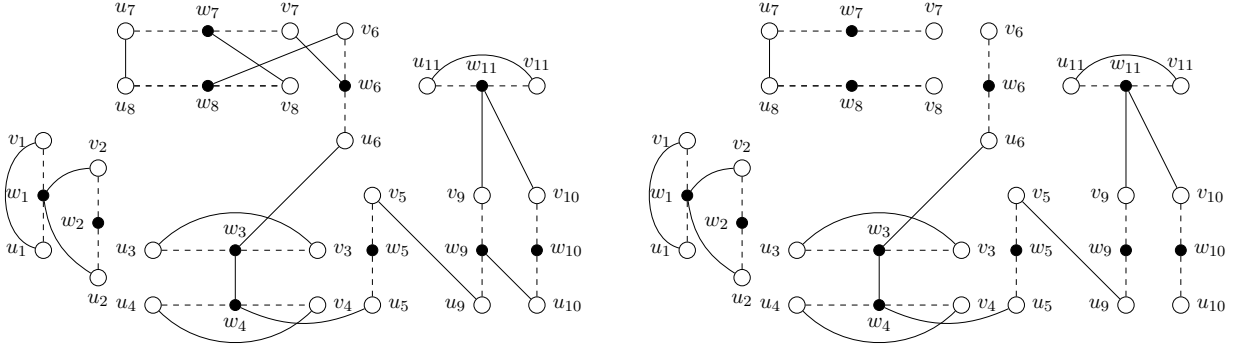

The input of an instance of 2ECPC consists of a structured instance of PAP where each path $P_i \in \mcP$ has exactly three vertices: two endpoints $u_i, v_i \in V^*(P_i)$ and an interior vertex $w_i$. 
The goal is to compute a minimum-cardinality link set $X \subseteq L$ such that:
\begin{itemize}
\item Every endpoint $u \in V^*(P_i)$ is incident to at least one link in $X$, i.e., $\delta_X(u) \geq 1$.
\item Each path $P_i \in P$ has at least two outgoing edges in $X$, i.e., $\delta_X(P_i) \geq 2$.
\end{itemize}
In other words,  $E(\mcP) \cup X$ is a 2-edge cover with the additional constraint that each path $P_i$ is incident to at least two outgoing links in $X$.
One can easily obtain an instance of \tecpc from an instance of $\PAP$ by doing the following for each path: 
if the path contains at least 4 vertices, we contract all internal vertices to a single vertex, and if the path contains 2 vertices, we add a dummy vertex as an inner vertex.
There is a one-to-one correspondence between links in the graph of the instance of \PAP and links in the graph of the instance of \tecpc.
We denote by $\OPT$ and $\OPT_C$ the optimum solutions for the \PAP and the \tecpc instance, respectively, and define $\opt$ and $\opt_C$ to be their objective values.
One can verify that any feasible solution to the \PAP instance is also feasible for its corresponding \tecpc instance (i.e., $\opt_C \leq \opt$), and hence \tecpc is indeed a relaxation for \PAP.

Similar to other connectivity augmentation problems, we assume that the instance of \tecpc is \emph{shadow-complete}, allowing us to assume that certain \emph{weaker} links (shadows) are present, implying more structure on our instance. Adding these shadows does not affect feasibility or the value of an optimum solution to 2ECPC. 
Instead of computing an (approximate) solution to \tecpc, we consider a packing problem, which we call the Track Packing Problem (\tpp).

\paragraph{The Track Packing Problem.}
TPP is motivated by the relationship between 2-matchings and 2-edge covers, where the covering constraints are expressed as a packing problem.
Our goal is to obtain a similar packing problem from the covering problem 2ECPC.
Formally, we construct an instance of \tpp from a shadow-complete instance of \tecpc.
For some link set $A \subseteq L$ let $V^*(A) = V^*(\mcP) \cap A$ be the set of vertices that are endpoints of some path and are incident to some link in $A$.
We define a \emph{track} as follows.
A track $T \subseteq L$ with $T = Q(T) \cup I(T)$ is a subset of links such that the following conditions hold.
\begin{itemize}
    \item $Q(T)$ is a simple path in $G$  of length at least one, where the start vertex $u$ and the end vertex $v$ of $Q(T)$ are endpoints of some paths $P, P' \in \mcP$ (not necessarily $P \neq P'$) such that $u \neq v$,
    \item any interior vertex of $Q(T)$, i.e., a vertex from $V(Q(T)) \setminus \{u, v \}$, is the interior vertex of some path $P \in \mcP$, 
    \item $I(T) \subseteq L$ contains exactly the following set of links: for each interior vertex $w_i \in V(Q(T)) \setminus \{u, v \}$, which is an interior vertex of some path $P_i \in \mcP$, 
    there is a link $u_i v_i \in I(T)$, where $u_i, v_i \in V^*(P_i)$ are the endvertices of $P_i$, and
    \item if $|T| = 1$, i.e., if $|Q(T)|=1$ and $|I(T)| = 0$, then the start vertex $u$ of $Q(T)$ and the end vertex $v$ of $Q(T)$ belong to distinct paths.
\end{itemize}
For any track $T$ we refer by $Q(T)$ and $I(T)$ as the set of links as above.
Observe that the number of links in a track is odd, since $|Q(T)| = |I(T)| + 1$. 
Further, note that a track of length 1 is simply a link in $\bar{L}$, i.e., a link between two endvertices of two distinct paths. 

The instance of \tpp is now defined as follows.
The input is a (shadow-complete) instance of 2ECPC of some structured graph $G$.
Let $\mathcal{T}$ be the set of all tracks in $G$.
The task is to select a maximum number of disjoint tracks from $\mathcal{T}$, where two tracks $T, T'$ are disjoint if $V(T) \cap V(T') = \emptyset$.

We establish a close connection between solutions of \tecpc and solutions of the corresponding \tpp instance, which is similar to the connection between 2-matchings and 2-edge covers. In particular, an optimum solution to one of one of the problems can be turned into an optimum solution to the other problem in polynomial time. 
The intuition behind the relationship of TPP and 2ECPC is demonstrated in \Cref{fig:s2ecpc-tpp}.

We will later show that TPP is NP-hard, and by the above close relationship, this also implies
NP-hardness for 2ECPC. 
Despite this hardness, in the remaining part we show how to compute a solution to TPP which serves as a good starting solution for our algorithmic framework.

\paragraph*{Computing a good starting solution.}
Our goal is to compute a solution to \tpp. By its close relationship to \tecpc, we can bound the cost of this solution in terms $\opt_C$, and since \tecpc is a relaxation of \PAP, we can in turn bound its cost in terms of $\opt$.
We do not compute an approximate solution to \tpp relative to its optimum solution value; instead we do the following:
For a solution $X$ to \tpp we compute $H_X = (V, E(\mcP) \cup (X \cap L))$ and we want to find a solution $X$ to \tpp such that $\cost(H_X) = |X \cap L| + \credits(H_X) \leq \apxr \cdot \opt$.

To find such a solution, we view \tpp as a Set-Packing problem.
In a Set-Packing problem, we are given a set of elements $U$ and a set of subsets $\mathcal{S} \subseteq 2^U$ of $U$ and the task is to compute a set $X \subseteq \mathcal{S}$ of maximum cardinality such that for any two sets $S, S' \in X$ we have $S \cap S' = \emptyset$.
The state-of-the-art polynomial-time approximation algorithms for Set-Packing achieve an approximation factor of $\frac{3}{1 + k + \varepsilon}$ for any constant $\varepsilon > 0$, where $k \coloneqq \max_{S \in \mathcal{S}} |S|$ is the maximum size of a set in $\mathcal{S}$~\cite{cygan2013improved, furer2014approximating}.
\tpp can be reduced to Set-Packing by viewing each $v \in V^*(\mcP)$ as an element and each track as a set, which exactly contains those vertices $v \in V^*(\mcP)$ to which the track is incident to.

Using such set-packing algorithms, we define three different algorithms that compute sets of disjoint tracks, where the maximum size of a track we consider contains at most 4 vertices of~$V^*(\mcP)$.
In total, we compute up to $|\mcP|+2$ different solutions and output the solution $X$ minimizing $\cost(H_X)$.
It can be easily shown that $H_X$ satisfies \Cref{invariant:degree-lonely-vertex} and \Cref{invariant:block-size}.
To show that $H_X$ satisfies \Cref{invariant:credit}, that is, $\cost(H_X) \leq \apxr \cdot \opt$, we bound the cost of $H_X$ in terms of variables that depend on the three algorithms and certain parameters of an optimum solution. 
We then derive a factor-revealing LP, which proves $\cost(H_X) \leq \rho \cdot \opt$ if the polyhedron is empty for a specific value of $\rho$. 
Indeed, for $\rho= \apxr$ this is the case, implying \Cref{invariant:credit}.

\subsection{Bridge-Covering}

\label{sec:overview:bridge-covering}

After computing a good starting solution, our next goal is to ensure that the current partial solution $H = (V, E(\mathcal{P}) \cup S)$ is bridgeless. 
This process, called \emph{bridge-covering}, eliminates all bridges in $H$ while preserving Invariants~\ref{invariant:credit}-\ref{invariant:block-size}.
In particular, we show the following lemma.

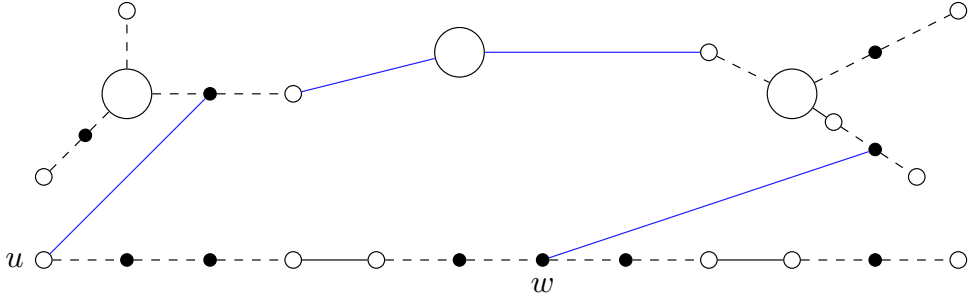
\begin{figure}[t]
	\begin{center}
 
\resizebox{.8\linewidth}{!}{
\begin{tikzpicture}
    \node[shape=circle,draw=black, inner sep=2pt, label = left:$u$] (A0) at (0,0) {};
    \node[shape=circle,draw=black, fill=black, inner sep=1.5pt] (A1) at (1,0) {};
    \node[shape=circle,draw=black, fill=black, inner sep=1.5pt] (A2) at (2,0) {};
    
    \node[shape=circle,draw=black, inner sep=2pt] (A3) at (3,0) {};
    \node[shape=circle,draw=black, inner sep=2pt] (A4) at (4,0) {};
    \node[shape=circle,draw=black, fill=black, inner sep=1.5pt] (A5) at (5,0) {};
    \node[shape=circle,draw=black, fill=black, inner sep=1.5pt, label = below:$w$] (A6) at (6,0) {};
    \node[shape=circle,draw=black, fill=black, inner sep=1.5pt] (A7) at (7,0) {};
    
    \node[shape=circle,draw=black, inner sep=2pt] (A8) at (8,0) {};
    \node[shape=circle,draw=black, inner sep=2pt] (A9) at (9,0) {};
    
    \node[shape=circle,draw=black, fill=black, inner sep=1.5pt] (A10) at (10,0) {};
    \node[shape=circle,draw=black, inner sep=2pt] (A11) at (11,0) {};

    \node[shape=circle,draw=black, inner sep=2pt] (B1) at (0,1) {};
    \node[shape=circle,draw=black, fill=black, inner sep=1.5pt] (B2) at (0.5,1.5) {};
    
    \node[shape=circle,draw=black, inner sep=6pt] (L1) at (1,2) {};
    
    \node[shape=circle,draw=black, inner sep=2pt] (B3) at (1,3) {};
    
    \node[shape=circle,draw=black, fill=black, inner sep=1.5pt] (B4) at (2,2) {};
    \node[shape=circle,draw=black, inner sep=2pt] (B5) at (3,2) {};
    
    \node[shape=circle,draw=black, inner sep=6pt] (L2) at (5,2.5) {};

    \node[shape=circle,draw=black, inner sep=2pt] (C1) at (8,2.5) {};
    \node[shape=circle,draw=black, inner sep=6pt] (L3) at (9,2) {};
    \node[shape=circle,draw=black, fill=black, inner sep=1.5pt] (C2) at (10,2.5) {};
    \node[shape=circle,draw=black, inner sep=2pt] (C3) at (11,3) {};
    \node[shape=circle,draw=black, inner sep=2pt] (C4) at (9.5,1.66) {};
    \node[shape=circle,draw=black, fill=black, inner sep=1.5pt] (C5) at (10,1.33) {};
    \node[shape=circle,draw=black, inner sep=2pt] (C6) at (10.5,1) {};

    \path [dashed] (A0) edge node[left] {} (A1);
    \path [dashed] (A1) edge node[left] {} (A2);
    \path [dashed] (A2) edge node[left] {} (A3);
    \path [-] (A3) edge node[left] {} (A4);
    \path [dashed] (A4) edge node[left] {} (A5);
    \path [dashed] (A5) edge node[left] {} (A6);
    \path [dashed] (A6) edge node[left] {} (A7);
    \path [dashed] (A7) edge node[left] {} (A8);
    \path [-] (A8) edge node[left] {} (A9);
    \path [dashed] (A9) edge node[left] {} (A10);
    \path [dashed] (A10) edge node[left] {} (A11);
    
    \path [dashed] (B1) edge node[left] {} (B2);
    \path [dashed] (B2) edge node[left] {} (L1);
    \path [dashed] (L1) edge node[left] {} (B3);
    \path [dashed] (L1) edge node[left] {} (B4);
    \path [dashed] (B4) edge node[left] {} (B5);

    \path [dashed] (C1) edge node[left] {} (L3);
    \path [dashed] (L3) edge node[left] {} (C2);
    \path [-] (L3) edge node[left] {} (C4);
    \path [dashed] (C2) edge node[left] {} (C3);
    \path [dashed] (C4) edge node[left] {} (C5);
    \path [dashed] (C5) edge node[left] {} (C6);
    
    \path [-, color=blue!90!white] (A0) edge node[left] {} (B4);
    \path [-, color=blue!90!white] (B5) edge node[left] {} (L2);
    \path [-, color=blue!90!white] (L2) edge node[left] {} (C1);
    \path [-, color=blue!90!white] (C5) edge node[left] {} (A6);

\end{tikzpicture}
}
	\end{center}
 \vspace*{-0.5cm}
	\caption{Example for the bridge-covering process: a pseudo-ear covering the witness path from $u$ to $w$. $H$ consists of the dashed edges (in $E(\mathcal{P})$) and solid black edges (links of $H$). Filled vertices are interior vertices of paths from $\mcP$, empty vertices are endvertices of paths in $\mcP$. The big circles represent 2EC blocks or components. The blue edges form a pseudo-ear $Q^{uw}$ from $u$ to $w$. All bridges, vertices and blocks on the path from $u$ to $w$ in $H$ will be in a 2EC block in $H \cup Q^{uw}$. }
	\label{fig:bridge-covering}
\end{figure}

\begin{restatable}[]{lemma}{lembridgecoveringmain}
    \label{lem:bridge-covering:main}
    Let $G$ be some structured instance of \PAP and $H=(V, E(\mcP) \cup S)$ be a solution satisfying the Invariants~\ref{invariant:credit}-\ref{invariant:block-size}.
    In polynomial-time we can compute a solution $H'=(V, E(\mcP) \cup S')$ that satisfies Invariants~\ref{invariant:credit}-\ref{invariant:block-size} and contains no bridges.
\end{restatable}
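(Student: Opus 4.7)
The plan is to iteratively destroy complex components of $H$ one at a time. As long as $H$ contains a complex component $C$, I would select a \emph{witness path} $\pi$ in the contracted tree $T_C$ of $H_C$ between two lonely leaves (or between a lonely leaf and a suitable block vertex) and compute a \emph{pseudo-ear} $Q^{uw}$ in $G$ from one endpoint $u$ of $\pi$ to the other endpoint $w$, as sketched in Figure~\ref{fig:bridge-covering}. The pseudo-ear is a sequence of links of $L$, possibly visiting and thereby absorbing entire 2EC components or even other complex components of $H$; after it is added, every bridge and block along $\pi$, together with the components it traverses, collapses into a single new 2EC block $B^*$, strictly reducing the total number of bridges in $H$. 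Iterating this operation at most $O(|L|)$ times produces the desired bridgeless $H'$.

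To locate $\pi$ and $Q^{uw}$ in polynomial time, I would adapt the standard ear-search procedure used in the bridge-covering phases of unweighted 2-ECSS and MAP: starting from a lonely leaf of $C$, run a shortest-path-type computation in an auxiliary graph whose edges correspond to links of $L$ and to traversals of already-formed non-complex components of $H$, stopping as soon as the accumulated credits along the candidate witness path, plus the credits released by absorbed components, suffice to pay for the links used by $Q^{uw}$ together with the credits required by the new 2EC block under the credit scheme. Properties (P2)--(P5) of structured instances are decisive here: they forbid the separators that would otherwise obstruct every cheap ear, and (P2) guarantees that every path endpoint already has an outgoing link to another path, so the search cannot dead-end. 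Property (P6) limits the number of degenerate paths inside any newly formed 2EC block, ensuring that most absorbed components contribute $\tfrac{3}{2}$ rather than $2$ credits.

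The invariants are maintained as follows. For Invariant~\ref{invariant:credit}, the charging accounts for the $\tfrac{3}{4}$ credits freed by every covered bridge on $\pi$, the $1$ credit freed by the complex component $C$ itself, the $1$ credit per covered non-simple block and the $1$ credit (plus the $\tfrac{1}{4}$ bonuses) per absorbed lonely leaf, and the $1$, $\tfrac{3}{2}$ or $2$ credits of any complex or 2EC components fully absorbed by $Q^{uw}$. This budget must cover the newly added links $|Q^{uw} \cap L|$ plus the $\tfrac{3}{2}$ or $2$ credits of the resulting 2EC component containing $B^*$; the bonuses from expensive leaves and from leaves connected to simple block vertices are precisely what closes the gap in the tight cases. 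Invariant~\ref{invariant:block-size} is preserved because the ear always traverses at least two whole paths of $\mathcal{P}$, so any non-simple block it creates contains two complete paths. Invariant~\ref{invariant:degree-lonely-vertex} follows from a local cleanup step: whenever adding $Q^{uw}$ would leave a simple block or a lonely vertex of degree exceeding $2$ in $H^{C}$, the now-redundant bridge can be dropped without reintroducing a bridge in $B^*$ and without increasing $\cost$.

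The main obstacle is the credit analysis in the extreme cases, in particular when the pseudo-ear is long relative to the witness path, when it is forced to enter several distinct 2EC components, or when the witness path terminates at leaves whose only available credit comes from the $\tfrac{1}{4}$ bonuses. Ruling out the configurations in which $\cost(H') > \cost(H)$ requires repeated appeal to the structural forbidden subgraphs (path-, $P^2$-, and $C^2$-separators from Properties (P3)--(P5), together with the absence of small strongly contractible subgraphs from (P1)), which is precisely why the reduction to structured instances given by Lemma~\ref{lem:structured:main} is indispensable for this step and why the initial $\cost(H) \leq \apxr \cdot \opt(G)$ bound provided by Lemma~\ref{lem:main:starting-solution-satisfies-invariants} propagates unchanged to $H'$.
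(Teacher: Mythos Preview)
Your proposal identifies the correct high-level architecture---pseudo-ears, witness paths, and a credit argument exploiting the forbidden separators of structured instances---and this is indeed what the paper does. However, what you have written is a plan rather than a proof, and several of the concrete claims in the plan are inaccurate enough that the argument, as stated, does not go through.

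First, a single pseudo-ear from one lonely leaf to another is not always available with sufficient credit on its witness path. The paper's argument requires, in different configurations, one pseudo-ear, two vertex-disjoint pseudo-ears whose witness paths overlap, or a pseudo-ear together with the \emph{removal} of one or two bridges from $S$ (and, in one case, the replacement of an optimal small subinstance). Your sketch asserts that a shortest-path-style search will always find a single $Q^{uw}$ whose released credits cover its cost; this fails already when the complex component is a path on three or more pieces and every pseudo-ear from a leaf reaches at most the adjacent segment.

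Second, the sentence ``the ear always traverses at least two whole paths of $\mathcal{P}$, so any non-simple block it creates contains two complete paths'' is not automatic: the paper needs a dedicated lemma to eliminate simple blocks first, precisely because a naive pseudo-ear can create a block violating Invariant~\ref{invariant:block-size}. Your ``local cleanup step'' of dropping a redundant bridge does not repair this; removing the bridge reintroduces a bridge elsewhere or disconnects the block. Similarly, your use of Property~(P6) is inverted: absorbed components with credit $2$ help the budget, not hurt it, and (P6) plays no role in bridge-covering---it is used only in the initial-solution and gluing phases.

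The paper's actual proof is a structured case analysis (simple block present; non-expensive lonely leaf present; non-simple block present; no blocks but component nontrivial; all complex components trivial), and in each case the existence of a suitable pseudo-ear or pair of pseudo-ears is argued by showing that its nonexistence forces a path-, $P^2$-, or $C^2$-separator, contradicting (P3)--(P5). You correctly anticipate that this is where the work lies, but the proposal does not carry it out.
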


Throughout this section we assume that $H$ contains some complex component $C$, as otherwise the lemma holds trivially.
We prove \Cref{lem:bridge-covering:main} by showing that we can iteratively turn $H$ into a solution $H'$ such that $H'$ also satisfies the Invariants~\ref{invariant:credit}-\ref{invariant:block-size} and additionally $H'$ contains fewer bridges than $H$ (except for one case that we need for technical reasons).
Applying this procedure exhaustively proves \Cref{lem:bridge-covering:main}.

Intuitively, as long as our current solution $H$ contains complex components, i.e., it contains bridges, our goal is to add certain links to $H$ in order to 'cover' the bridges and, at the same time, maintain the Invariants~\ref{invariant:credit}-\ref{invariant:block-size}.
To do so, it will be convenient to add so-called pseudo-ears to $H$, defined as follows. Similar definitions have been used in~\cite{cheriyan2023improved} and~\cite{garg2023matching}.

\begin{restatable}[Pseudo-ear and Witness Path]{definition}{defpseudoear}
    Let $C$ be a complex component of $H$ and $u, v \in V(C)$ be distinct vertices in $G^C$. 
    A \emph{pseudo-ear} $Q^{u v}$ from $u$ to $v$ is a simple path in $G^C$ that starts in $u$ and ends in $v$ and does not visit any vertex of $C$ except $u$ and $v$.
    The unique path $W^{uv}$ from $u$ to $v$ in $H^C$ is called the \emph{witness} path of $Q^{uv}$.
    Note that the edges of $W^{uv}$ consist of bridges of $C$.
\end{restatable}

An example for a pseudo-ear is given in \Cref{fig:bridge-covering}.
We first show that a pseudo-ear $Q^{uv}$ can be found in polynomial time, if one exists.
Furthermore, pseudo-ears are useful in the following sense.
If, for some pseudo-ear $Q^{uv}$, its witness path $W^{uv}$ contains 'enough' credit, e.g., if $W^{uv}$ contains 2 block-nodes or 2 lonely vertices, then we can add $Q^{uv}$ to $H$ to obtain the graph $H'$. 
Afterwards, $H'$ is a graph that contains fewer bridges than $H$, the number of connected components in $H'$ is at most the number of connected components in $H$, and $H'$ satisfies all invariants.

Using pseudo-ears, we then show that in polynomial time we can find link sets $R \subseteq L \setminus S$ and $F \subseteq S$, such that $(H \cup R) \setminus F$ has fewer bridges than $H$ and also satisfies all invariants.
This then proves \Cref{lem:bridge-covering:main}.
It is crucial for this step that $G$ is structured: We usually argue that $G$ must have a desired pseudo-ear, as otherwise the absence of it implies that $G$ is not structured, a contradiction.

\subsection{Gluing Algorithm}

\label{sec:overview:gluing}
After the bridge-covering step, the graph $H = (V, P \cup S)$ consists of multiple 2-edge-connected components that remain disconnected from each other. The final step of our algorithm, called \emph{gluing}, ensures that these components are merged into a single 2-edge-connected graph while maintaining the approximation guarantee.
We show the following.

\begin{figure}[t]
	\begin{center}
 
\resizebox{.5\linewidth}{!}{
\begin{tikzpicture}

    \node[shape=circle,draw=black, inner sep=6pt] (L1) at (0,1) {};
    
    \node[shape=circle,draw=black, inner sep=6pt] (L2) at (1,3) {};
    
    \node[shape=circle,draw=black, inner sep=6pt] (L3) at (6,1) {};
    
    \node[shape=circle,draw=black, inner sep=2pt] (A1) at (2,0) {};
    \node[shape=circle,draw=black, fill=black, inner sep=1.5pt] (A2) at (2,0.5) {};
    \node[shape=circle,draw=black, fill=black, inner sep=1.5pt] (A3) at (2,1) {};
    \node[shape=circle,draw=black, inner sep=2pt] (A4) at (2,1.5) {};
    
    \node[shape=circle,draw=black, inner sep=2pt] (B1) at (3,0) {};
    \node[shape=circle,draw=black, fill=black, inner sep=1.5pt] (B2) at (3,0.75) {};
    \node[shape=circle,draw=black, inner sep=2pt] (B3) at (3,1.5) {};

    \node[shape=circle,draw=black, inner sep=2pt] (C1) at (3.5,2.5) {};
    \node[shape=circle,draw=black, fill=black, inner sep=1.5pt] (C2) at (4,2.5) {};
    \node[shape=circle,draw=black, fill=black, inner sep=1.5pt] (C3) at (4.5,2.5) {};
    \node[shape=circle,draw=black, inner sep=2pt] (C4) at (5,2.5) {};
    
    \node[shape=circle,draw=black, inner sep=2pt] (D1) at (3.5,3.5) {};
    \node[shape=circle,draw=black, inner sep=2pt] (D2) at (5,3.5) {};

    \path [dashed] (A1) edge node[left] {} (A2);
    \path [dashed] (A2) edge node[left] {} (A3);
    \path [dashed] (A3) edge node[left] {} (A4);
    
    \path [dashed] (B1) edge node[left] {} (B2);
    \path [dashed] (B2) edge node[left] {} (B3);
    
    \path [dashed] (D1) edge node[left] {} (D2);
    
    \path [dashed] (C1) edge node[left] {} (C2);
    \path [dashed] (C2) edge node[left] {} (C3);
    \path [dashed] (C3) edge node[left] {} (C4);

    \path [-, color=blue!90!white] (L1) edge node[midway, label= left:\textcolor{black}{\small{$e_5$}}] {} (L2);
    \path [-, color=blue!90!white] (L1) edge node[midway, label= below:\textcolor{black}{\small{$e_6$}}] {} (A1);
    
    \path [-, color=blue!90!white] (L2) edge node[midway, label= left:\textcolor{black}{\small{$e_4$}}] {} (A4);
    \path [-, color=blue!90!white] (L2) edge node[midway, label= above:\textcolor{black}{\small{$e_8$}}] {} (D1);
    \path [-, color=blue!90!white] (L2) edge node[midway, label= below:\textcolor{black}{\small{$e_1$}}] {} (C1);
    \path [-, color=blue!90!white] (C3) edge node[midway, label= below:\textcolor{black}{\small{$e_3$}}] {} (L3);
    \path [-, color=blue!90!white] (B1) edge node[midway, label= above:\textcolor{black}{\small{$e_2$}}] {} (L3);
    \path [-, color=blue!90!white] (A1) edge node[midway, label= below:\textcolor{black}{\small{$g$}}] {} (B3);
    
    \path [-, color=blue!90!white] (B1) edge node[midway, label= right:\textcolor{black}{\small{$e_7$}}] {} (C1);

    \path [-] (A1) edge node[midway] {} (B1); 
    \path [-] (A4) edge node[midway] {} (B3); 
    \path [-] (C1) edge node[midway] {} (D1);
    \path [-] (C4) edge node[midway] {} (D2);

\end{tikzpicture}
}
	\end{center}
 \vspace*{-0.5cm}
	\caption{Example for the gluing algorithm and a good cycle. The bridgeless 2-edge-cover consists of large components (big circles) and small components (short cycles). $H$ consists of the dashed edges (in $E(\mathcal{P})$) and solid black edges (links of $H$). Blue edges are edges in $E(G) \setminus E(H)$. The cycle $e_4 e_5 e_6$ is a good cycle in the components graph as it contains 2 large components. The cycle $e_4 e_1 e_7$ is also a good cycle as it shortcuts the lower small components by using the Hamiltonian Path through~$g$.}
	\label{fig:gluing}
\end{figure}
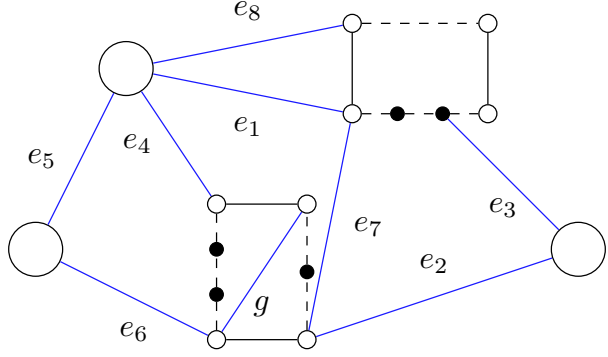

\begin{restatable}[]{lemma}{lemgluingmain}
    \label{lem:gluing:main}
    Let $G$ be some structured instance of \PAP and $H=(V, E(\mcP) \cup S)$ be a solution satisfying the Invariants~\ref{invariant:credit}-\ref{invariant:block-size} that contains no bridges.
    In polynomial-time we can compute a solution $H'=(V, E(\mcP) \cup S')$ satisfying the Invariants~\ref{invariant:credit}-\ref{invariant:block-size} such that $H'$ is a 2-edge-connected spanning subgraph.
\end{restatable}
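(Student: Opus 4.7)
Since $H$ has no bridges, every connected component of $H$ is 2-edge-connected. Let $\tilde{G}$ denote the \emph{component graph} obtained from $G$ by contracting each component of $H$ into a single vertex; since $G$ admits a feasible solution it is 2-edge-connected, and hence so is $\tilde{G}$. My plan is to iteratively identify a suitable cycle in $\tilde{G}$ (possibly augmented by ``shortcut'' paths through small components), add the corresponding links to $H$ to merge several components into one 2-edge-connected component, and repeat until $\tilde{G}$ has a single vertex.

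To quantify when such a merge preserves \Cref{invariant:credit}, suppose we add $\ell$ links that form a cycle merging $m_L$ large components and $m_S$ small components into a single new (large) 2EC component. The change in $\cost$ equals $\ell - 2 m_L - \tfrac{3}{2} m_S + 2$, since each large component contributes credit at least $2$, each small component $\tfrac{3}{2}$, and the new large component inherits credit $2$. I call the cycle \emph{good} when this quantity is non-positive. For a plain cycle of length $k$ in $\tilde{G}$ we have $\ell = k = m_L + m_S$, which reduces the condition to $m_L + \tfrac{1}{2} m_S \geq 2$. Moreover, every small component consists of two paths linked by two links and therefore admits Hamiltonian paths between many pairs of endpoints; whenever such a Hamiltonian path is available between the two vertices at which a cycle enters and leaves the small component, we can reuse existing edges inside the component as part of the cycle, lowering $\ell$ while still merging that component, as illustrated in \Cref{fig:gluing}. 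This broader notion of cycle (allowed to use existing intra-component paths) is what we actually search for.

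The central technical task is to show that a good cycle always exists and can be found in polynomial time, and I plan to establish this by a case analysis exploiting the structured nature of $G$. If $\tilde{G}$ contains at least two large components, the 2-edge-connectivity of $\tilde{G}$ yields two internally disjoint paths between them, producing a cycle with $m_L \geq 2$ which is immediately good. In the harder regime where large components are scarce, I will invoke properties (P3)--(P6): the absence of path-, $P^2$-, and $C^2$-separators prevents the component graph from having the thin bottlenecks that would force every cycle to satisfy $m_L + \tfrac{1}{2} m_S < 2$, while (P6) limits how many small components fail to contain a degenerate path and thus fail to carry $2$ credits themselves. Combined with the shortcut mechanism, these properties are used to construct (and find, using standard shortest-cycle / ear-decomposition techniques in $\tilde{G}$) a cycle whose contracted components meet the good-cycle inequality.

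Once a good cycle is added to $H$, the merged component becomes 2-edge-connected: no new bridges are introduced, no new simple blocks are created, and any newly formed non-simple 2EC block still contains the two paths of some merged subcomponent, so \Cref{invariant:degree-lonely-vertex} and \Cref{invariant:block-size} are maintained; \Cref{invariant:credit} is preserved by construction. The procedure terminates after at most $|\mathcal{P}|$ iterations since each step strictly decreases the number of components of $H$, leaving a single 2-edge-connected spanning subgraph. The main obstacle in the proof is precisely the existence argument for a good cycle when few large components are present: this is where the reduction to structured instances pays off most directly, as the forbidden small separators and the degenerate-path bound together rule out the configurations that would otherwise block the good-cycle inequality.
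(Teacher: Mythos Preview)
Your high-level strategy matches the paper: iteratively find a ``good'' cycle in the component graph, augment $H$ along it, and repeat. The cost accounting you set up is essentially correct, and the paper's proof of existence of a good cycle does indeed split into the case where two credit-$2$ components exist (easy) versus the case where some small component has credit $\tfrac{3}{2}$.

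However, your existence argument for the second case has a genuine gap. You invoke (P6) as ``limiting how many small components fail to contain a degenerate path and thus fail to carry $2$ credits,'' but this is backwards: (P6) says there are \emph{at most} $\varepsilon|\mathcal{P}|$ degenerate paths, so almost every small component carries only $\tfrac{3}{2}$ credits. Property (P6) plays no role in the gluing step at all, and neither do (P3) or (P4). What the paper actually uses is this: if a small component $C$ has credit $\tfrac{3}{2}$, then by definition neither of its two paths is degenerate. Unpacking the definition of ``degenerate path'' gives that some endpoint of $P_1$ and some endpoint of $P_2$ each have a link leaving $C$. A short case analysis (using (P1), the absence of small contractible subgraphs, to rule out the one bad configuration) then produces endpoints $x\in V^*(P_1)$, $y\in V^*(P_2)$ with outgoing links $e_x,e_y$ and a Hamiltonian $x$--$y$ path in $G[V(C)]$ using exactly one link. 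Finally, to close $e_x$ and $e_y$ into a cycle in $\tilde{G}\setminus c$, one uses (P5): if no such path existed, $C$ would be a $C^2$-separator. Your proposal never isolates the non-degeneracy $\Rightarrow$ shortcut-availability step, which is the crux, and cites the wrong structural properties for the remaining obstructions.
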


Throughout this section we are given a solution $S$ such that $H=(V, E(\mcP) \cup S)$ consists of several 2-edge-connected components, and each component contains at least the vertex set of two distinct paths $P, P' \in \mcP$. 
According to the credit scheme (part (E)), each component has a credit of 2 if it is large or it is small and contains a degenerate path. Otherwise, it has a credit of $\frac{3}{2}$.

Our approach here is to find \emph{good} cycles in the component graph $\Tilde{G}$, that is, the graph arising from $G$ by contracting each 2-edge-connected component of $H$ to a single vertex.
Adding cycles from the component graph to $H$ reduces the number of components and does not introduce any bridges.
By iteratively adding such cycles, we eventually end up with a single 2-edge-connected component and we are done.

Good cycles are cycles that either contain 2 components with credit 2 each, or a small component that can be \emph{shortcut}.
A small component $C$ can be shortcut w.r.t.\ a cycle $K$ of $\Tilde{G}$ if we can replace $E(C)$ by a Hamiltonian path $Q \subseteq G[V(C)]$ such that $|Q \cap L| < |E(C) \cap L|$ and the vertices of $C$ that are incident to $K$ are the endvertices of $Q$.
If a good cycle contains a shortcut, we then update $H$ to $H' = (H \setminus E(C)) \cup K \cup Q$ and refer to this operation as \emph{augmenting} $H$ with $K$.
Two examples of good cycles can be found in \Cref{fig:gluing}.
We then show that 1) augmenting $H$ with a good cycle results in a graph with fewer components that is bridgeless and satisfies all invariants, 2) we can find good cycles in polynomial time if one exists, and 3) the component graph of a structured graph always admits a good cycle.
This then proves \Cref{lem:gluing:main}.

\subsection{Proof of \Cref{thm:PAP:main}}

\label{sec:overview:proof-thm1}
\Cref{thm:PAP:main} now easily follows from Lemmas~\ref{lem:structured:main}-\ref{lem:gluing:main}.

\begin{proof}[Proof of \Cref{thm:PAP:main}.]
  For structured graphs $G$, we first use \Cref{lem:main:starting-solution-satisfies-invariants} to compute a solution $S$ such that $H= (V, E(\mcP) \cup S)$ satisfies Invariants~\ref{invariant:credit}-\ref{invariant:block-size}.
  \Cref{lem:bridge-covering:main} now guarantees that we can transform $H$ into a solution $H'$ that is bridgeless and satisfies Invariants~\ref{invariant:credit}-\ref{invariant:block-size}.
  Finally, \Cref{lem:gluing:main} ensures that that we can obtain in polynomial time a 2-edge-connected spanning subgraph $H'' = (V, E(\mcP) \cup S'')$ satisfying Invariants~\ref{invariant:credit}-\ref{invariant:block-size}.
  In particular, $H''$ is a feasible solution and satisfies $\cost(H'') = |S''| + \credits(H'') \leq \apxr \cdot \opt$. 
  Since $\credits(H'') \geq 0$, we have $|S''| \leq \apxr \cdot \opt$, implying that our algorithm is a polynomial-time $\apxr$-approximation for structured graphs.
  Finally, \Cref{lem:structured:main} then guarantees that this leads to a $(\apxr + 3 \varepsilon)$-approximation for arbitrary instances.
  However, we mention here that the guarantee of $\apxr$ from \Cref{lem:main:starting-solution-satisfies-invariants} has some slack, and hence we can choose $\varepsilon$ small enough (still constant) so that we have a $\apxr$-approximation for \PAP.
\end{proof}

The four subsequent sections provide the proofs of Lemmas~\ref{lem:structured:main}-\ref{lem:gluing:main}.

\section{Reduction to Structured Instances}

\label{sec:structured}

In this section we prove \Cref{lem:structured:main}, which we restate for convenience.

\lemstructuredmain*

We outlined the main definitions and proof of \Cref{lem:structured:main} in \Cref{sec:overview:preprocessing}. 
For completeness, we repeat the definitions given in \Cref{sec:overview:preprocessing}.
Throughout this section we assume that $\varepsilon \in (0, \frac{1}{12}]$ is some fixed constant.
In order to define structured graphs, we need the following definitions.
We will show that any structured graph does not contain any of these subgraphs. 

\defContractibleSubgraph*

Note that every strongly $(\alpha, t , k)$-contractible subgraph is also a weakly $(\alpha, t , k)$-contractible subgraph.
We say that $\mathcal{H}$ is an $(\alpha, t , k)$-contractible subgraph if it is a weakly or strongly $(\alpha, t , k)$-contractible subgraph.
If $\alpha, t, k$ are clear from the context, we simply refer to $\alpha$-contractible subgraphs or even contractible subgraphs.

We use $(\alpha, t , k)$-contractible subgraphs in the following sense: We can safely contract each component of an $\alpha$-contractible subgraph $\mathcal{H}$ to a single vertex and work with $G|\mathcal{H}$ (if we only aim for an $\alpha$-approximation). 
Indeed, any $\alpha$-approximate solution for $G|\mathcal{H}$ can be turned to an $\alpha$-approximate solution for $G$ by adding the links of $\mathcal{H}$.

In general it is not straightforward to argue that a 2EC subgraph is weakly $\alpha$-contractible.
However, under certain conditions we can argue that this is indeed the case.
We will later show that we can find strongly $(\alpha, t , 1)$-contractible subgraphs in polynomial time if $t$ is constant, i.e., if there is only one component and there are only a constant number of links in the contractible subgraph. 
Hence, after our preprocessing we can assume that $G$ does not contain any $(\alpha, t , 1)$-contractible subgraphs for some constant $t$.

We now define certain separators.

\begin{restatable}[Path-separator]{definition}{defPathSeparator}
    A path $P \in \mathcal{P}$ is called a path-separator if $P$ is a separator, i.e., $G \setminus V(P)$ is disconnected.
\end{restatable}

\begin{restatable}[$P^2$-separator]{definition}{defPTwoSeparator}
    Let $P_1, P_2\in \mathcal{P}$ be two distinct paths with endpoints $v_1, w_1$ and $v_2, w_2$, respectively, such that $w_1 v_2 \in L$ and assume that $V(P_1) \cup V(P_2)$ is a separator.
    Let $Q$ be the shortest (w.r.t.\ the number of edges) sub-path of $P_1 \cup P_2 \cup \{ w_1 v_2\}$ such that $Q$ is a separator.
    We call $Q$ a $P^2$-separator if there is a partition $(V_1, V_2)$ of $G \setminus V(Q)$ such that there are no edges between $V_1$ and $V_2$ and either $\opt((G \setminus V_i)|Q) \geq 4$ for $i = 1, 2$ or $\opt((G \setminus V_2)|Q) \geq 4$, $\opt((G \setminus V_1)|Q) \geq 3$ and $|\delta(Q) \cap E(\mathcal{P})| \geq 1$. 
\end{restatable}
Note that by the definition of a $P^2$-separator $Q$, if the last condition $|\delta(Q) \cap E(\mathcal{P})| \geq 1$ is satisfied, this means that there is an edge $e \in E(\mathcal{P})$ such that $e \in \delta(Q)$ and $e$ is incident to an endpoint of $Q$.

\begin{restatable}[$C^2$-separator]{definition}{defCTwoSeparator}
    Let $P_1, P_2\in \mathcal{P}$ be two distinct paths with endpoints $v_1, w_1$ and $v_2, w_2$, respectively, and assume $w_1 v_2 \in L$ and $w_2 v_1 \in L$.
    Let $V(P_1) \cup V(P_2)$ be a separator. 
    We call $C \coloneqq P_1 \cup P_2 \cup w_1 v_2 \cup w_2 v_1$ a $C^2$-separator if there is a partition $(V_1, V_2)$ of $G \setminus V(C)$ such that there are no edges between $V_1$ and $V_2$ and $\opt((G \setminus V_i)|C) \geq 3$ for $i = 1, 2$.
    Observe that $C$ is a cycle with exactly two links visiting all vertices of $V(P_1) \cup V(P_2)$.
\end{restatable}

Whenever $G$ contains one of these separators, we show that we can essentially divide $G$ into two subinstances $G_1, G_2$ of smaller size such that we can combine solutions to these subinstances to a solution to $G$ by adding only few edges. 

Next, we define degenerate paths.

\defDegeneratePathSeparator*

We outlined the ideas used to handle degenerate paths in \Cref{sec:overview:preprocessing}.
Note that the condition $v' w' \in L$ is always satisfied for a degenerate path if $G$ does not contain contractible subgraphs, as otherwise $C= (V(P') \cup V( P), E(P') \cup E( P) + w' v + w v')$ is a $(1, 2, 1)$-contractible subgraph.

Now we are ready to define $(\alpha, \eps)$-structured instances.

\defStructuredGraph*

If $\alpha$ and $\varepsilon$ are clear from the context, we simply say $G$ is structured.
Examples of different separators, contractible subgraphs and degenerate paths are given in \Cref{fig:structured}.

 The remaining part of this section is dedicated to proving \Cref{lem:structured:main}.

\subsection{Important Facts, Definitions and Lemmas}

It will be convenient to use the following facts and lemmas throughout this section.
Similar facts have been introduced in~\cite{garg2023improved}.

\begin{fact}
\label{fact:structured:2EC-after-contraction}
    Let $G$ be a 2EC graph and $U \subseteq V(H)$. Then $G|U$ is 2EC.
\end{fact}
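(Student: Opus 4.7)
The plan is to argue via edge cuts: a graph is 2EC if and only if every nontrivial edge cut has size at least $2$, and the edge cuts of $G|U$ correspond exactly to the edge cuts of $G$ that do not separate $U$.

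First, I would fix notation. Let $u^\star$ denote the vertex of $G|U$ resulting from contracting $U$, and identify $V(G|U) = (V(G) \setminus U) \cup \{u^\star\}$. Under this identification, the edges of $G|U$ are in natural bijection with the edges of $G$ that have at least one endpoint outside $U$ or connect two distinct vertices across the contracted set (self-loops may appear but are irrelevant for edge-connectivity arguments since they never cross any cut).

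Next I would take an arbitrary bipartition $(A, B)$ of $V(G|U)$ with $A, B \neq \emptyset$ and show $|\delta_{G|U}(A)| \geq 2$. Assume without loss of generality that $u^\star \in A$. Define $A' = (A \setminus \{u^\star\}) \cup U$ and $B' = B$. Then $(A', B')$ is a bipartition of $V(G)$ with both sides nonempty (since $B' = B \neq \emptyset$ and $A'$ contains $U \neq \emptyset$ together with $A \setminus \{u^\star\}$). By the bijection above, an edge of $G$ lies in $\delta_G(A')$ precisely when the corresponding edge of $G|U$ lies in $\delta_{G|U}(A)$, so $|\delta_{G|U}(A)| = |\delta_G(A')|$. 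Since $G$ is 2EC, $|\delta_G(A')| \geq 2$, which gives the desired bound.

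The only subtle point is ensuring that $G|U$ is considered as a multigraph, so that parallel edges produced by the contraction are preserved and contribute to the cut; this is the standard convention used throughout the paper (as already noted in the notation section). There is no real obstacle here — the statement is essentially the observation that edge-connectivity is monotone under contraction — so the proof reduces to this short cut-preservation argument.
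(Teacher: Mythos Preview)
Your proposal is correct; the cut-preservation argument is exactly the standard way to see that edge-connectivity is monotone under contraction. The paper does not actually give a proof of this fact --- it is stated as a basic fact without justification --- so there is no alternative approach to compare against.
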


\begin{fact}
\label{fact:structured:2EC-contraction+2EC}
    Let $H$ be a 2EC subgraph of some 2EC graph $G$. Let $S$ and $S'$ be 2EC spanning subgraphs of $H$ and $G|H$, respectively. Then $S \cup S'$ is a 2EC spanning subgraph of $G$, where we interpret $S'$ as edges of $G$ after decontracting $V(H)$.
\end{fact}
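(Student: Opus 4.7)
The plan is to verify the two defining properties of a 2EC spanning subgraph: spanning $V(G)$, and remaining connected after the removal of any single edge. Spanning is immediate since $S$ covers $V(H)$ and $S'$, after decontracting $V(H)$, covers $V(G)\setminus V(H)$. Throughout, I will identify each edge of $G|H$ incident to the contracted vertex with the corresponding edge of $G$ it came from (with one endpoint in $V(H)$ and the other in $V(G)\setminus V(H)$), and each edge of $G|H$ between two non-contracted vertices with the same edge of $G$.

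First, I would check connectivity of $S\cup S'$. Every vertex of $V(H)$ lies in $S$, which is connected by assumption. Every vertex $v\in V(G)\setminus V(H)$ is connected in $S'$ to the contracted vertex of $V(H)$ (since $S'$ is in particular connected); after decontraction, the last edge on such a path lands at some vertex of $V(H)$, which then reaches the rest of $V(G)$ through $S$. Hence $S\cup S'$ is connected.

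Next, since a connected graph is 2EC iff it has no bridges, I would argue bridgelessness by case analysis on an arbitrary edge $e\in S\cup S'$. If $e\in S$, then $S-e$ is still connected (as $S$ is 2EC), and every vertex outside $V(H)$ reaches $V(H)$ through $S'$, which is unchanged; so $(S\cup S')-e$ is connected. If $e\in S'$, then $S'-e$ is connected in $G|H$, so after decontraction every vertex of $V(G)\setminus V(H)$ still has a path in $S'-e$ to some vertex of $V(H)$, and $V(H)$ is internally connected through $S$; thus $(S\cup S')-e$ is connected.

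The only subtlety is the bookkeeping around decontraction: one must verify that a connected walk in $S'$ in $G|H$ that visits the contracted vertex corresponds, after decontraction, to a walk in $G$ whose successive entries to and exits from $V(H)$ can be bridged by using edges of the 2EC spanning subgraph $S$ of $H$. Since $S$ is 2EC and hence connects any two vertices of $V(H)$, this bridging is always available. I expect this identification step to be the main obstacle to writing out carefully, but once that correspondence is fixed the two cases above complete the argument.
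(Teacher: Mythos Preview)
Your argument is correct. The paper states this as a Fact without proof (alongside the companion Fact~\ref{fact:structured:2EC-after-contraction}), so there is no paper proof to compare against; your direct verification via spanning, connectivity, and bridgelessness is exactly the kind of routine check the authors intended the reader to supply. One minor remark: you might note that $S$ and $S'$ are edge-disjoint (edges of $S$ lie inside $V(H)$ and become self-loops under contraction, hence are not in $S'$), which makes the case split on $e\in S$ versus $e\in S'$ exhaustive and exclusive, though the argument goes through even without this observation.
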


We now show that we can find strongly $(\alpha, t , 1)$-contractible subgraphs in polynomial time if $t$ is constant, i.e., if there is only one component and there are only a constant number of links in the contractible subgraph. 

\begin{lemma}
    \label{lem:structured:finding-small-contractible-polytime}
    If $G$ contains a strongly $(\alpha, t , 1)$-contractible subgraph $H$ for some constant $t$, then we can find $H$ in polynomial-time.
\end{lemma}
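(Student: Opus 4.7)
The plan is to enumerate all candidate subgraphs $H$ and, for each, verify strong contractibility. Since $|L(H)|\le t$ is constant, we first enumerate all link subsets $L' \subseteq L$ with $|L'| \le t$, of which there are only $O(|L|^{t})$ many. For each such $L'$, we then enumerate all candidate vertex sets $V(H)$ consistent with $L(H)=L'$, and for each candidate pair $(L',V(H))$ we perform polynomial-time checks.

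The enumeration of $V(H)$ rests on the following structural observation. Because $H$ must be $2$-edge-connected, every $v \in V(H)$ that is not incident to any link in $L'$ must be an interior vertex of some path $P \in \mcP$ with both path-neighbors also in $V(H)$; otherwise $v$ would have degree at most $1$ in $H$. Consequently, for each $P \in \mcP$, the set $V(H) \cap V(P)$ is a disjoint union of contiguous sub-paths of $P$, and every boundary vertex of such a sub-path that is not itself an endpoint of $P$ must be an endpoint of some link in $L'$. Since $|L'|\le t$, at most $2t$ paths of $\mcP$ contain any endpoint of $L'$, and on each such path the configuration of $V(H)$ is determined by at most $2t-1$ binary decisions (one per consecutive pair of link endpoints: merge into one sub-path or leave as separate singleton occurrences of the link endpoints). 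This yields only $2^{O(t^{2})} = O(1)$ candidate vertex sets $V(H)$ per $L'$.

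Given a candidate pair $(L', V(H))$, define $H$ as the subgraph with vertex set $V(H)$ and edge set $L' \cup (E(\mcP) \cap E(G[V(H)]))$. We check in polynomial time that (a)~$H$ is $2$-edge-connected and (b)~$G \mid V(H)$ is a valid \PAP instance. It remains to verify strong $(\alpha,t,1)$-contractibility, i.e., that every 2EC spanning subgraph of $G$ contains at least $\lceil |L'|/\alpha \rceil$ links from $L(G[V(H)])$. Equivalently, we must check that for no subset $L'' \subseteq L(G[V(H)])$ with $|L''| \le \lceil |L'|/\alpha \rceil - 1$ is the residual graph $\bigl(V,\, (E(G)\setminus L(G[V(H)])) \cup L''\bigr)$ itself $2$-edge-connected, using the fact that a graph admits a 2EC spanning subgraph iff it is itself 2EC. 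Since $|L'|/\alpha \le t$ is constant, there are only $O(|L|^{t})$ such $L''$, and $2$-edge-connectivity is polynomial-time testable, so the whole verification is polynomial.

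The main obstacle I anticipate is the second step: proving that the enumeration of $V(H)$ from $L(H)=L'$ yields only constantly many possibilities. This reduces to the observation that ``interior-only'' vertices of $V(H)$ have no freedom beyond completing sub-paths between prescribed link-endpoint boundaries, which is exactly what $2$-edge-connectedness of $H$ forces. Once that structural claim is established, enumerating $L'$, enumerating $L''$, and invoking a standard 2EC test give a total running time of $O(|L|^{2t})\cdot \mathrm{poly}(n) = \mathrm{poly}(n)$ for constant $t$, and any candidate that passes all checks can be returned as the desired strongly $(\alpha,t,1)$-contractible subgraph.
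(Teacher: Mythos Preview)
Your proof is correct and follows essentially the same two–step strategy as the paper: enumerate all link sets $L'\subseteq L$ with $|L'|\le t$, and for each candidate verify strong contractibility in polynomial time. Two tactical choices differ. First, you enumerate all $2^{O(t)}$ possible vertex sets $V(H)$ compatible with $L'$, whereas the paper argues that it suffices to test only the single \emph{maximal} choice $G_{L'}$ (since enlarging $V(H)$ along path segments preserves $2$-edge-connectivity and can only make the contractibility condition easier, as $L(G[V(H)])$ grows while $|L'|$ stays fixed). Second, for the verification you enumerate all $L''\subseteq L(G[V(H)])$ of size below the threshold and test $2$-edge-connectivity directly, while the paper phrases the same check as computing $\opt$ of a contracted instance whose optimum is at most $t$. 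Both executions are polynomial for constant $t$; the paper's maximality shortcut is a little slicker, but your explicit enumeration is equally valid and arguably more transparent.
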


\begin{proof}
    We first describe the algorithm and then prove that it is indeed correct.
    In the algorithm, we first enumerate all polynomially many $L' \subseteq L$ of size at most $t$.
    For some given $L'$ we give a polynomial-time algorithm that checks whether a certain subgraph induced by $L'$ is in fact strongly $(\alpha, t , 1)$-contractible. This then proves the lemma.
    
    Let us fix one such~$L'$.
    We first check whether $G|L'$ is an instance of \PAP, which can be checked efficiently.
    We assume that, for each $P \in \mathcal{P}$, $L'$ is incident to $0$ or at least $2$ vertices of $P$, as otherwise $L'$ together with any subset of edges of $\mathcal{P}$ can not be 2-edge-connected (as it contains bridges).
    For $P \in \mathcal{P}$ with endpoints $u$ and $w$ let $u' \in V(L') \cap V(P)$ be the vertex closest to $u$ on $P$ that is incident to some link in $L'$, if such a vertex exist. 
    Similarly, let $w' \in V(L') \cap V(P)$ be the vertex closest to $w$ on $P$ that is incident to some link in $L'$, if such a vertex exist.  
    We define $L'_P$ to be the sub-path of $P$ from $u'$ to $w'$ and $L'_\mathcal{P} \coloneqq \bigcup_{P \in \mathcal{P}} L'_P$.
    Let $G_{L'} \coloneqq (V(L'_\mathcal{P}), L'_\mathcal{P} \cup L')$.
    
    If $G_{L'}$ is a 2EC spanning subgraph, $G_{L'}$ is a candidate to be a  strongly $(\alpha, t , 1)$-contractible subgraph $H$ for some constant $t$.
    To verify if this is indeed the case, we proceed as follows.
    Let $V' = V \setminus V(L'_\mathcal{P})$ be the vertices not contained in $G_{L'}$.
    Let $X \subseteq E$ be a maximum cardinality subset of edges with at least one endpoint in $V'$.
    If any feasible solution to $G$ contains at least $\frac{1}{\alpha} |L'|$ many links from within $G|V(X)$, then we know that $G_{L'}$ forms an $\alpha$-contractible subgraph.
    It remains to give a polynomial-time algorithm to decide if this is the case or not.
    Since $G_{L'}|V(X)$ is a 2EC spanning subgraph of $G|V(X)$ with a constant number of links, we know that $\opt(G|V(X)) \leq |L'| \leq t$ and thus constant. 
    Hence, we can compute $\opt(G|V(X))$ in polynomial-time.
    If $\opt(G|V(X)) \leq \frac{1}{\alpha} |L'|$, then $G_{L'}$ is strongly $(\alpha, t , 1)$-contractible; otherwise it is not.

    It remains to provide a proof that the algorithm above finds an $\alpha$-contractible subgraph $H$ with at most $t$ links, if such a subgraph exists.
    Let $H=(U, E(\mathcal{P'}) \cup L')$ be an $\alpha$-contractible subgraph with link set $L' \subseteq L$ of size at most $t$, where $E(\mathcal{P'}) \subseteq E(\mathcal{P})$ and $U \subseteq V(G)$.
    We assume that $L'$ is minimal, i.e., no proper subset of $L'$ together with some $Z \subseteq E(\mathcal{P})$ induces an $\alpha$-contractible subgraph.
    Furthermore, let $H$ be maximal in the sense that there is no edge set $Y \subseteq E(\mathcal{P}) \setminus E(\mathcal{P'})$  such that $(U \cup V(Y), Y \cup E(\mathcal{P'}) \cup L')$ is also a 2-edge-connected spanning subgraph.
    In particular, note that this implies that for any two vertices $u, v \in V(H)$ such that $u, v \in V(P)$ for some $P \in \mathcal{P}$, the subpath $P^{uv}$ of $P$ from $u$ to $v$ must be in $H$. 
    If not, then we could safely add $P^{uv}$ to $H$ such that the resulting graph is still 2-edge-connected, and therefore also $\alpha$-contractible.
    Hence, if $L'$ is enumerated in the above algorithm, then also $G_{L'} = H$ must be enumerated by the above algorithm.

    Finally, let $V' = V \setminus V(H)$ be the vertices not contained in $H$.
    Let $X \subseteq E$ be a maximum cardinality subset of edges with at least one endpoint in $V'$.
    Since $H$ is $\alpha$-contractible, any 2EC spanning subgraph of $G$ must contains at least $\frac{1}{\alpha} |L'|$ many links from within $G|V(X)$.
    Since the algorithm checks this, it will find correctly an $\alpha$-contractible subgraph if one exists.
\end{proof}

The next two lemmas give examples of $\frac{7}{4}$-contractible subgraphs, which we use later.

\begin{lemma}
\label{lem:structured:endpoint-adj-this-path}
    If $G$ contains an endpoint $u$ of some path $P \in \mathcal{P}$ such that $N(u) \subseteq V(P)$, then there is some edge $e = uv$ such that $(V(P^{uv}), E(P^{uv}) + e)$ is strongly $1$-contractible. 
    Moreover, $e$ can be computed in polynomial time.
\end{lemma}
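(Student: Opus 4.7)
The plan is to choose $e = uv$ where $v$ is the link-neighbor of $u$ that lies farthest from $u$ on $P$, and then verify the three properties required: the induced subgraph is $2$-edge-connected, contracting it yields a valid \PAP instance, and every $2$EC spanning subgraph of $G$ must use at least one link inside $V(P^{uv})$.

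First I would set up notation. Write $P = u, x_1, x_2, \dots, x_{k-1}, w$ with $u$ the endpoint in the hypothesis. Since $G$ admits a feasible \PAP solution, in any such solution $u$ must have degree at least $2$; as its only path-edge is $u x_1$, it must be incident to at least one link. Every neighbor of $u$ lies in $V(P)$ by assumption, so the set $N_L(u) := \{v \in V(P) : uv \in L\}$ is nonempty. Define $v$ to be the element of $N_L(u)$ that maximizes the distance from $u$ along $P$, and let $e = uv$, $H = (V(P^{uv}), E(P^{uv}) + e)$. Clearly $|L(H)| = 1 \in [1, t]$ for any $t \geq 1$, and $e$ can be found in polynomial time by scanning the links incident to $u$.

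Next I would verify the two easy structural conditions. The subgraph $H$ is exactly the path $P^{uv}$ closed by the chord $e$, hence a cycle, which is $2$-edge-connected. To see that $G|\mathcal{H}$ (with $\mathcal{H} = \{H\}$) is a \PAP instance, note that contracting $V(P^{uv})$ merely replaces the prefix $u, x_1, \dots, v$ of $P$ by a single vertex; this shortened path (together with the remaining unchanged paths of $\mathcal{P}$) is still a collection of vertex-disjoint paths, so $G|\mathcal{H}$ is a valid \PAP instance.

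The heart of the argument is the strong $1$-contractibility: every $2$EC spanning subgraph $S$ of $G$ must contain at least one link from $L(G[V(P^{uv})])$. As already noted, $S$ contains at least one link $uv'$ incident to $u$, and $v' \in N(u) \subseteq V(P)$. By the choice of $v$ as the farthest link-neighbor of $u$ on $P$, $v'$ cannot lie strictly beyond $v$, so $v'$ is on the sub-path $P^{uv}$, i.e. $v' \in V(P^{uv})$. Since both endpoints of $uv'$ lie in $V(P^{uv})$, we have $uv' \in L(G[V(P^{uv})])$, which is exactly what is required with $\alpha = 1$ and $|L(H)| = 1$.

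I do not foresee a genuine obstacle here; the only delicate point is making sure that the chosen $v$ is well-defined, which is where the assumption that $G$ is a feasible \PAP instance (so that $u$ has at least one incident link) comes in, and that the farthest-vertex choice forces every solution's $u$-link to stay inside $V(P^{uv})$. A small sanity check worth writing out explicitly is that the case $v = w$ (the whole of $P$ being contracted) is allowed under our convention for \PAP instances, since the remainder is simply the collection of other paths in $\mathcal{P}$.
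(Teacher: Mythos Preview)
Your proof is correct and follows essentially the same approach as the paper: choose $v$ to be the link-neighbor of $u$ farthest along $P$ (equivalently, closest to the other endpoint $w$), observe that the resulting cycle is $2$EC and that contracting it keeps a \PAP instance, and argue that any $2$EC spanning subgraph must use a link at $u$, which by maximality of $v$ lies inside $G[V(P^{uv})]$. Your write-up is in fact a bit more careful than the paper's in justifying that $N_L(u)\neq\emptyset$ and handling the boundary case $v=w$.
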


\begin{proof}
Let $P$ and $u$ be as in the lemma statement.
Let the endpoints of $P$ be $u$ and $w$ and let $e=uv \in L$ be the link incident to $u$ such that $v$ is closest to $w$ on the path $P$.
We claim that $C=(V(P^{uv}), E(P^{uv}) + e)$ is $1$-contractible.
Note that any feasible solution must contain an edge incident to~$u$.
By our assumption that $N(u) \subseteq V(P)$ and the definition of $e$, we conclude that any feasible solution must contain at least one link from within $G[C]$.
Further, note that $G|C$ is an instance of \PAP.
Then $C$ is a 1-contractible subgraph and note that $e$ can be found in polynomial time.
\end{proof}

We assume that the requirements of \Cref{lem:structured:endpoint-adj-this-path} are not met, i.e., each endpoint $u$ of some path $P \in \mcP$ has an edge to some other path $P' \in \mcP$ with $P' \neq P$.

\begin{lemma}
\label{lem:structured:path}
    If $G$ contains a path $P \in \mathcal{P}$ such that $N(V(P)) \subseteq V(P) \cup V(P')$ for some $P \neq P' \in \mathcal{P}$ and both endpoints of $P$ have an edge to $V(P')$, then there is some edge set $R$ and $V' \subseteq V(P) \cup V(P')$ with $V(P) \subseteq V'$ such that $(V', R)$ is strongly $\frac{3}{2}$-contractible. Moreover, $R$ and $V'$ can be computed in polynomial time.
\end{lemma}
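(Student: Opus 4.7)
Let $u_1, u_2$ denote the two endpoints of $P$ and write $P' = v_1 v_2 \cdots v_k$. The hypothesis gives links $u_1 a_1, u_2 a_2 \in L$ with $a_1, a_2 \in V(P')$, and since $N(V(P)) \subseteq V(P) \cup V(P')$ every edge of $\delta_G(V(P))$ goes to $V(P')$. Consequently every feasible $2$EC solution of $G$ uses at least two links between $V(P)$ and $V(P')$ (by the cut $\delta(V(P))$). The plan is to choose $V'$ so that all such cross-links automatically land in $L(G[V'])$, and to pick $R$ with $|R\cap L|\le 3$ so that the ratio $\tfrac{|R\cap L|}{2}\le \tfrac{3}{2}$ forces strong $(\tfrac{3}{2},3,1)$-contractibility.

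Let $B:=\{v_i\in V(P'):\exists u\in V(P),\,uv_i\in L\}$ be the set of $P'$-neighbours of $V(P)$, and let $v_b,v_c\in B$ have smallest and largest index on $P'$. Set
\[
V':=V(P)\cup V(P^{v_b v_c})\subseteq V(P)\cup V(P').
\]
Then $V(P)\subseteq V'$; every link of $\delta_G(V(P))\cap L$ has its $P'$-endpoint in $B\subseteq V(P^{v_b v_c})\subseteq V'$ and hence lies in $L(G[V'])$; and $G|V'$ is still a \PAP instance since contracting $V'$ only contracts $P$ and a subpath of $P'$. Therefore any feasible $2$EC solution contains at least two links of $L(G[V'])$.

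For $R$, I start from the skeleton $E(P)\cup E(P^{v_b v_c})$ and augment with cross-links drawn from (i) a witness $z_b v_b\in L$ with $z_b\in V(P)$ (which exists by definition of $b$), (ii) a witness $z_c v_c\in L$ with $z_c\in V(P)$ (similarly for $c$), and (iii) the hypothesised links $u_1 a_1$ or $u_2 a_2$. In the best sub-case $\{z_b,z_c\}=\{u_1,u_2\}$, the two witnesses already close the skeleton into a Hamilton cycle through $V'$, so $|R\cap L|=2$ and $(V',R)$ is even strongly $1$-contractible. When exactly one of $\{u_1,u_2\}$ lies in $\{z_b,z_c\}$, say $z_b=u_1$ and $z_c$ is interior, the endpoint $u_2$ is pendant in the skeleton; adding the hypothesised link $u_2 a_2$ with $a_2\in V(P^{v_b v_c})$ gives $R=E(P)\cup E(P^{v_b v_c})\cup\{u_1 v_b,z_c v_c,u_2 a_2\}$ with $|R\cap L|=3$, and a direct ear decomposition into the two cycles
\[
u_1\to P\to u_2\to a_2\to P^{v_b v_c}[a_2,v_b]\to v_b\to u_1\quad\text{and}\quad z_c\to v_c\to P^{v_b v_c}[v_c,a_2]\to a_2\to u_2\to P[u_2,z_c]\to z_c
\]
verifies $2$-edge-connectivity. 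The symmetric situation with $u_1$ pendant is handled identically.

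The main obstacle is the residual case in which neither $u_1$ nor $u_2$ is a witness of $v_b$ or $v_c$ (so both $z_b,z_c$ must be interior vertices of $P$). The plan is to sidestep it by always picking $a_1$ as the link-neighbour of $u_1$ closest to $v_1$ on $P'$, and $a_2$ as the link-neighbour of $u_2$ closest to $v_k$: whenever $u_1$ has any link to $v_b$ this forces $a_1=v_b$ and we are back in one of the previous cases, and symmetrically for $a_2$ and $v_c$. In the truly degenerate remaining configuration, where $v_b,v_c$ are reached only from interior vertices, one can still include $u_1 a_1$, $u_2 a_2$, and a single interior witness (say $z_b v_b$), re-using $z_b v_b$ to cover the $v_c$-side via the extra link $v_b v_c\in L$ guaranteed by a shadow-completeness argument, or alternatively tightening the lower bound on feasible solutions in this configuration to three links of $L(G[V'])$ (using the cut $\delta_S(\{v_1,\dots,v_{b-1}\})$ inside any feasible $S$). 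Either route keeps $|R\cap L|\le 3$ against at least two feasible links, yielding strong $\tfrac{3}{2}$-contractibility; all choices can be carried out in polynomial time by enumerating links incident to $V(P)\cup V(P^{v_b v_c})$.
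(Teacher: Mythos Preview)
Your overall plan---take a subpath of $P'$ containing all relevant neighbours, build a $2$EC subgraph on $V'$ with at most three links, and use the cut $\delta(V(P))$ for a lower bound of two links in $L(G[V'])$---is exactly the paper's strategy. The cases where at least one of $u_1,u_2$ is a witness for $v_b$ or $v_c$ are fine (your ear description is slightly garbled, but the ear $z_c\,v_c\,P'[v_c,a_2]$ with endpoints $z_c,a_2$ on the first cycle does the job).

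The problem is your residual case, where $v_b$ and $v_c$ are adjacent only to interior vertices of $P$. Both proposed escape routes fail. First, shadow-completeness is a property of the \tecpc\ relaxation (after contracting interior vertices), not of the \PAP\ instance $G$; it never manufactures a link $v_bv_c$ between two vertices of the same path $P'$. Second, the cut $\delta_S(\{v_1,\dots,v_{b-1}\})$ is a cut around vertices that lie \emph{outside} $V'$, so the links crossing it need not belong to $L(G[V'])$ at all; you cannot use it to push the lower bound to three. Concretely, take $P=u_1\,z\,u_2$, $P'=v_b\,a\,v_c$, with links $\{u_1a,\,u_2a,\,zv_b,\,zv_c\}$ and $u_1u_2\notin L$. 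Here $v_b=v_1$, so your cut is empty; any $2$EC subgraph on $V'=V(P)\cup V(P')$ needs all four links, giving ratio $2$, not $\tfrac{3}{2}$.

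The paper sidesteps this by splitting on whether $u_1u_2\in L$. If $u_1u_2\in L$, use your $V'$ and set $R=E(P)\cup E(P'^{v_bv_c})+u_1u_2+z_bv_b+z_cv_c$: the cycle $P+u_1u_2$ together with the ear $z_b\,v_b\,P'^{v_bv_c}\,v_c\,z_c$ is $2$EC with three links. If $u_1u_2\notin L$, \emph{shrink} $V'$ by choosing the extremal points $x,y$ on $P'$ from $N(\{u_1,u_2\})$ only (rather than $N(V(P))$). Then the witnesses for $x,y$ are automatically among $u_1,u_2$, so your residual case cannot occur, and at most three links suffice. The lower bound of two still holds because any feasible solution must contain a link at $u_1$ and a distinct link at $u_2$ (as $u_1u_2\notin L$), and $N(\{u_1,u_2\})\subseteq V(P)\cup V(P'^{xy})$ by the choice of $x,y$.
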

\begin{proof}
Let $P$ and $P'$ as in the lemma statement.
Let $u$ and $w$ be the endpoints of $P$ and let $a$ and $b$ be the endpoints of $P'$.
We distinguish two cases whether $u w \in L$ or not.
First, assume that $e = u w \in L$.
Let $x \in N(P)$ be some neighbor of $V(P)$ that is closest to $a$ and let $y \in N(P)$  be some neighbor of $V(P)$ that is closest to $b$. 
Let $e_x$ and $e_y$ be the corresponding edges to $x$ and $y$ from $V(P)$, respectively.
Now observe that $H \coloneqq (V(P) \cup V(P'^{xy}), P \cup P'^{xy} + e_x + e_y + e)$ is 2EC.
Furthermore, note that any optimal solution must include at least $2$ links from within $G[V(H)]$, as $V(P)$ must be connected to $V \setminus V(P)$ by at least $2$ links in $\OPT(G)$ and since $N(P) \subseteq V(P) \cup V(P'^{xy})$.
Further note that $G|H$ is an instance of \PAP.
Then $H$ is a $\frac{3}{2}$-contractible subgraph and we have found the desired sets $R$ and $V'$.

Second, assume that $u w \notin L$.
Let $x \in N(u + w)$ be some neighbor of $u$ or $w$ that is closest to $a$ and let $y \in N(u + w)$ be some neighbor of $u$ or $w$ that is closest to $b$. 
Let $e_x$ and $e_y$ be the corresponding edges to $x$ and $y$ from $u$ or $w$, respectively.
If both edges are outgoing from one of the two vertices $u$ or $w$, say $u$, then let $e'$ be an arbitrary edge from $v$ to $V(P')$ (which must exist).
Now observe that $H \coloneqq (V(P) \cup V(P'^{xy}), P \cup P'^{xy} + e_x + e_y + e')$ is 2EC.
Furthermore, note that any optimal solution must include at least $2$ links from within $G[V(H)]$, as $\OPT(G)$ must contain two links from $u$ and $w$ (since $uw \notin L$) and since $N(u+w) \subseteq V(P) \cup V(P'^{xy})$.
Observe that $G|H$ is an instance of \PAP.
Then $H$ is a $\frac{3}{2}$-contractible subgraph and we have found the desired sets $R$ and $V'$.
Note that all computations can be implemented to run in polynomial time.
\end{proof}

Next, we show that given a separator $X \subseteq V$ of $G$, we can find in polynomial time a certain partition of $G \setminus X$.

\begin{lemma}
    \label{lem:structured:finding-connected-components}
    Let $X \subseteq V$ be a path-separator, $P^2$-separator, or a $C^2$-separator of $G$. 
    If there is a partition of $G \setminus V(X)$ into components $(V_1, V_2)$ such that there are no edges between $V_1$ and $V_2$ in $G$ and $\opt((G \setminus V_1)|P) \geq k_1$ and $\opt ((G \setminus V_2) | P) \geq k_2$, $k_1, k_2 \in \{3, 4\}$, then we can find such a partition in polynomial time. 
\end{lemma}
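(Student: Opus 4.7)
My plan is to reduce the task to a bounded enumeration over bipartitions of the connected components of $G \setminus V(X)$, exploiting the smallness of $k_1, k_2$.

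First, I compute the connected components $C_1, \dots, C_m$ of $G \setminus V(X)$ by a linear-time graph traversal. Since the required partition $(V_1, V_2)$ has no edges between its two sides, each $C_i$ lies entirely in $V_1$ or entirely in $V_2$, so the partition corresponds to a bipartition $[m] = I_1 \cup I_2$ via $V_j = \bigcup_{i \in I_j} V(C_i)$.

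Next, I establish a decomposition: in the contracted instance $(G \setminus V_1)|X$, the sub-instances over different components $C_i$ with $i \in I_2$ share only the single vertex $x$ obtained from contracting $V(X)$, and no other edges. Hence for any link set $S$ making $(G \setminus V_1)|X$ $2$-edge-connected, each restriction $S \cap L(G[V(C_i) \cup V(X)])$ is already a 2EC augmentation of $G[V(C_i) \cup V(X)]|X$; otherwise a bridge in a sub-instance would remain a bridge globally, since $C_i$ is connected to the rest only through $x$. The converse direction, that independent 2EC sub-augmentations can be glued into a global one, follows by repeated application of \Cref{fact:structured:2EC-contraction+2EC}. This gives
\[
\opt\bigl((G \setminus V_1)|X\bigr) = \sum_{i \in I_2} a_i, \qquad a_i := \opt\bigl(G[V(C_i) \cup V(X)] \,|\, X\bigr),
\]
and symmetrically for $V_2$. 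Since $k_1, k_2 \leq 4$, only the truncated values $\tilde a_i := \min(a_i, k_1 + k_2)$ matter, and each $\tilde a_i$ is computable in polynomial time by brute force: enumerate every subset of $L \cap L(G[V(C_i) \cup V(X)])$ of size at most $k_1 + k_2 - 1 \leq 7$ and test 2-edge-connectivity.

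Finally, I enumerate all subsets $I \subseteq [m]$ with $|I| \leq k_2 \leq 4$, of which there are $O(m^4)$, and test each against $\sum_{i \in I} \tilde a_i \geq k_2$ and $\sum_{i \notin I} \tilde a_i \geq k_1$ (these correctly certify the real sums by the choice of truncation). If any valid partition $(I_1, I_2)$ exists, I can shrink $I_1$ greedily by removing components while preserving the first inequality; since $a_i \geq 1$ for every $i$ (each sub-instance requires at least one link to become 2EC, as the contracted forest alone has bridges), the resulting minimal subset has size at most $k_2 \leq 4$, and enlarging $I_2$ only preserves the second inequality. Thus my enumeration discovers a valid partition whenever one exists. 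The main obstacle is the decomposition step: I need to justify carefully that 2EC augmentations decompose over components and that the contracted instance $(G \setminus V_j)|X$ is itself a valid \PAP instance, both of which follow from how path-, $P^2$-, and $C^2$-separators are built from vertex sets of paths in $\mathcal{P}$.
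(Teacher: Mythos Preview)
Your proof is correct and takes essentially the same approach as the paper: both compute the connected components $C_1,\dots,C_m$ of $G\setminus V(X)$, observe that any valid $(V_1,V_2)$ is a union of components, compute the per-component values $a_i=\opt(G[V(C_i)\cup V(X)]\,|\,X)$ up to a constant threshold by brute force, and then search a polynomially bounded family of bipartitions.

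The only noteworthy difference is in how the enumeration is bounded. The paper does a short case analysis (two components with $a_i\ge 4$; exactly one; none, splitting further on $j\ge 8$ versus $j\le 7$ where it brute-forces all $2^7$ bipartitions). You instead prove the additive decomposition $\opt((G\setminus V_1)|X)=\sum_{i\in I_2}a_i$ explicitly and use a uniform shrinking argument to justify enumerating only sides of size $\le k_2\le 4$. Your route is cleaner and avoids the ad-hoc cases; the paper's route is slightly lighter in that it only needs monotonicity rather than the full additive decomposition. Both are equally valid.
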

\begin{proof}
    Let $C_1, ..., C_j$ be the connected components of $G \setminus V(X)$.
    Consider some $P \in \mathcal{P}$. If $v \in V(P)$ is contained in some component $C_i$, then so is $V(P)$ (except for vertices in $X$).
    Let $(V_1, V_2)$ be a partition of $G \setminus V(X)$ such that there are no edges between $V_1$ and $V_2$ in $G$.
    Note that if $u \in C_i$ is in $V_1$ (or $V_2$), then all $V(C_i)$ must be in $V_1$ (or $V_2$), as otherwise there would be an edge between $V_1$ and $V_2$.
    
    If there are two connected components $C_1$, $C_2$ such that $\opt(G[V(C_i) \cup X]|X) \geq 4$, $i = 1, 2$ (which we can check in polynomial time), then set $V'_1 = C_1$ and $V'_2 = \bigcup_{i=2}^{j} V(C_i)$, and observe that $(V_1', V_2')$ satisfies the conditions of the lemma.
    Let us now assume that there is exactly one component, say $C_1$, such that $\opt(G[V(C_1) \cup X]|X) \geq 4$.
    Also in this case we set $V_1' = C_1$ and $V_2' = \bigcup_{i=2}^{j} V(C_i)$, which must satisfy the requirements of the lemma (if such a partition exists).
    Hence, assume that $\opt(G[V(C_i) \cup X]|X) \leq 3$ for all $i \in [j]$.
    Then, in polynomial time we can order the connected components $C_1, ..., C_j$ such that $\opt(G[V(C_i) \cup X]|X)$ is non-increasing.
    It can be easily checked that $\opt(G[V(C_i) \cup X]|X) \geq 1$ for all $i \in [j]$ since $X$ is a path-separator, $P^2$-separator, or a $C^2$-separator of $G$, by assumption.
    
    Hence, if $j \geq 8$, we can set $V_1' = \bigcup_{i=1}^{j - 4} V(C_i)$ and $V_2' = \bigcup_{i=j-3}^{j} V(C_i)$, which is the desired partition.
    Else, assume that $j \leq 7$.
    By our previous observations, if there is a partition $(V_1, V_2)$ as in the lemma, then there must be some $I \subseteq [j]$ such that $V_1 = \bigcup_{i \in I} V(C_i)$ and $V_2 = \bigcup_{i \in [j] \setminus I} V(C_i)$.
    Since $j \leq 7$, we can find such a partition in polynomial time by enumeration.
\end{proof}

Next, we show that if there is a path separator $P$ for some $P \in \mathcal{P}$, a $P^2$-separator or a $C^2$-separator, then there is a link-set $F$ that 'glues' together the two solutions of the subproblems, i.e., the two solutions of the subproblem together with $F$ form a 2EC connected spanning subgraph of $G$.

\begin{lemma}
    \label{lem:structured:finding-F-1}
    Let $P \in \mathcal{P}$ be a separator of $G$ and let $P'$ be a minimum-length sub-path of $P$ such that $P'$ is a separator of $G$. 
    Assume that there is a partition of $V(G) \setminus V(P')$ into components $(V_1, V_2)$ such that there are no edges between $V_1$ and $V_2$ in $G[V(G) \setminus V(P')]$.
    Let $H_1$ be a 2EC spanning subgraph of $(G \setminus V_2)|P'$ and $H_2$ be a 2EC spanning subgraph of $(G \setminus V_1)|P'$.
    Then we can find an edge set $F \subseteq L$ with $|F| \leq 2$ such that $H = H_1 \cup H_2 \cup P \cup F$ is a 2EC spanning subgraph of $G$.
    \end{lemma}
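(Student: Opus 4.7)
The plan is to show that $H_0 := H_1 \cup H_2 \cup P$ (viewed in $G$ after decontracting the vertex representing $V(P')$ in each $H_i$) is already close to being 2-edge-connected, with all potential bridges confined to $P'$, and then to pick at most two links that cover them all. Let $u'$ and $w'$ be the endpoints of $P'$ on $P$, and let $T_u := V(P^{u u'})$ and $T_w := V(P^{w' w})$ be the two tails of $P$ outside $V(P')$; since each tail is connected in $G \setminus V(P')$ and $V_1, V_2$ have no cross-edges, each tail lies entirely in one of $V_1, V_2$, and I will write $V_{i_u}, V_{i_w}$ for the sides containing them (picking the indices arbitrarily when a tail is empty).

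The first step is to show that every bridge of $H_0$ lies on $P'$. For $e \in E(H_1)$, the 2EC property of $H_1$ in $(G \setminus V_2) | P'$ ensures that $H_1 \setminus \{e\}$ stays connected in the contracted graph; combining with $H_2$ (sharing the contracted vertex $p$) and with the fact that $P' \subseteq P \subseteq H_0 \setminus \{e\}$ stitches up the decontraction of $V(P')$, the graph $H_0 \setminus \{e\}$ is connected, so $e$ is not a bridge. The same argument works for $e \in E(H_2)$, and for tail edges of $P$ the 2EC of the relevant $H_i$ (whose vertex set contains the tail) provides a bypass. Furthermore, any edge cut $(A,B)$ of $G$ with $V(P')$ entirely on one side induces a nontrivial contracted cut of $(G \setminus V_j) | P'$ for a suitable $j$, which has $\geq 2$ crossings in $H_j$ by 2EC; and any cut that splits $V(P')$ is crossed at least once on $P'$, so only cuts crossed \emph{exactly} once on $P'$ (at some $xy \in E(P')$) can be problematic.

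For the construction of $F$, I would invoke the minimality of $P'$: both $G \setminus (V(P') \setminus \{u'\})$ and $G \setminus (V(P') \setminus \{w'\})$ are connected. Since $V_1, V_2$ are non-adjacent in $G$, the vertex $u'$ must, in $G \setminus (V(P') \setminus \{u'\})$, be adjacent to both $V_1$ and $V_2$; as $u'$'s only non-link edge in this subgraph goes to $T_u \subseteq V_{i_u}$, there must exist a link $f_u \in L$ from $u'$ to $V_{1-i_u}$. Define $f_w \in L$ symmetrically at $w'$ into $V_{1-i_w}$, and take $F := \{f_u, f_w\}$. In the degenerate case $|V(P')| = 1$ (so $u' = w'$), no cut splits $V(P')$ and $F = \emptyset$ suffices.

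Finally, I would verify that $H_0 \cup F$ is 2EC by showing that every problematic cut $(A, B)$, with $A \cap V(P') = \{u', \ldots, x\}$ and $B \cap V(P') = \{y, \ldots, w'\}$, is crossed by $f_u$ or $f_w$. The single-crossing assumption forces $T_u \subseteq A$ and $T_w \subseteq B$ (otherwise a second $P$-edge would cross); and when $V_1, V_2$ each lie entirely in one side, this forces $V_{i_u} \subseteq A$ and $V_{i_w} \subseteq B$, which is only consistent if $i_u \neq i_w$, so $V_{1-i_u} = V_{i_w} \subseteq B$ and the endpoint of $f_u$ lies in $B$, giving the needed second crossing. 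The main obstacle I expect is handling the sub-cases where a side $V_j$ is itself split between $A$ and $B$: here I would apply the 2EC of $H_j$ to both placements of the contracted vertex $p$ (on $A$'s side, then on $B$'s side of the contracted cut), track exactly which vertices of $V(P') = \{u', \ldots, x\} \cup \{y, \ldots, w'\}$ the forced $H_j$-landings hit, and use this structural information either to rule out the split or to exhibit an $H_j$-crossing of the decontracted cut. Executing this tracking cleanly is the most delicate piece of the argument.
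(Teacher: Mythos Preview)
Your overall strategy matches the paper's: show that all bridges of $H_0 = H_1 \cup H_2 \cup P$ lie on $P'$, invoke minimality of $P'$ to get edges from the two endpoints of $P'$ into both $V_1$ and $V_2$, and take two such edges as $F$. The paper's proof is far terser than your proposal --- it simply fixes both edges to go to the \emph{same} side ($e_1$ from $v$ to $V_2$ and $e_2$ from $w$ to $V_2$) and asserts that $H_1 \cup H_2 \cup P \cup \{e_1,e_2\}$ is 2EC, with no cut analysis at all.

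There is, however, a genuine gap in your construction of $F$. In what you treat as the easy sub-case (each $V_j$ entirely on one side of the problematic cut $(A,B)$), your argument hinges on ``$T_u \subseteq A$ forces $V_{i_u} \subseteq A$''. When $T_u = \emptyset$ (e.g.\ $P' = P$, so $u'$ is an endpoint of $P$), this implication is vacuous and the index $i_u$ you chose ``arbitrarily'' carries no information about the cut. Concretely: take $P' = P$ and choose $i_u,i_w$ so that $f_u$ lands in $V_1$ and $f_w$ in $V_2$. For a cut with $u' \in A$, $w' \in B$, $V_1 \subseteq A$, $V_2 \subseteq B$, neither $f_u$ (from $A$ into $V_1 \subseteq A$) nor $f_w$ (from $B$ into $V_2 \subseteq B$) crosses --- and such a cut \emph{can} be a bridge-cut of $H_0$, namely when all $H_1$-edges from $V_1$ to $V(P')$ happen to land in $P'\cap A$ and all $H_2$-edges land in $P'\cap B$. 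The paper's choice of sending \emph{both} links to the same $V_j$ avoids exactly this pitfall: since $v \in A$ and $w \in B$, one of $e_1 = v z_1$ and $e_2 = w z_2$ (with $z_1,z_2$ in the same $V_j$) must cross any unsplit cut, whichever single side $V_j$ sits on.

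The split-$V_j$ case you flag as ``the most delicate piece'' is indeed the subtle part, and your plan (apply 2-edge-connectivity of $H_j$ with the contracted vertex placed on each side in turn) is the right instinct; the paper does not spell this out either, so carrying it through carefully is where the real work in a complete proof would lie.
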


\begin{proof}
    Consider $H'= H_1 \cup H_2 \cup P$. 
    Observe that $H'$ is a connected spanning subgraph of $G$ and that the bridges in $H'$ (if any) are a subset of the edges in $P'$.
    Now observe that $P'$ is minimal in the sense that no sub-path of $P'$ is a separator in $G$. 
    Hence, the endpoints $v, w$ of $P'$ must have edges to both $V_1$ and $V_2$.
    Let $e_1$ and $e_2$ be such edges from $V_2$ to $v$ and $w$, respectively.
    By setting $F = \{e_1, e_2 \}$ observe that $H \cup P \cup F$ is a 2EC spanning subgraph of $G$.
\end{proof}

\begin{lemma}
    \label{lem:structured:finding-F-2}
    Let $Q$ be a $P^2$-separator of $G$ with paths $P_1, P_2 \in \mathcal{P}$ and edge $e \in L$.
    Assume that there is a partition of $G \setminus V(Q)$ into components $(V_1, V_2)$ such that there are no edges between $V_1$ and $V_2$ in $E$.
    Let $H_1$ be a 2EC spanning subgraph of $(G \setminus V_2)|Q$ and $H_2$ be a 2EC spanning subgraph of $(G \setminus V_1)|Q$.
    Then we can find an edge set $F$ with $|F| \leq 2$ such that $H = H_1 \cup H_2 \cup P_1 \cup P_2 \cup F \cup \{ e \}$ is a 2EC spanning subgraph of $G$.
    Furthermore, if there is an edge $e \in E(\mathcal{P})$ with $e \in \delta(Q)$ and $e$ is incident to one of the endpoints of $Q$, then we can find an edge set $F$ with $|F| \leq 1$ such that $H = H_1 \cup H_2 \cup P_1 \cup P_2 \cup F \cup \{ e \}$ is a 2EC spanning subgraph of $G$.
    \end{lemma}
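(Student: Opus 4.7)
The plan is to follow closely the approach of the proof of Lemma~\ref{lem:structured:finding-F-1}, adapting it to the richer structure of a $P^2$-separator in which $Q$ sits on two paths $P_1, P_2$ joined by the link $e$. First, I would consider $H^\star := H_1 \cup H_2 \cup P_1 \cup P_2 \cup \{e\}$; since $Q$ is a sub-path of $P_1 \cup P_2 \cup \{e\}$, every edge of $Q$ is present in $H^\star$. A direct argument using Facts~\ref{fact:structured:2EC-after-contraction} and~\ref{fact:structured:2EC-contraction+2EC} shows that $H^\star$ is a connected spanning subgraph of $G$ and that all bridges of $H^\star$ lie on $Q$: any $H_1$- or $H_2$-edge outside $Q$ is non-bridge by the contracted 2EC property, and each path-edge on a tail of $P_1$ or $P_2$ outside $Q$ sits entirely inside $V_1$ or $V_2$ (since there are no edges between $V_1, V_2$), so $H_1$ (resp.\ $H_2$) supplies an alternative path back to $V(Q)$.

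In the second step I would exploit the minimality of $Q$: removing either endpoint of $Q$ yields a sub-path of $P_1 \cup P_2 \cup \{e\}$ that is no longer a separator, forcing each endpoint of $Q$ to have a link to both $V_1$ and $V_2$ in $G$. Choose arbitrary such links $f_1 = v\,a_1$ and $f_2 = w\,a_2$ with $a_1, a_2 \in V_2$, and set $F := \{f_1, f_2\}$, giving $|F| \leq 2$. It suffices to check that $H := H^\star \cup F$ has $|\delta_H(S)| \geq 2$ for every cut $S$. By the bridge characterization of Step~1, we may restrict to cuts with $S \cap V(Q)$ equal to a proper prefix $\{q_0, \ldots, q_j\}$ of $Q$ (suffix cuts are symmetric, and ``middle'' cuts contribute two path-edges of $Q$ to $\delta(S)$ automatically). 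For such a prefix cut exactly one $Q$-edge, $q_j q_{j+1}$, lies in $\delta_H(S)$; a short case split on whether $V_1, V_2$ each sit entirely on one side of $S$ or are split, combined with the contracted 2EC cuts $|\delta_{H_i}((S \cap V_i) \cup \{Q^\star\})| \geq 2$, shows that either $f_1$ (when $a_1 \notin S$), $f_2$ (when $a_2 \in S$), or an $H_i$-edge crosses $\delta_H(S)$, supplying the second edge. All of the constructions above are clearly polynomial-time.

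For the refined statement with $|F| \leq 1$: suppose $e' \in E(\mathcal{P}) \cap \delta(Q)$ is a path-edge incident to an endpoint of $Q$, say $e' = v v''$ with $v'' \notin V(Q)$. Then $v''$ lies in exactly one of $V_1$ or $V_2$ (call it $V_i$), because $V_1, V_2$ have no mutual edges. Since $e' \in P_1 \cup P_2 \subseteq H^\star$, the edge $e'$ effectively replaces $f_1$ in the cut analysis of Step~2 (for any prefix cut $S$ separating $v$ from the side containing $v''$, $e' \in \delta_H(S)$). Picking a single link $f = w\,a$ with $a \in V_i$, which exists by the minimality of $Q$ at $w$, then suffices to cover the remaining prefix cuts, yielding $F := \{f\}$ with $|F| \leq 1$.

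The most delicate part of the plan is the sub-case in Step~2 where $V_2$ is split across $S$: there the contracted 2EC cuts do not automatically decontract into edges of $\delta_H(S)$, and one must combine the two cuts corresponding to $(S \cap V_2) \cup \{Q^\star\}$ and $(V_2 \setminus S) \cup \{Q^\star\}$ together with the placement of $f_1, f_2$ and the minimality of $Q$ (which restricts which vertices of $Q$ can be reached only from one piece of $V_2$) to guarantee that at least one cross-cut edge is produced. This is also the step where the $P^2$-separator argument genuinely departs from, rather than merely mirroring, the proof of Lemma~\ref{lem:structured:finding-F-1}.
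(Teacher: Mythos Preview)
Your plan mirrors the paper's proof: same $H^\star = H_1 \cup H_2 \cup P_1 \cup P_2 \cup \{e\}$, same ``bridges lie in $Q$'' observation, same use of minimality to find edges $f_1,f_2$ from the endpoints $v,w$ of $Q$ to $V_2$, and same choice $F=\{f_1,f_2\}$. The paper simply asserts at that point that $H^\star \cup F$ is 2EC; you go further and sketch a cut analysis, correctly isolating the sub-case where $V_2$ is split across the cut as the crux.

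That sub-case is where the argument (yours, and implicitly the paper's) does not close. Nothing forbids $Q^\star$ from being a cut-\emph{vertex} of the $2$-edge-connected graph $H_2$, with one block of $H_2$ attached to $Q^\star$ only via edges landing in the $v$-half $\{q_0,\dots,q_j\}$ of $Q$ and the other block only via edges landing in the $w$-half; the same can simultaneously happen for $H_1$. If in addition the only $G$-edges from $v$ and from $w$ into $V_2$ already belong to $H_2$ (minimality of $Q$ does not preclude this), then $f_1,f_2$ add nothing new and $q_j q_{j+1}$ remains a bridge of $H^\star \cup F$. In this configuration the two contracted cuts $(S\cap V_2)\cup\{Q^\star\}$ and $(V_2\setminus S)\cup\{Q^\star\}$ you invoke each have all of their $\geq 2$ crossing edges running between $Q^\star$ and the ``far'' piece of $V_2$, so after decontraction none of them crosses $S$, and neither $f_1$ nor $f_2$ does either. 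The fix is not a sharper accounting of the same $F$ but a different construction: one must allow $F$ to contain an edge interior to $V_1\cup V_2$ that crosses the bridge (such edges exist in $G$ because proper sub-paths of $Q$ are non-separating, hence $G - (V(Q)\setminus\{v,w\})$ is connected), rather than restricting $F$ to edges out of the endpoints of $Q$.
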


\begin{proof}
    The proof is similar to the proof of \Cref{lem:structured:finding-F-1}. 
    Consider $H'= H_1 \cup H_2 \cup P_1 \cup P_2 \cup \{ e \}$. 
    Observe that $H'$ is a connected spanning subgraph of $G$ and that the bridges in $H'$ (if any) are a subset of the edges in $Q$.
    Now observe that since $Q$ is a $P^2$-separator, $Q$ is minimal in the sense that no sub-path of $Q$ is a separator in $G$. 
    Hence, the endpoints $v, w$ of $Q$ must have edges to both $V_1$ and $V_2$.
    Let $e_1$ and $e_2$ be such edges from $V_2$ to $v$ and $w$, respectively.
    By setting $F = \{e_1, e_2 \}$ observe that $H \cup Q \cup F$ is a 2EC spanning subgraph of $G$.
    Furthermore, if one of the edges incident to $v$ or $w$, respectively, is an edge of $E(\mathcal{P})$, then this edge serves as one of the edges $e_1$ or $e_2$ from above and hence we only need one of these additional links.
\end{proof}

\begin{lemma}
    \label{lem:structured:finding-F-3}
    Let $Q$ be a $C^2$-separator of $C$ with paths $P_1, P_2 \in \mathcal{P}$ and edges $e_1, e_2 \in L$. 
    Assume that there is a partition of $G \setminus V(C)$ into components $(V_1, V_2)$ such that there are no edges between $V_1$ and $V_2$ in $E$.
    Let $H_i$ be a 2EC spanning subgraph of $G_i|C$ for $i \in \{1, 2\}$.
    Then $H = H_1 \cup H_2 \cup C$ is a 2EC spanning subgraph of $G$.
    \end{lemma}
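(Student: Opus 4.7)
\medskip

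The plan is to verify two things about $H = H_1 \cup H_2 \cup C$: that $H$ spans $V(G)$, and that $H$ is $2$-edge-connected. Note that, unlike the previous two lemmas, no additional link set $F$ is required here because the $C^2$-separator $C$ is itself a cycle spanning $V(P_1) \cup V(P_2) = V(C)$, so it already provides two edge-disjoint paths between any pair of vertices in $V(C)$.

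First, I would argue that $H$ is spanning. By assumption, $V(G) = V_1 \sqcup V_2 \sqcup V(C)$. When we decontract the contracted vertex of $V(C)$, $H_i$ spans $V_i \cup V(C)$, and $C$ itself spans $V(C)$. Hence $H$ spans $V(G)$. I would also observe the edge-disjointness that simplifies the rest: since $G_i|C$ is obtained by contracting $V(C)$, no edge of $H_i$ has both endpoints in $V(C)$. Thus the edge sets $E(H_1)$, $E(H_2)$ and $E(C)$ are pairwise disjoint (when viewed in the original $G$).

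Next, the $2$-edge-connectivity of $H$ is proved by a case analysis on which edge $f \in E(H)$ we delete.
\begin{itemize}
\item If $f \in E(C)$, then $C - f$ is a Hamiltonian path on $V(C)$, so $V(C)$ remains connected in $H - f$. Since $H_1$ and $H_2$ are untouched and each $H_i$ (being $2$EC in $G_i|C$) connects every vertex of $V_i$ to the contracted vertex representing $V(C)$, after decontraction every vertex of $V_1 \cup V_2$ is connected to some vertex of $V(C)$ in $H - f$.
\item If $f \in E(H_i)$ (so $f \notin E(C)$ by disjointness), then $H_i - f$ is connected in $G_i|C$ because $H_i$ is $2$EC; decontracting, $V_i$ is still connected to $V(C)$. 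Meanwhile $C$ and $H_{3-i}$ are intact, so $V(C)$ is connected and $V_{3-i}$ is attached to $V(C)$.
\end{itemize}
In every case, $H - f$ is connected, so $H$ is $2$-edge-connected, as required.

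There is essentially no technical obstacle here; the only point that needs care is keeping the contracted versus decontracted views of $H_i$ straight and noting that $E(H_i) \cap E(C) = \emptyset$, which lets the three pieces $H_1$, $H_2$, $C$ cover each other's edge failures without interference. This is why, in contrast with Lemmas~\ref{lem:structured:finding-F-1} and \ref{lem:structured:finding-F-2}, the present statement needs no extra link $e \in L$ for gluing.
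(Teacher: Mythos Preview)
Your proof is correct and follows the same idea as the paper's; the only difference is presentation. The paper's proof is a two-line appeal to Fact~\ref{fact:structured:2EC-contraction+2EC} (since $C$ is a cycle, hence 2EC, and each $H_i$ is a 2EC spanning subgraph of $G_i|C$, the union is 2EC), whereas you unpack that fact explicitly via the edge-deletion case analysis. Both arguments are short; yours is self-contained while the paper's relies on the earlier fact.
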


\begin{proof}
    Consider $H= H_1 \cup H_2 \cup C$. 
    Observe that $H$ is a connected spanning subgraph of $G$.
    Furthermore, since $C$ is a cycle, \Cref{fact:structured:2EC-contraction+2EC} implies that $H$ is 2EC.
    Hence, $H$ is a 2EC spanning subgraph of~$G$.
\end{proof}

\subsection{The Algorithm}

We are now ready to describe the reduction that is used for \Cref{lem:structured:main}, which is formally given in \Cref{alg:reduce}.
We assume that we are given a polynomial-time $\alpha$-approximation algorithm \ALG for structured instances of \PAP.
We consider the recursive procedure \RED described in \Cref{alg:reduce}, together with the subroutines presented in Algorithms~\ref{alg:handle-path-separator}-\ref{alg:handle-degenerate-paths}.
Given an instance $G = (V, \mathcal{P} \cup L)$ of \PAP, where $E = E(\mathcal{P}) \cup L$, \RED returns a solution $\RED(G) \subseteq E$ to \PAP on $G$.
Our algorithm \RED uses \ALG as a black-box subroutine whenever the instance is structured.
Whenever we consider edges from a contracted subgraph and then 'decontract' this subgraph, we interpret these edges as edges from the original graph.
We will show that whenever one of the if-conditions of \Cref{alg:reduce} is satisfied, the returned subgraph is a 2EC spanning subgraph.

The algorithm proceeds as follows: 
In the first step of the algorithm in \Cref{line:alg:isolated:nodes} we obtain an instance without isolated nodes by replacing each isolated node with a path of length 2. This step is only used to simplify notation in this section and the subsequent sections.
If the optimal solution of $G$ uses only constant many links (which we can check in polynomial time), we simply output the optimal solution.
Furthermore, if $G$ contains a $(\frac 74, t, 1)$-contractible subgraph $H$ for some constant $t$ (which we can also check in polynomial time), then it is safe to contract $H$ and consider $G|H$.
If $G$ contains one of the 'forbidden' structures, i.e., a path separator $P$ for some $P \in \mathcal{P}$, a $P^2$-separator, a $C^2$-separator or a set of degenerate paths of size at least $\eps \cdot |\mathcal{P}|$ (which we can also check in polynomial time), then we consider two disjoint subproblems of \PAP and 'glue' them together with only a few additional links.

In the next subsection, we show that the overall algorithm has a polynomial running-time for any constant $\varepsilon >0$ given that \ALG has a polynomial running-time.  
Furthermore, we show that the returned subgraph is an $(\alpha + 3 \eps)$-approximation for \PAP whenever \ALG returns an $\alpha$-approximation for structured instances of \PAP, as long as $\alpha \geq \frac 74$.
Finally, we prove that the output graph $G'$ in Line~16 has the desired properties.

\begin{algorithm}
\DontPrintSemicolon
\caption{Reduce}
\label{alg:reduce}
\If{$G$ contains isolated nodes}{
apply the reduction given in Lemma~29 of the full version of~\cite{grandoni2022breaching} to obtain an instance $G$ without isolated nodes: 
Replace each isolated node $v \in V$ with two new vertices $v_1, v_2$, add the edge $v_1 v_2$ to $\mathcal{P}$ as a new path and for each $u v \in L$ add $u v_1$ and $u v_2$ to $L$. \label{line:alg:isolated:nodes}
}
\If{$|\OPT(G)| \leq \frac{3}{\eps}$}{
compute $\OPT(G)$ by brute force and \Return $\OPT(G)$ \label{reduce:bruteforce}
\label{line:alg:reduce:constant}
}
\If{$G$ contains a strongly $(\frac 74, t, 1)$-contractible subgraph $H$ for $t \leq 10$
}{
\Return $\RED(G|H) + E(H)$
\label{line:alg:reduce:contractible}
}
\If{$G$ contains a path $P \in \mathcal{P}$ that is a separator such that $G \setminus V(P)$ has a partition into $(V_1, V_2)$ (according to \Cref{lem:structured:finding-connected-components}) with $\opt((G \setminus V_1)|P) \geq \opt ((G \setminus V_2) | P) \geq 3$}{
\Return output of \Cref{alg:handle-path-separator} (Handle Path Separator) with input $G, P, V_1, V_2$.
\label{line:alg:reduce:main:path-separator}
}
\If{$G$ contains a $P^2$-separator $Q$ with paths $P_1, P_2 \in \mathcal{P}$ and link $e \in L$ such that $G \setminus V(Q)$ has a partition into $(V_1, V_2)$ (according to \Cref{lem:structured:finding-connected-components}) with $\opt((G \setminus V_1)|Q) \geq \opt ((G \setminus V_2) | Q)$}{
    \Return output of \Cref{alg:handle-P2-separator} (Handle $P^2$ Separator) with input $G, Q, V_1, V_2$.
    \label{line:alg:reduce:main:P2-separator}
}
\If{$G$ contains a $C^2$-separator $C$ with paths $P_1, P_2 \in \mathcal{P}$ and links $e_1, e_2 \in L$ such that $G \setminus V(C)$ has a partition into $(V_1, V_2)$ (according to \Cref{lem:structured:finding-connected-components}) with $\opt((G \setminus V_1)|C) \geq \opt ((G \setminus V_2) | C) \geq 3$}{
    \Return output of \Cref{alg:handle-C2-separator} (Handle $C^2$ Separator) with input $G, C, V_1, V_2$.
    \label{line:alg:reduce:main:C2-separator}
}
\If{$G$ contains a collection $\mathcal{P'} \subseteq \mathcal{P}$ of degenerate paths such that $|\mathcal{P'}| \geq \eps \cdot |\mathcal{P}|$}{
    \Return output of \Cref{alg:handle-degenerate-paths} (Handle Degenerate Paths) with input $G, \mathcal{P'}$.
    \label{line:alg:reduce:main:degenerate-path-separator}
}
\Else{
\Return $\ALG(G)$
\label{line:alg:reduce:main:ALG}
}
\end{algorithm}

\begin{algorithm}
\DontPrintSemicolon
\caption{Handle Path Separator}
\label{alg:handle-path-separator}
 \textbf{Input:} Instance $G$, path $P$, sets $V_1, V_2$ \\
    Let $P' \subseteq P$ be the shortest sub-path of $P$ such that $P'$ is a separator in $G$ and let $P_i \subseteq P$, $i \in \{1, 2\}$ be the longest sub-path of $P$ such that the endpoints of $P_i$ are connected to $V_i$ in $G$. Let $G_i \coloneqq G[V_i \cup P_i]$.\\
    \If{ $\opt(G_i|P') \geq \frac{1}{\eps}$ for $i \in \{1, 2\}$}{
        Let $H'_i = \RED(G_i|P')$ for $i \in \{1, 2\}$,
        let $F' \subseteq L$, $|F'| \leq 2$ such that $H' \coloneqq H'_1 \cup H'_2 \cup P \cup F'$ is 2EC.   
    \label{line:alg:reduce:handle:P-separator-1}
    \Return $H'$
    }
    \Else{
        Let $H_1' \coloneqq \OPT(G_1|P')$, $H'_2 = \RED(G_2|P')$ and let $F' \subseteq L$, $|F'| \leq 2$ such that $H' \coloneqq H'_1 \cup H'_2 \cup P \cup F'$ is 2EC.   
    \label{line:alg:reduce:handle:P-separator-2}
    \Return $H'$
    }  

\end{algorithm}

\begin{algorithm}
\DontPrintSemicolon
\caption{Handle $P^2$ Separator}
\label{alg:handle-P2-separator}
 \textbf{Input:} Instance $G$, $P^2$-separator $Q$, sets $V_1, V_2$ \\
    Let $G_1 \coloneqq G[V \setminus V_2]$ and $G_2 \coloneqq G[V \setminus V_1]$.\\
    \label{line:alg:reduce:P2-separator}
    \If{ $\opt(G_i|Q) \geq \frac{1}{\eps}$ for $i \in \{1, 2\}$}{
        Let $H'_i = \RED(G_i|Q)$ for $i \in \{1, 2\}$,
        let $F' \subseteq L$, $|F'| \leq 2$ such that $H' \coloneqq H'_1 \cup H'_2 \cup E(\mathcal{P}) \cup F' \cup \{ e \}$ is 2EC. 
    \label{line:alg:reduce:handle:P2-separator-1}
    \Return $H'$
    }
    \Else{
        Let $H_1' \coloneqq \OPT(G_1|Q)$, $H'_2 = \RED(G_2|Q)$ and let $F' \subseteq L$, $|F'| \leq 2$, $e$ (according to \Cref{lem:structured:finding-F-2}) such that $H' \coloneqq H'_1 \cup H'_2 \cup E(\mathcal{P}) \cup F' \cup \{ e \}$ is 2EC. 
    \label{line:alg:reduce:handle:P2-separator-2}
    \Return $H'$
    }
\end{algorithm}

\begin{algorithm}
\DontPrintSemicolon
\caption{Handle $C^2$ Separator}
\label{alg:handle-C2-separator}
 \textbf{Input:} Instance $G$, $C^2$-separator $C$, sets $V_1, V_2$ \\
    Let $G_i \coloneqq G[V_i \cup C]|C$.\\
    \label{line:alg:reduce:C2-separator}
    \If{ $\opt(G_i|C) \geq \frac{1}{\eps}$ for $i \in \{1, 2\}$}{
        Let $H'_i = \RED(G_i|C)$ for $i \in \{1, 2\}$ and
        let $H' \coloneqq H'_1 \cup H'_2 \cup C$. 
    \label{line:alg:reduce:handle:C2-separator-1}
    \Return $H'$
    }
    \Else{
        Let $H_1' \coloneqq \OPT(G_1|C)$, $H'_2 = \RED(G_2|C)$ and let $H' \coloneqq H'_1 \cup H'_2 \cup C$. 
    \label{line:alg:reduce:handle:C2-separator-2}
    \Return $H'$
    }
\end{algorithm}

\begin{algorithm}
\DontPrintSemicolon
\caption{Handle Degenerate Paths}
\label{alg:handle-degenerate-paths}
 \textbf{Input:} Instance $G$, set of degenerate paths $\mathcal{P}' = \{P_1', ..., P_\ell' \}$. \\
    For each $P'_i \in \mathcal{P}'$ let $u'_i, w'_i$ be the endpoints of $P'_i$ and let $P_i \in \mathcal{P} \setminus \mathcal{P}'$ (with endpoints $u_i, w_i$) be the unique path to which $u'_i$, $w'_i$ are incident to s.t.\ $u_i' u_i \in L$ and $w_i' w_i \in L$ (according to the definition of a degenerate path).\\ 
    Define $C_i = (V(P_i') \cup V(P_i), E(P_i') \cup E(P_i) + u_i' u_i + w_i' w_i)$ and let $\mathcal{C}= \{ C_i | P_i' \in \mathcal{P}' \}$.\\
    Let $H_1 = \RED(G|\mathcal{P}')$ and $H_2 = \RED(G|\mathcal{C})$.\\
    \label{line:alg:reduce:degenerate}
    \If{ $|H_1 \cap L| + \ell \leq |H_2 \cap L| + 2\ell$ }{
        \Return $H_1 \cup \bigcup_{i=1}^\ell E(P_i') + u_i' w_i'$.
    \label{line:alg:reduce:handle:degenerate-1}
    }
    \Else{
        \Return $H_2 \cup \bigcup_{i=1}^\ell E(C_i)$.
    \label{line:alg:reduce:handle:degenerate-2}
    }
\end{algorithm}

\subsection{Proof of Correctness of the Algorithm}

\begin{lemma}
    \label{lem:reduce:running-time}
    \RED runs in polynomial time in $|V(G)|$ if \ALG does so.
\end{lemma}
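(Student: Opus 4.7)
The plan is to bound the running time of \RED by two separate arguments: first, that each non-recursive operation performed by \RED and the four handler subroutines runs in $\mathrm{poly}(|V(G)|)$; and second, that the recursion tree of \RED on the input $G$ has $|V(G)|^{O(1)}$ nodes for any fixed constant $\varepsilon$.

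For the first part I would walk through each case of \Cref{alg:reduce} and \Cref{alg:handle-path-separator,alg:handle-P2-separator,alg:handle-C2-separator,alg:handle-degenerate-paths}. The isolated-node preprocessing in \Cref{line:alg:isolated:nodes} is local; the brute-force base case on \Cref{reduce:bruteforce} enumerates $\binom{|L|}{3/\varepsilon}$ link sets, which is polynomial for constant $\varepsilon$; searching for a strongly $(\tfrac{7}{4},t,1)$-contractible subgraph with $t\le 10$ is polynomial by \Cref{lem:structured:finding-small-contractible-polytime}; path-, $P^2$-, and $C^2$-separators can be detected by enumerating the $O(|V|^{O(1)})$ candidate separators together with standard connectivity tests; the partition $(V_1,V_2)$ is obtained via \Cref{lem:structured:finding-connected-components}; the gluing set $F$ comes from \Cref{lem:structured:finding-F-1,lem:structured:finding-F-2,lem:structured:finding-F-3}; degenerate paths are identified by directly inspecting endpoint neighborhoods; and the black-box call to \ALG is polynomial by hypothesis.

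For the second part I would classify each internal node of the recursion tree by the if-branch taken. The contractible case (\Cref{line:alg:reduce:contractible}) has a single child on $G|H$ with $|V(G|H)| < |V(G)|$ since $|V(H)|\ge 2$. Each separator case spawns at most two children on instances with strictly fewer vertices, and whenever one side has $\opt<1/\varepsilon$ its handler brute-forces that side so only one recursive child is produced. In a ``heavy'' separator split (both children have $\opt\ge 1/\varepsilon$), I would argue that any feasible solution of $G$ restricts to feasible solutions of $G_1$ and $G_2$ (modulo at most $|F|\le 2$ gluing links, in the spirit of \Cref{lem:structured:finding-F-1,lem:structured:finding-F-2,lem:structured:finding-F-3}), giving $\opt(G_1)+\opt(G_2)\le \opt(G)+c$ for a constant $c\le 2$. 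Since $\varepsilon\le 1/12$ ensures $1/\varepsilon > c$, a standard tree-charging argument then bounds the total number of heavy separator splits by $O(\varepsilon\cdot\opt(G)) = O(|V(G)|)$.

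The main obstacle, which requires care, is the degenerate-paths branch (\Cref{line:alg:reduce:main:degenerate-path-separator}): the $G|\mathcal{P}'$ child may have the same vertex and path counts as $G$ when every contracted degenerate path has exactly two vertices, because the isolated-node preprocessing in the next call reintroduces the same number of vertices and paths. I would handle this via two structural observations. First, the $\ell$ fresh length-$2$ paths inserted by the isolated-node preprocessing cannot be degenerate, since a degenerate path requires an $L$-link between its two endpoints and the fresh edge lies in $E(\mathcal{P})$, disjoint from $L$. Second, along the $G|\mathcal{C}$ child the number of paths drops by $\ell\ge \varepsilon|\mathcal{P}|$, so any chain of such splits has depth $O(\log_{1/(1-\varepsilon)}|\mathcal{P}|)$, while along the $G|\mathcal{P}'$ child the number of degenerate paths strictly decreases by $\ge \varepsilon|\mathcal{P}(G)|$ per step, bounding the number of consecutive $G|\mathcal{P}'$ splits by $O(1/\varepsilon)$. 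A combined lexicographic accounting on the measure $(|V(G)|,|\mathcal{P}|,\#\text{degenerate paths})$ then yields a recursion tree of size $|V(G)|^{O(1)}$ for any fixed $\varepsilon$, which together with the polynomial per-node work proves the lemma.
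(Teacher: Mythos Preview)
Your per-node cost analysis and the treatment of the contractible and separator branches are fine (and broadly parallel the paper, though you charge via $\opt$ rather than a vertex-count potential). The gap is in the degenerate-path branch.

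The claim that along the $G|\mathcal{P}'$ child ``the number of degenerate paths strictly decreases by $\ge \varepsilon|\mathcal{P}(G)|$'' is not justified. Your first observation---that the $\ell$ fresh $2$-vertex paths are not degenerate---is correct, but it does not rule out that the contraction-and-re-expansion creates \emph{new} degenerate paths among the previously non-degenerate ones. Concretely, suppose $Q$ has endpoints $a,b$ with $ab\in L$ and $N_G(a),N_G(b)\subseteq V(Q)\cup V(P'_i)$, but the links from $a,b$ into $V(P'_i)$ hit only interior vertices of $P'_i$; then $Q$ fails the endpoint-link clause of the degenerate-path definition in $G$, yet after replacing $P'_i$ by the fresh path $v_1^iv_2^i$ every such link becomes a link to both new endpoints and $Q$ now satisfies the definition. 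Whether such a $Q$ can survive the earlier contractible/separator checks is a nontrivial structural question you have not addressed. Independently, a ``lexicographic accounting on $(|V|,|\mathcal{P}|,\#\text{degenerate})$'' does not bound a binary recursion tree polynomially: lexicographic decrease alone is far too weak, and your chain-depth bounds (for $G|\mathcal{C}$ and $G|\mathcal{P}'$ separately) do not combine to a tree-size bound without a common potential that drops multiplicatively in \emph{both} children simultaneously.

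The paper avoids both issues by tracking a different quantity: the number $p$ of \emph{non-symmetric} paths (a path is symmetric if all its vertices share the same neighborhood). Every degenerate path is non-symmetric, since otherwise all its vertices' neighbors lie in $V(P')\cup V(P)$ and \Cref{lem:structured:path} would already have produced a $\tfrac{3}{2}$-contractible subgraph before this branch is reached. The $\ell$ fresh $2$-vertex paths are symmetric by construction, and contraction preserves symmetry of the remaining paths; hence $p$ drops to at most $(1-\varepsilon)p$ in \emph{both} children $G|\mathcal{P}'$ and $G|\mathcal{C}$. A single potential $s(G)=|V|^2+M^2 p^{q}$ with $q=O(1/\varepsilon)$ then satisfies $s(H_1)+s(H_2)<s(G)$ at every degenerate-path split (and strictly decreases at all other steps), which yields the polynomial bound directly.
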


\begin{proof}
    We say that a path $P \in \mcP$ is symmetric if $N(u) = N(w)$ for all $u, w \in V(P)$, i.e., all vertices of $P$ have the same neighbors. We define $\mcP_S$ to be the set of symmetric paths and define $\mcP_n = \mcP \setminus \mcP_S$ to be the set of non-symmetric paths.  
    Observe that all paths that are constructed from isolated nodes in \Cref{line:alg:isolated:nodes} are symmetric paths.
    Let $n \coloneqq |V(G)|$, $p = |\mcP_n|$ and define $s(G) \coloneqq n^2 + M^2 \cdot p^q$ as the size of the problem on the instance $G$, where $M$ is a sufficiently large integer and $q \geq 2$ is a constant to be specified later. 
    For example, $M = N$ suffices, where $N$ is the number of vertices of the original instance before we applied \RED.

    During the algorithm we apply some step which can be performed in polynomial time and potentially also apply \RED recursively to sub-instances of the problem.
    Note that we can check all if-conditions in polynomial time: checking whether the optimum solution has a constant number of links can be done in polynomial time and checking whether $G$ contains a strongly $(\frac 74, t, 1)$-contractible subgraph $H$ for some constant $t$ can be done in polynomial time due to \Cref{lem:structured:finding-small-contractible-polytime}. 
    Checking whether one of the separator conditions is satisfied can also be done in polynomial time.
    Finally, we can identify all degenerate paths in polynomial time and thus also check whether there are at least $\varepsilon |\mathcal{P}|$ many such paths.
    
    Hence, in order to bound the running-time, it suffices to show that whenever \Cref{line:alg:isolated:nodes} is executed (except for the very first time), then we have reduced the total size of the constructed subproblems.
    Then, by the Master Theorem (see e.g.~\cite{bentley1980general}) it follows that we obtain a polynomial running-time.
    
    This is clearly true if \Cref{line:alg:reduce:constant} is applied, as we then output the optimum solution, which can be done in polynomial time since $\opt$ is constant.
    Similarly, if \Cref{line:alg:reduce:contractible} is executed then $s(G|H) < s(G)$ and we are done.
    Note that we can not enter an infinite loop of alternatingly \Cref{line:alg:isolated:nodes} and \Cref{line:alg:reduce:contractible}, as the application of the reduction in \Cref{line:alg:isolated:nodes} does not create new strongly $\alpha$-contractible subgraphs.

    Now consider the subproblems created in Lines~\ref{line:alg:reduce:main:path-separator}, \ref{line:alg:reduce:main:P2-separator}, and \ref{line:alg:reduce:main:C2-separator}. 
    Afterwards, we consider \Cref{line:alg:reduce:main:degenerate-path-separator}.
    If we only create one subproblem $G_1$ from $G$ then it is easy to see that $s(G_1) < s(G)$ and we are done in any of the if conditions.
    If we create two subproblems $G_1, G_2$, note that in any of the above if-conditions we have that $s(G_1) + s(G_2) < s(G)$, where $s(G_i)$ is the size of $G_i$ after the application of \Cref{line:alg:isolated:nodes}, for $i = \{1, 2\}$.

    Now let us turn to \Cref{line:alg:reduce:main:degenerate-path-separator}.
    Here, we create two subproblems. 
    In this case, it is important that the number of degenerate paths $\mathcal{P}'$ is at least $\varepsilon | \mcP |$. 
    First, note that any $P \in \mathcal{P}'$ is non-symmetric, as otherwise all edges from $P$ go to some unique path $P'$ and hence the subgraph induced by the vertices of $P$ and $P'$ is 1-contractible (similar to \Cref{lem:structured:path}). 
    Now observe that in $H_1$ and $H_2$ generated in \Cref{alg:handle-degenerate-paths} (after the application of \Cref{line:alg:isolated:nodes}) each degenerate path is symmetric. 
    Hence, in each of the subproblems defined on $H_1$ and $H_2$ we have reduced the number of non-symmetric paths by at least a fraction of $\varepsilon$, which is constant. 
    Therefore, by choosing $q$ to be a sufficiently large (yet constant) integer, it follows that $s(H_1) + s(H_2) < s(G)$.
    
    Finally, observe that if we execute \Cref{line:alg:reduce:main:ALG}, clearly we have a polynomial running time by our assumption.
\end{proof}

Using the lemmas that we have established before, it is straightforward to verify that \RED returns a feasible solution.

\begin{lemma}
    \label{lem:reduce:feasibility}
    If $G$ is a 2EC spanning subgraph, then \RED returns a 2EC spanning subgraph for $G$ if \ALG returns a 2EC spanning subgraph for structured instances.
\end{lemma}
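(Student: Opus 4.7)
The plan is to prove the lemma by strong induction on the problem size $s(G)$ defined in the proof of \Cref{lem:reduce:running-time}. The base case covers the branches in which \RED returns without any recursive call, namely \Cref{reduce:bruteforce} (where $\OPT(G)$ is returned directly and is by definition 2EC) and \Cref{line:alg:reduce:main:ALG} (handled at the end of this proof). For the inductive step, we show that in every branch that issues recursive calls, the returned subgraph is 2EC provided each recursive call returns a 2EC spanning subgraph on its respective smaller sub-instance.

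I will proceed branch by branch. For \Cref{line:alg:isolated:nodes}, the isolated-node reduction from~\cite{grandoni2022breaching} gives a feasibility-preserving bijection between 2EC solutions of the two instances. For \Cref{line:alg:reduce:contractible}, we exploit that the contractible subgraph $H$ is 2EC by definition, while $\RED(G|H)$ is 2EC on $G|H$ by the induction hypothesis; applying \Cref{fact:structured:2EC-contraction+2EC} to the pair $(H, G|H)$ yields that $\RED(G|H) \cup E(H)$ is 2EC on $G$. For \Cref{line:alg:reduce:main:path-separator} (\Cref{alg:handle-path-separator}), \Cref{line:alg:reduce:main:P2-separator} (\Cref{alg:handle-P2-separator}), and \Cref{line:alg:reduce:main:C2-separator} (\Cref{alg:handle-C2-separator}), both recursive calls (or, in the asymmetric branch, one recursive call and one brute-force optimal call) produce 2EC spanning subgraphs on the two contracted sub-instances, and Lemmas~\ref{lem:structured:finding-F-1}, \ref{lem:structured:finding-F-2}, \ref{lem:structured:finding-F-3} respectively provide the gluing edge set $F'$ together with the separator paths/cycle that combine the two 2EC pieces into a 2EC spanning subgraph of $G$. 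For \Cref{line:alg:reduce:main:degenerate-path-separator} (\Cref{alg:handle-degenerate-paths}), in either branch we take the (inductively guaranteed) 2EC spanning subgraph of $G|\mathcal{P}'$ or $G|\mathcal{C}$ and decontract it by adding either the cycles $E(P_i') + u_i'w_i'$ or the 2EC subgraphs $E(C_i)$; each such addition is 2EC, so another application of \Cref{fact:structured:2EC-contraction+2EC} per decontracted vertex yields a 2EC spanning subgraph of $G$.

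The main obstacle, and the only nontrivial part, is the last branch \Cref{line:alg:reduce:main:ALG}: here we need $G$ to be an $(\alpha,\eps)$-structured instance so that the hypothesis on \ALG applies. I will verify each property (P0)--(P7) of \Cref{def:structured-graph} by contradiction with one of the preceding if-conditions having failed. Property (P0) holds because \Cref{line:alg:reduce:constant} did not trigger, so $\opt(G) > 3/\eps$. Property (P1) holds because \Cref{line:alg:reduce:contractible} did not trigger and we look for strongly $(\tfrac74, t, 1)$-contractible subgraphs with $t\leq 10$. Property (P2) holds because any violating endpoint would give, via \Cref{lem:structured:endpoint-adj-this-path}, a strongly $1$-contractible subgraph with $\leq 10$ links (hence also $(\tfrac74,t,1)$-contractible), contradicting (P1). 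Properties (P3)--(P5) hold because the corresponding separator branches did not trigger, where one must additionally verify that the $\opt$-lower-bound conditions in the definitions of $P^2$- and $C^2$-separators are automatically met once (P0) is assumed and no smaller separator exists: both sides of every forbidden separator must admit enough optimum mass, because otherwise that side would be absorbed into a contractible subgraph ruled out by (P1). Property (P6) holds because \Cref{line:alg:reduce:main:degenerate-path-separator} did not trigger, so fewer than $\eps|\mathcal{P}|$ paths are degenerate. Property (P7) holds because \Cref{line:alg:isolated:nodes} was executed at the top of the call, eliminating isolated vertices. With $G$ thus structured, \ALG returns a 2EC spanning subgraph by assumption, completing the inductive step.

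Finally, the running-time bound of \Cref{lem:reduce:running-time} already establishes that the recursion is well-founded, so the induction is valid. Combining all branches proves the lemma.
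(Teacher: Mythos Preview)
Your proposal is correct and follows essentially the same inductive, branch-by-branch approach as the paper. The principal difference is organizational: you inline the verification that $G$ is $(\alpha,\eps)$-structured when reaching \Cref{line:alg:reduce:main:ALG}, whereas the paper factors this out into a separate lemma (\Cref{lem:reduce:properties}) and, in its proof of \Cref{lem:reduce:feasibility}, simply appeals to ``the correctness of \ALG'' for that branch. Consequently the paper's proof is very short (a few lines), while yours carries the extra weight of arguing (P0)--(P7).

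One minor remark on that extra material: your discussion of the $\opt$-lower-bound conditions is slightly off-target. Those bounds are already part of the \emph{definitions} of $P^2$- and $C^2$-separators, so if the corresponding algorithmic branches fail there is nothing further to verify for (P4) and (P5). The only place where the algorithm's test adds $\opt$-conditions not present in the definition is the \emph{path}-separator branch (Line~\ref{line:alg:reduce:main:path-separator} versus \Cref{def:structured-graph}(P3)); your contractibility argument should really be aimed there. The paper itself glosses over this in \Cref{lem:reduce:properties} with ``clear'', so you are not missing anything the paper supplies.
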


\begin{proof}
We prove the lemma inductively for increasing values of $|V(G)|$.
The base cases are given in \Cref{line:alg:reduce:constant} and \Cref{line:alg:reduce:main:ALG}, in which the statement clearly holds and by the correctness of \ALG, respectively.
If \Cref{line:alg:reduce:contractible} is executed, the statement follows from the definition of contractibility.
If one of the Lines~\ref{line:alg:reduce:main:path-separator}, \ref{line:alg:reduce:main:P2-separator}, or \ref{line:alg:reduce:main:C2-separator} is executed, then the statement follows from Lemmas \ref{lem:structured:finding-F-1}, \ref{lem:structured:finding-F-2}, or \ref{lem:structured:finding-F-3}, respectively.
Finally, if \Cref{line:alg:reduce:main:degenerate-path-separator} is executed, it is clearly true that both solutions are feasible.
\end{proof}

It remains to upper bound the approximation factor of \RED. Let $\red(G)$ be the number of links output by \RED. 

\begin{lemma}
    \label{lem:reduce:approx}
    The number of links in the solution output by the algorithm is bounded by
    \begin{equation*}
                    \red(G) \leq \begin{cases}
                        \opt(G) & \text{ if $\opt(G) \leq \frac{3}{\eps}$}, \\
                        \alpha \cdot \opt(G) + 3 \eps \opt(G) -3  & \text{ if $\opt(G) > \frac{3}{\eps}$.}
                    \end{cases} 
    \end{equation*}
 \end{lemma}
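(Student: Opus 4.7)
The plan is to prove \Cref{lem:reduce:approx} by strong induction on the size measure $s(G)$ introduced in the proof of \Cref{lem:reduce:running-time}, mirroring the recursive structure of \RED. The base case is immediate: if $\opt(G) \leq 3/\eps$, then \Cref{reduce:bruteforce} is triggered and returns an optimal solution, so $\red(G) = \opt(G)$. For the inductive step, assume $\opt(G) > 3/\eps$ and analyze the algorithm according to which if-branch executes.

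For the contractible branch (\Cref{line:alg:reduce:contractible}), strong $(\alpha, t, 1)$-contractibility of $H$ gives $\opt(G) \geq \opt(G|H) + \tfrac{1}{\alpha}|L(H)|$ by decomposing any optimal solution of $G$ into its portion inside $G[V(H)]$ and a feasible solution for $G|H$, while the algorithm satisfies $\red(G) = \red(G|H) + |L(H)|$. Applying the inductive hypothesis to $G|H$ and splitting on whether $\opt(G|H) \leq 3/\eps$ yields the claimed bound, because the ``extra'' $|L(H)|$ links paid by the algorithm are exactly canceled by the corresponding $\alpha \cdot \tfrac{1}{\alpha}|L(H)| = |L(H)|$ term in $\alpha\,\opt(G) - \alpha\,\opt(G|H)$.

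For the three separator branches (Lines~\ref{line:alg:reduce:main:path-separator}--\ref{line:alg:reduce:main:C2-separator}), the key observation is that since there are no edges between $V_1$ and $V_2$, every feasible solution of $G$ decomposes into link-disjoint feasible solutions of the two contracted subinstances; hence $\opt(G_1|Q) + \opt(G_2|Q) \leq \opt(G)$, where $Q$ denotes the separator. Lemmas~\ref{lem:structured:finding-F-1}--\ref{lem:structured:finding-F-3} furnish at most $2$ gluing links (none for the $C^2$-separator case). The analysis splits into two subcases per branch: if both subinstances satisfy $\opt \geq 1/\eps$, we recurse twice and the combined $-6$ slack from two inductive applications comfortably absorbs the $+2$ gluing cost; otherwise the smaller side is solved exactly by brute force and we exploit the inequality $(\alpha - 1)\,\opt(G_i|Q) \geq (\alpha-1)/\eps \geq 3$, which holds for $\alpha \geq 7/4$ and $\eps \leq 1/12$ since $(\alpha-1)/\eps \geq (3/4)/(1/12) = 9$. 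For degenerate paths (\Cref{line:alg:reduce:main:degenerate-path-separator}), one of the two candidate contractions $\mcp'$ or $\mcC$ is weakly $\alpha$-contractible (as established in \Cref{sec:overview:preprocessing}), so taking the cheaper of $H_1$ and $H_2$ combined with the added $\ell$ or $2\ell$ path/cycle edges inherits the approximation guarantee from the contractible analysis. Finally, the fall-through case (\Cref{line:alg:reduce:main:ALG}) invokes $\red(G) = |\ALG(G)| \leq \alpha\,\opt(G) \leq \alpha\,\opt(G) + 3\eps\,\opt(G) - 3$, where the last inequality uses $3\eps\,\opt(G) > 9 > 3$.

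The main obstacle is the bookkeeping in the mixed subcase of the separator branches, where the $-3$ slack in the target bound is tight: the savings must come entirely from the $(\alpha - 1)$-fold gap on the brute-forced side of the recursion, and this is precisely what forces both hypotheses $\alpha \geq 7/4$ and $\eps \leq 1/12$. A secondary subtlety is verifying that the decomposition $\opt(G_1|Q) + \opt(G_2|Q) \leq \opt(G)$ really is link-disjoint and that contracting $Q$ does not introduce feasibility artifacts; this follows because the cut around $V_i$ in $G$ coincides with the cut around the contracted $Q$-vertex in $G_i|Q$, together with \Cref{fact:structured:2EC-after-contraction} applied to the restriction of any optimum solution.
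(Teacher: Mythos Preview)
Your overall inductive structure matches the paper, but the ``else'' branch of the separator analysis contains a genuine gap. You write that when the smaller side is brute-forced you ``exploit the inequality $(\alpha-1)\cdot\opt(G_i|Q) \geq (\alpha-1)/\eps \geq 3$''. This requires $\opt(G_i|Q)\ge 1/\eps$, but the brute-forced side $G_1$ satisfies precisely the opposite: $\opt(G_1|Q)<1/\eps$ (that is why it was brute-forced). Nor can you apply it to $G_2$, since in the else branch nothing forces $\opt(G_2|Q)\ge 1/\eps$. If you redo the arithmetic carefully, the needed slack is $(\alpha-1)\cdot\opt(G_1|Q)\ge 3$, i.e.\ a lower bound on the \emph{small} side. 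The paper obtains this not from the $1/\eps$ threshold but from the \emph{definition} of the separator: a $P^2$-separator guarantees $\opt(G_1|Q)\ge 4$, whence $(\tfrac74-1)\cdot 4=3$; path- and $C^2$-separators guarantee $\opt(G_1|Q)\ge 3$, which suffices because their gluing cost is only $+2$. Moreover, the $P^2$-separator definition also admits $\opt(G_1|Q)=3$, in which case $(\alpha-1)\cdot 3=\tfrac94<3$; the paper handles this via the extra clause $|\delta(Q)\cap E(\mcP)|\ge 1$, which by \Cref{lem:structured:finding-F-2} reduces $|F|$ to $1$ and hence the gluing cost to $+2$. You miss this case entirely. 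Relatedly, your blanket ``at most $2$ gluing links'' is wrong for the $P^2$-separator, where $|F|\le 2$ plus the link $e$ gives $+3$.

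For the degenerate-path branch you defer to \Cref{sec:overview:preprocessing} for the claim that one of $\mathcal{K},\mathcal{C}$ is weakly $\alpha$-contractible, but that section only states the claim; the proof (the $\beta\le\tfrac67$ vs.\ $\beta\ge\tfrac67$ argument counting how many degenerate paths have $u_i'w_i'\in\OPT$) lives precisely inside the proof of this lemma and is the substance of that case.
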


\begin{proof}
Similarly to the previous lemma, we prove the lemma inductively for increasing values of $|V(G)|$.
If \Cref{line:alg:reduce:constant} is executed, the statement clearly holds.
If \Cref{line:alg:reduce:contractible} is executed, the statement follows since it is an $\alpha$-contractible subgraph:
For a strongly $(\alpha, t, 1)$-contractible subgraph $H$ we have $\opt(G) = \opt(G) \cap G[V \setminus V(H)] + \opt(G) \cap G[V(H)] \geq \opt(G|H) + \frac{1}{\alpha}\opt(H)$, by the definition of $\alpha$-contractible subgraphs and since $\opt(G) \cap G[V \setminus V(H)]$ is a 2EC spanning subgraph of $G|H$.
If $\opt(\red(G|H)) \leq \frac{3}{\eps}$, then $\RED(G) = \opt(G|H) + \frac{1}{\alpha}\opt(H) \leq \alpha \opt(G) \leq \alpha \opt(G) + 3 \eps \cdot \opt(G) -3$, since $\opt(G) \geq \frac{3}{\eps}$.
If $\opt(\red(G|H)) > \frac{3}{\eps}$, we have $\opt(G) \leq \frac{1}{\alpha}\opt(H) + \alpha \opt(G|H) + 3 \eps \opt(G|H) -3 \leq \alpha \opt(G) + 3 \eps \opt(G) -3$, since $\opt(G) \geq \opt(G|H)$.

Next, assume that \Cref{line:alg:reduce:main:P2-separator} is executed.
Lines~\ref{line:alg:reduce:main:path-separator} and \ref{line:alg:reduce:main:C2-separator} are analogous.
First, assume that \Cref{line:alg:reduce:handle:P2-separator-1} of \Cref{alg:handle-P2-separator} is executed.
We have $\opt(G) \geq \opt(G_1|Q) + \opt(G_2|Q)$ and therefore $\RED(G) = \RED(G_1|Q) + \RED(G_1|Q) + 3 \leq \alpha \opt(G_1|Q) + 3 \eps \opt(G_1|Q) -3 + \alpha \opt(G_2|Q) + 3 \eps \opt(G_2|Q) -3 + 3 \leq \alpha \opt(G) + 3 \eps \opt(G) -3$, where the '$+3$' after the first equation is due to \Cref{lem:structured:finding-F-2}.
Next, assume that \Cref{line:alg:reduce:handle:P2-separator-2} of \Cref{alg:handle-P2-separator} is executed.
As before, we have $\opt(G) \geq \opt(G_1|Q) + \opt(G_2|Q)$. 
Additionally, by the definition of a $P^2$-separator we have that $\opt(G_2|Q) \geq 4$ and $\opt(G_1|Q) \geq 3$. 
If also $\opt(G_1|Q) \geq 4$, then we have $\alpha \opt(G_i|Q) \geq \opt(G_i|Q) + 3$ for $i \in \{1, 2\}$, as $\alpha \geq \frac{7}{4}$.
Therefore, we have $\RED(G) = \opt(G_1|Q) + \RED(G_1|Q) + 3 \leq \alpha \opt(G_1|Q) -3 + \alpha \opt(G_2|Q) + 3 \eps  \opt(G_2|Q) -3 + 3 = \alpha \opt(G) + 3 \eps \opt(G) -3$, since $\opt(G) \geq \opt(G_2|Q)$.
Again, the '$+3$' after the first equation is due to \Cref{lem:structured:finding-F-2}.
Else, assume that $\opt(G_1|Q) = 3$.
First, note that $\alpha \opt(G_1|Q) \geq \opt(G_1|Q) + 2$, since $\alpha \geq \frac{7}{4}$.
Furthermore, by the definition of a $P^2$-separator, there must be an edge $e \in E(\mathcal{P})$ from one of the endpoints of $Q$ to $V_2$.
Then \Cref{lem:structured:finding-F-2} implies that actually $|F| \leq 1$ and hence we have $\RED(G) = \opt(G_1|Q) + \RED(G_1|Q) + 2 \leq \alpha \opt(G_1|Q) - 2 + \alpha \opt(G_2|Q) + 3 \eps  \opt(G_2|Q) -3 + 2 = \alpha \opt(G) + 3 \eps \opt(G) -3$, since $\opt(G) \geq \opt(G_2|Q)$.

Finally, assume that \Cref{line:alg:reduce:main:degenerate-path-separator} is executed.
We use the notation as in \Cref{alg:handle-degenerate-paths}.
For $1 \leq i \leq \ell$ let $K_i = (V(P_i), E(P_i) + u_i' w_i')$ and define $\mathcal{K} = \{ K_i \}_{1 \leq i \leq \ell}$.
We show the following: If $\mathcal{C}$ is not weakly $(\alpha, 2, \ell)$-contractible, then $\mathcal{K}$ is weakly $(\alpha, 1, \ell)$-contractible.
From this statement it clearly follows that the better of the two solutions $H_1$ and $H_2$ satisfies the statement of the lemma.

It remains to prove the above statement.
Fix some optimum solution $\opt$.
Observe that any feasible solution must contain at least one link from within $G[C_i]$ for each $1 \leq i \leq \ell$, as $u_i'$ and $w_i'$ are only adjacent to vertices of $P_i'$ and $P_i$, by the definition of a degenerate path.
Furthermore, if $u_i' w_i' \notin \opt$, then it follows that $\opt$ has at least 2 links from within $G[V(C_i)]$, by the same argument as above.
Let $\beta$ be the number of paths of $\mathcal{P}'$ for which $\opt$ only has one link from within $G[V(C_i)]$, $ 1 \leq i \leq \ell$.
Then we have $\opt \cap \mathcal{C} \cap L \geq \beta \ell + (1 - \beta) \cdot 2 \ell = (2 - \beta) \ell$.
Hence, since $\mathcal{C} \cap L = 2 \ell$, we have that if $\beta \leq \frac 67$, then it follows that $\mathcal{C}$ is a $(\alpha, 2, \ell)$-contractible subgraph for any $\alpha \geq \frac 74$.
Hence, let us assume $\beta \geq \frac 67$.
Now, $\opt$ takes at least $\frac 67 \ell$ many links from within $G[V(\mathcal{K})]$.
Therefore, since in $\mathcal{K}$ there are exactly $\ell$ many links, $\mathcal{K}$ forms a $(\alpha, 1, \ell)$-contractible subgraph for any $\alpha \geq \frac 74$.
\end{proof}

Finally, we need to show that whenever we call \RED or \ALG, then indeed we call an instance of PAP.

\begin{lemma}
    \label{lem:structured:remains-PAP}
    Whenever \RED or \ALG is called recursively in \RED on some graph $G'$, then $G'$ is an instance of PAP.
\end{lemma}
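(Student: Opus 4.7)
Plan: I will verify the lemma by a case analysis over the recursive invocations of \RED and the single call to \ALG in Algorithms~\ref{alg:reduce}--\ref{alg:handle-degenerate-paths}. For each call on some graph $G'$, two properties need to hold: (i) $G'$ admits a 2-edge-connected spanning subgraph, and (ii) the non-link edges of $G'$ form a vertex-disjoint collection of simple paths. The two trivial cases are the call $\ALG(G)$ on line~\ref{line:alg:reduce:main:ALG}, where $G' = G$ is a PAP instance by the inductive hypothesis, and the contractible-subgraph recursion on line~\ref{line:alg:reduce:contractible}, for which the definition of an $(\alpha,t,1)$-contractible subgraph (\Cref{def:structured-graph}'s prerequisite) already requires $G|\mathcal{H}$ to be a PAP instance by fiat.

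For the separator cases (Algorithms~\ref{alg:handle-path-separator}--\ref{alg:handle-C2-separator}), the recursive calls are made on graphs of the form $(G \setminus V_j)|X$ for $X \in \{P',Q,C\}$ and $j \in \{1,2\}$. Using that there are no edges between $V_1$ and $V_2$ in $G$, I will rewrite $(G \setminus V_j)|X$ as the single contraction $G|(V_j \cup V(X))$, so that \Cref{fact:structured:2EC-after-contraction} applied to $\OPT(G)$ immediately yields property (ii). For (i), I will argue path-by-path: any $P \in \mathcal{P}$ disjoint from $V(X)$ remains untouched as a simple path; the only paths meeting $V(X)$ are the (at most two) paths defining $P'$, $Q$, or $C$, and for each such path the portion of it lying in $V_{3-j}$ collapses, together with $V(X)$, into a simple path anchored at the new contracted vertex.

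For the degenerate-paths case of Algorithm~\ref{alg:handle-degenerate-paths}, both $\mathcal{P}'$ and $\mathcal{C}$ consist of vertex-disjoint subgraphs that are unions of entire paths of $\mathcal{P}$ (plus two links per component in the $\mathcal{C}$ case). Hence \Cref{fact:structured:2EC-after-contraction} again gives (ii) when applied to $\OPT(G)$, while (i) is immediate because the paths of $\mathcal{P}$ outside the contracted collection are unaffected, and the paths inside each contracted component disappear into the single contracted vertex.

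The main bookkeeping obstacle will be the separator case: a path $P \in \mathcal{P}$ that intersects $V(X)$ may have arms on both sides $V_1$ and $V_2$, and I must verify that the arm surviving in $V_{3-j}$ together with the contracted vertex still forms a single simple path in the image of $\mathcal{P}$ rather than a branching structure. This follows because $X$ itself is a subpath (or, for $C^2$-separators, a cycle made of two paths plus two links), so each contributing path has at most one arm attached to $X$ on each side, and in each case the resulting image in $G'$ is a single simple path anchored at the contracted vertex.
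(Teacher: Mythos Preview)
Your plan is correct and mirrors the paper's case-by-case analysis. Two minor slips to fix: you have the labels (i) and (ii) swapped in the separator and degenerate-path paragraphs (\Cref{fact:structured:2EC-after-contraction} yields feasibility, not the path structure), and Algorithm~\ref{alg:handle-path-separator}'s recursion is on $G[V_i\cup V(P_i)]\,|\,P'$ rather than $(G\setminus V_j)\,|\,P'$, since there $(V_1,V_2)$ partitions $V\setminus V(P)$ and $P_i$ may be a strict subpath of $P$; your path-by-path argument still applies after this adjustment. The one substantive difference from the paper is in the $P^2$-separator case: the paper invokes that path-separators have already been excluded, forcing $Q$ to contain the link $w_1v_2$ so that the contraction identifies two path-ends, whereas you verify directly that the contracted vertex has path-degree at most~$2$ regardless of the shape of~$Q$; either route is valid.
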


\begin{proof}
    We check each line individually.
    If \Cref{line:alg:reduce:contractible} is executed, then by definition the remaining instance is a \PAP instance.
    If \Cref{line:alg:reduce:main:path-separator} is called, the path $P$ is only truncated in each sub-instance, resulting in an instance of \PAP.
    If \Cref{line:alg:reduce:main:P2-separator} is executed, we have exhaustively applied \Cref{line:alg:reduce:main:path-separator} and hence there are no path-separators any more. 
    Therefore, no subset of the considered $P^2$-separator is a path-separator and hence contracting the $P^2$-separator results in an instance of \PAP.
    If \Cref{line:alg:reduce:main:C2-separator} is executed, note that we shrink 2 paths into one vertex in each sub-instance. Hence, also in this case both sub-instances are indeed instances of \PAP.
    Finally, if \Cref{line:alg:reduce:main:degenerate-path-separator} is executed, in both subinstances we essentially contracted several paths into single vertices each, or several cycles consisting of 2 paths to single vertices each. 
    Since this is exactly what we also do in \Cref{line:alg:reduce:main:path-separator} and \Cref{line:alg:reduce:main:C2-separator}, respectively, the resulting instances are also instances of \PAP
\end{proof}

It remains to prove that when we reach \Cref{line:alg:reduce:main:ALG} of \RED, the graph $G$ has the properties of \Cref{def:structured-graph}.

\begin{lemma}
\label{lem:reduce:properties}
    Consider the graph $G$ when reaching \Cref{line:alg:reduce:main:ALG} of \RED. Then $G$ satisfies the properties of \Cref{def:structured-graph}.
\end{lemma}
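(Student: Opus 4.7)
My plan is to verify each of the properties (P0)--(P7) in turn by showing that if any were violated at \Cref{line:alg:reduce:main:ALG}, then some earlier if-condition of \RED would have fired, contradicting the fact that we reached that line. I first dispose of the straightforward properties, then handle the delicate ones.

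Properties (P0), (P1), (P6), and (P7) are immediate. (P0) holds because \Cref{line:alg:reduce:constant} did not fire. (P1) follows from the non-firing of \Cref{line:alg:reduce:contractible} together with \Cref{lem:structured:finding-small-contractible-polytime}, which guarantees that every strongly $(7/4, t, 1)$-contractible subgraph with $t \leq 10$ would have been detected in polynomial time. (P6) follows from the non-firing of \Cref{line:alg:reduce:main:degenerate-path-separator}. (P7) follows by induction from the preprocessing in Lines~1--2 of \RED, which is applied at the start of every (recursive) call and replaces every isolated vertex with a 2-vertex path; subsequent contractions never produce truly isolated vertices because, as shown in \Cref{lem:structured:remains-PAP}, each recursive call preserves the PAP structure. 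Property (P2) reduces to (P1) via \Cref{lem:structured:endpoint-adj-this-path}: a violating endpoint $u$ of some $P \in \mcP$ with $N(u) \subseteq V(P)$ would yield a strongly $(7/4, 1, 1)$-contractible subgraph with exactly one link, contradicting the non-firing of \Cref{line:alg:reduce:contractible}.

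For properties (P4) and (P5), I exploit that the definitions of $P^2$-separator and $C^2$-separator already bake in exactly the opt-conditions appearing in the if-statements of \Cref{line:alg:reduce:main:P2-separator} and \Cref{line:alg:reduce:main:C2-separator}. Combined with \Cref{lem:structured:finding-connected-components}, which locates the required partition in polynomial time whenever one exists, the non-firing of these two if-statements directly implies the absence of such separators in $G$.

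The main obstacle is (P3). The definition of path-separator (any $P \in \mcP$ that separates $G$) is strictly weaker than the if-condition of \Cref{line:alg:reduce:main:path-separator}, which additionally demands a partition $(V_1,V_2)$ of $G \setminus V(P)$ with $\opt((G \setminus V_i)|P) \geq 3$ on both sides. To rule out the remaining ``unbalanced'' case, I plan a short case analysis. A first observation is that for every component $C_k$ of $G \setminus V(P)$ we have $\opt(G[V(C_k) \cup V(P)]|P) \geq 2$, since $P^\ast$ must have degree at least $2$ and every edge incident to $P^\ast$ in the contracted instance is a link. Using this, an unbalanced separator can only arise when the total opt across all components is small, which forces at least one component $C_k$ to attain $\opt_k = 2$; by the same degree-argument, such a $C_k$ must consist of a single path $Q \in \mcP$ whose two endpoints are each joined to $V(P)$ by exactly one link. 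Using this structural rigidity together with the previously established absence of other forbidden substructures, I expect to show that every such configuration either already contains a $C^2$- or $P^2$-separator (contradicting (P4) or (P5)) or a strongly $(7/4, t, 1)$-contractible subgraph with $t \leq 10$ built from a cycle through $Q$, the relevant sub-path of $V(P)$, and the two connecting links (contradicting (P1)). The main technical obstacle is carrying out this case analysis while ensuring that (i) the constructed contractible subgraph stays within the $10$-link budget, and (ii) contracting it preserves the PAP structure, which I plan to verify using that every path $Q' \neq P$ lies entirely within a single component of $G \setminus V(P)$.
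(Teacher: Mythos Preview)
Your proposal is correct and follows the same property-by-property verification as the paper, which dispatches (P0), (P1), (P3)--(P7) in one line as ``clear'' and derives (P2) from \Cref{lem:structured:endpoint-adj-this-path}. You go further than the paper on (P3), correctly noticing the gap between the definition of path-separator and the if-condition at \Cref{line:alg:reduce:main:path-separator}; your planned case analysis works and can in fact be shortened: once some component $C_k$ of $G\setminus V(P)$ has $\opt_k=2$, it is a single path $Q$ with $N(V(Q))\subseteq V(P)\cup V(Q)$ and both endpoints of $Q$ linked to $V(P)$, so \Cref{lem:structured:path} directly yields a strongly $(\tfrac32,3,1)$-contractible (hence $(\tfrac74,3,1)$-contractible) subgraph---well within the $10$-link budget and already shown to preserve the \PAP structure---making the detour through $P^2$- or $C^2$-separators unnecessary.
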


\begin{proof}
We prove the properties one by one. 
Properties P0, P1, P3, P4, P5, P6, and P7 are clear as otherwise this would contradict our algorithm.
Property P2 follows from \Cref{lem:structured:endpoint-adj-this-path}, since we exclude all strongly $(\alpha, t, 1)$-contractible subgraphs for some $\alpha \geq \frac 74$ and $t \leq 10$.
\end{proof}

Hence, \Cref{lem:reduce:properties} implies that when \ALG is executed, then $G$ is structured, i.e., $G$ has the properties of \Cref{def:structured-graph}. 
Therefore, \Cref{lem:structured:main} follows by the Lemmas~\ref{lem:reduce:running-time}, \ref{lem:reduce:feasibility}, \ref{lem:reduce:approx}, \ref{lem:structured:remains-PAP} and~\ref{lem:reduce:properties}.

\section{Credit Scheme, Relaxation, and Starting Solution}
\label{sec:initial-solution}

This section is dedicated to proving \Cref{lem:main:starting-solution-satisfies-invariants}, which we outlined in \Cref{sec:overview:credit-starting}.
For convenience, in this section we repeat the definitions introduced in \Cref{sec:overview:credit-starting}.

Throughout our algorithm we maintain a partial solution, which is not necessarily a feasible solution.
Furthermore, we assign certain credits to parts of the current graph $H = (V, E(\mcp) \cup S)$, that is, to vertices, links, components and blocks of $H$.
After first computing a starting solution, we then add (and sometimes remove) links to/from $H$ such that first we obtain a partial solution that does not contain bridges. 
Afterwards, we add (and sometimes remove) links to/from $H$ to glue the 2-edge-connected component together, such that eventually we obtain a single 2EC component, which is then a feasible solution to our problem.
While doing so, for our partial solution $H$ we maintain the invariant that the number of credits for $H$ plus the number of links $|S|$ in $H$ is no more than $\apxr \cdot \opt$.
By observing that the number of credits is non-negative, we obtain our desired guarantee on the approximation ratio.

In the remaining part of this section we first define the credit assignment rule and three invariants. 
The rest of this section is then dedicated to proving \Cref{lem:main:starting-solution-satisfies-invariants}.

\subsection{Credit Scheme}

In this section we describe how we distribute the credits.
It will we convenient to use the following definitions.
Let $H$ be a spanning subgraph of $G$, and let $C$ be a connected component of $H$. 
A component is called \emph{complex} if it contains at least one bridge. 
For any such $C$, we define the graphs $G_C$ and $H_C$ as follows:
$G_C$ is the graph obtained from $G$ by contracting every connected component of $H$ except for $C$, as well as every block of $C$.
$H_C$ is the corresponding contraction of $H$.
The structure of $H_C$ consists of a tree $T_C$ along with singleton vertices. 
A vertex in $G_C$ or $H_C$ that corresponds to a block in $H$ is called a \emph{block vertex}. 
If a block vertex corresponds to a simple block, it is called a \emph{simple block vertex}.
We identify the edges/links in $G$ and the corresponding edges/links in $G^C$ and $H^C$, respectively.

A component $C$ is called \emph{trivial} if it consists of a single path $P \in \mcP$ (i.e., $V(C) = V(P)$). 
Otherwise, it is called \emph{nontrivial}. 
A vertex $v \in V$ is called \emph{lonely} if it does not belong to any 2-edge-connected component or block of $H$. 
For two paths $P, P' \in P$ with endpoints $u,v$ and $u', v'$, respectively, we say that $uu' \in L$ is an \emph{expensive link} if $N_G(v) \subseteq (V(P) \cup V(P')) \setminus \{v'\}$. 
If additionally $uu' \in L(H)$ is a bridge in $H$ and $v$ is a lonely leaf in $H$, then $v$ is called an \emph{expensive leaf vertex}.

For $H = (V, E(\mathcal{P}) \cup S)$, we assign credits according to the following rules.

\begin{itemize}
\item[(A)] Every lonely vertex $v$ that is a leaf of some complex component $C$ receives 1 credit.
\begin{itemize}
\item[(i)] It receives an additional $\frac{1}{4}$ credit if $v$ is an expensive leaf vertex.
\item[(ii)] It receives an additional $\frac{1}{4}$ credit if there is a path in $H_C$ from $v$ to a simple block vertex of $C$ without passing through a non-simple block vertex.
\end{itemize}
\item[(B)] Every bridge $\ell \in S$ (i.e., a link that is not part of any 2-edge-connected component of $H$) receives $\frac{3}{4}$ credits.
\item[(C)] Every complex component receives 1 credit.
\item[(D)] Every non-simple 2-edge-connected block receives 1 credit.
\item[(E)] Each 2-edge-connected component receives:
\begin{itemize}
\item 2 credits if it is large or contains a degenerate path.
\item $\frac{3}{2}$ credits if it is small and does not contain a degenerate path.
\end{itemize}
\end{itemize}

Note that according to the definition, a lonely leaf vertex can have credit $1$, $\frac{5}{4}$ or $\frac{3}{2}$. 
For $H = (V, E(\mathcal{P}) \cup S)$, we write $\credits(H)$ to denote the total number of credits used according to the rules (A)-(E).
We define $\cost(H) = \credits(H) + |S|$.

To prove the approximation guarantee, we show that we maintain the following credit invariant.

    \invariantcredit*

Furthermore, we maintain the following two invariants.

\invariantlonely*

\invariantblock*

In the next subsection we introduce a novel relaxation of \PAP and compute a solution to it using some approximation algorithm for some other problem.
This solution might not be feasible to \PAP, but will serve as a starting solution.
We then show that this solution satisfies all invariants. Formally, we will show the following lemma.

\lemmamainstarting*

\subsection{A relaxation of \PAP: The 2-Edge Cover with Path Constraints and the Track Packing Problem}
Typical approaches for special cases of \PAP, such as unweighted $2$-ECSS and MAP, use some relaxation of the problem, for which one can compute an optimum solution in polynomial time and which serves as an (infeasible) starting solution. 
Then the goal is to add and remove edges to/from this relaxation to obtain a feasible solution.
A typical relaxation is the $2$-edge-cover problem, in which one seeks for a minimum-cardinality edge set such that each vertex is incident to at least 2 edges. 
In these special cases of \PAP, after certain preprocessing steps, optimum 2-edge-covers are usually not too far away from an optimum solution of the original problem and hence they serve as a good lower bound. 
However, in the case of \PAP, such a 2-edge-cover can be quite far away from an optimum solution.
For example, if the 2-edge-cover of a \PAP instance only contains the edges of $E(\mcP)$ together with links from $\hat{L}$, i.e., links that connect endpoints of the same path, then this edge set only 2-edge connects the paths individually, but it does not help to establish overall $2$-edge-connectivity. Next, we introduce a new relaxation, called 2-Edge Cover with Path Constraints.

\paragraph{A relaxation of \PAP: The 2-Edge Cover with Path Constraints.}
In this section we consider a different relaxation of \PAP, which we call 2-Edge Cover with Path Constraints (\tecpc), and is defined as follows.
An instance of \tecpc has the same input as a structured instance of \PAP (i.e., the input graph $G$ is structured), but with the additional constraint that each vertex is part of some path $P_i \in \mathcal{P}$ and that each path $P_i \in \mcP$ has exactly three vertices: the two endvertices $u_i, v_i \in V^*(P_i)$ of $P_i$ and the interior node $w_i$ of $P_i$.
Recall that for some subset $\mathcal{P}' \subseteq \mcP$ the set $V^*(\mathcal{P'})$ is the set of endpoints of $\mathcal{P'}$.
The goal is to compute a minimum cardinality link set $X \subseteq L$ such that $\delta_X(u) \geq 1$ for all $u \in V^*(\mcP)$ and $\delta_X(P_i) \geq 2$ for each $P_i \in \mcP$, i.e., each vertex is incident to at least two edges in $E(\mcP) \cup X$ and each path $P_i$ is incident to at least two outgoing edges in $X$.
An example of a feasible solution to a 2ECPC instance is given in \Cref{fig:s2ecpc-tpp}.

From a structured instance of \PAP we construct an instance of \tecpc as follows:
There are two additional constraints on the \tecpc instance compared to a \PAP instance.
The first condition is satisfied for any structured instance of \PAP. 
The second condition can be obtained as follows: if $|V(P_i)| \geq 4$ we contract all interior vertices of some path $P_i \in \mcP$ to a single vertex $w_i$. 
If $|V(P_i)| = 2$, we remove $u_i v_i$ from $P_i$, where $u_i, v_i \in V^*(P_i)$ are the endvertices of $P_i$ and add the edges $u_i w_i$ and $w_i v_i$ to $P_i$, where $w_i$ is a dummy node.
Note that any structured instance of \PAP satisfies $|V(P_i)| \geq 2$.
There is a one-to-one correspondence between edges/links in the graph of the instance of \PAP and edges/links in the graph of the instance of \tecpc.
Throughout this section we fix a structured instance of \PAP, which we also fix for the input of the instance of \tecpc. We denote by $\OPT$ and $\OPT_C$ the optimum solutions for the \PAP and the \tecpc instance, respectively, and define $\opt$ and $\opt_C$ to be their objective values.
It is easy to verify that any feasible solution to the \PAP instance is also feasible for its corresponding \tecpc instance, and hence \tecpc is indeed a relaxation for \PAP. We summarize this as follows.

\begin{proposition}
    \label{prop:pap-relaxation}
    Let $G$ be a structured instance of \PAP. Then $\opt \geq \opt_C$.
\end{proposition}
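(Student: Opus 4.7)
The plan is to establish $\opt_C \leq \opt$ by exhibiting a mapping from any feasible PAP solution to a feasible 2ECPC solution of no greater cardinality; applying this to an optimal PAP solution yields the result. Concretely, I would fix $\OPT \subseteq L$, regard its image in the contracted 2ECPC instance under the natural one-to-one correspondence between links, and verify the two 2ECPC constraints directly from the 2-edge-connectivity of $(V, E(\mcP) \cup \OPT)$ in the PAP instance.

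First I would verify the single-vertex covering condition: for every endpoint $u \in V^*(\mcP)$ of some path $P_i \in \mcP$, the vertex $u$ has exactly one incident edge in $E(\mcP)$ (the edge of $P_i$ touching $u$). Since $(V, E(\mcP) \cup \OPT)$ is 2-edge-connected, every vertex must have degree at least $2$, hence $\delta_{\OPT}(u) \geq 1$ in the PAP instance. Because the contraction that produces the 2ECPC instance only merges interior vertices of paths and leaves all endpoints intact, this inequality persists after contraction, giving $\delta_S(u) \geq 1$ for the corresponding link set $S$ in 2ECPC.

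Next I would verify the path-cut condition: for every $P_i \in \mcP$, consider the cut $(V(P_i), V \setminus V(P_i))$ in the PAP instance. The paths in $\mcP$ are pairwise vertex-disjoint, so no edge of $E(\mcP)$ crosses this cut; every edge in $\delta_{E(\mcP) \cup \OPT}(V(P_i))$ is therefore a link of $\OPT$. By 2-edge-connectivity, $|\delta_{\OPT}(V(P_i))| \geq 2$, so $\delta_{\OPT}(P_i) \geq 2$. Under the edge correspondence between the two instances, each such link either has one endpoint in $\{u_i, v_i\}$ or gets routed through the contracted interior node $w_i$, but in either case it remains a link outgoing from $P_i$ in 2ECPC, giving $\delta_S(P_i) \geq 2$.

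Combining both, the image $S$ of $\OPT$ is feasible for 2ECPC, so $\opt_C \leq |S| \leq |\OPT| = \opt$. I do not anticipate a serious obstacle here; the only mildly delicate point is bookkeeping the link correspondence under the contraction (so that no outgoing link is ``lost'' and endpoints are preserved), which follows immediately from how the 2ECPC instance is constructed from the structured PAP instance.
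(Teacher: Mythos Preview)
Your proposal is correct and matches the paper's approach: the paper does not spell out a proof but merely asserts that any feasible \PAP solution is feasible for the corresponding \tecpc instance, which is precisely what you verify via the degree condition at path endpoints and the cut condition around each path. The only minor remark is that the paper treats this as a one-line observation, so your level of detail is more than sufficient.
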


It is easy to see that $|\mcP| \leq \opt_C \leq 2 |\mcP|$:
Observe that since the input graph $G$ is structured, each endpoint $u$ of some path $P \in \mcP$ is incident to a link $e \in \delta(P)$.
Therefore, by picking such an edge for each endpoint $u$, there is a straightforward solution containing at most $2 |\mcP|$ many links.
Furthermore, note that any feasible solution must pick at least $|\mcP|$ many links.
Therefore we obtain the inequalities above.

Similar to other connectivity augmentation problems, we assume that the instance of \tecpc is \emph{shadow-complete}. 
In our setting this means the following:
Let $P_i, P_j \in P$ such that $P_i \neq P_j$. 
If there is a link $u_i u_j$ from an endpoint $u_i$ of $P_i $ to some other endpoint $u_j$ of $P_j$, then we also add the \emph{weaker} links $u_i w_j$, $w_i u_j$ and $w_i w_j$, where $w_i, w_j$ are the interior vertices of the paths $P_i$ and $P_j$, respectively.
Furthermore, if there is a link $u_i w_j$ from an endpoint $u_i$ of $P_i$ to the interior vertex $w_j$ of $P_j$, then we also add the \emph{weaker} link $w_i w_j$, where $w_i$ is the interior vertex of $P_i$.
These weaker links are then also called the \emph{shadows} of their original links.
An instance that contains all such shadows is called shadow-complete.

It is not hard to see that for any feasible instance of \tecpc, a feasible solution to the shadow-complete instance can be turned into a feasible solution to the original instance of the same cost.

\begin{restatable}[]{lemma}{lemshadowcomplete}
    \label{lem:shadow-complete-no-less-cost}
    For any feasible instance of \tecpc arising from a structured graph, a feasible solution $Y$ to the shadow-complete instance can be turned into a feasible solution $Y'$ to the original instance such that $|Y'| \leq |Y|$.
\end{restatable}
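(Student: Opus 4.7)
The plan is to iteratively eliminate shadow links from $Y$ until none remain, producing a feasible $Y'\subseteq L$ for the original instance with $|Y'|\le|Y|$. For each shadow $e\in Y$ between two paths $P_i,P_j$, the shadow-complete construction guarantees an original \emph{parent} link $e^\star\in L$ between $P_i$ and $P_j$, obtained from $e$ by replacing each interior-vertex endpoint ($w_i$ or $w_j$) of $e$ by the actual path endpoint that caused $e$ to be added as a shadow. By construction, $e^\star$ covers every $V^*(\mcP)$-endpoint that $e$ covers, and both $e$ and $e^\star$ contribute identically to every out-degree $\delta(P_k)$.

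First I apply the \emph{simple swap}: while there exists a shadow $e\in Y$ with $e^\star\notin Y$, set $Y\leftarrow(Y\setminus\{e\})\cup\{e^\star\}$. This preserves $|Y|$ and feasibility, since no $\delta_Y(P_k)$ changes and endpoint coverage can only improve. Once no simple swap is possible, every remaining shadow $e\in Y$ has $e^\star\in Y$ as well, and removing $e$ would reduce $|Y|$ by one. If $\delta_Y(P_i)\ge 3$ and $\delta_Y(P_j)\ge 3$, the deletion is immediately safe. Otherwise, say $\delta_Y(P_i)=2$: then $e,e^\star$ are the only outgoing links of $P_i$ in $Y$, so the endpoint of $P_i$ not covered by $e^\star$ (say $v_i$) must be covered by the hat link $u_iv_i\in Y$. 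By Property~(P2) of structured graphs, $v_i$ admits an original link $\ell_i\notin Y$ to some other path, and I perform the \emph{compound swap} ``remove $e$, remove $u_iv_i$, add $\ell_i$''. This keeps $v_i$ covered (via $\ell_i$), preserves $\delta_Y(P_i)=2$, and reduces $|Y|$ by one. A symmetric compound swap handles $\delta_Y(P_j)=2$, and both can be combined when both out-degrees are tight.

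Each iteration strictly reduces the number of shadows in $Y$ without ever increasing $|Y|$, so after finitely many iterations $Y$ contains no shadow and is thus a feasible solution $Y'\subseteq L$ of the original instance with $|Y'|\le|Y|$. The main obstacle is the compound-swap step when the parent $e^\star$ is itself an endpoint-interior link (e.g.\ $e^\star=u_iw_j$), so that $e^\star$ covers no $V^*(\mcP)$-endpoint of $P_j$; in that subcase two compensating links $\ell_j^u,\ell_j^v$ may be required on the $P_j$ side, giving net cardinality change $0$ rather than $-1$. A finite case analysis on the slack patterns of $(\delta_Y(P_i),\delta_Y(P_j))$, together with Property~(P2), confirms that the net change is always at most~$0$, which suffices for $|Y'|\le|Y|$.
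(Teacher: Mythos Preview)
Your proposal is correct and follows essentially the same strategy as the paper's proof: iteratively eliminate shadows by local replacement, using that a shadow's parent dominates it in endpoint coverage while contributing identically to all $\delta(P_k)$, and invoking Property~(P2) to supply compensating original links when an out-degree constraint becomes tight. The paper organizes this slightly differently---it first passes to a minimal $Y$, then argues that at most two links from any shadow family $S(e)$ survive, and does the case analysis from there---whereas you run simple swaps to a fixed point and then handle the surviving ``shadow with parent already present'' configurations via compound swaps; the resulting case analyses coincide.
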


\begin{proof}
    Consider a feasible solution $Y$ to the shadow-complete instance.
    Without loss of generality we assume that $Y$ is minimal, i.e., $Y-e$ is not feasible for each $e \in Y$. 
    We show how to transform $Y$ into a feasible solution $Y'$ such that the number of links in $Y'$ is at most the number of links in $Y$ and $Y'$ contains less shadows than $Y$. 
    By iteratively doing this procedure, we are done.
    For some $e \in L$ let $S(e)$ be the set of shadows of $e$ including $e$ itself.
    Observe that $|S(e)| \leq 4$, by the definition of shadows.
    If $Y$ contains for each original link $e$ only one link of $S(e)$, then it is straightforward to see that by replacing each shadow with its original link, we obtain a feasible solution to the original instance.
    Hence, let us assume there is some link $e \in L$ such that $Y$ contains at least two links from $S(e)$.
    Observe that if $Y$ contains at least three such links, then there must be at least one link $f \in Y$ such that $Y-f$ is also feasible, a contradiction as $Y$ is minimal.
    Hence, let us assume that $Y$ contains exactly two such links from $S(e)$. 
    Let $u, x'$ be the endpoints of $e$ with $u \in V(P)$ and $x' \in V(P')$, $P, P' \in \mcP$, $P \neq P'$.
    
    We consider two cases: First, assume that $u$ is an endvertex of $P$ and $x'$ is the interior vertex of~$P'$.
    Then the unique shadow of $e$ is $w x'$, where $w$ is the interior vertex of $P$.
    Hence, $u x' \in Y$ and $w x' \in Y$.
    Let $v$ be the other endvertex of $P$ (i.e., $v \in V^*(P)$, $v \neq u$). 
    If $v$ is incident to a link $f$ in $Y$ such that $f \in \delta_Y(P)$, then we can replace $wx'$ with an arbitrary link from $\delta(P') \setminus Y$ that is not a shadow, which must exists (as the instance arises from a strictured graph $G$). 
    Observe that afterwards the solution is still feasible and contains fewer shadows.
    Otherwise, if $v$ is only incident in $Y$ to the link $v u$, then we first replace $v u$ with any link in $\delta(v) \cap \delta(P)$ that is not a shadow, which must exist since $G$ is structured. 
    Afterwards we are in the previous case and we do the same as before. 
    Observe that the resulting solution is still feasible and contains fewer shadows.

    In the second case, we assume that $u$ is an endvertex of $P$ and $x'$ is an endvertex of $P'$.
    Let $v, w \in V(P)$ such that $w$ is the interior vertex of $P$ and $v \neq u$ is the other endvertex of $P$.
    Similarly, let $y', z' \in V(P')$ such that $y'$ is the interior vertex of $P'$ and $z'\neq x'$ is the other endvertex of~$P'$.
    W.l.o.g.\ we can assume that $u x' \in Y$, as otherwise we can include $ux'$ to $Y$ and remove one of its shadows and observe that the resulting solution is still feasible and contains fewer shadows.
    We distinguish two sub-cases:
    First, assume that $ux' \in Y$ and $uy' \in Y$.
    Assume that $v$ is incident to a link in $Y$ of $\delta(P)$. 
    Then we can replace $uy'$ by any link of $\delta(z') \cap \delta(P')$ that is not a shadow and which is not in $Y$ (which must exist). Observe that afterwards the solution is still feasible and contains fewer shadows.
    Similarly, we do the same if $z'$ is incident to a link in $Y$ of $\delta(P')$.
    Hence, in the remaining case we have that $uv \in Y$ and $x'z' \in Y$.
    Now observe that by replacing $uv$ and $x'z'$ by links from $\delta(v) \cap \delta(P)$ and $\delta(z') \cap \delta(P')$ that are not shadows, respectively, and additionally removing $uy'$, we have that the resulting solution is still feasible and contains fewer links and shadows.
    
    In the remaining case, if $ux' \in Y$ and $wy' \in Y$, we do the same as in the previous sub-case and observe that this also yields a feasible solution with fewer shadow links. This finishes the proof.
\end{proof}

Hence, from now on we assume that our instance of \tecpc is shadow-complete.

\paragraph{The Track Packing Problem.}
Instead of computing an (approximate) solution to \tecpc, we consider a packing problem, which we call the Track Packing Problem (\tpp).
We construct an instance of \tpp from a shadow-complete instance of \tecpc.
In order to do this, we establish some notation.
For some link set $A \subseteq L$ let $V^*(A) = V^*(\mcP) \cap A$ be the set of vertices that are endpoints of some path and are incident to some link in $A$.
We define a \emph{track} as follows.
A track $T \subseteq L$ with $T = Q(T) \cup I(T)$ is a subset of links such that the following conditions hold.
\begin{itemize}
    \item $Q(T)$ is a simple path in $G$ that contains at least one link, where the start vertex $u$ and the end vertex $v$ of $Q(T)$ are endpoints of some paths $P, P' \in \mcP$ (not necessarily $P \neq P'$) such that $u \neq v$,
    \item any interior vertex of $Q(T)$, i.e., a vertex from $V(Q(T)) \setminus \{u, v \}$, is the interior vertex of some path $P \in \mcP$, 
    \item $I(T) \subseteq L$ contains exactly the following set of links: for each interior vertex $w_i \in V(Q(T)) \setminus \{u, v \}$, which is an interior vertex of some path $P_i \in \mcP$, 
    there is a link $u_i v_i \in I(T)$, where $u_i, v_i \in V^*(P_i)$ are the endvertices of $P_i$, and
    \item if $|T| = 1$, i.e., if $|Q(T)|=1$ and $|I(T)| = 0$, then the start vertex $u$ of $Q(T)$ and the end vertex $v$ of $Q(T)$ belong to distinct paths.
\end{itemize}
For any track $T$ we refer by $Q(T)$ and $I(T)$ as the set of links as above.
Observe that the number of links in a track is odd, since $|Q(T)| = |I(T)| + 1$. 
Further note that a track of length 1 is simply a link in $\bar{L}$, i.e., a link between two endvertices of two distinct paths $P, P' \in \mcP$. 

The instance of \tpp is now defined as follows.
The input is simply a (shadow-complete) instance of 2ECPC with structured graph $G$.
Let $\mathcal{T}$ be the set of all tracks in $G$.
The task is to select a maximum number of disjoint tracks from $\mathcal{T}$, where two tracks $T, T'$ are disjoint if $V(T) \cap V(T') = \emptyset$.
An example of a feasible solution to a TPP instance is given in \Cref{fig:s2ecpc-tpp}.

In the following we establish a relationship between solutions of \tecpc and its corresponding \tpp instance.
This connection is similar to the connection between 2-matchings and 2-edge covers. In particular, an optimum solution to one of one of the problems can be turned into an optimum solution to the other problem in polynomial time.
We note that the additional condition required in \Cref{lem:relaxation:relationship:packing-to-covering} and \Cref{lem:relaxation:relationship:covering-to-packing} is always satisfied for structured graphs.
The relationship between 2ECPC and TPP is highlighted in \Cref{fig:s2ecpc-tpp}.

\begin{restatable}[]{lemma}{lemPPPtoPCP}
    \label{lem:relaxation:relationship:packing-to-covering}
    Given a solution $X$ for a shadow-complete instance of \tpp, in which each endvertex of a path is connected to a vertex of a different path, we can compute in polynomial time a solution $Y$ to the corresponding \tecpc instance of value $2 |\mcP| - |X|$.
\end{restatable}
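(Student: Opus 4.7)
The plan is to construct $Y$ directly from the tracks of $X$ in two phases and then analyze the resulting size. In the first phase, for each track $T \in X$, I would include all of its $2k(T) + 1$ links in $Y$, where $k(T)$ denotes the number of interior path-vertices appearing in $Q(T)$. In the second phase, for every endvertex $u \in V^*(\mcP)$ that is not yet incident to any link in $Y$, I would add a single link from $u$ to a vertex of a different path; such a link is available by the lemma's hypothesis together with shadow-completeness.

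The counting goes as follows. The track phase contributes $\sum_{T \in X}(2k(T) + 1) = 2K + |X|$ links, where $K = \sum_{T \in X} k(T)$. A track $T$ without internal coincidences covers exactly $2(k(T) + 1)$ endvertices: the two endpoints of $Q(T)$, plus both endvertices of each of the $k(T)$ interior paths (covered by the links $u_i v_i \in I(T)$). Since the tracks in $X$ are vertex-disjoint in $V^*(\mcP)$, the first phase covers $2K + 2|X|$ endvertices in total, leaving $2|\mcP| - 2K - 2|X|$ endvertices for the second phase. Adding one link per uncovered endvertex then yields $|Y| = (2K + |X|) + (2|\mcP| - 2K - 2|X|) = 2|\mcP| - |X|$.

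For feasibility, every endvertex of $\mcP$ is incident to at least one link of $Y$ by construction. To verify $\delta_Y(P) \geq 2$ for each path $P \in \mcP$, I would distinguish three cases: (i) if $P$ is an interior path of some track $T$, then the two $Q(T)$-links incident to $w_P$ both cross out of $P$ and give $\delta_Y(P) \geq 2$; (ii) if $P$ is only an endpoint path of some $T$, then the $Q(T)$-endpoint link contributes one outgoing link, and the phase-two link at the other endvertex of $P$ (chosen to go to a different path) contributes a second; (iii) if $P$ is untouched by any track in $X$, then both phase-two links at its two endvertices are cross-links, giving $\delta_Y(P) \geq 2$.

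The main obstacle lies in handling internal coincidences within tracks, namely the case where an endpoint of $Q(T)$ is itself an endvertex of one of the interior paths of $T$ (and is therefore also an endpoint of the corresponding $u_i v_i \in I(T)$). Such an overlap makes the naive count overshoot by the number of coincidences, so the construction must be adjusted: for each overlap, I would drop the redundant $I(T)$-link and replace it with a cross-link that simultaneously covers a previously uncovered endvertex elsewhere, which is always possible by shadow-completeness together with the hypothesis that every endvertex is joined to a vertex of a different path. A final sweep then confirms that the path-constraint checks in cases (i)--(iii) remain satisfied after all such replacements, so that the final $Y$ is both feasible and of size exactly $2|\mcP| - |X|$.
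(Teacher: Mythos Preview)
Your two-phase construction---include every link of every track, then add one cross-link per uncovered endvertex of $\mcP$---and the count $|Y|=(2K+|X|)+(2|\mcP|-2K-2|X|)=2|\mcP|-|X|$ are exactly the paper's proof, with your feasibility check being a slightly more explicit case split of the same argument. The ``internal coincidence'' edge case you worry about is not addressed by the paper either: the paper simply asserts that each endpoint of $Q(T)$ is incident to exactly one link of $T$ and that $|T|=|V^*(T)|-1$, which already presumes no such coincidence. So you are not missing anything the paper supplies; your attempted patch is unnecessary here and, as written, does not work, since shadow-completeness only adds links incident to at least one interior vertex and therefore cannot furnish the endvertex-to-endvertex cross-link you describe.
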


\begin{proof}
    We construct a solution $Y$ to \tecpc as follows.
    \begin{enumerate}
        \item For each $T \in X$ we add $T$ to $Y$.
        \item For each $a \in V^*(\mcP)$ that is not incident to some link in $Y$ so far, add an arbitrary link $a b$ such that $b$ does not belong to the same path as $a$.
    \end{enumerate}
    First, note that such a link as in the second step always exists since $G$ is structured.
    We now prove that $Y$ is feasible and then bound the number of links.
    Let $Y_1$ be the set of links added in the first step of the above algorithm and $Y_2$ be the set of links added in the second step.
    Consider some track $T \in X$ and let $u, v$ be the endpoints of the path $Q(T)$.
    Consider some vertex $y \in V(Q(T)) \setminus \{u, v\}$, which is an interior vertex of some path $P \in \mcP$.
    Observe that $|\delta(P) \cap T| = 2$ and $|\delta(x) \cap T| = 1$ for each $x \in V^*(P)$, i.e., $T$ satisfies all conditions for feasibility for the instance of \tecpc w.r.t.~$P$.
    Furthermore, observe that the start and end vertex $u$ and $v$ of $Q(T)$ are incident to exactly one edge in~$T$. Additionally, by definition, these edges are outgoing edges w.r.t.\ their path.
    Hence, by adding to $Y_1$ one link $a b$ to each vertex $a$ that is not incident to some link in $Y_1$ such that $b$ does not belong to the same path as $a$, the resulting link set satisfies $|\delta_Y(P)| \geq 2$ for each $P \in \mcP$ and $|\delta_Y(a)| \geq 1$ for each $a \in V^*(\mcP)$. 
    That is, $Y$ is feasible for the instance of \tecpc.

    It remains to prove that $|Y| = 2|\mcP| - |X|$.
    For some $T \in X$, observe that $|Q(T)| = |I(T)| +1$
    and that $|T| = |Q(T)| + |I(T)| = |V^*(T)| -1$.
    Define $A = \{ u \in V^*(\mcP) | \delta_{Y_1}(u) = 0 \}$ to be the set of endpoints of $\mcP$ that are not incident to any link of $Y_1$.
    Observe that $|A| = 2 |\mcP| - \sum_{T \in X} |V^*(T)|$.
    Therefore, we have that $|Y_2| = 2 |\mcP| - \sum_{T \in X} |V^*(T)|$.
    Using that $|T| = |V^*(T)| -1$ for each $T \in X$ we obtain $|Y| = |Y_1| + |Y_2| = 2 |\mcP| - |X|$.
\end{proof}

\begin{restatable}[]{lemma}{lemPCPtoPPP}
    \label{lem:relaxation:relationship:covering-to-packing}
    Consider a shadow-complete instance $I$ of \tecpc, in which each endvertex of a path is connected to a vertex of a different path.
    Given a minimal solution $Y$ to $I$ such that $|Y| \leq 2|\mcp|$, we can compute in polynomial time a solution $X$ to the corresponding \tpp instance of value at least $2 |\mcP| - |Y|$.
\end{restatable}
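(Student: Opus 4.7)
The plan is to reverse the construction of \Cref{lem:relaxation:relationship:packing-to-covering} via a double-counting argument. I will define an assignment $\phi : V^*(\mcP) \to Y$ by choosing, for every endpoint $u \in V^*(\mcP)$, one link $\phi(u) \in Y$ incident to $u$, prioritizing links in $\bar L$ (going to an endpoint of a different path) over links from $u$ to the interior vertex of another path, and the latter over links in $\hat L$. Such a $\phi$ exists since $\delta_Y(u) \geq 1$ for every $u \in V^*(\mcP)$. Since each link of $Y$ has at most two endpoints in $V^*(\mcP)$, we have $\sum_{e \in Y} |\phi^{-1}(e)| = 2|\mcP|$ and $|\{e \in Y : \phi^{-1}(e) \neq \emptyset\}| \leq |Y|$; an elementary algebraic manipulation then forces the set $E_2 := \{e \in Y : |\phi^{-1}(e)| = 2\}$ of \emph{doubly-chosen} links to satisfy $|E_2| \geq 2|\mcP| - |Y|$.

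Given this $\phi$, I then convert each doubly-chosen link into a vertex-disjoint track in $X$. If $e \in E_2 \cap \bar L$, then $\{e\}$ is a valid size-$1$ track whose vertex set consists of the two endpoints of $e$ lying on distinct paths. If instead $e = uv \in E_2 \cap \hat L$ lies inside a single path $P_i = (u, w_i, v)$, feasibility $\delta_Y(P_i) \geq 2$ combined with $\phi(u) = \phi(v) = uv$ forces $w_i$ to have at least two outgoing $Y$-links; I pick two such links $a w_i, b w_i \in Y$ with $a, b$ on paths different from $P_i$, invoke shadow-completeness to include $uv$ as the $I(T)$-link, and form the size-$3$ track $\{a w_i, b w_i, uv\}$. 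Vertex-disjointness across all tracks is immediate from the fact that each endpoint of $V^*(\mcP)$ belongs to at most one track: the endpoints of a size-$1$ track $\{e'\}$ both have $\phi$-image $e'$, while the four endpoints $u, v, a, b$ of a size-$3$ track have $\phi$-images $uv, uv, a w_i, b w_i$ all bound to this single track, so no other track can re-use any of them. Therefore $|X| \geq |E_2| \geq 2|\mcP| - |Y|$, as required.

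The main obstacle is to justify that the priority-based $\phi$ exists with the desired properties simultaneously for every endpoint, and that whenever $e = uv \in E_2 \cap \hat L$ triggers a size-$3$ construction, the two auxiliary links $a w_i, b w_i$ are indeed available and not already claimed. The assumption that every endpoint is connected to a vertex of a different path in $L$, combined with minimality of $Y$ and shadow-completeness, is what allows us to argue that the priority can always be resolved greedily in polynomial time: any endpoint whose first-priority level appears ``empty'' in $Y$ can either be re-assigned via a shadow swap that preserves $|Y|$ and feasibility, or is handled by the fallback of a size-$3$ construction through its interior vertex. The availability of $a, b$ follows because these endpoints are singly-chosen (their $\phi$-images are the links to $w_i$, which are not in $V^*(\mcP)$, so no conflict with $E_2$ arises), and feasibility gives $\delta_Y(w_i) \geq 2$ in the critical case. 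All remaining endpoints contribute only to $|E_1|$ (singly-chosen) and $|E_0|$ (unchosen), which are consistent with the counting inequality and do not affect the lower bound on $|X|$.
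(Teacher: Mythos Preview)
Your counting argument that $|E_2| \geq 2|\mcP| - |Y|$ is correct and elegant: from $2|E_2| + |E_1| = 2|\mcP|$ and $|E_0| + |E_1| + |E_2| = |Y|$ one gets $|E_2| - |E_0| = 2|\mcP| - |Y|$, hence $|E_2| \geq 2|\mcP| - |Y|$. The difficulty is entirely in the second half, where you convert $E_2$ into vertex-disjoint tracks, and here there is a genuine gap.

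Consider a doubly-chosen link $e = uv \in E_2 \cap \hat L$ on a path $P_i$. You correctly deduce that $w_i$ carries at least two outgoing links in $Y$, say $a w_i$ and $b w_i$. But you then assert that $\phi(a) = a w_i$ and $\phi(b) = b w_i$, and this is not justified by your priority rule: $\phi(a)$ is determined by the links of $Y$ incident to $a$, and if $a$ happens to have a $\bar L$-link $ac \in Y$, the priority forces $\phi(a) = ac$. If moreover $\phi(c) = ac$, then $ac \in E_2$ and the size-$1$ track $\{ac\}$ already claims $a$, so your size-$3$ track $\{a w_i, b w_i, uv\}$ collides with it. Nothing in minimality or shadow-completeness rules this out. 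A second, independent problem is that $a$ or $b$ could be an interior vertex $w_j$ rather than an endpoint, in which case $(a, w_i, b)$ is not a valid $Q(T)$ at all; your appeal to shadow-completeness does not help here, since shadows only add \emph{weaker} (more interior) links, never promote an interior-to-interior link to an endpoint-to-interior one. Finally, even granting $\phi(a) = a w_i$, two distinct $\hat L$ doubly-chosen links on paths $P_i$ and $P_j$ could both have $w_i$-- and $w_j$--links to the same endpoint $a$, and your construction gives no mechanism to break such ties.

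The paper's proof takes a very different route precisely to avoid these conflicts: it runs an iterative procedure that removes ``greedy'' and ``quasi-greedy'' links from $Y$ while depositing tokens on vertices, then applies further normalization steps (shadow replacements, cycle removal) until the residual link set decomposes cleanly into tracks. The token bookkeeping replaces your $\phi$-counting and is what certifies both the cardinality bound and the disjointness simultaneously. Your $\phi$-assignment idea is a natural first attempt, but to make it work you would need either a global matching-style argument allocating the auxiliary endpoints $a, b$ consistently across all $\hat L$ links in $E_2$, or a recursive extension to longer tracks when $a$ or $b$ is itself an interior vertex---and at that point you are essentially reconstructing the paper's algorithm.
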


\begin{proof}
    The idea of the proof is the following. We remove certain links from $Y$ and modify $Y$ such that the remaining set of links $X'$ corresponds to a collection of disjoint tracks $X$ of the corresponding instance of \tpp.
    Furthermore, for each removed link from $Y$ in the first step, we assign a token to a specific vertex $v \in V$.
    By $t(v)$ we denote the number of tokens assigned to $v$ and further we denote by $t(P)$ the sum of tokens assigned to all vertices of $P$, i.e., $t(P) = \sum_{v \in V(P)} t(v)$ for some path $P \in \mcP$.
    Initially we have $t(v)=0$ for all $v \in V$.
    We then show that $|X'| + \sum_{P \in \mcP} t(P) = |Y|$, which we can use to show that $|X| \geq 2 |\mcP| - |Y|$.

    We start by giving an algorithm that modifies $Y$ and outputs $X$, the desired solution to the instance of \tpp.
    Let $Y' \subseteq L$ be some set of links, which is not necessarily a feasible solution to the instance of \tecpc.
    Let $e = uv \in Y'$ and let $P, P' \in \mcP$ be such that $u \in V(P)$ and $v \in V(P')$.
    We say that $e$ is \emph{greedy} w.r.t.\ $u$ if $|\delta_{Y'}(P')| + t(P') \geq 3$ holds, and, if $v$ is an endvertex of $P'$, then additionally it must hold that $|\delta_{Y'}(v)| + t(v) \geq 2$.
    Furthermore, we say that $e$ is \emph{quasi-greedy} w.r.t.\ $u$ if $|\delta_{Y'}(P')| = 1 = t(P')$ holds and $xy \in Y_1$, where $x,y \in V^*(P')$ are the endvertices of $P'$.
       We modify $Y$ as follows, where initially we have $t(v)=0$ for all $v \in V$:
    \begin{itemize}[1.]
        \item[1.] As long as $Y$ contains a link $e= uv$ such that $u \in V(P)$ and $v \in V(P')$ for some $P, P' \in \mcP$ and $e$ is greedy w.r.t.\ $u$, remove $e$ from $Y$ and assign one token to $u$.
        \item[2.] As long as $Y$ contains a link $e= uv$ such that $u \in V(P)$ and $v \in V(P')$ for some $P, P' \in \mcP$ and $e$ is quasi-greedy w.r.t.\ $u$, remove $e$ from $Y$ and assign one token to $u$. Furthermore, remove $xy \in Y$, where $x,y \in V^*(P')$ are the endvertices of $P'$, and assign one additional token to the vertices of $P'$ and reassign the tokens of $P'$ such that $t(x) = t(y) = 1$.
        \item[3.] As long as $Y$ contains a path $P \in \mcP$ with endvertices $xy \in V^*(P)$ such that $t(P) \geq 1$ and $xy \in Y$, remove $xy \in Y$ and assign one additional token to the vertices of $P$ and reassign the tokens of $P$ such that $t(x), t(y) \geq 1$.
        \item[4.] If there is a path $P \in \mcP$ such that $t(P) \geq 2$, redistribute the tokens such that each endvertex of $P$ receives at least one token.
    \end{itemize}
    Let $Y_1, Y_2, Y_3$ be the set of edges after the first, second and third step, respectively.
    We now show that after the first step we have $|\delta_{Y_1}(P)| \leq 2$ for each path $P \in \mcP$.
    Assume this is not true, i.e., there is some $P \in \mcP$ such that $|\delta_{Y_1}(P)| \geq 3$.
    It is easy to see that the interior vertex of $P$ can not be incident to a link $e$ from $\delta_{Y_1}(P)$ as this is a greedy link.
    But then there must be an endpoint $v \in V^*(P)$ that is incident to at least $2$ links.
    Now observe that such a link incident to $v$ is a greedy link.
    
    Hence, we have $|\delta_{Y_1}(P)| \leq 2$ for each path $P \in \mcP$ and therefore $|\delta_{Y_3}(P)| \leq 2$.
    Observe that if $|\delta_{Y_3}(P)| = 2 - i$ for $i \in \{0, 1, 2\}$, then $t(P) \geq i$, i.e., we have assigned at least $i$ tokens to vertices of $P$, which follows from our token assignment rule.
    Similarly, if $\delta_{Y_3}(u) = 0$ for some endvertex of some path $P \in \mcP$, then $t(u) \geq 1$.
    Furthermore, we show that we have the following properties: 
    \begin{itemize}
        \item[(i)] If there is some path $P \in \mcP$ such that $t(P) \geq 2$, then no vertex $u \in V(P)$ is incident to some link from $Y_3$.
        \item[(ii)] If there is some path $P \in \mcP$ such that $t(P) = 1$, then there is one endvertex $x \in V^*(P)$ such that $|\delta_{Y_3}(x)| =1$ and one endvertex $y \in V^*(P)$ such that $|\delta_{Y_3}(y)| =0$.
        \item[(iii)] If there is some path $P \in \mcP$ such that $t(P) = 0$, then $|\delta_{Y_3}(P)| = 2$.
        \item[(iv)] If there is some path $P \in \mcP$ with $x, y \in V^*(P)$ such that $xy \in Y_3$, then $t(P) = 0$.
        \item[(v)] If there is a link $u v \in Y_3$, where $u$ is an endvertex of some path $P \in \mcP$ and $v$ is an interior vertex of some path $P' \in \mcP$, $P \neq P'$, then $t(P') = 0$ and $x y \in Y_3$, where $x, y \in V^*(P')$.
    \end{itemize}

    To see that (i) is true, observe that if there is a path $P$ such that $t(P) \geq 2$ and $|\delta_{Y_3}(P)| \geq 1$, then $P$ is incident to a greedy edge, a contradiction as $Y_2$ does not contain greedy edges.
    For (ii), consider some path $P \in \mcP$ such that $t(P) = 1$. Observe that $|\delta_{Y_3}(P)| = 1$ in this case. 
    Assume that $x y \in Y_3$. Then it is easy to see that $e \in \delta_{Y_3}(P)$ is a greedy edge, a contradiction.
    Hence, the unique edge $e \in \delta_{Y_3}(P)$ must be incident to some endvertex of $P$ and hence we obtain (ii).
    Property~(iii) clearly follows from previous observations.
    It can be checked that (iv) follows from (i)-(iii).
    Finally, to see (v), note that if the link $xy \notin Y_3$, then $uv$ is a greedy edge, a contradiction.
    Hence, we have $xy \in Y_3$ and by the previous properties we have $t(P') = 0$.

    We now further modify $Y_3$ as follows.
    \begin{itemize}[1.]
        \item[5.] As long as $Y_3$ contains a link $e= ux'$ such that $u \in V^*(P)$ and $x' \in V(P')$ for some $P, P' \in \mcP$, $P \neq P'$, and $u v \in Y_3$, where $u \neq v \in V^*(P)$, we replace $ux'$ by its shadow $x' w$, where $w$ is the interior vertex of $P$.
    \end{itemize}
    Let $Y_5$ be the set of edges after this fifth step. 
    We do one additional sixth step. 
    Let $H = (V, Y_5)|\mathcal{P}$ be the graph on edge set~$Y_5$ after contracting each path of $\mcP$ to a single vertex.
    Note that the maximum degree in $H$ is $2$ and hence it is a disjoint union of paths and cycles.
    \begin{itemize}[1.]
        \item[6.] If in $H$ there is a cycle $C$ such that each vertex $u \in V(C)$ in $H$ corresponds to a path $P \in \mcP$ such that $x y \in Y_5$, where $x, y \in V^*(P)$, then remove from $Y_5$ all links that correspond to the links in~$C$ and additionally for each $u \in V(C)$ that corresponds to a path $P \in \mcP$ remove $x y$ from $Y_5$, where $x, y \in V^*(P)$ and add one token to both $x$ and $y$. 
    \end{itemize}
    
    Note that in the sixth step the number of added tokens is equal to the number of removed links, which is exactly twice the number of paths incident to $C$.
    We now set $X'$ to be the set of links remaining after this sixth step. 
    We claim that $X'$ is a union of disjoint tracks. 
    We then define $X$ to be the set of tracks defined by $X'$ for the corresponding instance of \tpp. 
    Furthermore, we will show that $|Y| = |X'| + \sum_{P \in \mcP} t(P)$, and finally $|X| \geq 2 |\mcP| - |Y|$.

    We first show that the link set of $X'$ corresponds to a union of disjoint tracks. 
    Let $H' = (V, X)|\mathcal{P}$ 
    and note that the maximum degree in $H'$ is $2$ and hence it is a disjoint union of paths and cycles.
    Consider some path or cycle $A$ in $H'$. 
    We show that we can decompose $A$ into disjoint tracks. 
    By doing this for each path or cycle in $H'$, we are done.
    Assume first that $A$ is a path and consider $A$ as the corresponding link set in $G$.
    $A$ must start with some link $uv$ in some endvertex $u$ of some path $P \in \mcp$, by (ii).
    If $v$ is some endvertex of some other path, then this single link defines the first track.
    Else, if $v$ is some interior vertex of some other path $P'$, then by (v) we have $xy \in X'$, where $x, y \in V^*(P')$.
    By (iv) and (iii), there must be an additional link outgoing from $P'$ in $X'$ and we proceed as before, depending if this link hits an endvertex of some path or an interior vertex.
    Eventually, this procedure ends if we hit another endvertex of some path. 
    It can be easily observed that this sequence of links defines a track.
    This finishes the case in which $A$ is a path.
    The case that $A$ is a cycle is similar. 
    Here, we simply start at some vertex that is incident to a link of $A$ and additionally an endvertex of some path incident to $A$. 
    Such a vertex has to exist since otherwise $A$ corresponds to a cycle $C$ with exactly the properties of Step 6, a contradiction.
    Hence, $X'$ can be decomposed into disjoint tracks. Let $X$ be the set of these tracks.

    We now prove $|Y| = |X'| + \sum_{P \in \mcP} t(P)$.
    Observe that $X'$ was obtained from $Y$ by removing links and for each removed link we have added exactly $1$ token to some vertex $u \in V$. 
    Since $t(P) = \sum_{v \in V(P)} t(v)$, we simply have $|Y| = |X'| + \sum_{P \in \mcP} t(P)$.

    Finally, we show that $|X| \geq 2 |\mcP| - |Y|$.
    Let $V_1^* \subseteq V^*(\mcp)$ be the set of endvertices that are incident to $X'$, i.e., part of one of the tracks in $X$. Additionally, let $V_2^* = V^*(\mcp) \setminus V_1^*$ be the remaining set of endvertices.
    Note that $|X'| = \sum_{T \in X} (|V^*(T)| -1) = |V^*_1| - |X|$.
    Further, note that by feasibility of $Y$ we have that $\sum_{P \in \mcp} t(P) \geq V^*_2$.
    Hence, we obtain 
    $$ |Y| = |X'| + \sum_{P \in \mcP} t(P) = |V^*_1| - |X| + \sum_{P \in \mcP} t(P) \geq |V^*_1| - |X| + |V_2^*| = V^*(\mcp) - |X| = 2 |\mcp| - |X| \ , $$
    and therefore we get $|X| \geq 2 |\mcp| - |Y|$.
    This finishes the proof.
\end{proof}

We also note that \tpp is \NP-hard, by reducing it to the \NP-hard $P_2$-Packing problem~\cite{kirkpatrick1978completeness}. 
In the $P_2$-Packing problem we are given a graph $G$ and the task is to select a maximum number of vertex-disjoint $P_2$'s, i.e., paths that contain exactly 2 edges.

\begin{proposition}
    \label{prop:hardness-tpp}
    The Track Packing problem is \NP-hard, even on shadow-complete instances, in which each endvertex of a path is connected to a vertex of a different path.
\end{proposition}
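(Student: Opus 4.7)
The plan is to prove NP-hardness of \tpp by a polynomial-time reduction from the $P_2$-Packing problem on graphs of bounded degree, which is NP-hard by Kirkpatrick and Hell~\cite{kirkpatrick1978completeness}. The goal of the reduction is to translate an instance $H = (V_H, E_H)$ of $P_2$-Packing into a shadow-complete \tpp instance whose maximum track packing equals the maximum $P_2$-packing of $H$ plus a known constant.

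Concretely, for each $v \in V_H$ I would introduce a path $P_v \in \mcP$ on three vertices $u_v, w_v, u_v'$ (endvertices $u_v, u_v'$, interior $w_v$) and include the within-path link $u_v u_v'$, which serves as the $I(T)$-component of any track passing through $w_v$. For each edge $\{a,b\} \in E_H$ I would include the endvertex-to-interior links $u_a w_b$, $u_a' w_b$, $u_b w_a$, $u_b' w_a$, but \emph{no} endvertex-to-endvertex link between distinct paths; shadow-completion then only adds the interior-to-interior links $w_a w_b$. Finally, I would attach a small fixed ``corona'' of auxiliary paths and links so that each endvertex of each $P_v$ is connected to a vertex of a different path, engineered so that the corona contributes only a known constant number of tracks to any optimum and none of its tracks can be merged with tracks from the main instance.

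The key structural property is that, outside the corona, no link connects two endvertices of distinct paths, so every useful track has length at least three. A length-$3$ track through $P_v$ has the form $Q(T) = \{u_a w_v, w_v u_b\}$ with $I(T) = \{u_v u_v'\}$ (where the endvertex copies $u_a, u_b$ may be either $u_a, u_a'$ and $u_b, u_b'$), and is valid precisely when $a\text{-}v\text{-}b$ is a $P_2$ of $H$ centered at $v$. Vertex-disjointness of two such tracks translates exactly to vertex-disjointness of the two $P_2$'s (with the two endvertex copies $u_v, u_v'$ accounted for by restricting the source instance to graphs whose $P_2$-packings use each vertex in a controlled way), giving a clean correspondence between length-$3$ track packings and $P_2$-packings of $H$.

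The last ingredient is an exchange argument showing that an optimum track packing may be assumed to contain no track of length $5$ or more: a track of length $2k{+}1$ corresponds to a path of length $k{+}1$ in $H$, and on the chosen bounded-degree class such a path can always be decomposed into $k$ vertex-disjoint $P_2$'s of $H$, producing $k$ length-$3$ tracks which replace the single long one without decreasing the objective. The main obstacle will be designing the corona so that it simultaneously achieves shadow-completeness and the endvertex-connectivity condition without silently introducing tracks that inflate the optimum; the secondary obstacle is the formal verification of the exchange step and the accounting for the two endvertex copies $u_v, u_v'$, which together pin down the reduction class of $H$ (for instance, cubic or near-cubic graphs for which $P_2$-Packing remains NP-hard).
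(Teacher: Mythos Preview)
Your overall plan (reduce from $P_2$-Packing, one three-vertex path per $H$-vertex, length-$3$ tracks encoding $P_2$'s) matches the paper, but the concrete construction you propose has a genuine gap that the paper's construction is specifically designed to avoid.

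The problem is the symmetric use of both endvertices. You add, for each $\{a,b\}\in E_H$, the four links $u_a w_b,\ u_a' w_b,\ u_b w_a,\ u_b' w_a$. Now take two edges $\{a,c\},\{a,d\}\in E_H$ with $c\neq d$. The length-$3$ track $T_1$ with $Q(T_1)=(u_c',\,w_a,\,u_d')$ uses $V(T_1)\supseteq\{u_c',u_a,u_a',u_d'\}$, and a second length-$3$ track $T_2$ with $Q(T_2)=(u_a,\,w_c,\,\cdot)$ (any neighbour of $c$) uses $V(T_2)\ni u_a$. So $T_1$ and $T_2$ collide. Fine. But consider instead $T_1$ with $Q(T_1)=(u_a,\,w_c,\,\cdot)$ and $T_2$ with $Q(T_2)=(u_a',\,w_d,\,\cdot)$: these are vertex-disjoint in your instance (one uses $u_a$, the other $u_a'$, and neither uses $w_a$), yet they correspond to two $P_2$'s in $H$ that \emph{share} the vertex $a$. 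So from a track packing of size $t$ you cannot in general extract a $P_2$-packing of size $t-\text{(corona)}$, and the equality you need fails. Restricting $H$ to bounded degree does not help: the collision is caused by having two connector endvertices per $H$-vertex, not by high degree. The paper's fix is to break this symmetry: only one endvertex $v_i$ of $P_i$ gets links $v_i w_j$ to other interiors; the other endvertex $u_i$ is linked only to $v_i$ and to the interiors of two global auxiliary paths $P_{n+1},P_{n+2}$, which simultaneously play the role of your corona and absorb the second endvertex so it cannot be used as a free connector.

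A secondary point: your exchange step is misstated. A path on $k{+}2$ vertices cannot be decomposed into $k$ vertex-disjoint $P_2$'s when $k\ge 2$ (that would need $3k$ vertices). What you actually need, and what the paper does, is much weaker: replace a single long track by a \emph{single} length-$3$ track on a subset of its vertices. Concretely, if $Q(T)=(u_a,w_{i_1},w_{i_2},\ldots)$, then since $\{i_1,i_2\}\in E_H$ there is an endvertex-to-interior link from $P_{i_2}$ to $w_{i_1}$, giving a length-$3$ track on $\{u_a,w_{i_1},u_{i_1},u_{i_1}',\cdot\}\subseteq V(T)$. This preserves the track count (rather than increasing it), which is exactly what the equality requires.
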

\begin{proof}
    Let $G'=(V', E')$ be an instance of the $P_2$-packing problem, where $V' = \{ x_1, x_2, ..., x_n \}$.
    From $G'$ we construct a shadow-complete instance $G=(V, \mcP \cup L)$ of TPP, in which each endvertex of a path is connected to a vertex of a different path, as follows. 
    For each $x_i$, $1 \leq i \leq n$, we add three vertices $u_i, v_i, w_i$ to $G$ and define the path $P_i = u_i w_i v_i$ with endvertices $u_i$ and $v_i$ and interior vertex $w_i$.
    Furthermore, we additionally add two special paths $P_{n+1}= u_{n+1} w_{n+1} v_{n+1}$ and $P_{n+2}=u_{n+2} w_{n+2} v_{n+2}$, where again the endvertices are $u_{n+1}, v_{n+1}$ and $u_{n+2}, v_{n+2}$, respectively, and the interior vertices are $w_{n+1}$, and $w_{n+2}$, respectively.
    We set $\mcP = \{P_1, P_2, ..., P_n, P_{n+1}, P_{n+2}\}$ and define $V = \bigcup_{1 \leq i \leq n+2} \{u_i, v_i, w_i\}$. 
    
    The link set $L$ is defined as follows. For each $x_i x_j \in E'$, $i \neq j$, we add the links $v_i w_j$ and $v_j w_i$ to~$L$, as well as the link $w_i w_j$. Furthermore, for each $1 \leq i \leq n$ we add $v_i u_i$ to~$L$.
    Additionally, for $1 \leq i \leq n$, we add $u_i w_{n+1}$, $w_i w_{n+1}$, $u_i w_{n+2}$, and $w_i w_{n+2}$. Also, we add the seven links $v_{n+1} v_{n+2}$, $u_{n+1} u_{n+2}$, $v_{n+1} w_{n+2}$, $w_{n+1} v_{n+2}$, $w_{n+1} w_{n+2}$, $u_{n+1} w_{n+2}$, and $w_{n+1} u_{n+2}$.
    
    This finishes the construction of $G$. Note that indeed $G$ is shadow-complete. Further, observe that for each $1 \leq i \leq n$ we have that $u_i$ is only incident in $L$ to $v_i$ and the vertices $V(P_{n+1})$ and $V(P_{n+2})$.
    Also, note that each endvertex of a path is connected to a vertex of a different path.

    We now show that from a solution $X'$ to $G'$ consisting of vertex-disjoint paths, we can construct in polynomial-time a solution $X$ to $G$ consisting of disjoint tracks such that $|X'|+2 = |X|$, and vice versa. This then proves the \NP-hardness of TPP, since the $P_2$-packing problem is \NP-hard~\cite{kirkpatrick1978completeness}.

    First, consider some solution $X'$ to $G'$. We compute a solution $X$ to $G$ such that $|X| = |X'|+2$.
    For each $P_2$ $x_i x_j x_k \in X'$ we add the track with link set $\{v_i w_j, v_j u_j, w_j v_k \}$ to $X$. Note that since any two $P_2$'s in $X'$ are vertex-disjoint, also any two tracks added so far to $X$ are vertex-disjoint.
    Furthermore, we add the two (disjoint) tracks $v_{n+1} v_{n+2}$ and $u_{n+1} u_{n+2}$ of length 1 to $X$.
    Again, note that these two additional tracks are not intersecting any previous tracks added to $X$.
    Hence, $X$ is a feasible solution to $G$.
    Furthermore, we clearly have $|X| = |X'|+2$, which proves the statement.
    
    Next, consider some solution $X$ to $G$. We compute a solution $X'$ to $G'$ such that $|X'|+2 = |X|$.
    We first show that withoutloss of generality we can assume that the two tracks $v_{n+1} v_{n+2}$ and $u_{n+1} u_{n+2}$ of length 1 are contained in $X$.
    If not, observe that there are at most two tracks incident to $V(P_{n+1}) \cup V(P_{n+2})$, since $\{v_{n+1}, v_{n+2}, u_{n+1}, u_{n+2}\}$ is only incident to $V(P_{n+1}) \cup V(P_{n+2})$.
    Let $T_1$ and $T_2$ be the (at most) two tracks incident to (potentially a subset of) $V(P_{n+1})$ and $V(P_{n+2})$. Setting $Y = (X \setminus \{T_1, T_2\}) \cup \{ v_{n+1} v_{n+2}, u_{n+1} u_{n+2}\}$ yields a new set of disjoint tracks with the desired property and $|Y| \geq |X|$.
    We next show that without loss of generality we can assume that each track in $X$ consists of 3 links, except for the tracks $v_{n+1} v_{n+2}$ and $u_{n+1} u_{n+2}$.
    Observe that there are no edges between endvertices of distinct paths in $\mcP$, except for the links $v_{n+1} v_{n+2}$ and $u_{n+1} u_{n+2}$. 
    Hence, there are no other tracks consisting of only 1 link.
    Consider a track $T$ of at least 5 links (recall that a track always has an odd number of links).
    Hence, $|Q(T)| \geq 3$. Recall that $Q(T)$ is a path, where the endvertices of $Q(T)$ are endvertices of some paths in $\mcP$ and each interior vertex of $Q(T)$ is an interior vertex of some path in $\mcP$.
    Let the first two links in the path $Q(T)$ be $v_i w_j$ and $w_j w_k$.
    By construction of $G$, if $w_j w_k \in L$, then also $v_j w_k \in L$ and $v_k w_j \in L$.
    Now, observe that $T' = \{v_i w_j, v_j u_j, w_j v_k \}$ is a track in $G$ such that $V(T') \subseteq V(T)$.
    Hence, by replacing each track $T$ with its corresponding track $T'$ of length 3, we can construct a new set of vertex disjoint tracks of same cardinality, such that each track consists of exactly 3 links, except for the tracks $v_{n+1} v_{n+2}$ and $u_{n+1} u_{n+2}$.
    Now, for each track $T = \{v_i w_j, v_j u_j, w_j v_k \}$ containing 3 links we add the path $x_i x_j x_k$ to $X'$. Note that such a track excludes the two special tracks $v_{n+1} v_{n+2}$ and $u_{n+1} u_{n+2}$.
    Since $X$ consists of disjoint tracks and each $u_i$ is only adjacent to $v_i$ in $G \setminus (V(P_{n+1}) \cup V(P_{n+2})$ for each $1 \leq i \leq n$, also all $P_2$'s in $X'$ must be vertex-disjoint. Furthermore, by construction, we have $|X'|+2=|X|$, which proves our statement.
    \end{proof}

Note that due to the above relationship between \tecpc and \tpp established in \Cref{lem:relaxation:relationship:packing-to-covering} and \Cref{lem:relaxation:relationship:covering-to-packing}, \Cref{prop:hardness-tpp} also implies \NP-hardness for \tecpc.

\begin{proposition}
    \label{prop:hardness-2ECPC}
    The 2-Edge Cover with Path Constraints is \NP-hard.
\end{proposition}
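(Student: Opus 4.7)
The plan is to reduce \tpp to \tecpc using the two directional lemmas (\Cref{lem:relaxation:relationship:packing-to-covering} and \Cref{lem:relaxation:relationship:covering-to-packing}) that we have already established, combined with the \NP-hardness of \tpp from \Cref{prop:hardness-tpp}. The instance produced in the proof of \Cref{prop:hardness-tpp} is shadow-complete and has the property that every endvertex of each path is connected (in $L$) to a vertex of a different path, so it satisfies the hypotheses of both reduction lemmas when interpreted as a \tecpc instance.

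First, I would observe that on such instances, $\opt_C \leq 2|\mcP|$: the straightforward feasible solution that picks, for each endvertex $u \in V^*(\mcP)$, an arbitrary link of $\bar{L}$ incident to $u$, witnesses this bound. Hence any minimum solution $Y$ to \tecpc is in particular minimal and satisfies $|Y| \leq 2|\mcP|$, so \Cref{lem:relaxation:relationship:covering-to-packing} applies and yields a solution $X$ to the corresponding \tpp instance with $|X| \geq 2|\mcP| - |Y| = 2|\mcP| - \opt_C$. Conversely, an optimal \tpp solution $X^\star$ of size $\opt_P$ yields, via \Cref{lem:relaxation:relationship:packing-to-covering}, a feasible \tecpc solution of size $2|\mcP| - \opt_P$, so $\opt_C \leq 2|\mcP| - \opt_P$. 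Combining the two inequalities gives the exact identity $\opt_P = 2|\mcP| - \opt_C$.

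From this identity it follows that any polynomial-time algorithm computing $\opt_C$ (and an optimal solution) yields, via the polynomial-time transformation in \Cref{lem:relaxation:relationship:covering-to-packing}, a polynomial-time algorithm computing $\opt_P$ (and an optimal solution) on the class of shadow-complete \tpp instances in which every endvertex is linked to a vertex of a different path. Since \Cref{prop:hardness-tpp} shows that \tpp is already \NP-hard on exactly this class, we conclude that \tecpc is \NP-hard.

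The only real obstacle is verifying the hypotheses of the two reduction lemmas on the reduced instance; this is essentially bookkeeping, since the \tpp hardness instance was constructed to be shadow-complete and to satisfy the endvertex connectivity requirement, and the $|Y| \leq 2|\mcP|$ bound follows from the trivial feasible solution noted above.
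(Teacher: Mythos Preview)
Your proof is correct and follows exactly the approach the paper indicates (the paper only gives a one-line justification, which you have fleshed out in full detail). One cosmetic point: in the trivial feasible solution witnessing $\opt_C\le 2|\mcP|$, you should pick for each endvertex an arbitrary link to a vertex of a \emph{different path} rather than a link of $\bar L$, since in the hardness instance some endvertices (e.g.\ $u_i$ for $1\le i\le n$) have outgoing links only to interior vertices of other paths; this does not affect the argument.
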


\subsection{Obtaining a good Starting Solution via a factor-revealing LP}
In this section we obtain a good starting solution via approximation algorithms for \tpp.
The bound then is obtained using a factor-revealing LP.
Finally, we show that this solution satisfies all three Invariants \ref{invariant:credit}-\ref{invariant:block-size}.

First, we describe three easy approximation algorithms for \tpp.
These algorithms are obtained by viewing \tpp as a \emph{set-packing problem}.
In a set-packing problem, we are given a set of elements $U$ and a set of subsets $\mathcal{S} \subseteq 2^U$ of $U$.
The task is to compute a set $X \subseteq \mathcal{S}$ of maximum cardinality such that for any two sets $S, S' \in X$ we have $S \cap S' = \emptyset$.
For some instance $I$ of Set Packing let $k \coloneqq \max_{S \in \mathcal{S}} |S|$ be the maximum size of a set in $\mathcal{S}$.
The state-of-the-art polynomial-time approximation algorithms for set-packing problem achieve an approximation factor of $\frac{3}{1 + k + \varepsilon}$ for any constant $\varepsilon > 0$~\cite{cygan2013improved, furer2014approximating}.
\tpp can be viewed as a set-packing problem by considering each $v \in V^*(\mcP)$ as an element and each track as a set, which exactly contains those endvertices to which the track is incident to.
However, since the number of tracks can be exponential, we restrict only to tracks of certain constant size.

Throughout this section we fix an optimum solution $\OPT_P$ for the instance of TPP and assume that $\OPT_P = \bigcup_{i=1}^{n} O_i$, where $O_i$ is the set of tracks $T$ in the optimum solution such that $|Q(T)|=i$.
We define $\omega_i = |O_i|$.
Hence, $|\OPT_P| = \sum_{i=1}^{n} \omega_i$.
We additionally let $\omega_0$ be the number of vertices of $V^*(\mcP)$ that are not incident to any track in $\OPT_P$.
Since each endvertex of $V^*(\mcP)$ can be incident to at most one track and each track $T$ with $|Q(T)|=i$ is incident to $2i$ distinct endvertices of $V^*(\mcP)$, we have that $\omega_0 + \sum_{i=1}^{n} 2 i \cdot \omega_i = 2 |\mcP|$.
Finally, let $\omega_1^e \leq \omega_1$ be the number of tracks $T$ contained in $\OPT_P$ such that $|Q(T)|=1$ and $T$ corresponds to an expensive link.

\paragraph*{Algorithm A.}
In Algorithm~A we first only consider tracks of size 1, which are just links from $\bar{L}$.
In the first step we compute a maximum number of disjoint tracks of size 1, i.e., we compute a maximum matching $A_1 \subseteq \bar{L}$, which can be done in polynomial time.

For the second step, let $R_A$ be the set of endvertices of $V^*(\mcp)$ that are not incident to some link of $A_1$. 
Let $I_{R_A}$ be the resulting set-packing instance where we have one element for each vertex in $R_A$ and consider only those tracks $T$ that are only incident to vertices from $R_A$ and additionally satisfy $Q(T) = 2$ (i.e., $Q(T)$ contains exactly 2 links).
We then compute a solution $A_2$ to $I_{R_A}$ using the $\frac{3}{5+\varepsilon}$-approximation algorithm from~\cite{furer2014approximating} and output the solution $S_A = A_1 \cup A_2$.

\paragraph*{Algorithm B.}
Algorithm~B is similar to Algorithm~A, but we compute a different matching in the first step.
We aim at 'guessing' the value $\omega_1^e$, i.e., how many tracks of size 1 are contained in the optimum solution $\OPT_P$ that correspond to an expensive link. 
There are at most $|\mcP|$ many guesses, hence we can simply try all possible values.

Given a guess of $\omega_1^e$, we compute a subset $B_1 \subseteq \bar{L}$ of maximum size, such that $B_1$ contains at most $\omega_1^e$ many expensive links.
This can be done by guessing the size $k$ of the maximum matching with at most $\omega_1^e$ many expensive links and giving each expensive link in $\bar{L}$ a cost of $1$ and any other link of $\bar{L}$ a cost of 0 and computing a minimum-cost matching of size exactly $k$. 
The maximum $k$ for which the cost of the solution is at most $\omega_1^e$ is then the desired solution.
This gives us the edge set $B_1$, which corresponds to a maximum set of disjoint tracks of length 1 with at most $\omega_1^e$ many expensive links.
Note that a minimum-cost matching of size exactly $k$ in a weighted graph with $n$ vertices (and non-negative weights) can be found by reducing it to a minimum-cost perfect matching problem by adding $n - 2k$ dummy vertices and connecting each dummy vertex to every original vertex with an edge of cost $0$.

The second step in Algorithm~B is identical to the second step of Algorithm~B:
Let $R_B$ be the set of endvertices of $V^*(\mcp)$ that are not incident to some link of $B_1$. 
Let $I_{R_B}$ be the resulting set-packing instance where we have one element for each vertex in $R_B$ and consider only those tracks $T$ that are only incident to vertices from $R_B$ and additionally satisfy $Q(T) = 2$ (i.e., $Q(T)$ contains exactly 2 links).
We then compute a solution $B_2$ to $I_{R_B}$ using the $\frac{3}{5+\varepsilon}$-approximation algorithm from~\cite{furer2014approximating} and output the solution $S_B = B_1 \cup B_2$.

\paragraph*{Algorithm C.}
In the third algorithm, we directly consider the corresponding set-packing problem in which we only consider tracks $T$ with $|Q(T)| \leq 2$ (i.e., tracks $T$ such that $|Q(T)| = 1$ or $|Q(T)| = 2$).
We then compute a solution $S_C$ to this set-packing instance using the $\frac{3}{5+\varepsilon}$-approximation algorithm from~\cite{furer2014approximating} and output $S_C$. We set $S_B = C_1 \cup C_2$, where $C_i$ is the set of tracks $T$ such that $|Q(T)| = i$.

\paragraph*{Final Algorithm.}
The final solution is then computed as follows. 
For a solution $H = (V, E(\mathcal{P}) \cup S)$ we defined $\cost(H) \coloneqq \credits(H) + |S|$.
For each guess $i \in \{0, 1, ..., |\mcP|\}$ of $\omega_1^e$ in Algorithm~B we have a solution $H_B^i = (V, E(\mathcal{P}) \cup S_B^i)$ and for Algorithm A and C we have one solution $H_A = (V, E(\mathcal{P}) \cup S_A)$ and $H_C = (V, E(\mathcal{P}) \cup S_C)$, respectively.
For each of the $|\mcP| + 3$ many solutions, we then compute their respective cost as defined above and output the solution with lowest cost.
We let $H = (V, E(\mathcal{P}) \cup S)$ be this final solution.

We note that in fact one of the solutions computed in Algorithm B will be identical to the solution computed in Algorithm A. 
This is true if the guess of $\omega_1^e$ is $0$ and hence in Algorithm B, we simply compute a maximum matching of $\bar{L}$. 
However, for notational purposes it will be more convenient to also consider Algorithm A with its solution $S_A$.

\paragraph*{Bounding the cost of the solution.}
We now bound $\cost(H)$.
We bound the number of links in the optimum solution as follows.

\begin{lemma}
    \label{lem:starting-solution:bound:opt}
    We have $\opt_C \geq 2|\mcp| - \sum_{i \geq 1} \omega_i$. 
\end{lemma}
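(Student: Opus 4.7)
The plan is to obtain the bound by applying the covering-to-packing transformation (\Cref{lem:relaxation:relationship:covering-to-packing}) to an optimum solution of \tecpc, and then use optimality of $\OPT_P$ for \tpp.

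First I would verify that the hypotheses of \Cref{lem:relaxation:relationship:covering-to-packing} are satisfied by $\OPT_C$. The instance is shadow-complete by the standing assumption in this section. Moreover, since $G$ is structured (in particular property (P2) holds), every endvertex of a path is connected to a vertex of a different path, so the lemma applies. Next, $\OPT_C$ is minimal because it is optimal: removing any link would either break feasibility or contradict optimality. Finally, as already observed in the paragraph following \Cref{prop:pap-relaxation}, there is a trivial feasible \tecpc solution of size at most $2|\mcP|$ obtained by picking for each endpoint $u \in V^*(\mcP)$ an arbitrary outgoing link to a different path (which exists by (P2)); hence $\opt_C \leq 2|\mcP|$.

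With all hypotheses met, \Cref{lem:relaxation:relationship:covering-to-packing} applied to $Y = \OPT_C$ produces in polynomial time a feasible \tpp solution $X$ with
\[
|X| \;\geq\; 2|\mcP| - |\OPT_C| \;=\; 2|\mcP| - \opt_C.
\]
Since $\OPT_P$ is an optimal solution to the corresponding \tpp instance, we have $|\OPT_P| \geq |X|$. Combining this with $|\OPT_P| = \sum_{i \geq 1} \omega_i$ gives $\sum_{i \geq 1} \omega_i \geq 2|\mcP| - \opt_C$, which rearranges to the desired inequality
\[
\opt_C \;\geq\; 2|\mcP| - \sum_{i \geq 1} \omega_i.
\]

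The argument is essentially a direct corollary of the covering-to-packing reduction; the only mild subtlety is checking its three preconditions (shadow-completeness, minimality, and $|Y| \leq 2|\mcP|$), all of which are immediate in our setting. There is no genuine obstacle.
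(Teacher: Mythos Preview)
Your proof is correct and follows essentially the same route as the paper: both arguments hinge on the covering-to-packing relationship, using $\opt_P = \sum_{i\geq 1}\omega_i$ to conclude. The paper actually invokes both \Cref{lem:relaxation:relationship:packing-to-covering} and \Cref{lem:relaxation:relationship:covering-to-packing} to establish the equality $\opt_C = 2|\mcP| - \opt_P$, whereas you use only \Cref{lem:relaxation:relationship:covering-to-packing} to get the inequality that is actually stated; your version is slightly more economical, and your explicit check of the three preconditions (shadow-completeness, minimality of $\OPT_C$, and $\opt_C \leq 2|\mcP|$) is a nice touch the paper leaves implicit.
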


\begin{proof}
    By Lemmas~\ref{lem:relaxation:relationship:packing-to-covering} and~\ref{lem:relaxation:relationship:covering-to-packing} we have that $\opt_C = 2 |\mcp| - \opt_P$.
    By definition, we have that $\opt_P = \sum_{i \geq 1} \omega_i$ and therefore obtain the result.
\end{proof}

Let $S_B$ be the solution from Algorithm B which corresponds to the correct guess of $\omega_1^e$. From now on, we only need this solution from Algorithm B and do not consider the solutions for the other guesses of $\omega_1^e$.
Let $S_A = A_1 \cup A_2$, where $A_i$ is the set of tracks $T$ from $S_A$ such that $|Q(T)|=i$ and define $\alpha_i = |S_i|$ for $i=1, 2$.
Similarly, we define $S_B = B_1 \cup B_2$ and $S_C = C_1 \cup C_2$ and set $\beta_i = |B_i|$ and $\gamma_i = |C_i|$ for $i=1, 2$.
Furthermore, let $\alpha_0, \beta_0, \gamma_0$ be the number of vertices from $V^*(\mcP)$ that are not incident to some track of $S_A$, $S_B$ or $S_C$, respectively. 
Note that such vertices correspond to leaves in $S_A$ and $S_B$, respectively.
Additionally, let $\alpha_0^e \leq \alpha_0$, $\beta_0^e \leq \beta_0$, and $\gamma_0^e \leq \gamma_0$ be the number of such leaves as above that are expensive w.r.t.\ $S_A$, $S_B$ and $S_C$, respectively.
Finally, for $j= 0, 1, 2$ let $\alpha_1^j$ be the number of tracks $T = u v$ of size 1 in $S_A$ such that exactly $j$ vertices of $\{u, v\}$ intersect some track of size 1 in $\OPT_P$.
Similarly, for $j= 0, 1, 2$ we define $\beta_1^j$.
Note that $\alpha_1 = \alpha_1^0 + \alpha_1^1 + \alpha_1^2$ and $\beta_1 = \beta_1^0 + \beta_1^1 + \beta_1^2$.
First, we prove the following bound on the cost of a solution by giving it certain credits.

\begin{restatable}[]{lemma}{lemstartingsolutionboundcredits}
    \label{lem:starting-solution:bound:credits}
    Let $S$ be a solution computed by Algorithm A, B, or C.
    If each track of length $1$ receives a credit of $\frac 34$, each track of length $3$ a credit of $2$, each lonely leaf vertex a credit of $\frac 32$ if it is not expensive and $\frac 74$ if it is expensive, and each degenerate path an additional credit of $\frac 12$, then we can redistribute the credits such that the credit rules (A)-(E) are satisfied.
    Furthermore, Invariants~\ref{invariant:degree-lonely-vertex} and~\ref{invariant:block-size} are satisfied.
\end{restatable}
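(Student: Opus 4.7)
The plan is to pin down the combinatorial structure of $H = (V, E(\mcP) \cup S)$ obtained from the TPP solution $S$ returned by any of the three algorithms, and then verify Invariants~\ref{invariant:degree-lonely-vertex} and~\ref{invariant:block-size} together with the credit inequality $g(C) \geq r(C)$ component by component, where $g(C)$ denotes the total given credit charged to $C$ and $r(C)$ the credit required by rules (A)-(E). The key observation is that the tracks in $S$ are pairwise vertex-disjoint, not only on $V^*(\mcP)$ but on all of $V$: if the interior vertex $w$ of a path $P_0 \in \mcP$ lies in a track $T$, then $T$ has length $3$ and $I(T) = \{u_0 v_0\}$ with $u_0, v_0 \in V^*(P_0)$, so $u_0, v_0 \in V^*(T)$ and no other track can touch $P_0$. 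Consequently every $2$EC block of $H$ is simple and in bijection with the tracks of length $3$ in $S$; in particular $H$ has no non-simple blocks and Invariant~\ref{invariant:block-size} holds vacuously. Forming the contracted multigraph $\tilde H$ whose vertices are the paths in a given component of $H$ (a path that became a simple block is treated as a distinguished block vertex) and whose edges are the $t_1$ links of tracks of length $1$ together with the $2 t_3$ $Q(T)$-bridges of tracks of length $3$, every non-block vertex of $\tilde H$ has degree at most $2$ (one endpoint per track) while every block vertex has degree exactly $2$. Hence each connected component of $H$ is either trivial, complex (a $\tilde H$-path in which every $\tilde H$-edge is a bridge of $H$), or $2$-edge-connected (a $\tilde H$-cycle).

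\textbf{Verifying Invariant~\ref{invariant:degree-lonely-vertex}.} Each lonely vertex lies on a non-simple-block path and is therefore incident in $H$ to its (at most two) path edges plus at most one track link, so $\deg_H \leq 2$. The two neighbors in $\tilde H$ of a simple block $B$ are endpoints of non-simple-block paths (the same vertex-disjointness prevents those neighboring paths from being simple blocks themselves), so $B$ has degree exactly $2$ in $H^C$ and no two simple blocks are adjacent in $H^C$. Every simple block sits between its two $Q(T)$-bridges and hence belongs to a complex component.

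\textbf{Credit redistribution.} Given and required credits are both localized to components, so it suffices to compare $g(C)$ and $r(C)$ per component $C$. A trivial component needs and receives $0$. For a $2$EC component containing $p \geq 2$ paths, $t_1$ tracks of length $1$, $t_3 = (p - t_1)/2$ tracks of length $3$, and $d$ degenerate paths, one computes $g(C) = \tfrac{3}{4} p + \tfrac{1}{2} t_3 + \tfrac{1}{2} d$. A small $2$EC component is forced to have $t_1 = 2, t_3 = 0$, yielding $g(C) = \tfrac{3}{2} + \tfrac{1}{2} d$, which matches rule (E) with equality in both the degenerate and non-degenerate case; a large $2$EC component satisfies $p \geq 3$ and hence $g(C) \geq \tfrac{9}{4} > 2$, again matching (E). For a complex component, the $\tilde H$-path structure forces $\ell = 2$ lonely leaves and $b = t_1 + 2 t_3$ bridges, and with $\ell_e \in \{0,1,2\}$ expensive leaves and $s = t_3$ simple blocks,
\[
  r(C) = 3 + \tfrac{1}{2}\,\mathbf{1}[s \geq 1] + \tfrac{1}{4}\ell_e + \tfrac{3}{4}(t_1 + 2 t_3), \qquad
  g(C) = 3 + \tfrac{1}{4}\ell_e + \tfrac{3}{4} t_1 + 2 t_3 + \tfrac{1}{2} d,
\]
so $g(C) - r(C) = \tfrac{1}{2}\bigl(t_3 - \mathbf{1}[s \geq 1] + d\bigr) \geq 0$ since $\mathbf{1}[s \geq 1] \leq t_3$.

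\textbf{Main obstacle.} The main technical point is the max-degree-$2$ structure of $\tilde H$; once this is in hand, every complex component has exactly two lonely leaves and the credit inequalities collapse to the short identities above. The remaining bookkeeping amounts to checking that the $\tfrac{1}{4}$-surplus in the $\tfrac{7}{4}$-credit of an expensive lonely leaf pays the (A)(i) bonus exactly, and that the $\tfrac{1}{2}$-surplus of each track of length $3$ (which pays $2$ but needs only $\tfrac{3}{2}$ for its two $Q(T)$-bridges) simultaneously absorbs the (A)(ii) bonuses triggered by the simple block it introduces and, together with the degenerate-path credits, the extra $1$ credit owed to the complex component by rule (C).
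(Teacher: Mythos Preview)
Your proof is correct and follows essentially the same approach as the paper: both exploit that, because the TPP solution uses only tracks of size $1$ and $3$, each component of $H$ contracts (path-by-path) to a structure of maximum degree $2$, and then verify the credit rules component by component. The paper carries this out via an explicit redistribution narrative (take $\tfrac{1}{2}$ from each leaf for the component credit, keep $\tfrac{3}{4}$ per bridge, spread the leftover $\tfrac{1}{2}$ from each size-$3$ track to the two leaves for rule (A)(ii)), while you compute $g(C)$ and $r(C)$ directly; the accounting is the same.

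One small slip to patch: you assert that a large $2$EC component must have $p \geq 3$, but a single track of length $3$ whose two $Q(T)$-endpoints $u,v$ lie on the \emph{same} path $P$ yields a $2$EC component on exactly two paths ($P$ and the block path $P_j$) with three links, hence large with $p = 2$. Your own formula still gives $g(C) = \tfrac{3}{4}\cdot 2 + \tfrac{1}{2}\cdot 1 = 2$, so the bound holds (with equality) and nothing in the argument breaks; the case just needs to be acknowledged. The paper sidesteps this by noting directly that any $2$EC component containing a track of size $3$ is large with credit at least $2$.
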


\begin{proof}
    Let $H= (V, E(\mathcal{P}) \cup S)$. 
    First, consider some connected component $C$ which is 2EC. 
    If $C$ contains at least one track of size 3, then the component is large and the total sum of credits assigned to this component is at least 2.
    If $C$ contains at least 3 tracks of size 1, then the component is also large and the total sum of credits assigned to this component is also at least 2.
    Else, the component consists of exactly two tracks of size 1 and the component is small.
    If none of these paths is degenerate, then the total sum of credits assigned to this component is exactly $\frac 32$, as desired.
    Otherwise, this component receives a credit of at least 2.
    This proves part (E).

    Next, consider a complex component $C$ containing at least 1 bridge. 
    Observe that each block in $C$ is simple, since it must be part of some track. 
    Hence, this proves (D).
    Furthermore, note that this also proves \Cref{invariant:block-size}.
    Additionally, observe that after contracting each block of $C$ to a single vertex we obtain a path. Furthermore, in $H^C$ no two simple blocks are adjacent to each other, since we only consider tracks containing 1 or 3 links. Hence, \Cref{invariant:degree-lonely-vertex} is satisfied.

    Finally, we prove that the credits are satisfied for the complex component $C$.
    There are 2 lonely leaf vertices in $C$, both receiving a credit of $\frac 32$, plus an additional credit of $\frac 14$ if the leaf is expensive.
    We redistribute this credit such that each lonely leaf vertex receives a credit of $1$ (or $1.25$ if it is expensive), and additionally the complex component $C$ receives a credit of $1$. 
    The first part proves (A)(i) and the latter part proves (C).
    Each bridge that corresponds to a track of size 1 keeps its credit of $\frac 34$.
    Each bridge that corresponds to a track of size 3 keeps a credit of $\frac 34$, leaving another credit of $\frac 12$, which we distribute to the 2 lonely leaf vertices of $C$ equally, i.e., each lonely leaf vertex receives an additional credit of $\frac 14$ if $C$ contains at least one track of size 3.
    This proves~(B).
    Therefore, each lonely vertex receives an additional credit of $\frac 14$ if $C$ contains a track of size 3 (and therefore a simple block). Observe that this proves (A)(ii).
\end{proof}

Next, we show that also \Cref{invariant:credit} is satisfied. To do so, we establish the following lemmas.

\begin{restatable}[]{lemma}{lemstartingsolutionboundABC}
    \label{lem:starting-solution:bound:ALGA-B-C}
    If we assign to $H_A$, $H_B$ and $H_C$ credits according to \Cref{lem:starting-solution:bound:credits}, then we have $\cost(H_A) \leq  3 |\mcp| - \frac 54 \alpha_1 - \alpha_2  + \frac 14 \alpha_0^e + \frac{1}{2}\varepsilon' |\mcP|$, $\cost(H_B) \leq 3 |\mcp| - \frac 54 \beta_1 - \beta_2  + \frac 14 \beta_0^e + \frac{1}{2}\varepsilon' |\mcP|$ and $\cost(H_C) \leq 3 |\mcp| - \frac 54 \gamma_1 - \gamma_2  + \frac 14 \gamma_0^e + \frac{1}{2}\varepsilon' |\mcP|$, where $\varepsilon' |\mcP|$ is the number of degenerate paths in $G$.
\end{restatable}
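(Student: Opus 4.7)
The plan is to bound $\cost(H_A)$ by combining the per-object credit scheme of \Cref{lem:starting-solution:bound:credits} with a simple endpoint-counting identity; the arguments for $H_B$ and $H_C$ will then follow verbatim after replacing the $\alpha$-variables by their $\beta$- and $\gamma$-counterparts.

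First I would expand $\cost(H_A)=|S_A|+\credits(H_A)$. The link count is direct from the track sizes: $|S_A|=\alpha_1+3\alpha_2$, since each size-$1$ track contributes one link and each size-$3$ track contributes three. For the credits, the key observation is that the lonely leaves of $H_A$ are exactly the $\alpha_0$ endpoints of $V^*(\mcP)$ not touched by any track in $S_A$: an endpoint that is covered by a size-$1$ track, appears as an endpoint of $Q(T)$ for a size-$3$ track, or is incident to the middle-path link $I(T)$ of a size-$3$ track gains degree at least $2$ in $H_A$, while any untouched endpoint keeps its path-degree of $1$ and is therefore a lonely leaf of the complex component containing it. Applying the per-object credit assignment of \Cref{lem:starting-solution:bound:credits}, and using $\tfrac{3}{2}(\alpha_0-\alpha_0^e)+\tfrac{7}{4}\alpha_0^e=\tfrac{3}{2}\alpha_0+\tfrac{1}{4}\alpha_0^e$ for the lonely-leaf contribution, this yields
\[
\credits(H_A)\le \tfrac{3}{4}\alpha_1 + 2\alpha_2 + \tfrac{3}{2}\alpha_0 + \tfrac{1}{4}\alpha_0^e + \tfrac{1}{2}\varepsilon'|\mcP|.
\]

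The remaining step is the vertex-counting identity, which eliminates $\alpha_0$. By definition, a size-$1$ track is incident to exactly $2$ vertices of $V^*(\mcP)$ (its two endpoints, which lie in distinct paths), and a size-$3$ track is incident to exactly $4$ such vertices (the two endpoints of $Q(T)$ together with the two endpoints of the middle path joined by $I(T)$). Since the tracks in $S_A$ are pairwise vertex-disjoint and the remaining endpoints are untouched, I obtain
\[
\alpha_0 + 2\alpha_1 + 4\alpha_2 \;=\; |V^*(\mcP)| \;=\; 2|\mcP|.
\]
Substituting $\alpha_0 = 2|\mcP|-2\alpha_1-4\alpha_2$ into $|S_A|+\credits(H_A)$ and collecting terms produces the claimed bound $3|\mcP|-\tfrac{5}{4}\alpha_1-\alpha_2+\tfrac{1}{4}\alpha_0^e+\tfrac{1}{2}\varepsilon'|\mcP|$. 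For $H_B$ and $H_C$ the analysis is identical, as Algorithms~B and~C also output only tracks of sizes $1$ and $3$, so \Cref{lem:starting-solution:bound:credits} and the analogous identities $\beta_0+2\beta_1+4\beta_2=2|\mcP|=\gamma_0+2\gamma_1+4\gamma_2$ apply unchanged. I do not anticipate any real obstacle; the only point deserving care is the identification of $\alpha_0$ (and its analogues) with the true number of lonely leaves, which is forced by the restricted track sizes used in all three algorithms.
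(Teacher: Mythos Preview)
Your proposal is correct and follows essentially the same approach as the paper's proof: both compute $|S_A|=\alpha_1+3\alpha_2$, apply the per-object credits from \Cref{lem:starting-solution:bound:credits}, and use the endpoint-count $2|\mcP|-2\alpha_1-4\alpha_2$ for the lonely leaves before collecting terms. Your write-up is slightly more explicit in justifying why the untouched endpoints coincide with the lonely leaves and in stating the identity $\alpha_0+2\alpha_1+4\alpha_2=2|\mcP|$, but the argument is the same.
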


\begin{proof}
    We prove the statement for $H_A$; for $H_B$ and $H_C$ it is the same.
    Note that the number of lonely leaf vertices in $H_A$ is bounded by $2|\mcp| - 2 \alpha_1 - 4 \alpha_2$.
    Furthermore, we assign a credit of $\frac 34$ for each of the $\alpha_1$ many tracks containing one link, a credit of $2$ for each of the $\alpha_2$ many tracks containing three links and a credit of $\frac{3}{2}$ for each lonely leaf vertex of $H_A$ and an additional credit of $\frac{1}{4}$ for each expensive lonely leaf vertex.
    Finally, observe that we give an additional credit of $\frac{1}{2}$ to each degenerate path.
    Hence, we can bound $\cost(H_A)$ by
    \begin{align*}\cost(H_A) = |S_A| + \credits(H_A) & \ \leq \alpha_1 + 3 \alpha_2 + \frac 32 (2|\mcp| - 2 \alpha_1 - 4 \alpha_2) + \frac 34 \alpha_1 + 2 \alpha_2 + \frac 14 \alpha_0^e + \frac{1}{2}\varepsilon' |\mcP| \\
    & \ = 3 |\mcp| - \frac 54 \alpha_1 - \alpha_2  + \frac 14 \alpha_0^e + \frac{1}{2}\varepsilon' |\mcP|  \ \ , 
    \end{align*}  
    as desired. 
\end{proof}

Next, we obtain some further constraints on the variables we defined earlier.

\begin{restatable}[]{lemma}{lemstartingsolutionboundABCparameters}
    \label{lem:starting-solution:bound:ALGA-B-C-parameters}
    The following inequalities are valid:
    \begin{multicols}{2}
    \begin{enumerate}
        \item $\alpha_1 \geq \omega_1, \beta_1, \gamma_1$ , 
        \item $\alpha_1 \geq \omega_1 + \alpha_1^0$ , 
        \item $\alpha_1 \geq \omega_1 + \beta_1^0$ ,
        \item $\beta_1 \geq \omega_1$ ,
        \item $\alpha_1 \geq \alpha_0^e$ ,
        \item $\beta_1 \geq \beta_0^e$ ,
        \item $\gamma_1 \geq \gamma_0^e$ ,
        \item $\omega_0 \geq \beta_0^e$ ,
        \item $\alpha_1 = \alpha_1^0 + \alpha_1^1 + \alpha_1^2$ ,
        \item $\beta_1 = \beta_1^0 + \beta_1^1 + \beta_1^2$ ,
        \item $\alpha_2 \geq \frac{3}{5 + \eps} \cdot (\omega_2 - 2 \alpha_1^0 -\alpha_1^1)$ , 
        \item $\beta_2 \geq \frac{3}{5 + \eps} \cdot (\omega_2 - 2 \beta_1^0 -\beta_1^1)$ , and 
        \item $\gamma_1 + \gamma_2 \geq \frac{3}{5+\varepsilon}\cdot(\omega_1 + \omega_2)$ .
    \end{enumerate}
    \end{multicols}
\end{restatable}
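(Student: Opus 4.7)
The plan is to verify each of the thirteen claims by combining elementary matching and set-packing arguments with a handful of structural observations about tracks, organized into four groups.

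\textbf{Matching-based bounds (1)--(4), (9)--(10).} Since $\alpha_1$ is the cardinality of a \emph{maximum} matching in $\bar L$, any vertex-disjoint collection of size-1 tracks forms a matching of size at most $\alpha_1$; applied to the size-1 tracks of $\OPT_P$, $B_1$, and $C_1$ this yields (1). For (4), the $\omega_1$ size-1 tracks in $\OPT_P$ contain exactly $\omega_1^e$ expensive links by definition, so they form a feasible matching for Algorithm~B's constrained matching problem and hence $\beta_1 \geq \omega_1$. For (2) and (3), the $\omega_1$ size-1 tracks of $\OPT_P$ together with the $\alpha_1^0$ (resp.\ $\beta_1^0$) size-1 tracks of $A_1$ (resp.\ $B_1$) that are vertex-disjoint from every size-1 track of $\OPT_P$ form a matching in $\bar L$ of size $\omega_1 + \alpha_1^0$ (resp.\ $\omega_1 + \beta_1^0$), again bounded above by $\alpha_1$. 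Identities (9) and (10) are immediate from the definitions of $\alpha_1^j$ and $\beta_1^j$ as a partition of $A_1$ and $B_1$ by their overlap count with $\OPT_P$'s size-1 tracks.

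\textbf{Expensive-leaf inequalities (5)--(7).} The key structural observation is that an expensive link $uu'$, whose endpoints are both endvertices of paths, cannot appear inside $Q(T)$ for any track $T$ with $|Q(T)|\geq 2$: interior vertices of $Q(T)$ must be interior vertices of paths, so $u$ and $u'$ would have to be the two endpoints of $Q(T)$, forcing $|Q(T)|=1$. Hence whenever $uu' \in S_A$ (resp.\ $S_B$, $S_C$), it constitutes an entire size-1 track by itself; and since distinct expensive leaves $v$ correspond to distinct paths $P$ and thus distinct such tracks, we obtain an injection yielding (5)--(7).

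\textbf{Set-packing approximation bounds (11)--(13).} These follow by applying the $\frac{3}{5+\eps}$-approximation of F\"urer--Yu~\cite{furer2014approximating} for $4$-set packing to a feasible sub-collection of $\OPT_P$. For (11), identity (9) implies that out of the $2\alpha_1$ endvertices covered by $A_1$, only $2\alpha_1^0 + \alpha_1^1$ are not absorbed by $\OPT_P$'s size-1 tracks; since $\OPT_P$'s size-3 tracks are vertex-disjoint from its size-1 tracks, at most $2\alpha_1^0 + \alpha_1^1$ of them can share a vertex with $A_1$, so at least $\omega_2 - 2\alpha_1^0 - \alpha_1^1$ of $\OPT_P$'s size-3 tracks survive entirely inside $R_A$ and form a feasible solution to Algorithm~A's set-packing instance. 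Inequality (12) is identical after replacing $A_1, A_2$ by $B_1, B_2$. For (13), all $\omega_1 + \omega_2$ tracks of $\OPT_P$ with $|Q(T)|\leq 2$ are feasible for Algorithm~C's set-packing instance, and the approximation ratio applies directly.

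\textbf{The main obstacle: inequality (8), $\omega_0 \geq \beta_0^e$.} Here the plan is to construct an injection from the expensive leaves of $H_B$ into the endvertices uncovered by $\OPT_P$, via an exchange argument. For each expensive leaf $v$ in $H_B$ with associated expensive link $uu' \in B_1$, the defining condition $N_G(v) \subseteq V(P)\cup V(P')\setminus\{v'\}$ restricts the possible tracks in $\OPT_P$ covering $v$ to three cases: the size-1 track $\{vu'\}$, or a longer track whose first link at $v$ is $vw$ (forcing $uv \in I(T) \subseteq L$) or $vw'$ (forcing $u'v' \in I(T) \subseteq L$). I would choose $\OPT_P$ additionally to minimize the number of covered expensive leaves of $H_B$ as a tie-breaker, and argue in each case that the covering track $T$ can be locally replaced---using shadow-completeness to supply substitute links, and exploiting that $uu'$ itself is available as a size-1 track---so that $v$ becomes uncovered without decreasing $|\OPT_P|$, contradicting the tie-breaking choice. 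Because the paths $P$ associated with distinct expensive leaves are distinct, the freed endvertices are distinct and the injection is well-defined; the delicate step is the case analysis verifying that the swap always preserves vertex-disjointness and total cardinality across the iterated exchange.
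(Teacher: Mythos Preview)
Your arguments for (1)--(4), (9)--(13) are correct and essentially match the paper's. The problems lie in (5)--(7) and (8).

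\textbf{Inequalities (5)--(7).} Your injection ``expensive leaf $v \mapsto$ size-1 track $uu'$'' is not injective as stated. The link $uu'$ connects an endvertex $u$ of $P$ to an endvertex $u'$ of $P'$; by symmetry it can simultaneously make $v$ (the other endvertex of $P$) \emph{and} $v'$ (the other endvertex of $P'$) into expensive leaves, and then both map to the same track. The paper closes this gap using the structured-graph hypothesis: if both $v$ and $v'$ are expensive, then $N_G(v) \subseteq (V(P)\cup V(P'))\setminus\{v'\}$ and $N_G(v') \subseteq (V(P)\cup V(P'))\setminus\{v\}$, so any feasible PAP solution must contain two links inside $G[V(P)\cup V(P')]$, and one checks that $(V(P)\cup V(P'),\,E(P)\cup E(P')+e_v+e_{v'}+uu')$ is a 2EC subgraph with three links---a strongly $\tfrac32$-contractible subgraph, contradicting (P1). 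This structural step is essential and is missing from your sketch.

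\textbf{Inequality (8).} Your exchange argument never invokes the defining feature of Algorithm~B---that $B_1$ is a maximum matching subject to containing at most $\omega_1^e$ expensive links. An argument that works uniformly for any matching would equally prove $\omega_0 \geq \alpha_0^e$, which the paper does not claim; this is a red flag. Concretely, your proposed swap (remove the track $T$ covering $v$, insert the size-1 track $uu'$) can fail when $u$ is already covered by some other track of $\OPT_P$, and the tie-breaking choice does not rescue this. The paper's route is both simpler and crucially uses Algorithm~B: first, by the injectivity just established for (5)--(7), $\beta_0^e \leq (\text{number of expensive links in }B_1)\leq \omega_1^e$; second, for each of the $\omega_1^e$ expensive size-1 tracks $uu'$ in $\OPT_P$, the associated vertex $v$ cannot lie in any track of $\OPT_P$ (such a track would have to meet $u$ or $u'$ by the neighbourhood constraint on $v$, violating disjointness from $uu'$), so $v$ contributes to $\omega_0$, giving $\omega_0 \geq \omega_1^e$. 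Chaining the two yields $\omega_0 \geq \beta_0^e$.
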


\begin{proof}
We prove the inequalities one by one.
Recall that $A_1, B_1, C_1, O_1 \subseteq \bar{L}$ are simply matchings.
Since in Algorithm A we compute a maximum matching of $\bar{L}$, the first set of inequalities follows.
The second inequality follows since $\alpha_1^0$ denotes the number of matching edges in $A_1$ that are disjoint from matching edges in $O_1$.
Similarly, we obtain the third inequality as those matching edges contributing to $\beta_1^0$ and $\omega_1$ together form a matching, by the definition of $\beta_1^0$.
For the fourth inequality, observe that $O_1$ is a matching with $\omega_1^e$ many expensive edges. Since in Algorithm B we compute a maximum matching with at most $\omega_1^e$ many expensive edges, Inequality 4 has to be satisfied.

We next prove Inequality 5; Inequalities 6 and 7 follow similarly.
Recall the definition of $\alpha_0^e$: This is the number of expensive lonely leaf vertices in $H_A$. 
We further recall the definition of an expensive link and expensive lonely vertex:
Consider two paths $P, P' \in \mathcal{P}$ with endvertices $u, v$ and $u', v'$, respectively, and assume $u u' \in E(H_A)$.
If $N_G(v) \subseteq (V(P) \cup V(P')) \setminus \{v'\}$, we call the link $u u'$ an expensive link.
If $v$ is also a leaf in $H_A$, we say that $v$ an expensive leaf vertex.
Hence, if there is an expensive lonely leaf vertex in $H_A$, then there must be an expensive link $uu'$ in $H_A$, which corresponds to some link contributing to $\alpha_1$.
Hence, $uu'$ can create at most two expensive leaf vertices, i.e., the vertices $v$ and $v'$.
This already implies $\alpha_0^e \leq 2 \alpha_1$. 
We now strengthen this inequality.
Assume that $v$ and $v'$ are both expensive leaf vertices in $H_A$.
By the definition of a lonely leaf vertex, we have $N_G(v) \subseteq (V(P) \cup V(P')) \setminus \{v'\}$ and $N_G(v') \subseteq (V(P) \cup V(P')) \setminus \{v\}$.
Hence, any feasible solution to the \PAP instance $G$ must include at least two links from within $G[V(P) \cup V(P')]$: one link incident to $v$ and one incident to $v'$. Since $v v' \notin L$, by the definition of an expensive link, these two links must be distinct.  
Furthermore, note that since $G$ is structured, there must be a link $e_v$ from $v$ to $V(P')$ in $L$ and a link $e_{v'}$ from $v'$ to $V(P)$ in $L$. 
Now observe that $X = (V(P) \cup V(P'), E(P) \cup E(P') \cup e_v \cup e_{v'} \cup u u')$ is 2-edge connected, and hence $X$ is a strongly $\frac{3}{2}$-contractible subgraph.
This contradicts our assumption on $G$ being structured.

Inequality 8 follows from the definition of Algorithm B. 
The number of expensive links in $\OPT_P$ is exactly the number of expensive links in $S_B$. 
Since each expensive link can contribute to at most one expensive lonely leaf vertex (see previous argumentation for Inequality 5), we have that the number of expensive leaf vertices in $S_B$ is equal to the number of expensive links in $\OPT_P$. 
It remains to show that for each expensive link in $\OPT_P$ there must be one unique lonely leaf vertex contributing to $\omega_0$.
Consider two paths $P, P' \in \mathcal{P}$ with endvertices $u, v$ and $u', v'$, respectively, and assume $u u'$ is an expensive link such that $N_G(v) \subseteq (V(P) \cup V(P')) \setminus \{v'\}$.
Assume $u u'$ is a track contained in the optimum solution. 
Assume for a contradiction that $v$ is part of some other track $T$ in $\OPT_P$.
But then $T$ must be incident to either $u$ or $u'$, since $N_G(v) \subseteq (V(P) \cup V(P')) \setminus \{v'\}$. But then $T$ is not vertex-disjoint to the track $u u'$.

The Inequalities 9 and 10 are simply the definitions of the corresponding variables.

We next prove Inequality 11; Inequality 12 follows similarly.
In Algorithm A, in order to compute $A_2$, we compute a $\frac{3}{5 + \eps}$-approximate solution to the Set-Packing instance defined by removing all elements corresponding to vertices incident to $A_1$.
Consider the tracks of $\OPT_P$ which are in $O_2$. 
A track from $A_1$ can be incident to at most 2 distinct tracks of $O_2$, since these are simply links.
However, by the definition of $\alpha_1^j$, a track from $A_1$ contributing to $\alpha_1^j$ can be incident to at most $2-j$ distinct tracks of $O_2$. 
Hence, we obtain the desired inequality.

Finally, consider Inequality 13. This inequality simply follows from the fact that we compute a $\frac{3}{5 + \eps}$-approximate solution to the Set-Packing instance.
\end{proof}

Recall that $H= (V, E(\mathcal{P}) \cup S)$, where $S$ is the best solution output by Algorithms A, B, and C w.r.t.\ their cost.
Note that the following lemma proves that $H$ satisfies \Cref{invariant:credit}.

\begin{restatable}[]{lemma}{lemstartingsolutionboundfinal}
    \label{lem:starting-solution:bound:final}
    For $\varepsilon \leq 0.001$ we have $\cost(H) \leq \apxrn \cdot \OPT(G)$.
\end{restatable}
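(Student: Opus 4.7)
The plan is to set up a factor-revealing linear program encoding all of the structural inequalities from Lemma~\ref{lem:starting-solution:bound:ALGA-B-C-parameters} together with the three cost bounds from Lemma~\ref{lem:starting-solution:bound:ALGA-B-C} and the lower bound $\opt \geq \opt_C \geq 2|\mcp| - \sum_{i\geq 1}\omega_i$ from Lemma~\ref{lem:starting-solution:bound:opt}, and then bound its optimum value by $\apxrn$ via LP duality. The key observation is that since $H$ is chosen as the best among the candidates produced by Algorithms~A, B (over all $|\mcp|+1$ guesses of $\omega_1^e$), and C, for any convex combination $\lambda_A + \lambda_B + \lambda_C = 1$ with $\lambda_X \geq 0$ we have
\[
\cost(H) \;\leq\; \lambda_A\cdot\cost(H_A) + \lambda_B\cdot\cost(H_B) + \lambda_C\cdot\cost(H_C).
\]
Hence it suffices to produce a non-negative linear combination of the cost bounds and Inequalities~1--13 of Lemma~\ref{lem:starting-solution:bound:ALGA-B-C-parameters} (plus the identity $\omega_0 + \sum_{i\geq 1}2i\omega_i = 2|\mcp|$) that dominates $\apxrn\cdot\bigl(2|\mcp| - \sum_{i\geq 1}\omega_i\bigr)$.

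Concretely, I would normalize by $|\mcp|$ so that all quantities become fractional, treat $\omega_0,\omega_1,\omega_2,\omega_1^e$, the three triples $(\alpha_i),(\beta_i),(\gamma_i)$, the refined counts $\alpha_1^0,\alpha_1^1,\alpha_1^2,\beta_1^0,\beta_1^1,\beta_1^2$, the expensive-leaf counts $\alpha_0^e,\beta_0^e,\gamma_0^e$, and the degenerate-path fraction $\varepsilon'$ as non-negative decision variables, and maximize $\min\{\cost(H_A),\cost(H_B),\cost(H_C)\}/\opt$ subject to all stated constraints. Higher-index variables $\omega_i$ for $i\geq 3$ do not appear in any cost bound and only tighten the lower bound on $\opt$, so without loss of generality they may be set to zero (or, equivalently, their mass can be shifted into $\omega_2$). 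The parameter $\varepsilon \leq 0.001$ in the set-packing subroutines enters only through the factor $3/(5+\varepsilon)$ in Inequalities~11--13; the $\apxrn$ bound is stated with a small slack so that taking $\varepsilon$ sufficiently small preserves the guarantee.

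The main obstacle is the numerical design of the dual certificate: one must distribute weight among the three algorithms so that the gains cancel correctly. Algorithms~A and~C control the cost against adversaries that pack many short ($|Q(T)|=1$) or medium ($|Q(T)|=2$) tracks, while Algorithm~B's guessing of $\omega_1^e$ prevents the adversary from simultaneously leaving $\omega_1$--tracks of the optimum uncovered \emph{and} creating many expensive lonely leaf vertices (whose $\frac{1}{4}\alpha_0^e$ penalty enters $\cost(H_A)$). Inequality~8, $\omega_0\geq\beta_0^e$, links the expensive-leaf penalty in Algorithm~B to the slack in the $\opt_C$ lower bound and is crucial for closing the gap. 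I would solve the resulting LP numerically, read off the optimal primal--dual pair, and then present the dual multipliers as an explicit certificate; verification is then a routine substitution showing that the weighted sum of the constraints upper bounds $\apxrn\cdot(2-\sum_{i\geq 1}\omega_i) - \cost(H)$ by a non-negative expression, which combined with Invariants~\ref{invariant:degree-lonely-vertex}--\ref{invariant:block-size} from Lemma~\ref{lem:starting-solution:bound:credits} completes the proof of Lemma~\ref{lem:main:starting-solution-satisfies-invariants}.
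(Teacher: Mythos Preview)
Your proposal is essentially the same approach as the paper's: set up a factor-revealing LP with the cost bounds from Lemma~\ref{lem:starting-solution:bound:ALGA-B-C}, the structural inequalities from Lemma~\ref{lem:starting-solution:bound:ALGA-B-C-parameters}, the $\opt$ lower bound from Lemma~\ref{lem:starting-solution:bound:opt}, and the degree-counting identity $\omega_0 + \sum_{i\geq 1}2i\,\omega_i = 2|\mcp|$, normalize by $|\mcp|$, collapse $\omega_i$ for $i\geq 3$, and verify numerically that the worst ratio is at most $\apxrn$ for $\varepsilon,\varepsilon'$ small enough. The paper simply feeds the resulting polyhedron (with the extra constraint $\cost(H)\geq\rho\cdot\opt_C$) to a solver and reports infeasibility at $\rho=\apxrn$, whereas you additionally propose to extract and display explicit dual multipliers; that is a presentational refinement, not a different method.
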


\begin{proof}
    
    Let $\varepsilon' |\mcP| $ be the number of degenerate paths. Note that $\OPT(G) \geq |\mcP|$, and therefore $\varepsilon' |\mcP| \leq \varepsilon' \OPT(G)$.
    We prove the statement by giving a factor-revealing LP, which proves $\cost(H) \leq \rho \cdot \OPT(G)$ if the polyhedron is empty for a specific value of $\rho$. The parameter $\rho$ will then give the desired approximation ratio.
    The factor-revealing LP is as follows.
    We use the same variables as before, but for simplicity we divide all constraints by $|\mcp|$.
    For ease of exposition, we do not replace each variable with a new variable, which is simply the old variable divided by $|\mcp|$. Instead, we simply use the same variable names.
    
    We use the constraints we obtained by the preceding discussion and lemmas.
    That is, we use the constraints obtained from \Cref{lem:starting-solution:bound:opt}, \Cref{lem:starting-solution:bound:credits}, \Cref{lem:starting-solution:bound:ALGA-B-C}, \Cref{lem:starting-solution:bound:ALGA-B-C-parameters}, and constraints that we naturally obtained from the definitions of the variables.
    Additionally, we simplify the constraints on the optimum solution value as follows. 
    From \Cref{lem:starting-solution:bound:opt} we get $\opt_C \geq 2|\mcp| - \sum_{i \geq 1} \omega_i$. 
    Furthermore, by definition, we have $\omega_0 + \sum_{i=1}^{n} 2 i \cdot \omega_i = 2$ (recall that we divide all numbers by $|\mcP|$).
    Since in all the constraints we used there is no other bound on $\gamma_3$, both inequalities can be simplified to $\opt_C \geq 2 - \omega_1 - \omega_2 - \omega_3$ and $\omega_0 + 2 \omega_1 + 4 \omega_2 + 6 \omega_3 \leq 2$.
    We then get the following polyhedron.
    \begin{align*}
		   \opt_C & \geq 2 - \omega_1 - \omega_2 - \omega_3 \\
         \cost(H) & \geq \rho \cdot \opt_C\\
         \cost(H) & \leq  3 - \frac 54 \alpha_1 - \alpha_2  + \frac 14 \alpha_0^e + \frac{1}{2} \varepsilon' \\
         \cost(H) & \leq  3 - \frac 54 \beta_1 - \beta_2  + \frac 14 \beta_0^e + \frac{1}{2} \varepsilon'\\
         \cost(H) & \leq  3 - \frac 54 \gamma_1 - \gamma_2  + \frac 14 \gamma_0^e + \frac{1}{2} \varepsilon'\\
         \omega_0 + 2 \omega_1 + 4 \omega_2 + 6 \omega_3 & \leq 2\\
         \alpha_1 & \geq \omega_1, \beta_1, \gamma_1 \\ 
         \alpha_1 & \geq \omega_1 + \alpha_1^0 \\ 
         \alpha_1 & \geq \omega_1 + \beta_1^0 \\
         \beta_1 & \geq \omega_1 \\
         \alpha_1 & \geq \alpha_0^e \\
         \beta_1 & \geq \beta_0^e \\
         \gamma_1 & \geq \gamma_0^e \\
         \omega_0 & \geq \beta_0^e \\
         \alpha_1 & = \alpha_1^0 + \alpha_1^1 + \alpha_1^2 \\
         \beta_1 & = \beta_1^0 + \beta_1^1 + \beta_1^2 \\
         \alpha_2 & \geq \frac{3}{5 + \eps} \cdot (\omega_2 - 2 \alpha_1^0 -\alpha_1^1) \\ 
         \beta_2 & \geq \frac{3}{5 + \eps} \cdot (\omega_2 - 2 \beta_1^0 -\beta_1^1) \\
         \gamma_1  + \gamma_2 & \geq \frac{3}{5+\varepsilon}\cdot(\omega_1 + \omega_2) \\
         \omega_0, \omega_1, \omega_2, \omega_3 & \geq 0 \\ 
         \alpha_0^e, \alpha_1, \alpha_1^0, \alpha_1^1, \alpha_1^2, \alpha_2 & \geq 0 \\
         \beta_0^e, \beta_1, \beta_1^0, \beta_1^1, \beta_1^2, \beta_2 & \geq 0 \\ 
         \gamma_0^e, \gamma_1, \gamma_1^0, \gamma_1^1, \gamma_1^2, \gamma_2 & \geq 0 \ .
	\end{align*}
    By plugging this into a standard solver and setting $\rho \geq \apxrn$, $\eps \leq 0.001$ and $\varepsilon' \leq 0.0001$, we obtain that the polyhedron is empty and hence we obtain the result.
\end{proof}

\section{Bridge Covering}
\label{sec:bridge-covering}
In the previous section we showed that we can compute an initial solution $S$ such that $H=(V, E(\mathcal{P}) \cup S)$ satisfies Invariants~\ref{invariant:credit}-\ref{invariant:block-size}.
In this section, we show that we can transform $H$ into a solution that does not contain bridges and maintains Invariants~\ref{invariant:credit}-\ref{invariant:block-size}.
In particular, in this section we show the following lemma. 
\lembridgecoveringmain*

An outline of this section was given in \Cref{sec:overview:bridge-covering}. 
For convenience we restate the definitions and lemmas.
Throughout this section we assume that $H$ contains some complex component $C$, as otherwise there is nothing to prove.
We prove \Cref{lem:bridge-covering:main} by providing the following lemma below. 
It is straightforward to see that exhaustively applying this lemma proves \Cref{lem:bridge-covering:main}.

\begin{restatable}[]{lemma}{lembridgecoveringmainiterative}
    \label{lem:bridge-covering:main-iterative}
    Let $G$ be some structured instance of \PAP and $H=(V, E(\mcP) \cup S)$ be a subgraph of $G$ satisfying the Invariants~\ref{invariant:credit}-\ref{invariant:block-size} and containing at least one bridge.
    In polynomial time we can compute a subgraph $H'=(V, E(\mcP) \cup S')$ of $G$ satisfying Invariants~\ref{invariant:credit}-\ref{invariant:block-size} such that $H'$ contains fewer connected components or the number of connected components in $H'$ is at most the number of connected components of $H$ and $H'$ contains fewer bridges than $H$.
\end{restatable}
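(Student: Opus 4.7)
The plan is to exhibit, in polynomial time, a single local modification of $H$ that either decreases the number of connected components, or keeps the component count equal while strictly decreasing the number of bridges, and preserves Invariants~\ref{invariant:credit}--\ref{invariant:block-size}. The workhorse is the pseudo-ear. Pick any complex component $C$ of $H$; its block-cut tree $T_C$ (the part of $H^C$ inside $C$) contains at least one bridge. Two vertices $u,v \in V(T_C)$ admit a pseudo-ear in $G^C$ iff they lie in a common $2$-edge-connected component of $G^C$, and since $G$ is $2$-edge-connected such pairs always exist. A pseudo-ear $Q^{uv}$ can therefore be produced in polynomial time by running an augmenting-path-style subroutine on $G^C$ after deleting the edges of $T_C$.

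Given a pseudo-ear $Q^{uv}$, the candidate update is $H' := (H \cup Q^{uv}) \setminus F$ for an (often empty) $F \subseteq S$. After the update, every bridge of the witness path $W^{uv}$ becomes part of the same new $2$EC block, every lonely vertex of $W^{uv}$ is absorbed into a block, every simple block of $W^{uv}$ is merged into this new block, and if $W^{uv}$ covers all of $T_C$ then $C$ ceases to be complex. These four events release credit from rules (A)--(D) respectively, which together with the $|F|$ saved links must pay for the $|Q^{uv} \cap L|$ new links, for (possibly) one unit of rule (D) credit for a new non-simple block, and for any change in rule (E) credit. The resulting bookkeeping is a finite case analysis indexed by $|Q^{uv}|$ and the ordered sequence of bridges, lonely vertices, simple blocks, and non-simple blocks appearing on $W^{uv}$. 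Pseudo-ears with $|Q^{uv}| \geq 3$, and shorter pseudo-ears whose witness path carries at least two ``heavy'' items (block vertices or lonely leaves), pay for themselves immediately, reducing the problem to a small number of tight configurations.

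The main obstacle, and the place where structuredness is essential, is the corner case in which the only available pseudo-ear has length $1$ and its witness path carries minimal credit, e.g.\ a single bridge with one block-vertex endpoint and one lonely-leaf endpoint. Here the bonus $\tfrac{1}{4}$-credits (A)(i) for expensive leaves and (A)(ii) for proximity to simple blocks are precisely what closes the budget. Existence of a good length-$1$ pseudo-ear, possibly paired with a swap $F \subseteq S$ of a single link, is established by contradiction: the absence of such a pseudo-ear forces the endpoints of the offending bridge to be ``almost separated'' in $G$, and combined with Invariants~\ref{invariant:degree-lonely-vertex} and \ref{invariant:block-size} on $H$ this pushes $G$ into one of the forbidden configurations --- a path-separator, a $P^2$-separator, a $C^2$-separator, or a strongly $(\tfrac{7}{4},t,1)$-contractible subgraph with $t \leq 10$ --- violating properties (P1), (P3)--(P5). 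Finally, re-establishing Invariants~\ref{invariant:degree-lonely-vertex} and \ref{invariant:block-size} for $H'$ is routine: by preferring pseudo-ears that merge would-be-adjacent simple blocks, any new simple block has degree exactly $2$ in $H'^{C'}$ and no two simple block vertices are adjacent, and any new non-simple block contains at least two paths, inherited either from the blocks it absorbs or from the paths crossed by $Q^{uv}$ itself.
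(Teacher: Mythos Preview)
Your overall strategy matches the paper's: find a pseudo-ear, account for credits against the witness path, and use structuredness by contradiction when no cheap pseudo-ear exists. However, the proposal contains a technical error and leaves the real work undone.

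The claim that ``pseudo-ears with $|Q^{uv}| \geq 3 \ldots$ pay for themselves immediately'' is false. Adding a pseudo-ear with $k$ links merges $k$ components into one, freeing $k-1$ component credits (using the lower bound of $1$ per component); after paying $k$ for the new links and $1$ for the new non-simple block, the deficit is exactly $2$, \emph{independent of $k$}. These two extra credits must always come from the witness path: from bridges of $S$ (worth $\tfrac34$ each), lonely leaves (worth $\geq 1$), or non-simple block vertices (worth $1$). The paper's Lemma~\ref{lem:bridge-covering:adding-pseudo-ears} lists precisely which combinations suffice, and the length of $Q^{uv}$ is not one of them.

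Your characterization of when a pseudo-ear between $u,v$ exists --- ``iff they lie in a common $2$-edge-connected component of $G^C$'' --- is also incorrect. A pseudo-ear exists iff $u$ and $v$ remain connected in $G^C$ after deleting all other vertices of $C$; this is what the paper checks (Lemma~\ref{lem:bridge-coverin:pseudo-ear:find-polytime}), and it is not a $2$-edge-connectivity condition.

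Most importantly, the actual proof is the ``small number of tight configurations'' that you defer. The paper organizes these into an ordered sequence of five lemmas (simple blocks present; a non-expensive lonely leaf present; a non-simple block present; a non-trivial complex component with no blocks; only trivial complex components), each of which is itself a multi-page case analysis. The ordering matters: later cases rely on earlier ones being exhausted, e.g.\ once simple blocks are gone no leaf receives the A(ii) bonus, and once non-expensive leaves are gone every remaining leaf is expensive and carries the A(i) bonus. Your sketch does not indicate this structure, does not handle the preliminary case of a link between lonely leaves of distinct complex components (the only step that can decrease the component count), and does not work through any of the tight configurations where one must exhibit a specific $P^2$-separator or $C^2$-separator to reach a contradiction. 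Without that, the proposal is an outline of the paper's approach rather than a proof.
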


\begin{algorithm}[t]
\DontPrintSemicolon
\caption{Bridge-Covering}
\label{alg:bridge-covering}
Compute a starting solution $S$ according to \Cref{lem:main:starting-solution-satisfies-invariants} and initialize $H = (V, E(\mathcal{P}) \cup S)$.\\
\While{$H$ contains at least one bridge}{
    \If{$H$ contains two lonely leaf vertices $u$ and $v$ of different complex components and $uv \in L$}{
        \Return $H \cup uv$. \label{alg:bridge-covering:connecting-leaves}
    }
    \If{$H$ contains a complex component with a simple block vertex}{
        compute $R$ and $F$ according to \Cref{lem:bridge-covering:simple} and \Return $H \cup R \setminus F$. \label{alg:bridge-covering:simple}
    }
    \ElseIf{$H$ contains a non-trivial complex component with a non-expensive lonely leaf vertex}{
        compute $R$ and $F$ according to \Cref{lem:bridge-covering:non-expensive} and \Return $H \cup R \setminus F$. \label{alg:bridge-covering:non-expensive}
    }
    \ElseIf{$H$ contains a complex component with a non-simple block vertex}{
        compute $R$ and $F$ according to \Cref{lem:bridge-covering:large-block} and \Return $H \cup R$.\label{alg:bridge-covering:block}
    }
    \ElseIf{$H$ contains a non-trivial complex component containing at least one bridge that is in $L$}{
        compute $R$ according to \Cref{lem:bridge-covering:no-large-component} and \Return $H \cup R$. \label{alg:bridge-covering:no-large}
    }
    \Else{
        compute $R$ according to \Cref{lem:bridge-covering:no-large-component:only-trivial} and \Return $H \cup R$. \label{alg:bridge-covering:only-trivial}
    }
}
\Return $H$
\end{algorithm}

It remains to prove \Cref{lem:bridge-covering:main-iterative}.
Our algorithm is summarized in \Cref{alg:bridge-covering}.
Before we proceed with the main contribution of this section, in which we cover the bridges, we need one other lemma: if there is a link $e \in L$ connecting two lonely leaf vertices of different complex components, then we can add $e$ to $H$ while maintaining the Invariants~\ref{invariant:credit}-\ref{invariant:block-size} and $H'$ has fewer connected components than~$H$.

\begin{restatable}[]{lemma}{lembridgecoveringlinksbetweenleafs}
    \label{lem:bridge-covering:links-between-leafs}
    Let $u, v$ be lonely leaf vertices of different complex components. 
    If $uv \in L$, then $H' = H \cup uv$ satisfies the Invariants~\ref{invariant:credit}-\ref{invariant:block-size} and $H'$ has fewer connected components than $H$.
\end{restatable}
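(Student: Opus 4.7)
The plan is straightforward credit bookkeeping: verify Invariants~\ref{invariant:degree-lonely-vertex} and~\ref{invariant:block-size} together with the component-count claim directly from the structure of $H' = H \cup \{uv\}$, then carefully track credit changes rule by rule to conclude Invariant~\ref{invariant:credit}.

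For the structural part, since $u,v$ lie in different complex components $C_1, C_2$ of $H$, the graph $H'$ merges these into a single component $C'$, giving one fewer connected component. Because $u$ is lonely and a leaf of $H^{C_1}$, no two of its $H$-edges can be identified under the contractions defining $H^{C_1}$, so $u$ has degree $1$ in $H$; similarly for $v$. Adding $uv$ makes both degrees exactly $2$, preserving the first clause of Invariant~\ref{invariant:degree-lonely-vertex}. Since $C_1$ and $C_2$ were disjoint in $H$, the new link $uv$ creates no cycle and is therefore a bridge in $H'$; in particular, no new 2EC block or 2EC component is formed, so the block structure (and hence every simple block's degree in the contracted graph) is inherited unchanged. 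This settles Invariant~\ref{invariant:block-size} and the remaining clauses of Invariant~\ref{invariant:degree-lonely-vertex}.

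For Invariant~\ref{invariant:credit} I compute $\cost(H') - \cost(H) = 1 + (\credits(H') - \credits(H))$ rule by rule. Rules (D) and (E) are unaffected since the block and 2EC-component structures are unchanged. Rule (C) contributes $-1$ because the merged component $C'$ is still complex (every bridge of $C_1, C_2$ remains a bridge, plus $uv$ itself), so two credits become one. Rule (B) contributes $+\tfrac{3}{4}$ for the new bridge $uv$. Rule (A) contributes at least $-1$ for each of $u$ and $v$, since both were lonely leaves in $H$ with base credit $1$ but are interior vertices of the tree of $C'$ in $H'$. Summed, this ``base'' credit change is at most $-\tfrac{9}{4}$.

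The main delicacy is bounding the possible gains in the (A)(i) and (A)(ii) bonuses picked up by \emph{other} lonely leaves after the merge. For (A)(ii): a lonely leaf $w \neq u$ in $C_1$ can newly qualify only through a clean path $w \to \cdots \to u \to v \to \cdots \to B$ in $H'^{C'}$ ending at a simple block vertex $B$ of $C_2$, which forces the $H^{C_1}$-path from $w$ to $u$ to avoid non-simple block vertices. By Invariant~\ref{invariant:degree-lonely-vertex}, every lonely vertex and every simple block vertex of $C_1$ has degree at most $2$ in $H^{C_1}$, so the subtree of $H^{C_1}$ reachable from $u$ while avoiding non-simple block vertices is a path emanating from $u$; at most one other leaf of $H^{C_1}$ lies on it, and if any simple block vertex lies on it then that leaf already had (A)(ii) in $H$. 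Hence at most one vertex on each of the two sides can newly earn (A)(ii), contributing at most $\tfrac{1}{2}$ in total. For (A)(i): since adding $uv$ destroys no existing bridge, the only way a vertex $w \neq u,v$ can newly become an expensive leaf is if $uv$ itself is its associated expensive link, which again pins down at most one candidate on each side and contributes at most $\tfrac{1}{2}$. Combining these, the total credit change is at most $-\tfrac{9}{4} + 1 = -\tfrac{5}{4}$, so $\cost(H') \leq \cost(H) - \tfrac{1}{4}$, preserving Invariant~\ref{invariant:credit}. The main obstacle of the proof is precisely this (A)(i)/(A)(ii) bookkeeping, which crucially relies on the degree bound on lonely and simple block vertices provided by Invariant~\ref{invariant:degree-lonely-vertex}.
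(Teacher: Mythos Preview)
Your proof is correct and follows the same credit-bookkeeping route as the paper. In fact you are more careful than the paper in one respect: the paper's argument simply asserts $\credits(H)-\credits(H')\ge\tfrac{9}{4}$ (losing two leaf credits and one component credit, gaining one bridge credit) without discussing whether \emph{other} lonely leaves might pick up new (A)(i) or (A)(ii) bonuses after the merge. You explicitly bound these potential gains by at most $1$ in total, using the degree constraints from Invariant~\ref{invariant:degree-lonely-vertex}, and arrive at the weaker but still sufficient bound $\credits(H)-\credits(H')\ge\tfrac{5}{4}$, hence $\cost(H')\le\cost(H)-\tfrac{1}{4}$. Either bound suffices for Invariant~\ref{invariant:credit}, so your extra care fills a small gap the paper leaves implicit.
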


\begin{proof}
    Let $H =(V, E(\mcP) \cup S)$ and let $H' =(V, E(\mcP) \cup S')$, where $S' = S \cup uv$.
    Clearly, $H'$ has exactly one component less than $H$. Furthermore, it is clear that \Cref{invariant:block-size} and \Cref{invariant:degree-lonely-vertex} are satisfied in $H'$. It remains to prove \Cref{invariant:credit}.
    Since $u$ and $v$ are lonely leaf vertices of different complex components, say $C_u$ and $C_v$, respectively, $u$ and $v$ each receive a credit of at least 1 and $C_u$ and $C_v$ each receive a credit of 1 in $H$. 
    In $H'$, neither $u$ nor $v$ is a leaf vertex anymore and $C_u$ and $C_v$ form one complex component. 
    However, the newly introduced bridge $uv$ needs a credit of $\frac 34$. 
    Hence, $\credits(H) - \credits(H') \geq \frac 94$. Clearly, we have $|S'| = |S|+1$. Combining both bounds, we have $\cost(H') \leq \cost(H)$. 
    Therefore, since $H$ satisfied \Cref{invariant:credit}, also $H'$ satisfies it, and hence we obtain the result.
\end{proof}

In the remainder we assume that the conditions of \Cref{lem:bridge-covering:links-between-leafs} do not apply.

Intuitively, as long as our current solution $H$ contains complex components, i.e., it contains bridges, our goal is to add certain links to $H$ in order to 'cover' the bridges and, at the same time, maintain the Invariants~\ref{invariant:credit}-\ref{invariant:block-size}.
To do so, it will be convenient to add so-called pseudo-ears to $H$, defined as follows. Similar definitions have been used in~\cite{cheriyan2023improved} and~\cite{garg2023matching}.

\defpseudoear*

An example for a pseudo-ear is given in \Cref{fig:bridge-covering}.
We first show that a pseudo-ear $Q^{uv}$ can be found in polynomial time, if one exists.

\begin{lemma}
    \label{lem:bridge-coverin:pseudo-ear:find-polytime}
    Let $C$ be a complex component of $H$, let $u, v \in V(C)$ be distinct vertices of $C$.
    Then we can find a pseudo-ear $Q^{uv}$ in polynomial time, if one exists.
\end{lemma}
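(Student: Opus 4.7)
The plan is to reduce the existence of a pseudo-ear to a standard reachability question in an auxiliary graph. Recall that a pseudo-ear $Q^{uv}$ is a simple path in $G^C$ from $u$ to $v$ whose internal vertices avoid $V(C)$. Hence the natural idea is: delete the ``forbidden'' vertices and ask for any path between $u$ and $v$ in what remains.

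More concretely, I would construct the auxiliary graph $G' \coloneqq G^C \setminus (V(C) \setminus \{u,v\})$, i.e., remove from $G^C$ every vertex of the complex component $C$ other than $u$ and $v$ (together with all incident edges). Then I would run breadth-first search (equivalently depth-first search) from $u$ in $G'$ and check whether $v$ is reachable. This is clearly doable in polynomial time since $|V(G^C)|, |E(G^C)| \leq |V(G)| + |E(G)|$.

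For correctness, if BFS returns a $u$-$v$ path $Q$ in $G'$, then by construction $Q$ is a path in $G^C$ whose internal vertices lie in $V(G^C) \setminus V(C)$ and whose endpoints are $u$ and $v$; since BFS produces a shortest path it is automatically simple, so $Q$ is a valid pseudo-ear $Q^{uv}$. Conversely, any pseudo-ear $Q^{uv}$ is, by its very definition, a simple path in $G^C$ from $u$ to $v$ avoiding $V(C) \setminus \{u,v\}$; hence all its edges survive in $G'$, so $v$ must be reachable from $u$ in $G'$. Therefore BFS finds a pseudo-ear iff one exists.

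The only mild subtlety is a careful bookkeeping of what ``vertex of $C$'' means in the contracted graph $G^C$: recall that $G^C$ is obtained from $G$ by contracting every connected component of $H$ other than $C$ as well as every block of $C$. Thus $V(C) \subseteq V(G^C)$ consists of the non-block vertices of $C$ together with the block vertices coming from blocks of $C$; all of these except $u$ and $v$ are precisely the vertices we must exclude in $G'$. Since these vertices are identifiable in polynomial time from $H$ (e.g., by computing the block-cut tree of $C$), the entire procedure runs in polynomial time, which proves the lemma. I do not expect any real obstacle here beyond this bookkeeping.
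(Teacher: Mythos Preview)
Your proposal is correct and follows essentially the same approach as the paper: form $G' = G^C \setminus (V(C)\setminus\{u,v\})$ and search for a $u$--$v$ path there. The paper's proof is just a terser version of what you wrote, without the explicit BFS and bookkeeping remarks.
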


\begin{proof}
    Simply compute $G^C$ and delete all vertices of $C$, except for $u$ and $v$.
    Let $G'$ be this resulting graph.
    Now observe that any simple path from $u$ to $v$ in $G'$ corresponds to a pseudo-ear $Q^{uv}$ .
\end{proof}

Pseudo-ears are useful in the following sense.
If, for some pseudo-ear $Q^{uv}$, its witness path $W^{uv}$ contains 'enough' credit, then we can add $Q^{uv}$ to $H$ to obtain the graph $H'$.
Afterwards, $H'$ is a graph with less bridges that also maintains all invariants.
We formalize this in the following two lemmas.

\begin{restatable}[]{lemma}{lembridgecoveringaddingpseudoears}
    \label{lem:bridge-covering:adding-pseudo-ears}
    Let $Q^{uv}$ be a pseudo-ear and let $W^{uv}$ be its witness path in $C$.
    Assume that $W^{uv}$ contains one of the following:
    \begin{itemize}
        \item One vertex that is a non-simple block vertex and another vertex that is either a non-simple block vertex or a lonely leaf vertex.
        \item Two bridges of $S$ and either one vertex that is a lonely leaf vertex or one non-simple block vertex.
        \item Two lonely leaf vertices and at least one bridge of $S$.
        \item Three bridges of $S$.
    \end{itemize}
    Then $H' = H \cup Q^{uv}$ satisfies the Invariants~\ref{invariant:credit}-\ref{invariant:block-size} and has strictly less bridges than $H$.
\end{restatable}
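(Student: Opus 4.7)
The plan is to first observe that $Q^{uv}$ together with $W^{uv}$ forms a cycle in $G^C$, so when we decontract and add the missing links of $Q^{uv}$ to $H$, every edge on this cycle becomes part of a single 2-edge-connected block $B^\star$ of $H'$. In particular, every bridge on $W^{uv}$ (and there is at least one in each of the four cases) ceases to be a bridge, so the number of bridges strictly decreases. This handles the ``fewer bridges'' part of the conclusion.

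The bulk of the work is verifying the three invariants. Invariant~\ref{invariant:block-size} will follow in each case because $B^\star$ contains the vertex sets of all blocks and lonely vertices on $W^{uv}$: in Case~1 at least one non-simple block on $W^{uv}$ already spans two paths (by Invariant~\ref{invariant:block-size} for $H$), and in the remaining cases the bridges on $W^{uv}$ together with the extra lonely leaf vertex/block span at least two paths. Invariant~\ref{invariant:degree-lonely-vertex} is preserved because the only structural change is that vertices and simple blocks on $W^{uv}$ are absorbed into $B^\star$, so existing lonely vertices outside the cycle keep their degree, and $Q^{uv}$ cannot traverse a simple block vertex in a way that violates the invariant (using that $Q^{uv}$ only touches $C$ at $u$ and $v$).

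The main task is Invariant~\ref{invariant:credit}, which reduces to
\[
  |Q^{uv}\cap(L\setminus S)| + \credits(H') \leq \credits(H).
\]
I will bound the left-hand side by a case analysis that compares, on one hand, the credits released along $W^{uv}$ (bridges, non-simple blocks, lonely leaf vertices) and along the interior vertices of $Q^{uv}$ (other complex components it merges, trivial components it absorbs, lonely vertices it passes through), and, on the other hand, the credits $B^\star$ and the possibly remaining complex component need, plus $|Q^{uv}\cap(L\setminus S)|$. The credit released on $W^{uv}$ alone is $\tfrac{9}{4}$ in Case~4, $\tfrac{11}{4}$ in Case~3, $\tfrac{5}{2}$ in Case~2, and at least $2$ in Case~1; after paying the $1$ credit that $B^\star$ may still need (and, if the component remains complex, the $1$ credit for it, which is already paid by $C$'s credit since $C$ is complex), this suffices to pay for one new link when $|Q^{uv}|=1$. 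For longer pseudo-ears, I will charge each additional link of $Q^{uv}$ to the credit released at the interior vertex of $Q^{uv}$ immediately preceding it, using Invariant~\ref{invariant:degree-lonely-vertex} (each lonely vertex on $Q^{uv}$ has degree two in $H$ and therefore contributes its whole credit upon absorption) and the fact that traversing a component vertex distinct from $C$ merges that component into $C$, releasing at least $\tfrac{3}{2}$ credits (the component credit) while consuming at most two new links.

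The hard part will be the bookkeeping for long pseudo-ears that traverse a mixture of trivial components, complex components and lonely vertices: the ledger has to stay nonnegative under all of these, and Case~1 in particular is tight because the witness-path credit is smallest there. I expect the argument to proceed by establishing a per-step potential function on $Q^{uv}$ — each new link is charged either to credit released at its endpoint in $V(G^C)\setminus V(C)$ or to credit that the new block $B^\star$ would have otherwise paid — and then separately handling the endpoints $u,v$ using the case hypothesis on $W^{uv}$.
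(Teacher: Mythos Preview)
Your overall plan—form the cycle $Q^{uv}\cup W^{uv}$, observe that every bridge on $W^{uv}$ is absorbed into one new non-simple block $B^\star$, then verify the three invariants—matches the paper, and your case-by-case tally of the credit released on $W^{uv}$ (at least $2$, $\tfrac{5}{2}$, $\tfrac{11}{4}$, $\tfrac{9}{4}$ in the four cases) is exactly right.

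The gap is in your accounting for the interior of $Q^{uv}$. Two concrete problems. First, you claim each traversed component releases ``at least $\tfrac{3}{2}$ credits (the component credit)''; this is false for complex components (in particular trivial ones), which carry only $1$ credit under rule~(C)—only 2EC components are guaranteed $\tfrac{3}{2}$ under rule~(E). Second, you pair this with ``consuming at most two new links'' per component, which gives a deficit of $\tfrac{1}{2}$ per interior vertex and blows up for long pseudo-ears; your proposed per-step potential cannot close this. Relatedly, your mention of ``lonely vertices it passes through'' along $Q^{uv}$ suggests you are picturing the path in $G$ rather than in $G^C$: by definition every interior vertex of $Q^{uv}$ in $G^C$ \emph{is} a contracted component of $H$ distinct from $C$, never a lonely vertex.

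The fix is simpler than the machinery you set up. If $Q^{uv}$ has $k$ interior vertices in $G^C$, then $|Q^{uv}|=k+1$ and merging these $k$ components into $C$ releases at least $k=|Q^{uv}|-1$ credits (each contributes $\geq 1$ from rule~(C) or~(E); no credit of type (A), (B), (D) increases, since $B^\star$ is non-simple and any path from a surviving lonely leaf into the old $C$ must pass through it). After paying $|Q^{uv}|$ for the new links and $1$ for $B^\star$—the component credit of $C$ itself carries over to the merged component—the residual deficit is exactly $2$, \emph{independent} of $|Q^{uv}|$. Your $W^{uv}$ tally then covers this in each of the four cases, and that is the entire content of the paper's argument for Invariant~\ref{invariant:credit}. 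No per-step potential is needed.
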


\begin{proof}
    Observe that in $H'$, all vertices in $W^{uv}$ are contained in a newly formed 2EC block, together with all vertices of $Q^{uv}$. 
    Hence, the number of bridges in $H'$ is strictly less than the number of bridges in $H$, since $u \neq v$.
    Furthermore, this also implies that all lonely vertices in $H'$ have degree at most 2, implying \Cref{invariant:degree-lonely-vertex}.
    
    Next, we prove that $H'$ satisfies \Cref{invariant:block-size}.
    If $W^{uv}$ contains at least one non-simple block vertex, then \Cref{invariant:block-size} follows since $H$ satisfied \Cref{invariant:block-size}.
    Therefore, assume that $W^{uv}$ does not contain any non-simple block vertices.
    If $W^{uv}$ contains at least two bridges of $S$ and a lonely leaf vertex, then $W^{uv}$ must also contain at least two distinct paths $P, P' \in \mathcal{P}$.
    Similarly, \Cref{invariant:block-size} is satisfied if $W^{uv}$ contains at least three bridges.
    Hence, assume that $W^{uv}$ contains two lonely leaf vertices and one bridge of $S$.
    Then, $W^{uv}$ must also contain at least two distinct paths $P, P' \in \mathcal{P}$, implying \Cref{invariant:block-size}.

    Finally, we prove that $H'$ satisfies \Cref{invariant:credit}.
    It satisfies to show that $\credits(H \cup Q^{uv}) + |Q^{uv}| \leq \credits(H)$.
    First, note that each vertex in $G^C$ corresponds to a component in $H$, which has a credit of at least 1. 
    Therefore, the number of links in $Q^{uv}$ is exactly the number of distinct components visited by $Q^{uv}$.
    Hence, by adding $Q^{uv}$ to $H$ the number of component credits is reduced by the number of links in $Q^{uv}$ minus $1$, i.e., by $|Q^{uv}| -1$.
    Further note that the newly created block $B$ containing the vertices of $W^{uv}$ also receives a credit of one. 
    Therefore, we need to argue that 2 additional credits are reduced by adding $Q^{uv}$ to $H$.
    If $W^{uv}$ contains two block vertices, two lonely leaf vertices or one of each, then by adding $Q^{uv}$ to $H$ the number of credits is additionally reduced by 2 and we are done.
    If $W^{uv}$ contains either one block vertex or one lonely leaf vertex, then $W^{uv}$ must contain 2 bridges.
    Then these credits sum up to at least $\frac{5}{2}$, so that by adding $Q^{uv}$ to $H$ the number of credits is additionally reduced by 2.
    Finally, if $W^{uv}$ contains at least three bridges, then this credit sums up to at least $\frac 9 4$, which is at least 2.
    Therefore, $H'$  satisfies \Cref{invariant:credit}.
\end{proof}

\begin{restatable}[]{lemma}{lembridgecoveringaddingtwopseudoears}
    \label{lem:bridge-covering:adding-pseudo-ears2}
    Let $Q^{uv}$ and $Q^{xy}$ be pseudo-ears that are vertex-disjoint w.r.t.\ $H^C$ (except for potentially $v$ and $y$) and let $W^{uv}$ and $W^{xy}$ be their witness paths in $C$, respectively, and assume that $W^{uv} \cap W^{xy} \neq \emptyset$.
    If $W^{uv} \cup W^{xy}$ contains two distinct non-simple blocks, two distinct lonely leaf vertices, or one non-simple block and one lonely-leaf vertex, and additional contains 2 bridges of $S$, then $H' = H \cup Q^{uv} \cup Q^{xy}$ satisfies the Invariants~\ref{invariant:credit}-\ref{invariant:block-size} and has strictly less bridges than $H$.
    Furthermore, if $W^{uv} \cup W^{xy}$ contains 1 bridge of $S$ and two distinct lonely leaf vertices of which at least one receives a credit of at least $\frac 54$, then $H' = H \cup Q^{uv} \cup Q^{xy}$ satisfies the Invariants~\ref{invariant:credit}-\ref{invariant:block-size} and has strictly less bridges than $H$.
\end{restatable}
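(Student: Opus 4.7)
The plan is to mirror the approach of Lemma~\ref{lem:bridge-covering:adding-pseudo-ears}, upgraded to handle the two overlapping witness paths. Setting $H' = H \cup Q^{uv} \cup Q^{xy}$, I will first show structurally that, because $W^{uv}$ together with $Q^{uv}$ and $W^{xy}$ together with $Q^{xy}$ each form a cycle in $G^C$, and the two witnesses share a vertex by hypothesis, all vertices of $W^{uv} \cup W^{xy}$ together with the internal components of both pseudo-ears end up inside a single new $2$-edge-connected block $B^*$ of $H'$. This immediately kills every bridge of $S$ lying on $W^{uv} \cup W^{xy}$, and since each witness path contains at least one bridge (because $u\neq v$ and $x\neq y$), the number of bridges in $H'$ is strictly smaller than in $H$.

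I would then verify Invariants~\ref{invariant:degree-lonely-vertex} and~\ref{invariant:block-size}. The first is a direct consequence of the fact that all absorbed lonely vertices and simple blocks on the witness cease to be lonely or simple in $H'$, while everything off the witness is unchanged; in particular no new adjacent pair of simple block vertices can appear. For the second, I need $B^*$ to meet at least two distinct paths of $\mathcal{P}$; in every hypothesis subcase this is straightforward, since two absorbed non-simple blocks each already supply two distinct paths by Invariant~\ref{invariant:block-size} for $H$, two lonely leaves necessarily lie on distinct paths (otherwise the shared path would have to form its own trivial component, contradicting its participation in the witness of $C$ together with the intermediate components of the pseudo-ears), and the mixed cases follow by the same logic.

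The crux is Invariant~\ref{invariant:credit}, i.e.\ $\credits(H') + |Q^{uv}| + |Q^{xy}| \le \credits(H)$. Let $a_1 = |Q^{uv}|-1$ and $a_2 = |Q^{xy}|-1$ denote the numbers of internal components of the two pseudo-ears; these are distinct by the vertex-disjointness hypothesis in $H^C$, so $a_1+a_2$ external components of credit $\ge 1$ each are absorbed into $C$, freeing at least $a_1+a_2$ credits that cancel the $a_1+a_2$ links of the pseudo-ears that lie inside those components. The remaining deficit equals the $2$ residual boundary links, plus $1$ for the new block credit of $B^*$, plus any rise in the component credit when $C$ merges into the new component $C'$. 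In the first hypothesis bullet the two non-simple blocks or lonely leaves contribute $\ge 2$ credits on the witness, and the two covered bridges contribute $\tfrac{3}{2}$, giving a surplus of $\tfrac{1}{2}$ beyond what is required, which absorbs the worst-case increase of the component credit. In the second bullet the single bridge only contributes $\tfrac{3}{4}$, but the required slack is supplied by the $+\tfrac{1}{4}$ carried by the leaf of credit $\ge \tfrac{5}{4}$, so the two leaves plus the bridge deliver $\ge 3$ credits on the witness, again enough. The main obstacle I anticipate is exactly this last accounting: one has to keep careful track of when $C'$ stops being complex (so that its component credit rises from $1$ to $\tfrac{3}{2}$ or $2$) and of the $+\tfrac{1}{4}$ bonuses on expensive leaves or simple-block-reachable leaves that may disappear when their simple block is swallowed into $B^*$; the two hypothesis bullets are calibrated precisely so that the witness-path credit is just sufficient in each configuration.
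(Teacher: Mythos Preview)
Your proposal is correct and follows essentially the same approach as the paper's own proof: both argue that the overlapping witnesses force a single new non-simple block $B^*$ (giving fewer bridges and Invariants~\ref{invariant:degree-lonely-vertex}--\ref{invariant:block-size}), and both establish Invariant~\ref{invariant:credit} by the same accounting---$|Q^{uv}|+|Q^{xy}|-2$ credits from absorbed external components, one credit needed for the new block, and the remaining deficit of~$3$ covered by the $\ge 2+\tfrac{3}{2}$ (first bullet) or $1+\tfrac{5}{4}+\tfrac{3}{4}$ (second bullet) witness-path credits. You are slightly more explicit than the paper about the edge cases (the component becoming $2$EC, the $\tfrac{1}{4}$ leaf bonuses), but these only help the inequality, so your extra caution is harmless.
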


\begin{proof}
    The proof is similar to the previous proof.
    First, observe that after adding $Q^{uv} \cup Q^{xy}$, the bridges in $W^{uv} \cup W^{xy}$ will be in one single block in $H \cup Q^{uv} \cup Q^{xy}$, since $Q^{uv} \cap Q^{xy} \neq \emptyset$. 
    This also already implies that the number of bridges in $H'$ has decreased compared to $H$.
    Furthermore, similar to before it is easy to see that $H'$ satisfies \Cref{invariant:block-size} and \Cref{invariant:degree-lonely-vertex}.

    Finally, we prove that $H'$ satisfies \Cref{invariant:credit}.
    Assume first that $W^{uv} \cup W^{xy}$ contains 2 bridges.
    It satisfies to show that $\credits(H \cup Q^{uv} \cup Q^{xy}) + |Q^{uv}| + |Q^{xy}| \leq \credits(H)$.
    Recall that $Q^{xy}$ and $Q^{xy}$ are vertex disjoint w.r.t.\ $G^C$.
    First, note that each vertex in $G^C$ corresponds to a component in $H$, which has a credit of at least 1. 
    Therefore, the number of links in $Q^{uv} \cup Q^{xy}$ is exactly the number of distinct components visited by $Q^{uv}$ minus 1.
    Hence, by adding $Q^{uv} \cup Q^{xy}$ to $H$ the number of component credits is reduced by the number of links in $Q^{uv} \cup Q^{xy}$ minus $2$, i.e., by $|Q^{uv}| + |Q^{xy}| -2$.
    Further note that the newly created block $B$ containing the vertices of $W^{uv} \cup W^{xy}$ also receives a credit of one. 
    Therefore, we need to argue that 3 additional credits are reduced by adding $Q^{uv} \cup Q^{xy}$ to $H$.
    Since $W^{uv} \cup W^{xy}$ contains two bridges in $S$ and two block vertices, two lonely leaf vertices or one of each, then by adding $Q^{uv} \cup Q^{xy}$ to $H$ the number of credits is additionally reduced by 3. 
    Therefore, $H'$  satisfies \Cref{invariant:credit}.

    Finally, if $W^{uv} \cup W^{xy}$ contains only 1 bridge, then one of the lonely leaf vertices receives a credit of at least $\frac 54$. Similar to the previous case it can easily be checked that $H'$ satisfies \Cref{invariant:credit}.
\end{proof}

Using pseudo-ears, we now show that in polynomial time we can find link sets $R \subseteq L \setminus S$ and $F \subseteq S$, such that $(H \cup R) \setminus F$ has fewer bridges than $H$ and $(H \cup R) \setminus F$ also satisfies all the invariants.
This then proves \Cref{lem:bridge-covering:main-iterative}.
We split this proof into several lemmas. 
First, we prove that whenever $H$ contains a simple block, then we can find in polynomial time an edge set $R$, such that $H \cup R$ has fewer bridges than $H$ and $H \cup R$ also satisfies all the invariants.

\begin{restatable}[]{lemma}{lembridgecoveringsimple}
    \label{lem:bridge-covering:simple}
    Assume that $H^C$ contains a simple block vertex $b$.
    Then there are two sets of links $R \subseteq L \setminus S$ and $F \subseteq S$ such that $(H \cup R) \setminus F$ has less bridges than $H$ and 
    $(H \cup R) \setminus F$ satisfies the Invariants~\ref{invariant:credit}-\ref{invariant:block-size}.
    Moreover, $R$ and $F$ can be found in polynomial time.
\end{restatable}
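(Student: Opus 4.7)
The plan is to cover the two bridges incident to the given simple block by augmenting $H$ with a suitable pseudo-ear. Let $b$ be the simple block vertex of $C$, so that the corresponding simple block $B$ satisfies $V(B) \subseteq V(P)$ for some $P \in \mathcal{P}$. By Invariant~\ref{invariant:degree-lonely-vertex}, $b$ has degree exactly $2$ in $H^C$ with incident bridges $e_1, e_2$ leading to neighbors $n_1, n_2$, and neither $n_i$ is a simple block vertex. Starting from $n_1$ and $n_2$ I walk outward along the tree $T_C$ until reaching vertices $u$ and $v$ each of which is either a lonely leaf of $C$ or a non-simple block vertex. By \Cref{lem:bridge-coverin:pseudo-ear:find-polytime} I can then check in polynomial time whether a pseudo-ear $Q^{uv}$ exists in $G^C$; if so, taking $R$ to be the links of $Q^{uv}$ not already in $S$ and $F = \emptyset$, the witness path $W^{uv}$ contains $b$, both bridges $e_1, e_2$, and two endpoints that are lonely leaves or non-simple block vertices, so that \Cref{lem:bridge-covering:adding-pseudo-ears} guarantees that $H \cup R$ satisfies Invariants~\ref{invariant:credit}--\ref{invariant:block-size} and contains strictly fewer bridges. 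The quantitative key is that the extra $\frac{1}{4}$ credit granted to lonely leaves by rule (A)(ii) --- triggered exactly when the leaf can reach a simple block vertex in $H_C$ without crossing a non-simple block vertex --- is what pays for the new non-simple block credit and the last pseudo-ear link.

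To establish that a pseudo-ear $Q^{uv}$ always exists in $G^C$, I use that $G$ is $2$-edge-connected and structured. If no simple path in $G^C$ from $u$ to $v$ avoiding the remaining vertices of $V(C)$ existed, then the internal vertices of $W^{uv}$, together with the sub-path of $P$ spanned by $B$, would form a separator of $G$. Since $V(B) \subseteq V(P)$ and any traversed lonely interior vertex lies on a path of $\mathcal{P}$, this separator would be realisable as a path-separator, a $P^2$-separator, or a $C^2$-separator, contradicting properties (P3)--(P5) of \Cref{def:structured-graph}. Matching the precise definitions of these separators requires checking the required lower bounds on $\opt$ of the two sides, which will in turn be guaranteed by the absence of small contractible subgraphs (P1) and the bounded number of degenerate paths (P6).

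The main obstacle will be verifying Invariant~\ref{invariant:block-size} after the merge: the newly formed $2$EC block must contain at least two complete paths of $\mathcal{P}$. When $u$ or $v$ is a non-simple block vertex this is automatic from Invariant~\ref{invariant:block-size} applied to $H$. In the remaining configurations, where $u$ and $v$ are both lonely leaves and the witness path may only cross $V(P)$ plus vertices on one adjacent path, I would either extend the pseudo-ear by one further hop to reach a second full path, or, in the tightest corner cases, relinquish one link into $F$ to balance the accounting, using the $\frac{1}{4}$ bonus credits at both leaves. The heavy lifting is therefore a case analysis on the types of $n_1, n_2$ and on the length of the extension, in each case using the structured-ness of $G$ to rule out bad configurations.
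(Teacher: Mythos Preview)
Your high-level strategy—locating anchor vertices $u,v$ on either side of the simple block $b$ and covering the intervening bridges with a pseudo-ear—is in the right spirit, but there is a genuine gap: the existence of a \emph{single} pseudo-ear $Q^{uv}$ between your chosen $u$ and $v$ is not established, and in general need not hold. A pseudo-ear from $u$ to $v$ must avoid every other vertex of $C$ in $G^C$, whereas the structured-ness axioms (P3)--(P5) exclude only separators of very specific shapes (one path, two paths joined by one link, a $C^2$-cycle), each with side-conditions on the $\opt$ values of the two pieces. Your sentence ``the internal vertices of $W^{uv}$ \ldots\ would form a separator'' neither matches the hypothetical separator to one of these templates nor verifies the required $\opt$-bounds; and even if the interior of $W^{uv}$ were not a separator, any path from $u$ to $v$ outside it could still re-enter $C$ through a vertex off $W^{uv}$ (for instance through another branch at a non-simple block $u$), which would violate the pseudo-ear condition.

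The paper's proof confirms that this is where the real work lies, and it does \emph{not} proceed via a direct pseudo-ear between the two sides. It splits into two regimes. When some lonely leaf receives the (A)(ii) bonus (Case~1), the pseudo-ear is started at that leaf toward an \emph{unspecified} endpoint, and one argues about how far the witness path can reach, with the expensive/non-expensive dichotomy and an explicit $P^2$-separator construction in the tightest sub-case. When no leaf receives the bonus (Case~2), both directions from $b$ lead to non-simple blocks $b_1,b_2$, and in the final sub-case the argument requires \emph{two} separate pseudo-ears, combined via \Cref{lem:bridge-covering:adding-pseudo-ears2}, precisely because a single $Q^{b_1 b_2}$ need not exist. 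Your sketch also does not handle the outward walk meeting another simple block before any leaf or non-simple block (the paper's Case~2.1), and note that \Cref{lem:bridge-covering:adding-pseudo-ears} itself never uses the (A)(ii) credit, so that bonus cannot be what ``pays for the new non-simple block credit'' through that lemma; the paper instead does bespoke credit accounting in Case~1.1 and Case~1.2.1 where that extra $\tfrac14$ is actually consumed.
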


\begin{proof}
    We make a case distinction on how the simple block vertex is connected to the remaining part of the complex component.
    
    \textbf{Case 1: There is a simple block $b$ that causes a lonely leaf vertex $u$ to receive an additional credit of $\frac{1}{4}$ according to rule A(ii).}
    Let $b$ be the first non-simple block reachable from $u$ in $H[C]$ and let $P \in \mcP$ be the path belonging to $u$.
    Since $G$ is 2EC, there must be some pseudo-ear $Q^{uv}$ starting at $u$ and ending at some vertex $v \in V(C)$ distinct from $u$.
    Furthermore, since $G$ is structured, there must be some pseudo-ear $Q^{uv}$ starting at $u$ and ending at some vertex $v \in V(C)$ distinct from $u$ and $v$ belongs a path $P' \in \mcp$ with $P' \neq P$.
    This is true since otherwise $G$ is not structured; in particular $G$ must contain a path separator.
    
    \textbf{Case 1.1: The unique path from $u$ to $b$ in $H^C$ contains only one link.}
    Let $Q^{uv}$ be a pseudo-ear starting at $u$ and ending at some vertex $v \in V(C)$ distinct from $u$ and $v$ belongs a path $P' \in \mcp$ with $P' \neq P$.
    We now prove that setting $H' = H \cup Q^{uv}$ satisfies all invariants and has less bridges than $H$.
    Clearly, $H'$ has less bridges than $H$. 
    Furthermore, clearly the created block subsumes $b$ and $P$, and hence contains at least two paths, proving \Cref{invariant:block-size}.
    Furthermore, this implies \Cref{invariant:degree-lonely-vertex}.
    To see \Cref{invariant:credit}, we need to show $\credits(H \cup Q^{uv}) + |Q^{uv}| \leq \credits(H)$.
    Observe that the number of components in $H'$ is equal to the number of components in $H$ minus $|Q^{uv}|-1$. 
    Hence, by adding $Q^{uv}$ to $H$ the number of component credits is reduced by the number of links in $Q^{uv}$ minus $1$, i.e., by $|Q^{uv}| -1$.
    Further note that the newly created block $B$ containing the vertices of $W^{uv}$ also receives a credit of one. 
    Therefore, we need to argue that 2 additional credits are reduced by adding $Q^{uv}$ to $H$.
    This credit is obtained by observing that $u$ has a lonely leaf credit of $\frac{5}{4}$ (which is not needed in $H'$) and $H'$ contains at least one less bridge in $S$ compared to $H$. This sums up to a total credit of at least $2$.

    \textbf{Case 1.2: The unique path from $u$ to $b$ in $H^C$ contains at least two links.}
    If there is a pseudo-ear $Q^{uv}$ starting at $u$ and ending at $v$ such that $W^{uv}$ contains at least 2 bridges, then according to \Cref{lem:bridge-covering:adding-pseudo-ears} we are done.
    Hence, assume that this is not the case.

    \textbf{Case 1.2.1: The vertex $u$ is expensive.}
    We now have that any pseudo-ear $Q^{uv}$ starting at $u$ and ending at $v$ satisfies that $W^{uv}$ contains only the unique bridge incident to $P$ in $H$.
    We add $Q^{uv}$ to $H$ to obtain $H'$. 
    Afterwards, $u$ is in a block $b_u$ in $H'$ which might not satisfy \Cref{invariant:block-size}.
    Hence, from $b_u$ in $H'$ we again argue that there must be a pseudo-ear $Q^{b_u x}$ covering at least one bridge from $S$ (since $G$ is structured).
    Let $H'' = H \cup Q^{uv} \cup Q^{b_u x}$.
    Now it can easily be checked that $H''$ satisfies \Cref{invariant:degree-lonely-vertex} and \Cref{invariant:block-size}.
    To see that \Cref{invariant:credit} is satisfied, observe that the number of components in $H''$ is equal to the number of components in $H$ minus $|Q^{uv}| + |Q^{b_u x}| -2$. Additionally, the newly created block receives a credit of 1.
    Hence, we need to 'find' a credit of 3. To do so, observe that $u$ initially had a credit of $\frac{3}{2}$ (as it satisfied A(i) and A(ii)). 
    Furthermore, $H''$ contains at least 2 fewer bridges of $L$ compared to $H$, which sums up together with the lonely leaf credit of $u$ to a total credit of at least $3$.
    This proves \Cref{invariant:credit}.

    \textbf{Case 1.2.2: The vertex $u$ is not expensive.}
    Let $v$ be the other endvertex of $P$ distinct from $u$ and let $e = v v'$ be the unique link incident to $v$ from $H \cap L$. Let $P' \in \mcP$ be the path to which $v'$ is incident and let $u'$ be the other distinct endvertex of $P'$.
    Since $u$ is non-expensive, $v v'$ is non-expensive. Hence, by the definition, $u$ is either incident to $u'$ or incident to some other path $P'' \in \mcP$ with $P \neq P'' \neq P'$.
    In the former case, if $u u' \in L$, it can easily be observed that adding $u u'$ to $H$ results in a solution with fewer bridges and also satisfying all invariants.
    Hence, assume that $u u' \notin L$.
    Thus, $u$ is incident to some path $P'' \in \mcP$ with $P \neq P'' \neq P'$.
    Recall that there is no pseudo-ear $Q^{ux}$ starting at $u$ and ending at $x$ such that $W^{ux}$ contains at least 2 bridges, as then we are done.
    Additionally, recall that we have applied \Cref{lem:bridge-covering:links-between-leafs} exhaustively and hence there is no link between $u$ and any lonely leaf vertex of a complex component distinct from $C$.
    We now claim that that this contradicts the fact that $G$ is structured.
    Any pseudo-ear starting in $u$ and ending at some vertex $x \in V(C)$ satisfies that $x \in V(P) \cup V(P')$. 
    Let $x$ be the neighbor of $u$ on $P$ ($v = x$ might be possible).
    Recall that $P^{vx}$ is the subpath of $P$ from $v$ to $x$.
    We claim that $Z = V(P') \cup V(P^{vx})$ is a $P^2$-separator.
    Clearly, $Z$ is a separator as it separates $u$ and $P''$ from $b$.
    Hence, let $(V_1, V_2)$ be any bipartition of $V \setminus Z$ such that there are no edges between $V_1$ and $V_2$ and such that $u \in V_1$.
    Observe that there is an edge $e \in \delta(Z) \cap E(\mcP)$.
    Hence, if we can show that $\opt((G \setminus V_2)|Z) \geq 4$ and $\opt((G \setminus V_1)|Z) \geq 3$ or  $\opt((G \setminus V_2)|Z) \geq 3$ and $\opt((G \setminus V_1)|Z) \geq 4$, we have shown that $Z$ is a $P^2$-separator.
    Clearly, either $\opt((G \setminus V_2)|Z) \geq 4$ or $\opt((G \setminus V_1)|Z) \geq 4$, as $G$ is structured.
    It is easy to see that $\opt((G \setminus V_1)|Z) \geq 3$, since there must be at least 4 vertices of $V^*(\mcp)$ contained in $V_2$.
    Hence, if $\opt((G \setminus V_2)|Z) \geq 3$, we are done.
    Therefore, assume $\opt((G \setminus V_1)|Z) \geq 4$. We show $\opt((G \setminus V_2)|Z) \geq 3$.
    This is clearly true if there are at least 4 vertices of $V^*(\mcp)$ contained in $V_1$.
    Hence, assume that there are only 3 such vertices (there must be at least 3 such vertices since $V_1$ contains $u$ and another path $P''$).
    Hence, $P''$ is a trivial complex component.
    But since $u$ is not incident to any endvertex of $P''$, we have $\opt((G \setminus V_1)|Z) \geq 3$, implying that $Z$ is a $P^2$-separator, a contradiction to $G$ being structured.
    This finishes the first case.

    \textbf{Case 2:  No lonely leaf vertex receives an additional credit of $\frac{1}{4}$ according to rule A(ii).}
    Hence, in $H^C$, from $b$ we can not reach a lonely leaf vertex without going through a non-simple block.
    Therefore, since each simple block in $C$ has degree 2 in $H^C$, there are exactly 2 non-simple blocks that are reachable in $H^C$ from $b$ without going through another non-simple block.
    Let these two non-simple blocks be $b_1$ and $b_2$.
    Let $Q_1$ be the path in $H^C$ from $b_1$ to $b$ and let $Q_2$ be the path in $H^C$ from $b_2$ to $b$.
    Note that on $Q_1$ and $Q_2$ there might be simple blocks distinct from $b$.
    However, due to \Cref{invariant:degree-lonely-vertex}, these simple-blocks are not adjacent in $H^C$.

    \textbf{Case 2.1: On $Q_1$ there is another simple block distinct from $b$.}
    Let $b'$ be the first simple block that is reachable from $b$ in $H^C$ when traversing $Q_1$ from $b$ to $b_1$. Let this subpath from $b$ to $b'$ be $Q_1'$.
    Note that by the construction of $H$, $b$ and $b'$ are not adjacent.
    Therefore, there must exist two distinct paths $P, P' \in \mcP$ on $Q_1'$ such that two endvertices of $P$ and $P'$ are connected by a link $e$ in $H$.
    Let $f, f'$ be the unique links incident to $P$ and $P'$ distinct from $e$.
    Observe that $G[V \setminus (V(P) \cup V(P')]$ must be connected, as otherwise $P$ and $P'$ together with $e$ form a $P^2$-separator, a contradiction to the fact that $G$ is structured.
    Now it is easy to see that there must be a pseudo-ear such that its witness path contains at least 3 links, namely $e, f$, and $f'$. Hence, by \Cref{lem:bridge-covering:adding-pseudo-ears}, we can add this pseudo-ear, satisfy all invariants and obtain a solution with fewer bridges.
    Hence, we can assume that on $Q_1$ there is no other simple block other than $b$.
    Similarly, this holds for $Q_2$.

    \textbf{Case 2.2: On $Q_1$ there is no other simple block distinct from $b$.}
    First, assume that $b_1$ is incident to a link $e$ on $Q_1$. Let $P \in \mcP$ be the path to which $e$ is incident to such that $P$ is not part of the block $b_1$.
    Furthermore, $P$ must be incident to one more link, say $e'$.
    Now observe that since $G$ is structured, $P$ can not be a separator and hence there must be a pseudo-ear such that its witness path contains $b_1$, $e$ and $e'$. Hence, by \Cref{lem:bridge-covering:adding-pseudo-ears}, we can add this pseudo-ear, satisfy all invariants and obtain a solution with fewer bridges.
    Similarly this holds for $Q_2$.
    
    We now assume that $b_1$ is not incident to a link on $Q_1$. Let $P_1 \in \mcP$ be the path to which $b_1$ is incident to on $Q_1$. Let $v_1, w_1$ be the endvertices of $P_1$. Without loss of generality, we have $u_1 \in V(b)$ and $v_1 \notin V(b)$, where $V(b)$ denotes the vertices of $G$ that belong to the block $b$. Let $P_1'$ be the subpath of $P_1$ which is not contained in $b_1$.
    Assume that the number of links on $Q_1$ is at least 2. 
    Hence, $v_1$ must be incident to a link $e$ that is incident to an endvertex $v_2$ of some path $P_2 \in \mcP$ with $P_2 \neq P_1$ and $P_2$ does not belong to a simple block, since in this case $Q_1$ contains at least 2 links and only one simple block (namely $b$).
    Furthermore, the other endvertex of $P_2$, say $u_2$, is also incident to a link $e'$ on $Q_1$ with $e \neq e'$.
    Now it is easy to see that there must be a pseudo-ear satisfying \Cref{lem:bridge-covering:adding-pseudo-ears}: Observe that $G[V \setminus (V(P_1') \cup V(P_2)]$ must be connected, as otherwise $P_1'$ and $P_2$ together with $e$ form a $P^2$-separator, a contradiction to the fact that $G$ is structured.
    Now it is easy to see that there must be a pseudo-ear such that its witness path contains at least 2 bridges, namely $e$ and $e'$ and contains at least one block, namely $b_1$.
    Similarly the same arguments holds if $Q_2$ contains at least two links.
    
    Finally, in the remaining case, we have that the number of links on $Q_1$ and $Q_2$ is exactly 1 and $Q$ contains only one simple block, which is $b$. Let $e_1$ be the link on $Q_1$ and $e_2$ be the link on $Q_2$.
    Since $G$ is structured, there must be a pseudo-ear $R_1$ such that its witness path contains $b_1$ and $e_1$ and a pseudo-ear $R_2$ such that its witness path contains $b_2$ and $e_2$.
    Now observe that $R_1$ and $R_2$ satisfy \Cref{lem:bridge-covering:adding-pseudo-ears2} and hence we can add these pseudo-ears, satisfy all invariants and obtain a solution with fewer bridges.
    This finishes the proof.
\end{proof}

Hence, after applying the above lemma exhaustively, there are no more simple blocks in $H$.
In particular, this implies that no lonely leaf vertex in $H$ receives credit according to rule A(ii).
As we will never create simple blocks, throughout the remaining parts of this section we assume that $H$ does not contain simple blocks.

Next, we deal with the case that a lonely leaf vertex is non-expensive.
\begin{restatable}[]{lemma}{lembridgecoveringnonexpensive}
    \label{lem:bridge-covering:non-expensive}
    Let $C$ be a non-trivial component and let $v$ be a lonely leaf vertex of $H^C$ that does not receive credit according to A(i), i.e., $v$ is a non-expensive leaf vertex.
    Then there are two sets of links $R \subseteq L \setminus S$ and $F \subseteq S$ such that $(H \cup R) \setminus F$ has less bridges than $H$ and 
    $(H \cup R) \setminus F$ satisfies the Invariants~\ref{invariant:credit}-\ref{invariant:block-size}.
    Moreover, $R$ and $F$ can be found in polynomial time.
\end{restatable}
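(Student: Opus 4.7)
The plan is to transform $H$ by adding a single pseudo-ear (or a pair of overlapping pseudo-ears) and, if necessary, removing one now-redundant link, so that the resulting $H'$ has strictly fewer bridges while still satisfying Invariants~\ref{invariant:credit}-\ref{invariant:block-size}. The construction proceeds by a case analysis on the local structure of $C$ around $v$ and the links incident to $v$ in $L$; in each case I will either exhibit a pseudo-ear whose witness path has enough credit to invoke \Cref{lem:bridge-covering:adding-pseudo-ears} or \Cref{lem:bridge-covering:adding-pseudo-ears2}, or derive a contradiction with $G$ being structured.

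First I would pin down the local picture at $v$. Since $v$ is lonely, both its path edge and any incident link in $H$ would be bridges, hence tree edges of $T_C$; combined with $v$ having degree exactly one in $T_C$ and Invariant~\ref{invariant:degree-lonely-vertex}, this forces the unique $H$-edge at $v$ to be its path edge, so $v$ has no link in $S$. By Property~P2, $v$ still has incident links in $L$ available as ``outward'' first hops of pseudo-ears. Moreover, since $v$ is non-expensive but a lonely leaf, letting $u$ be the other endpoint of $P$ and $uu'\in L(H)$ the bridge to an endpoint $u'$ of a path $P'\in\mcP$ (with other endpoint $v'$), the definition yields $N_G(v)\not\subseteq (V(P)\cup V(P'))\setminus\{v'\}$, so some $L$-link at $v$ either escapes $V(P)\cup V(P')$ or reaches $v'$.

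Next I would build the pseudo-ear. Take such a link $vx\in L$ and extend it to a pseudo-ear $Q^{vy}$ in $G^C$; its witness $W^{vy}$ in $T_C$ starts at the lonely leaf $v$ and ends at some vertex or block-vertex $y$. After \Cref{lem:bridge-covering:simple} the component $C$ has no simple blocks, so every block vertex on $W^{vy}$ is non-simple and carries one unit of credit. I then verify the credit requirement of \Cref{lem:bridge-covering:adding-pseudo-ears}: (i) if $W^{vy}$ meets a non-simple block, the ``non-simple block $+$ lonely leaf~$v$'' rule applies; (ii) if $W^{vy}$ stays lonely and crosses two distinct paths of $\mcP$ beyond $P$, it contains two bridges of $S$, triggering the ``two $S$-bridges $+$ lonely leaf'' rule; (iii) if $W^{vy}$ ends at another lonely leaf $y$ and contains an $S$-bridge, the ``two lonely leaves $+$ one $S$-bridge'' rule applies. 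When a single pseudo-ear is insufficient, for instance when $y$ has degree two in $T_C$, I will construct a second pseudo-ear $Q^{y'z}$ whose witness meets $W^{vy}$ and apply \Cref{lem:bridge-covering:adding-pseudo-ears2} with two overlapping witnesses.

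The main obstacle is handling the residual configurations where no single pseudo-ear and no pair of overlapping pseudo-ears yields the required credit. I expect to dispatch these by contradiction with structuredness: if every $L$-link from $v$ returns to $C$ only inside $V(P)\cup V(P')$ and its witness contains neither a block nor a second $S$-bridge, then a short separator must close around $V(P)\cup V(P')$, namely either a sub-path of $P$ together with $P'$ yields a $P^2$-separator violating Property~P4, or $P\cup P'$ together with $uu'$ and the returning link forms a $C^2$-separator violating Property~P5. Degenerate paths are absorbed via their extra credit from rule~(E). Polynomial running time follows because pseudo-ears can be found in polynomial time by \Cref{lem:bridge-coverin:pseudo-ear:find-polytime} and all case checks reduce to local reachability.
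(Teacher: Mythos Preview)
Your plan matches the paper's approach: build pseudo-ears from $v$, invoke \Cref{lem:bridge-covering:adding-pseudo-ears} or \Cref{lem:bridge-covering:adding-pseudo-ears2} when the witness path carries enough credit, and otherwise derive a separator contradicting structuredness. However, the execution has real gaps.

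First, your claim that the residual cases produce a $C^2$-separator is wrong. A $C^2$-separator requires \emph{two} links between endpoints of $P$ and $P'$ forming a cycle; in the hard cases here you typically have $vv'\notin L$, so no such cycle exists. The paper's proof uses only path-separators and $P^2$-separators in this lemma, and verifying the $P^2$-separator side conditions ($\opt((G\setminus V_i)|Q)\ge 3$ or $\ge 4$, plus the $\delta(Q)\cap E(\mcP)$ clause) is where most of the work lies; you do not indicate how you would check these.

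Second, your case split is too coarse when $C$ has no blocks. The paper distinguishes by $|C\cap L|\in\{1,2,\ge 3\}$, and the middle case $|C\cap L|=2$ needs an idea you do not anticipate: before searching for pseudo-ears one may have to \emph{swap} the bridge $v'v''$ for $v'u''$ (when $v'u''\in L$ and a pseudo-ear from $v''$ reaches farther toward $v$ than one from $u''$), so that the subsequent separator argument goes through. Without this swap the intended $P^2$-separator need not exist. Similarly, in the $|C\cap L|\ge 3$ case the paper sometimes first adds $vv'$ to $H$ and then argues from the resulting block, which falls outside your single-ear/two-ear template.

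Finally, the remark that ``degenerate paths are absorbed via their extra credit from rule~(E)'' is a red herring here: rule~(E) concerns 2EC components and plays no role in the bridge-covering credit accounting for complex components.
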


\begin{proof}
We assume throughout the proof that $C$ does not contain simple blocks. Let $P \in \mcP$ be the path of which $v$ is an endpoint of and let $u$ be the other endpoint of $P$.
We divide the proof into two cases depending on whether $C$ contains a non-simple block $b$ or not. Each case itself is then divided into three cases, depending on the number of bridges that are links of $L$.

\textbf{Case 1: $C$ contains a non-simple block $b$.}
We show that there must exist a pseudo-ear starting in $v$ such that its witness path either contains 2 bridges or one block vertex.
Let $b$ be the closest non-simple block vertex to $v$ and let $Q$ be the path from $v$ to $b$ on $C$. Recall that by \Cref{invariant:degree-lonely-vertex}, each vertex on $Q$ except $v$ and $b$ has degree exactly 2 in $H$. Let $X = Q \cap L$.

\textbf{Case 1.1: $|X|= 0$.}
Then, since $G$ is structured, it is easy to see that there must be a pseudo-ear $Q^{vy}$ starting in $v$ and ending in $y$ such that its witness path $W^{vy}$ contains $b$. By \cref{lem:bridge-covering:adding-pseudo-ears}, adding $Q^{vy}$ to $H$ results in a solution with fewer bridges and satisfying the Invariants~\ref{invariant:credit}-\ref{invariant:block-size}.

\textbf{Case 1.2: $|X|= 1$.}
Let $P' \in \mcP$ be the other path distinct from $P$ that is on $Q$. Let $u'$ be the endpoint of $P'$ such that $uu' \in H$ and let $v'$ be the other endpoint of $P'$. Note that $v'$ must be part of the vertices corresponding to the block $b$.
Hence, if $vv' \in L$, clearly adding $vv'$ to $H$ results in the desired new solution.
Therefore, since $v$ is non-expensive, there must be another path $P'' \in \mcP$ distinct from $P$ and $P'$ such that $u$ is incident to some vertex of $P''$. 
Now it has to be the case that there is a pseudo-ear $Q^{vy}$ starting in $v$ and ending in $y$ such that its witness path $W^{vy}$ contains $b$. By \Cref{lem:bridge-covering:adding-pseudo-ears}, adding $Q^{vy}$ to $H$ results in a solution with fewer bridges and satisfying the Invariants~\ref{invariant:credit}-\ref{invariant:block-size}.
If such a pseudo-ear does not exist, it is not hard to see that $G$ is not structured (a contradiction): Let $x$ be the first vertex of $P'$ that is part of $b$ when traversing $Q$ from $v$ to $b$. Let $Z= Q^{vx} - \{v, x\}$ be the subpath of $Q$ from $v$ to $x$ without the vertices $v$ and $x$. Since the above witness path does not exist, $Z$ is a separator separating $v$ and $P''$ from $b$. Let $(V_1, V_2)$ be a partition of $V \setminus Z$ such that there are no edges between $V_1$ and $V_2$ and $v \in V_1$. Since \Cref{lem:bridge-covering:links-between-leafs} has been applied exhaustively, we have that $\opt((G \setminus V_2)|Z) \geq 3$. Also, by \Cref{invariant:block-size}, we have $\opt((G \setminus V_1)|Z) \geq 3$. However, since $G$ is structured, either $\opt((G \setminus V_1)|Z) \geq 4$ or $\opt((G \setminus V_2)|Z) \geq 4$ must hold. But then $Z$ is a $P^2$-separator, a contradiction to the fact that $G$ is structured.

\textbf{Case 1.3: $|X| \geq 2$.}
The case can be essentially treated the same as the previous case with $|X|=1$, except that we argue that there must exist a pseudo-ear $Q^{vy}$ starting in $v$ and ending in $y$ such that its witness path $W^{vy}$ contains at least two bridges. By \Cref{lem:bridge-covering:adding-pseudo-ears}, adding $Q^{vy}$ to $H$ results in a solution with fewer bridges and satisfying the Invariants~\ref{invariant:credit}-\ref{invariant:block-size}. If such a pseudo-ear does not exist, similar to before we can argue that there must be a separator $Z$ that is a $P^2$-separator, a contradiction to the fact that $G$ is structured.
This finishes the first case.

\textbf{Case 2: $C$ does not contain any blocks.}
By \Cref{invariant:degree-lonely-vertex}, $C$ is a path.
Since $C$ is non-trivial, $C$ contains at least one bridge in $L$. Let $X = C \cap L$.
Let $u$ be the other endvertex of $P$ distinct from $v$.
Let $P'$ be the other path to which $u$ is incident and let $u' \in V(P')$ be the endvertex of $P'$ to which $u$ is incident. Let $v'$ be the other endvertex of $P'$.

\textbf{Case 2.1: $|X| = 1$.}
If there is a pseudo-ear starting in $v$ and ending in $v'$, we are done according to \Cref{lem:bridge-covering:adding-pseudo-ears}. Hence, let us assume this is not the case.
Since $v$ is non-expensive, either $vv' \in L$ or there is some path $\hat{P}$ distinct from $P$ and $P'$ to which $v$ is incident. The former case can not occur by our previous assumption. 
Thus, assume $vv' \notin L$. Hence, there is some path $\hat{P}$ distinct from $P$ and $P'$ to which $v$ is incident.
Assume that $v'$ is expensive. Since $G$ is structured, there must exist two pseudo-ears $Q^{vy}$ and $Q^{v'z}$ starting in $v$ and $v'$, respectively, such that their witness paths overlap. Now that the witness paths also cover the bridge $uu'$ and hence the condition of \Cref{lem:bridge-covering:adding-pseudo-ears2} is satisfied, implying that adding $Q^{vy}$ and $Q^{v'z}$ to $H$ can be done as desired.
Hence, we assume that also $v'$ is non-expensive. Since $vv' \notin L$, there must be another path $\bar{P} \in \mcP$ distinct from $P, P'$, and $\hat{P}$. $\bar{P}$ must be distinct from $\hat{P}$, as otherwise there is a pseudo-ear connecting $v$ and $v'$, a contradiction.
But now it can be easily observed that $Z = V(P) \cup V(P') \setminus ( v + v')$ is a separator.
Similarly to previous cases, in can easily be observed that $Z$ is a $P^2$-separator, a contradiction to $G$ being structured. Hence, there must exist a pseudo-ear connecting $v$ and $v'$.

\textbf{Case 2.2: $|X| = 2$.}
Let $P'' \in \mcP$ be the other path on $C$ distinct from $P$ and $P'$ with endvertices $u'', v''$ such that $v' v'' \in H$.
Again, we assume that there are no pseudo-ears starting in $v$ (or $u''$) satisfying the conditions of either \Cref{lem:bridge-covering:adding-pseudo-ears} or \Cref{lem:bridge-covering:adding-pseudo-ears2}, as then we are done.
This already implies that $vv' \notin L$, as there is a pseudo-ear starting in $u''$ such that its witness path contains $v'$, contradicting our previous assumption.
Hence, since $v$ is non-expensive, there is some path $\hat{P}$ distinct from $P, P'$ and $P''$ to which $v$ is incident.
Let $z \in V(C)$ be the vertex such that the witness path of the pseudo-ear $Q^{u''z}$ (if there is a pseudo-ear to $z$) is as long as possible.
If $v'u'' \in L$, we modify $H$ to obtain $H'$ by removing $v' v''$ from $H$ and adding $v' u''$ if the following is satisfied: There is a $y \in V(C)$ such that there is a pseudo-ear $Q^{v''y}$ such that $y$ is closer to $v$ than $z$. If this is not the case, we simply do nothing, but still set $H' = H$. We now work with $H'$ instead of $H$ and observe that this modification does not affect the number of components, bridges or any invariant.
(For simplicity, we 'ignore' the case that modifying $H$ to $H'$ might increase the credit by $\frac 14$ due to making $v''$ an expensive lonely leaf vertex, as we do not use this credit anyways and after in this step all vertices of $C$ are in a single 2-edge connected block.)
For convenience, let us assume that the previous if-condition was not true and we did not swap $v'v''$ with $v' u''$. The other case is analogous.
Let $x$ be the vertex on $P'$ that is furthest away from $v$ on $C$ but still reachable by some pseudo-ear starting in $v$. Additionally, by our assumptions, we know that $z \in V(P')$. Let $P'_z$ be the subpath of $P'$ from $u'$ to $z$.
We now claim that $Z = V(P) \cup V(P'_z) \setminus (v + z)$ is a $P^2$-separator, which is a contradiction to $G$ being structured.
Note that $x \in Z$.
Clearly, $Z$ is a separator, since $v$ is separated from $u''$.
Let $(V_1, V_2)$ be a partition of $V \setminus Z$ such that there are no edges between $V_1$ and $V_2$ and $v \in V_1$. Since \Cref{lem:bridge-covering:links-between-leafs} has been applied exhaustively, we have that $\opt((G \setminus V_2)|Z) \geq 3$. Also, since $u''$ and $v''$ are both not incident to $Z$, we have that, we have $\opt((G \setminus V_1)|Z) \geq 3$. However, since $G$ is structured, either $\opt((G \setminus V_1)|Z) \geq 4$ or $\opt((G \setminus V_2)|Z) \geq 4$ must hold.
This implies that $Z$ is a $P^2$-separator, a contradiction.

\textbf{Case 2.3: $|X| \geq 3$.}
Let $P', P'', P''' \in \mcP$ be the closest three paths on $C$ to $v$ with endpoints $u', v', u'', v'', u''', v'''$, respectively, such that $uu', v'v'', u''u''' \in L \cap H$.
Again, we assume that there are no pseudo-ears starting in $v$ (or $v'''$) satisfying the conditions of either \Cref{lem:bridge-covering:adding-pseudo-ears} or \Cref{lem:bridge-covering:adding-pseudo-ears2}, as then we are done.
If $vv' \in L$, we consider $H' = H + vv'$. In $H'$, there must be a pseudo-ear $Q^{xy}$ starting in some vertex $x \in V(P) \cup V(P')$ and ending in some vertex $y \in V(P''')$, as otherwise $P''$ would be a separator.
But then adding $Q^{xy}$ to $H'$ results in a solution with fewer bridges and it can be easily checked that all invariants are satisfied, since in total at least 3 bridges of $S$ are covered.
Hence, since $v$ is non-expensive, there is some path $\hat{P}$ distinct from $P, P', P''$ and $P'''$ to which $v$ is incident. Similarly to previous cases, it can be observed that there must exist a pseudo-ear satisfying \Cref{lem:bridge-covering:adding-pseudo-ears}, as otherwise $Z = V(P) \cup V(P') - v$ is a $P^2$-separator, which is a contradiction. This finishes the proof.
\end{proof}

We apply the above lemma exhaustively and therefore assume that $H$ does not have non-trivial complex components that have a non-expensive lonely leaf vertex.
Next, we assume that there is a complex component containing a (non-simple) block vertex.

\begin{restatable}[]{lemma}{lembridgecoveringlargecomponent}
    \label{lem:bridge-covering:large-block}
    Let $u$ be a non-simple block vertex of $H^C$.
    Then there are two sets of links $R \subseteq L \setminus S$ and $F \subseteq S$ such that $(H \cup R) \setminus F$ has less bridges than $H$ and 
    $(H \cup R) \setminus F$ satisfies the Invariants~\ref{invariant:credit}-\ref{invariant:block-size}.
    Moreover, $R$ and $F$ can be found in polynomial time.
\end{restatable}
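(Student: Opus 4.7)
The plan is to find, by case analysis on the structure of the tree $T_C$ in $H^C$ around $u$, either a single pseudo-ear $Q^{uv}$ (to be fed into Lemma~\ref{lem:bridge-covering:adding-pseudo-ears}) or an overlapping pair of pseudo-ears (to be fed into Lemma~\ref{lem:bridge-covering:adding-pseudo-ears2}), whose witness path collects at least two credits beyond what is spent paying for the new links and the new block. Throughout, I exploit the cleanup performed by the preceding lemmas: by exhaustive application of Lemma~\ref{lem:bridge-covering:links-between-leafs}, Lemma~\ref{lem:bridge-covering:simple}, and Lemma~\ref{lem:bridge-covering:non-expensive}, the component $C$ contains no simple block vertex, every lonely leaf of $C$ is expensive, and there is no direct link from a lonely leaf of $C$ to a lonely leaf of a different complex component. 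The output will always have $F = \emptyset$, and Invariants~\ref{invariant:degree-lonely-vertex} and~\ref{invariant:block-size} are preserved automatically because the newly created block contains the non-simple block $b_u$ represented by $u$ (which already contains two paths).

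The primary split is according to whether $T_C$ has a second non-simple block vertex. If there is some non-simple block vertex $v \neq u$ of $T_C$, I would look for a pseudo-ear $Q^{uv}$ from $u$ to $v$ in $G^C$, which can be found (if one exists) in polynomial time via Lemma~\ref{lem:bridge-coverin:pseudo-ear:find-polytime}. Since $u \neq v$ and both are non-simple block vertices, the first bullet of Lemma~\ref{lem:bridge-covering:adding-pseudo-ears} applies, yielding $R = Q^{uv}$. If $u$ is the unique non-simple block of $T_C$, then every leaf of $T_C$ other than possibly $u$ is an expensive lonely leaf, and I would look for a pseudo-ear $Q^{uv}$ from $u$ to such a leaf $v$, which again fits the first bullet (non-simple block plus lonely leaf). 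In the corner case where even this is unavailable, I would instead produce two vertex-disjoint pseudo-ears whose witnesses overlap and cover $u$ together with either a second non-simple block or a lonely leaf, and invoke Lemma~\ref{lem:bridge-covering:adding-pseudo-ears2}.

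The main obstacle is to guarantee that the required pseudo-ear (or overlapping pair) actually exists. To handle this, I would proceed by contradiction: assume no such pseudo-ear is available, and deduce that some small set $Z \subseteq V(C) \setminus \{u, v\}$ separates $u$ from $v$ in $G$. By Invariant~\ref{invariant:degree-lonely-vertex} every vertex of $T_C$ between $u$ and $v$ is either an interior vertex of some path of $\mcP$ or a lonely degree-$2$ vertex, so $Z$ is covered by at most two paths of $\mcP$ (possibly unioned with a block). A bookkeeping argument in the spirit of the corresponding case analyses in Lemmas~\ref{lem:bridge-covering:simple} and~\ref{lem:bridge-covering:non-expensive} - counting the optimum on each side after contracting $Z$ and using that $G$ contains no links between lonely leaves of distinct complex components - shows that $Z$ must be either a path-separator, a $P^2$-separator, or a $C^2$-separator of $G$, each of which contradicts structured-ness. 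The hardest sub-case, I expect, is when $T_C$ has the form \emph{block--short chain--expensive leaf}, because the witness potentially contains only a single bridge and one needs to set up the overlapping pair carefully to exploit the $\tfrac{1}{4}$ extra credit on the expensive leaf; temporarily rerouting an adjacent link (as in the shadow-swap arguments used earlier) may be required to expose the separator that gives the contradiction. All searches involved - enumerating block vertices and leaves of $T_C$, testing pseudo-ears via Lemma~\ref{lem:bridge-coverin:pseudo-ear:find-polytime}, and testing overlapping pairs - can be implemented in polynomial time, so $R$ (and $F=\emptyset$) are obtained efficiently.
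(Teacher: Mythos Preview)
Your outline has the right flavor, but two steps do not go through as written.

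First, the separator bookkeeping. You assert that if no pseudo-ear $Q^{uv}$ exists, the obstructing set $Z \subseteq V(C) \setminus \{u,v\}$ is ``covered by at most two paths of~$\mcP$,'' citing Invariant~\ref{invariant:degree-lonely-vertex}. That invariant only says lonely vertices have degree at most $2$ in $H$; it does \emph{not} prevent the portion of $T_C$ between $u$ and your chosen target $v$ from containing many bridges of $S$ and hence straddling arbitrarily many paths of $\mcP$. The paper obtains a small separator by a different mechanism: rather than aiming for a specific target, it takes a pseudo-ear $Q^{uw}$ whose witness $W^{uw}$ is \emph{maximal}. If $Q^{uw}$ already satisfies a bullet of Lemma~\ref{lem:bridge-covering:adding-pseudo-ears} we are done; otherwise $W^{uw}$ contains at most one bridge of $S$, and it is precisely this that forces $Q := W^{uw} - u$ to lie inside at most two paths of $\mcP$. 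Only then can the $P^2$-separator contradiction be set up, and even then a further case analysis on the far side $V_2$ (whether it contains a block vertex, whether $H[V_2]$ is connected, whether it has zero or one link-bridge) is required.

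Second, your claim that $F = \emptyset$ always suffices is false. In the sub-case where $H[V_2]$ is a single path with exactly one link-bridge $bc$ and $\opt((G \setminus V_1)\,|\,Q) = 2$, the paper explicitly sets $F = \{bc\}$ and $R = Q^{uw} \cup \OPT((G \setminus V_1)\,|\,Q)$; the saved $\tfrac{3}{4}$ credit on the deleted bridge $bc$ is essential to the accounting. Your overlapping-pair fallback is not guaranteed to cover this case: the expensive leaf $a$ in $V_2$ has all its $G$-neighbours inside $V(C)$, so every pseudo-ear out of $a$ is a single link into $C$, and its witness need not reach far enough to intersect $W^{uw}$.
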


\begin{proof}
    Throughout this proof we assume that none of the previous lemmas applies and hence there are no simple block vertices in $H$ and all lonely leaf vertices of non-trivial complex components are expensive.
    Assume first that $u$ is a leaf of $C$.
    Since $G$ is 2EC, there must be some pseudo-ear $Q^{uv}$ starting at $u$ and ending at some vertex $v \in V(C)$ distinct from $u$.
    If there is a pseudo-ear $Q^{uv}$, such that $Q^{uv}$ satisfies one of the conditions in \Cref{lem:bridge-covering:adding-pseudo-ears}, then we are done.
    Hence, assume that for all $v \in V(C)$, for which some pseudo-ear $Q^{uv}$ exists, the pseudo-ear $Q^{uv}$ does not satisfy the conditions in \Cref{lem:bridge-covering:adding-pseudo-ears}. 
    Of all such $v \in V(C)$ let $w$ be the vertex such that its corresponding witness path $W^{uw}$ has the maximum number of edges.
    All vertices of $W^{uw}$ must have degree at most 2, as otherwise, by \Cref{invariant:degree-lonely-vertex}, one of the vertices would be a block vertex, a contradiction to our assumption.
    Hence, $w$ is unique, a lonely vertex and we have that $V(C)-V(W^{uw})$ contains at least one vertex (since it must contain a leaf of $C$).
    Furthermore, $W^{uw}$ contains no blocks, no lonely leaf vertex and at most one bridge of $S$.
    Let $V_u$ be the set of components that are reachable via a simple path from $u$ in $G^C \setminus (V(C) - u)$.
    Note that $Q \coloneqq W^{uw} - u$ is a separator in $G$, since it disconnects the vertices in $V_u$ from $V(C)-V(Q)$.
    Let $(V_1, V_2)$ be a partition of $V(G) \setminus V(Q)$ such that there are no edges between $V_1$ and $V_2$ in $V(G) \setminus V(Q)$ and such that $u \in V_1$ (here we mean the vertices in $V(G)$ corresponding to the contracted block vertex $u$) and $V(C)-V(Q) \subseteq V_2$.
    
    Note that $Q$ contains at most one bridge of $S$, as we assumed that no condition of \Cref{lem:bridge-covering:adding-pseudo-ears} is applicable for $Q^{uv}$.
    Further, note that $Q$ must contain at least 1 bridge in $S$. Otherwise, $Q$ must be a sub-path of some path path $P \in \mathcal{P}$, a contradiction to the fact that $G$ is structured (contradicting Property P4).
    
    Hence, $Q$ must contain exactly one bridge in $S$.
     Next, we show that one of the endpoints of $Q$ must have an edge $e \in E(\mathcal{P})$ such that $e \in \delta(Q)$.
    Assume the contrary.
    Note that there are exactly two bridges in $E(H) \cap \delta(Q)$: $e_1$, which is incident to $u$, and $e_2$, which is incident to $w$ (and going to $V_2$). 
    Hence, both these edges must be in $S$.
    But then, $Q^{uw}$ contains exactly two bridges of $S$, one contained contained in $Q$ and $e_1$. This is a contradiction to our previous assumption.
    
    Therefore, one of the endpoints of $Q$ must have an edge $e^* \in E(\mathcal{P})$ such that $e^* \in \delta(Q)$.
    Next, assume that $V_2$ contains a block vertex.
    We show that then $Q$ is a $P^2$-separator, contradicting that $G$ is structured.
    Observe that by \Cref{invariant:block-size}, we have that $\opt((G \setminus V_i))|Q) \geq 3$ for $i \in \{1, 2 \}$. 
Furthermore, since $G$ is structured, we have that $\opt(G) \geq 12$ and hence $\opt((G \setminus V_i))|Q) \geq 4$ for some $i \in \{1, 2 \}$.
Further, by the existence of $e^*$, we have that $Q$ is a $P^2$-separator, a contradiction to the fact that $G$ is structured.

Hence, $V_2$ does not contain a block vertex and therefore $H[V_2]$ is the union of paths, by \Cref{invariant:degree-lonely-vertex}.
If there are two paths $P_1, P_2 \in \mathcal{P}$ such that $V(P_1) \cup V(P_2) \subseteq V_2$, then also $\opt((G \setminus V_1))|Q) \geq 3$ and hence, by the same argument as above, $Q$ would be a $P^2$-separator, a contradiction.

If $H[V_2]$ is disconnected, then we claim that $\opt((G \setminus V_1))|Q) \geq 3$. 
We know that $\opt((G \setminus V_1))|Q) \geq 2$ since two vertices in $H[V_2]$ must be leaves in this case.
However, if $\opt((G \setminus V_1))|Q) = 2$, then there must be two vertices $x, y \in V(\mathcal{P})$ of distinct paths $P_x, P_y \in \mathcal{P}$ in $V_2$ such that $xy \in L$.
But this contradicts the fact that we have used \Cref{lem:bridge-covering:links-between-leafs} exhaustively before.
Now, as before, we have $\opt((G \setminus V_1))|Q) \geq 3$ and therefore $Q$ is a $P^2$-separator, a contradiction.

Hence, it follows that $H[V_2]$ is a path and contains either one or no link of $S$.
If $H[V_2]$ contains no link of $S$, then we claim that there is an edge $f \in L$ such that $H \cup Q^{uw} + f$ has fewer bridges than $H$ and satisfies the Invariants~\ref{invariant:credit}-\ref{invariant:block-size}.
Let $z$ be the leaf of $V_2$.
Since $H[V_2]$ contains no link of $S$, $H[V_2] \subseteq P$ for some $P \in \mathcal{P}$. 
Further, there must be a link $f \in L$ incident to $z$ to some other path $P' \in \mathcal{P}$. 
Let $H' = (V, E(\mathcal{P}) \cup S')$, where $S' =S \cup Q^{uw} + f$.
Since $Q$ is a separator, this edge must go to $Q$ such that all bridges of $C$ are in a 2EC component in $H'$ (recall that $C$ is a path in $H^C$).
It remains to show that the Invariants~\ref{invariant:credit}-\ref{invariant:block-size} are satisfied.
We first prove that \Cref{invariant:credit} is satisfied.
First, observe that $|S'| = |Q^{uw}| + 1$.
Next, note that all components incident to $Q^{uw}$ are in a single connected component in $H'$. 
Moreover, in $H'$ all bridges of $C$ are now in a single 2EC component, for which we need a component credit of 2, as either it is a single 2EC component (with credit 2) or a complex component, for which we need a component credit of 1 and a block credit of 1.
However, in $H'$, the leaf $z$ is not a leaf anymore, the unique bridge $W^{uw} \cap S$ is not a bridge in $H'$ and the block vertex $u$ is also contained in the newly created component.
Hence, we have $\credits(H') \leq \credits(H) - (|Q^{uw}| -1 -1) - \frac{5}{4} - \frac{3}{4} + 1 = \credits(H) -1$.
Hence, we obtain $\credits(H') + |S'| \leq \credits(H) + |S|$ and therefore $H'$ satisfies \Cref{invariant:credit} as $H$ satisfies \Cref{invariant:credit}.
Next, observe that each lonely vertex that is incident to $f$ or some link of $W^{uw}$, is in a 2EC component in $H'$.
Hence, $H'$ satisfies \Cref{invariant:degree-lonely-vertex}, since $H$ satisfies \Cref{invariant:degree-lonely-vertex}.
Since $u$ was a block vertex in $H$ and is now part of the block created by adding $f$ and $Q^{uw}$, \Cref{invariant:block-size} is also satisfied, since $H$ satisfies \Cref{invariant:block-size}.

Finally, we assume that $H[V_2]$ is a path and contains exactly one edge of $S$.
Let $a \in V_2$ be the leaf in $H[V_2]$ and let $b c $ be the unique bridge in $H[V_2]$ of $S$.
Let $P_1, P_2 \in \mathcal{P}$ be the distinct paths such that $a$ and $b$ are the endpoints of $P_1$ and $c$ is one of the endpoints of $P_2$. Observe that $V_2 \subseteq V(P_1) \cup V(P_2)$.

Recall that if $\opt((G \setminus V_1))|Q) \geq 3$, then $Q$ is a $P^2$-separator, a contradiction to the fact that $G$ is structured. 
Hence, $\opt((G \setminus V_1))|Q) = 2$.
Let $F$ be any such optimal solution of $\OPT(G \setminus V_1))|Q)$.
We claim that $H' = (H - bc) \cup Q^{uw} \cup F$ is 2EC and satisfies the Invariants~\ref{invariant:credit}-\ref{invariant:block-size}.
Let $H' = (V, E(\mathcal{P}) \cup S')$, i.e., $S' = (S - bc) \cup Q^{uw} \cup F$.
Note that $|S'| = |S| + |Q^{uw}| + 1$, since we removed $bc$ and $|F| = 2$ (if $bc \in F$, the statement follows as well since $bc \in E(H) \cap S$).
Further, note that all bridges of $C$ are now in a 2EC component of $H'$. 
Therefore, since $\credits(H') \leq \credits(H) - (|Q^{uw}| -1) - \frac{5}{4} - 2 \cdot \frac{3}{4} = \credits(H) - |Q^{uw}| - \frac{7}{4}$ we have that $\credits (H') + |S'| \leq \credits(H) + |S|$ and hence \Cref{invariant:credit} is satisfied. 
Furthermore, it is easy to check that also \Cref{invariant:degree-lonely-vertex} and \Cref{invariant:block-size} are satisfied.
This finishes the case that $u$ is a leaf of $C$.

Finally, we assume that $u$ is not a leaf of $C$.
This case is very similar to the previous one.
Let $u v$ be some bridge incident to $u$ in $H$ and let $C_v$ be the connected component of $H^C[V(C)] - u$ that contains $v$.
Further, let $C_u = H^C[V(C)] \setminus C_v$.

In order to compute a pseudo-ear that covers the bridge $uv$, we now contract $C_u$ to a single vertex $u'$ and then compute a pseudo-ear starting at $u'$ and ending at some vertex in $w \in V(C_v)$ in this contracted graph.
Now observe that $u'$ is a leaf in this new graph and hence we can follow the exact same steps as in the case that $u$ was a leaf in the first place.
\end{proof}

Again, we apply the above lemma exhaustively and can assume that $H$ contains no non-trivial complex component that contains a block and that each leaf is an expensive leaf.
We next consider the case that $C$ is a non-trivial component.

\begin{restatable}[]{lemma}{lembridgecoveringnolargecomponent}
    \label{lem:bridge-covering:no-large-component}
    Let $C$ be a non-trivial complex component such that each vertex of $C$ is lonely.
    Then there are two sets of links $R \subseteq L \setminus S$ and $F \subseteq S$ such that $(H \cup R) \setminus F$ has less bridges than $H$ and 
    $(H \cup R) \setminus F$ satisfies the Invariants~\ref{invariant:credit}-\ref{invariant:block-size}.
    Moreover, $R$ and $F$ can be found in polynomial time.
\end{restatable}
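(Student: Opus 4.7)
The plan is to exploit the path-like structure of $C$ together with the extra $\frac{1}{4}$ credit carried by each expensive leaf. Since every vertex of $C$ is lonely, \Cref{invariant:degree-lonely-vertex} forces $C$ to be a simple path $P_1\,\ell_1\,P_2\,\ell_2\cdots\ell_{k-1}\,P_k$ in $H$, where $P_1,\dots,P_k\in\mathcal{P}$ and $\ell_i=b_i a_{i+1}\in S\cap L$ are the bridges of $C$. The two leaves $a_1$ (on $P_1$) and $b_k$ (on $P_k$) are expensive, because \Cref{lem:bridge-covering:simple}, \Cref{lem:bridge-covering:non-expensive}, and \Cref{lem:bridge-covering:large-block} have already been applied exhaustively; hence $N_G(a_1)\subseteq V(P_1)\cup V(P_2)\setminus\{b_2\}$ and symmetrically for $b_k$, and every pseudo-ear emanating from a leaf has length one and covers at most the adjacent bridge. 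Combined with $2\cdot\tfrac{5}{4}$ credits on the leaves, $\tfrac{3}{4}(k-1)$ on the bridges, and $1$ on the component, we have enough credit to pay for one or two pseudo-ears via \Cref{lem:bridge-covering:adding-pseudo-ears} or~\Cref{lem:bridge-covering:adding-pseudo-ears2}, provided that the witness paths together contain either three bridges, two bridges plus a leaf, or one bridge plus both (expensive) leaves.

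The argument splits on $k$. When $k=2$, the two expensive leaves force $a_1 b_2\notin L$ and pin $N_G(\{a_1,b_2\})\subseteq V(P_1)\cup V(P_2)$; I will exhibit a short sub-path $Q$ of $P_1\cup P_2\cup\{\ell_1\}$ that is a separator placing $a_1$ on one side and the rest of $V\setminus V(Q)$ (including $b_2$) on the other. Property~(P0) provides the required $\opt$-lower bounds and Property~(P2) gives the edge in $\delta(Q)\cap E(\mathcal{P})$, so $Q$ is a $P^2$-separator, contradicting Property~(P4) and ruling out $k=2$. For $k\geq 4$, Property~(P3) applied to an interior $P_i$ yields a pseudo-ear with endpoints in $V(P_{i-1})$ and $V(P_{i+1})$ whose witness path covers both $\ell_{i-1}$ and $\ell_i$; chaining this observation, or combining with a leaf pseudo-ear, produces either a single witness path containing three bridges, or two pseudo-ears whose combined witness path contains two bridges together with an expensive leaf, in either case matching the hypotheses of \Cref{lem:bridge-covering:adding-pseudo-ears} or~\Cref{lem:bridge-covering:adding-pseudo-ears2}. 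The intermediate case $k=3$ is handled by picking a pseudo-ear from $a_1$ to some $w\in V(P_2)\setminus\{b_2\}$ and one from $b_3$ to some $w'\in V(P_2)\setminus\{a_2\}$; whenever $w'$ lies on the $a_2$-to-$w$ subpath of $P_2$, the two witness paths intersect inside $V(P_2)$, together covering both expensive leaves and both bridges, so \Cref{lem:bridge-covering:adding-pseudo-ears2} applies. If no such intersecting choice exists, the separation of $N_G(a_1)\cap V(P_2)$ from $N_G(b_3)\cap V(P_2)$ along $P_2$, combined with one of the flanking paths, again yields a $P^2$-separator, contradicting Property~(P4).

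The main obstacle I expect is the tight $k=3$ sub-case: the credit accounting leaves little slack, and one must simultaneously ensure (i) that the two pseudo-ears can be chosen vertex-disjoint in $H^C$ outside their endpoints, (ii) that the newly created block subsumes at least two distinct paths of $\mathcal{P}$ so that \Cref{invariant:block-size} is preserved (immediate since $P_1\neq P_3$), and (iii) that the leaf, bridge, and component credits balance exactly against the $|Q^{uv}|+|Q^{xy}|$ newly added links plus the credit for the new block. All required pseudo-ears and replacement links can be identified in polynomial time using \Cref{lem:bridge-coverin:pseudo-ear:find-polytime} and standard reachability computations in $G^C$; this also shows that, if the pseudo-ear search fails, the corresponding separator violating Property~(P4) can be extracted efficiently, completing the proof of the contradictions invoked above.
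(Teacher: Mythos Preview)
Your proposal has genuine gaps in each of the three case groups.

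For $k=2$: your $P^2$-separator argument cannot work as stated. Any sub-path $Q$ of $P_1\cup P_2\cup\{\ell_1\}$ that separates $a_1$ from the rest leaves only $a_1$ (and perhaps a fragment of $P_1$) on one side, so $\opt((G\setminus V_2)|Q)\le 2$, which never meets the $\ge 3$ requirement of a $P^2$-separator. (Your instinct that $k=2$ is degenerate can be rescued, but via Property~(P1): with both leaves expensive and $a_1b_2\notin L$, the cycle $E(P_1)\cup E(P_2)\cup\{\ell_1,e_{a_1},e_{b_2}\}$ is a strongly $\tfrac{3}{2}$-contractible subgraph with three links. The paper, by contrast, simply handles $k=2$ constructively with one or two length-one pseudo-ears and a direct credit count.)

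For $k=3$: when the two leaf pseudo-ears' witness paths do \emph{not} overlap on $P_2$, this is not a contradiction. Interior vertices of $P_2$ between $w$ and $w'$ may well have links to other components, so no $P^2$-separator is forced. The paper resolves this case by invoking (P3) to find a \emph{third} pseudo-ear $Q^{xy}$ with $x\in C^{-}(w_1)$ and $y\in C^{+}(w_3)$, adds all three, and checks that the credits from two $\tfrac{5}{4}$-leaves plus two $\tfrac{3}{4}$-bridges exactly cover the deficit.

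For $k\ge 4$: your claim that (P3) on an interior $P_i$ ``yields a pseudo-ear with endpoints in $V(P_{i-1})$ and $V(P_{i+1})$'' is false in general. Property~(P3) only guarantees a path in $G\setminus V(P_i)$, which may pass through other $P_j$'s of $C$ and hence is not a pseudo-ear. The paper's argument here is substantially more delicate: it builds successive pseudo-ears $Q^{u_1w_1},Q^{x_2w_2},\ldots$, and when these fail to reach far enough it identifies an explicit sub-path $Q$ that would be a $P^2$-separator \emph{unless} one side has $\opt\le 3$; in that small-opt case it computes an optimal solution on that side, \emph{removes} a bridge $\ell_j$ from $S$ (this is where the set $F$ in the statement is actually used), and recombines. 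Your sketch mentions neither the link-removal mechanism nor the small-opt sub-instance, both of which are essential for $k\in\{4,5\}$.
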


\begin{proof}
Throughout this proof we assume that all previous lemmas have been applied exhaustively. In particular, there are no complex components containing blocks and all lonely leaf vertices of non-trivial complex components are expensive. 
    Since each vertex of $C$ is lonely, \Cref{invariant:degree-lonely-vertex} implies that $C$ is a path.
    Let $C = P_1, e_1, P_2, e_2, ..., $ $e_{k-1}, P_k$, where $P_i \in \mathcal{P}$ for $i \in [k]$, $P_i \neq P_j$ for $i, j \in [k],i \neq j$ and let $u_i, v_i$ be the endpoints of $P_i$ in the order they appear when traversing $C$ from $P_1$ to $P_k$.
    Observe that $e_i = v_i u_{i+1}$ is the link connecting $P_i$ and $P_{i+1}$ in $C$.
    For some $x \in V(C)$ let $C^+(x)$ be the sub-path of $C$ starting at $x$ and ending at $v_k$ and let $C^-(x)$ be the sub-path of $C$ starting at $u_1$ and ending at $x$.

    Let $w_1$ be the furthest vertex (w.r.t.\ number of edges) from $u_1$ on $C$ such that there is a pseudo-ear $Q^{u_1 w_1}$ starting in $u_1$ and ending at some vertex $w_1$.
    If $w_1 \in \bigcup_{3 \leq i \leq k} V(P_i)$, then $Q^{u_1 w_1}$ satisfies the Invariants~\ref{invariant:credit}-\ref{invariant:block-size} by \Cref{lem:bridge-covering:adding-pseudo-ears}.

    Hence, we assume $w_1 \in V(P_2)$, since there must be a link $u_1 a$ for some $a \notin V(P_1)$ and $G$ does not contain path separators (as $G$ is structured).
    Analogously, we have the same for $v_k$ as for $u_1$: Let $w_k$ be the furthest vertex (w.r.t. \ number of edges) from $v_k$ on $C$ such that there is a pseudo-ear $Q^{v_k w_k}$ starting in $v_k$ and ending at some vertex $w_k$.
    Similarly, $w_k \in  \bigcup_{1 \leq i \leq k-2} V(P_i)$ would lead to the desired set $R$, and hence we assume that $w_k \in V(P_{k-1})$ (since $v_k$ must have a link to some other path, as $G$ is structured).
    We make a case distinction on $k$.

    \textbf{Case 1: $k \leq 3$.}
    We argue that if $k \leq 3$, then we can find the desired set $R$ as in the lemma statement.
    Note that $k \geq 2$ since $C$ is nontrivial.
    
    \textbf{Case 1.1: $k =2$.} 
    If there is some pseudo-ear $Q^{u_1 w_1}$ such that $w_1 = v_2$ (the endpoint of $C$), then $R= Q^{u_1 w_1}$ is the desired set of edges.
    Otherwise, we set $R = Q^{u_1 w_1} \cup Q^{v_2 w_2}$ (note that in this case $Q^{u_1 w_1}$ and $Q^{v_2 w_2}$ must be disjoint).
    In any of the two cases, we claim that $H' = H \cup R$ satisfies the Invariants~\ref{invariant:credit}-\ref{invariant:block-size}.
    To see this, first observe that $C$ is now in a 2EC block (or component) in $H'$ and hence $H'$ satisfies the Invariants~\ref{invariant:degree-lonely-vertex} and~\ref{invariant:block-size}.
    Let $S' = S \cup R $.
    Observe that $|S'| = |S| + |R|$ and $\credits(H') \leq \credits(H) - (|R| + 2) - 2 \cdot \frac{5}{4} - \frac{3}{4} < \credits(H) - |R|$, since all components incident to $R$ are merged to a single component and the credits of the two leaves of $C$ and the bridge of $C \cap S$ is not needed anymore. 
    Hence, $\credits(H') + |S'| \leq \credits(H) + |S|$.
    This finishes case $k=2$.

    \textbf{Case 1.2: $k = 3$.}
    First, assume that (i) there is one pseudo-ear starting at $u_1$ and ending at $v_3$ or (ii) there are two disjoint pseudo-ears $Q^{u_1 w_1}$ and $Q^{v_3 w_3}$ such that $C^+(w_1) \cap C^-(w_3) \neq \emptyset$.
    In case (i), we set $R = Q^{u_1 v_3}$ and in case (ii) we set $R = Q^{u_1 w_1} \cup Q^{v_3 w_3}$.
    Similar to before, it is easy to see that in both cases $H \cup R$ satisfies Invariants~\ref{invariant:credit}-\ref{invariant:block-size}.
    Hence, assume that the pseudo-ear $Q^{u_1 v_3}$ does not exist and for all pseudo-ears $Q^{u_1 w_1}$ and $Q^{v_3 w_3}$ we have that $C^+(w_1) \cap C^-(w_3) = \emptyset$.
    Since $w_1 \in V(P_2)$ and $w_3 \in V(P_2)$ and $P_2$ can not be a separator (since $G$ is structured), there must be a pseudo-ear $Q^{xy}$ starting at some vertex $ x \in C^-(w_1)$ and ending at some vertex $y \in C^+(w_3)$ that is disjoint from $Q^{u_1 w_1}$ and $Q^{v_3 w_3}$.
    Let $R = Q^{u_1 w_1} \cup Q^{x y} \cup Q^{v_3 w_3}$.
    We claim that $H' = H \cup R$ satisfies the Invariants~\ref{invariant:credit}-\ref{invariant:block-size}.
    To see this, first observe that $C$ is now in a 2EC block (or component) in $H'$ and hence $H'$ satisfies the Invariants~\ref{invariant:degree-lonely-vertex} and~\ref{invariant:block-size}.
    Let $S' = S \cup R $.
    Observe that $|S'| = |S| + |R|$ and $\credits(H') \leq \credits(H) - (|R| + 2) - 2 \cdot \frac{5}{4} - 2 \cdot \frac{3}{4} = \credits(H) - |R|$, since all components of $R$ are merged to a single component and the credits of the two leaves of $C$ and the two bridges of $C$ are not needed anymore. 
    Hence, $\credits(H') + |S'| \leq \credits(H) + |S|$.
    This finishes case $k=3$.

    \textbf{Case 2: $4 \leq k \leq 5$.}
    Let $x_2 \in C^-(w_1)$ and $w_2 \in V(C)$ be the closest vertex to $v_k$ on $C$ such that there is a pseudo-ear $Q^{x_2 w_2}$ starting at $x_2$ and ending at $w_2$.
    Similar to before, it is easy to see that if $y_2 \in \bigcup_{4 \leq i \leq k} V(P_i)$, that then $R = Q^{u_1 w_1} \cup Q^{x_2 w_2}$ satisfies Invariants~\ref{invariant:credit}-\ref{invariant:block-size}.
    Hence, assume $w_2 \in V(P_3)$ (since $P_2$ is not a separator, as $G$ is structured).
    Similarly, let $x_{k-1} \in C^+(w_k)$ and $w_{k-1} \in V(C)$ be the closest vertex to $u_1$ on $C$ such that there is a pseudo-ear $Q^{x_{k-1} w_{k-1}}$ starting at $x_{k-1}$ and ending at $w_{k-1}$.
    Similar to before, it is easy to see that if $w_{k-1} \in \bigcup_{1 \leq i \leq k-3} V(P_i)$, that then $R = Q^{v_k w_k} \cup Q^{x_{k-1} w_{k-1}}$ satisfies Invariants~\ref{invariant:credit}-\ref{invariant:block-size}.
    Hence, assume $w_{k-1} \in V(P_{k-2})$ (since $P_{k-1}$ is not a separator, as $G$ is structured).

    \textbf{Case 2.1: $k =4$.}
    Observe that $w_2 \in V(P_3)$ and $w_{k-1} = w_3 \in V(P_2)$.
    If $C^+(w_3) \cap C^-(w_1) \neq \emptyset$, then one can observe that $R = Q^{u_1 w_1} \cup Q^{x_3 w_3} \cup Q^{v_4 w_4}$ is the desired set of links.
    It is easy to see that in $H' = H \cup R$ all bridges of $C$ (from $H$) are now contained in a 2-edge-connected component (or block).
    Similar to the previous case, it is easy to see that $H'$ satisfies the credit invariant.
    Analogously, one can show that if $C^+(w_4) \cap C^-(w_2) \neq \emptyset$, then $R = Q^{u_1 w_1} \cup Q^{x_2 w_2} \cup Q^{v_4 w_4}$ is the desired set of links.
    Hence, assume that $C^+(w_3) \cap C^-(w_1) = \emptyset$ and $C^+(w_4) \cap C^-(w_2) = \emptyset$.
    Let $Q_2$ be the sub-path of $P_2$ from $w_3$ to $v_2$, let $Q_3$ be the sub-path of $P_3$ from $w_2$ to $u_3$ and let $Q = Q_2 \cup Q_3 + v_2 u_3$.
    Observe that $Q$ is a separator of $G$ (otherwise there would be a different pseudo-ear).
    Let $(V_1, V_2)$ be a partition of $V(G) \setminus V(Q)$ such that there are no edges between $V_1$ and $V_2$ and $V(P_1) \subseteq V_1$ and $V(P_k) \subseteq V_2$.
    Since $G$ is structured and hence $\opt(G) \geq 12$, we have that either $\opt(G \setminus V_2) | Q) \geq 4$ or $\opt(G \setminus V_1) | Q) \geq 4$.
    If both are $ \geq 4$, then $Q$ is a $P^2$-separator, a contradiction.
    
    Hence, without loss of generality, assume that $\opt(G \setminus V_2) | Q) \geq 4$.
    We have that $3 \geq \opt(G \setminus V_1)| Q) \geq 2$.
    If $\opt(G \setminus V_1)| Q) = 2$, then we claim that $H' = (V, E(\mathcal{P}) \cup S')$ satisfies the statement of the lemma, where $S' = (S - v_{k-1} u_k) \cup Q^{u_1 w_1} \cup Q^{x_2 w_2} \cup \OPT(G \setminus V_1)| Q)$.
    First, observe that all edges of $C$ are now contained in a 2EC component (or block) in $H'$, since we are in the case that $C^+(w_3) \cap C^-(w_1) = \emptyset$ and $C^+(w_4) \cap C^-(w_2) = \emptyset$.
    Therefore, Invariants~\ref{invariant:degree-lonely-vertex} and~\ref{invariant:block-size} are easy to see.
    Observe that $|S'| \leq |S| + |Q^{u_1 w_1}| + |Q^{x_2 w_2}| + 1$ and $\credits(H') = \credits(H) - (|Q^{u_1 w_1}| + |Q^{x_2 w_2}| -3) - 2 \cdot \frac{5}{4} - 3 \cdot \frac{3}{4} \leq \credits(H) - (|Q^{u_1 w_1}| + |Q^{x_2 w_2}|) - 1$ and hence \Cref{invariant:credit} follows.
    Finally, if $\opt(G \setminus V_1)| Q) = 3$, note that there is an edge $e \in \delta(Q)$ such that $e \in E(\mathcal{P})$ and $e$ is incident to an endpoint of $Q$.
    Hence, $Q$ is a $P^2$-separator, a contradiction.
    This finishes case $k=4$.

    \textbf{Case 2.2: $k =5$.}
    Observe that $w_2 \in V(P_3)$ and $w_{k-1} = w_4 \in V(P_3)$.
    If $C^-(w_2) \cap C^+(w_4) \neq \emptyset$, similar to before we have that $R = Q^{u_1 w_1} \cup Q^{u_2 w_2} \cup Q^{v_4 w_4} \cup Q^{v_5 w_5}$ is the desired set of edges and $H \cup R$ satisfies the Invariants~\ref{invariant:credit}-\ref{invariant:block-size}.
    Hence, assume $C^-(w_2) \cap C^+(w_4) = \emptyset$.

    Let $x_3 \in C^-(w_2)$ such that there is a pseudo-ear $Q^{x_3 w_3}$ to some vertex $w_3 V^+(w_4)$ such that $w_3$ is closest to $v_5=v_k$.
    Similar to before, one can easily show that if $C^-(w_3) \cap C^+(w_5) \neq \emptyset$, then $R = Q^{u_1 w_2} \cup Q^{x_2 w_2} \cup Q^{x_3 w_3} \cup Q^{v_5 w_5}$ is the desired set of edges such that $H \cup R$ satisfies all invariants and has fewer bridges than $H$.
    Hence, assume that $C^-(w_3) \cap C^+(w_5) = \emptyset$.
    Similar to the case $k=4$ let $Q_3$ be the sub-path of $P_3$ from $w_4$ to $v_3$ and let $Q_4$ be the sub-path of $P_4$ from $w_3$ to $u_4$.
    Define $Q = Q_3 \cup Q_4 \cup v_3 u_4$.
    Observe that $Q$ is a separator, as otherwise this would contradict our assumptions on the pseudo-ears.
    Again, similar to the previous case, let $(V_1, V_2)$ be a partition of $V(G) \setminus V(Q)$ such that there are no edges between $V_1$ and $V_2$ and $V(P_1) \subseteq V_1$ and $V(P_k) \subseteq V_2$.
    Since $G$ is structured and hence $\opt(G) \geq 12$, we have that either $\opt(G \setminus V_2) | Q) \geq 4$ or $\opt(G \setminus V_1) | Q) \geq 4$.
    Assume w.l.o.g.\ that $\opt(G \setminus V_2)| Q) \geq \opt(G \setminus V_1)| Q)$.
    Since $G$ is structured, we have that $\opt(G \setminus V_2)| Q) \geq 4$.
    With the same arguments as used in case $k=4$ it directly follows that if either (i) $\opt(G \setminus V_1)| Q) \geq 4$ or (ii) $\opt(G \setminus V_1)| Q) = 3$ this yields a contradiction, since then $Q$ is a $P^2$-separator. In case (ii), $Q$ is a $P^2$-separator since there is an edge $e \in \delta(Q)$ such that $e \in E(\mathcal{P})$ and $e$ is incident to an endpoint of $Q$.

    Hence, let us assume that $\opt(G \setminus V_1)| Q) = 2$.
    Setting $S' = (S - v_4 u_5) \cup Q^{u_1 w_2} \cup Q^{x_2 w_2} \cup Q^{x_3 w_3} \cup \opt(G \setminus V_1)| Q)$, we claim that $H = (V, E(\mathcal{P}) \cup S')$ has fewer bridges than $H$ and satisfies all invariants.
    To see this, first observe that all edges in $C$ are now contained in a 2EC components (or block) in $H'$, as $C^-(w_3) \cap C^+(w_5) = \emptyset$.
    Hence, Invariants~\ref{invariant:degree-lonely-vertex} and~\ref{invariant:block-size} are maintained for $H'$.
    Furthermore, we have that $|S'| = |Q^{u_1 w_2}| + |Q^{x_2 w_2}| + |Q^{x_3 w_3}| + 1$ and $\credits(H') \leq \credits(H) - (|Q^{u_1 w_2}| + |Q^{x_2 w_2}| + |Q^{x_3 w_3}| - 4) - 2 \cdot \frac{5}{4} - 4 \cdot \frac{3}{4} \leq \credits(H) - (|Q^{u_1 w_2}| + |Q^{x_2 w_2}| + |Q^{x_3 w_3}|) -1$.
    Therefore also \Cref{invariant:credit} is maintained for $H'$.
    This finishes case $k=5$.

    \textbf{Case 3: $k \geq 6$.}
    Recall that $w_2 \in V(P_3)$ and $w_{k-1} \in V(P_{k-2})$.
    Let $x_3 \in C^-(w_2)$ and $w_3 \in V(C)$ be the closest vertex to $v_k$ on $C$ such that there is a pseudo-ear $Q^{x_3 w_3}$ starting at $x_3$ and ending at $w_3$.
    Similar to before, it is easy to see that if $w_3 \in \bigcup_{5 \leq i \leq k} V(P_i)$, that then $R = Q^{u_1 w_1} \cup Q^{x_2 w_2} \cup Q^{x_3 w_3}$ satisfies Invariants~\ref{invariant:credit}-\ref{invariant:block-size}.
    Hence, assume $w_3 \in V(P_4)$ (since $P_2$ is not a separator, as $G$ is structured).
    Similarly, let $x_{k-2} \in C^+(w_{k-1})$ and $w_{k-2} \in V(C)$ be the closest vertex to $u_1$ on $C$ such that there is a pseudo-ear $Q^{x_{k-2} w_{k-2}}$ starting at $x_{k-2}$ and ending at $w_{k-2}$.
    Similar to before, it is easy to see that if $w_{k-2} \in \bigcup_{1 \leq i \leq k-4} V(P_i)$, that then $R = Q^{v_k w_k} \cup Q^{x_{k-1} w_{k-1}} \cup Q^{x_{k-2} w_{k-2}}$ satisfies Invariants~\ref{invariant:credit}-\ref{invariant:block-size}.
    Hence, assume $w_{k-2} \in V(P_{k-3})$ (since $P_{k-1}$ is not a separator, as $G$ is structured).
    
    Let $Q_3$ be the sub-path of $P_3$ from $w_2$ to $v_3$, let $Q_4$ be the sub-path of $P_4$ from $w_3$ to $u_4$ and let $Q = Q_3 \cup Q_4 + v_3 u_4$.
    Observe that $Q$ is a separator by the choice of $w_3$.
    Let $(V_1, V_2)$ be a partition of $G \setminus V(Q)$ such that there are no edges between $V_1$ and $V_2$ in $G$ and $u_1 \in V_1$ and $v_k \in V_2$.
    Observe that $\opt(G \setminus V_1)| Q) \geq 3$ and $\opt(G \setminus V_2)| Q) \geq 3$, as $V_1$ and $V_2$ each contain at least 5 distinct endpoints of paths in $\mathcal{P}$.
    If $\opt(G \setminus V_1)| Q) = 3$, we set $S' = (S \setminus \{ v_1 u_2, v_2 u_3 \}) \cup \opt(G \setminus V_1)| Q)$ and observe that $H' = (V, E(\mathcal{P}) \cup S')$ has fewer bridges than $H$.
    Furthermore, \Cref{invariant:degree-lonely-vertex} is satisfied and also \Cref{invariant:block-size} is satisfied since the newly created block contains at least the two paths $P_1$ and $P_2$.
    To see that \Cref{invariant:credit} is satisfied, observe that $|S'| = |S| +1$ and $\credits(H') \leq \credits(H) + 1 - \frac{5}{4} - 2 \cdot \frac{3}{4} \leq \credits(H) - 1$.
    The case $\opt(G \setminus V_2)| Q) = 3$ is analogous.

    In the remaining case we have that $\opt(G \setminus V_1)| Q) \geq 4$ and $\opt(G \setminus V_2)| Q) \geq 4$. But then $Q$ is a $P^2$-separator, a contradiction.
    This finishes the proof of the lemma.    
\end{proof}

Finally, in the remaining case, we have that each complex component is a trivial component.
The following lemma deals with this case.

\begin{restatable}[]{lemma}{lembridgecoveringnolargecomponentonlypathcomplex}
\label{lem:bridge-covering:no-large-component:only-trivial}
    If every complex component of $H$ is a trivial component, then there are two sets of links $R \subseteq L \setminus S$ and $F \subseteq S$ such that $(H \cup R) \setminus F$ has less bridges than $H$ and 
    $(H \cup R) \setminus F$ satisfies the Invariants~\ref{invariant:credit}-\ref{invariant:block-size}.
    Moreover, $R$ and $F$ can be found in polynomial time.
\end{restatable}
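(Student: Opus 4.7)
Proof plan:

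I plan to pick a trivial complex component $P$ with endpoints $u,v$ (which exists by assumption) and show that we can find a pseudo-ear $Q^{uv}$ from $u$ to $v$ in $G^C$ (where $C=P$) whose addition to $H$ produces a graph $H' = H \cup Q^{uv}$ with strictly fewer bridges that still satisfies Invariants~\ref{invariant:credit}-\ref{invariant:block-size}. A pseudo-ear exists because $G$ is 2EC, $P$ is not a path-separator (Property P3), and both endpoints have a link to some other path (Property P2). Any such pseudo-ear $Q^{uv}$ creates a cycle $P \cup Q^{uv}$, so all edges of $P$ (which were bridges in $H$ since $P$ is a trivial complex component, and $|E(P)| \geq 1$ by Property P7) become non-bridges. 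Thus bridges strictly decrease.

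The delicate part is ensuring \Cref{invariant:block-size}: the newly formed 2EC block $B$ must satisfy $V(P) \cup V(P') \subseteq V(B)$ for some path $P' \neq P$. I will split into cases according to which kinds of components $Q^{uv}$ traverses in $G^C$. Case~1: if $Q^{uv}$ passes through any 2EC component $C'$, then since $C'$ was already 2EC, it is absorbed entirely into $B$; by the invariant applied to $H$, $C'$ contained two complete paths, so $B$ does as well. Case~2: if $Q^{uv}$ enters and exits some other trivial complex component $P'$ at its two endpoints, then $V(P') \subseteq V(B)$ and the invariant holds with $P$ and $P'$. In both cases, \Cref{invariant:degree-lonely-vertex} follows from the fact that absorbed vertices are no longer lonely and no new simple block is created (the new block contains at least two distinct paths).

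The main obstacle is the remaining case, where $Q^{uv}$ only visits other trivial complex components and enters or exits at least one of them at an interior vertex. I will argue this case cannot occur for structured $G$: suppose for contradiction that it does, and let $P'$ be such a trivial component with entry/exit at interior vertices $a,b$. Because we have exhaustively applied \Cref{lem:bridge-covering:links-between-leafs}, there are no links between lonely leaf vertices of distinct complex components, which constrains how $u,v$ can be connected outside $V(P)$. Using this together with the fact that the interior parts of $P'$ outside $[a,b]$ are hanging off $B$ with their lonely leaf endpoints, I will extract a small separator from $V(P) \cup V(P')$ (possibly enriched with one connecting link) and verify that it has the size and edge conditions required to be a $P^2$-separator or $C^2$-separator, contradicting Properties P4 or P5. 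This is directly analogous to the separator-based contradictions used in the proofs of Lemmas~\ref{lem:bridge-covering:non-expensive}, \ref{lem:bridge-covering:large-block}, and \ref{lem:bridge-covering:no-large-component}.

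Finally, I will verify \Cref{invariant:credit} by a credit calculation: $P$ contributes at least $3$ credits (one complex-component credit plus at least $1$ per lonely leaf endpoint), and each additional complex component traversed by $Q^{uv}$ releases a further $\geq 3$ credits, while each 2EC component traversed releases $\geq \tfrac 32$. Writing $k$ for the number of other components on $Q^{uv}$, the pseudo-ear adds exactly $k+1$ links to $S$, and the new component absorbs all traversed components, requiring at most $2$ credits for the resulting 2EC component (plus at most $1$ for a new non-simple block if $H'$ is still complex). The crude bound $3 + \tfrac{3}{2}k - 2 \geq k+1$ gives the credit invariant in Case~1, and an analogous calculation in Case~2 (using that fully absorbed trivial components release $3$ credits each) closes the argument. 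Polynomial-time computation of $Q^{uv}$ follows from \Cref{lem:bridge-coverin:pseudo-ear:find-polytime}.
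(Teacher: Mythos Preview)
Your Cases~1 and~2 are reasonable and broadly in line with the paper's approach, but the handling of the ``remaining case'' is a genuine gap.

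You plan to show that the remaining case (the pseudo-ear only visits trivial complex components, and each is entered or exited at an interior vertex) cannot occur, by exhibiting a $P^2$- or $C^2$-separator inside $V(P)\cup V(P')$. This does not work. The definition of a $P^2$-separator requires a link $w_1 v_2\in L$ between \emph{endpoints} of $P_1$ and $P_2$, and a $C^2$-separator needs two such endpoint--endpoint links. But you yourself invoke the exhaustive application of \Cref{lem:bridge-covering:links-between-leafs}: the endpoints of $P$ and of $P'$ are lonely leaves of distinct complex components, so there are \emph{no} links between them. Hence neither a $P^2$- nor a $C^2$-separator with the pair $(P,P')$ can be formed, and your contradiction never materialises. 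More concretely, this remaining case does occur: whenever every component of $H$ is a trivial path and leaf--leaf links have been eliminated, the first and last links of \emph{any} pseudo-ear $Q^{uv}$ necessarily land on interior vertices of the adjacent trivial components, so the block created by $H\cup Q^{uv}$ contains only $V(P)$ and proper subpaths of the visited $P'$'s, violating \Cref{invariant:block-size}.

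The paper does not try to rule this case out. Instead, after adding $Q^{u_1 v_1}$ and forming the block $b$, it picks the visited path $P'$ with endpoints $u_2,v_2$ (which are still lonely leaves in $H'$) and adds \emph{further} pseudo-ears: either a single pseudo-ear $Q^{u_2 v_2}$, or, if that is absent, two disjoint pseudo-ears $Q^{u_2 y}$ and $Q^{v_2 z}$ whose witness paths each contain $b$ (their existence uses that $G$ is structured). This second round absorbs all of $V(P')$ into the block, restoring \Cref{invariant:block-size}; the credit now balances because the four released leaf credits of $u_1,v_1,u_2,v_2$ (plus the two complex-component credits) pay for the extra pseudo-ears. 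Your credit bookkeeping also needs adjustment: a trivial complex component traversed only at interior vertices releases just its component credit of $1$, not $3$, since its two lonely leaves remain and still require their leaf credits.
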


\begin{proof}
    Let $P \in \mathcal{P}$ be the path such that $C = P$ and let $u_1, v_1$ be the endpoints of $P$.
    Note that since $P$ can not be a separator (as $G$ is structured), it directly follows that there must be a pseudo-ear $Q^{u_1 v_2}$ from $u_1$ to $v_1$ visiting some other component $C' \neq C$:
    Note that there must be some edge $u_1 z$, where $z \in V(G) \setminus V(C)$, since $G$ is structured.
    Since $C$ is a single path, $C$ can not be a separator and hence there must be a pseudo-ear $Q^{u_1 v_1}$ starting at $u_1$, ending at $v_1$ and containing $z$.
    
    If $C'$ is a 2EC component, we claim that $R = Q^{u_1 v_1}$ is the desired set of links to obtain $H' = H \cup R$.
    Let $S' = S \cup R$.
    First, observe that all bridges of $C$ are no bridges anymore in $H'$.
    Since $C'$ is a 2EC component and $H$ satisfies \Cref{invariant:block-size}, also $H'$ satisfies \Cref{invariant:block-size}.
    For \Cref{invariant:credit}, observe that $|S'| = |S| + |R|$ and $\credits(H') \leq \credits(H) - (|R| -1) - 2 \leq \credits(H) - |R| -1$.
    Finally, it is also easy to see that \Cref{invariant:degree-lonely-vertex} follows.

    If $C'$ is also a single path, say $P' \in \mathcal{P}$ with $P' \neq P$, then \Cref{invariant:block-size} may not be satisfied.
    To deal with this, let $u_2$ and $v_2$ be the endpoints of $P'$ and let $b$ be the block in $H'$ containing all vertices of $C$.
    
    If there is a pseudo ear $Q^{u_2 v_2}$ we set $H'' = H' \cup Q^{u_2 v_2}$ and $S'' = S \cup Q^{u_2 v_2}$.
    We now prove that $H''$ has fewer bridges and satisfies all invariants.
    Clearly, $H''$ has fewer bridges than $H$ and also satisfies \Cref{invariant:block-size} and \Cref{invariant:degree-lonely-vertex}.
    It remains to prove \Cref{invariant:credit}.
    The number of components in $H''$ compared to $H$ has been reduced by at least $|Q^{u_1 v_1}| + |Q^{u_2 v_2}| -2$. Furthermore, the vertices $u_1, v_1, u_2, v_2$ are no leaves anymore, but are contained in the block $b$, which itself needs a credit of 1.
    We now have that $|S''| = |S| \cup |Q^{u_1 v_1}| \cup |Q^{u_2 v_2}|$ and $\credits(H') \leq \credits(H) - (|Q^{u_1 v_1}| + |Q^{u_2 v_2}|-2) +1 -4$ and hence we have that $\cost(H') \leq \cost(H)$, implying that \Cref{invariant:credit} holds.
    
    Otherwise, $Q^{u_2 v_2}$ does not exist. 
    Since $G$ is structured, there must be a pseudo ear $Q^{u_2 y}$ starting in $u_2$ and ending in some other vertex $y$ such that its witness path contains $b$.
    Similarly, there must be a pseudo ear $Q^{v_2 z}$ starting in $v_2$ and ending in some other vertex $z$ such that its witness path contains $b$. Note that since $Q^{u_2 v_2}$ does not exist in this case, these pseudo-ears are disjoint (except for potentially $y$ and $z$).  
    Let $H'' = H' \cup Q^{u_2 y} \cup Q^{v_2 z}$ and set $S'' = S \cup Q^{u_1 v_1} \cup Q^{u_2 y} \cup Q^{v_2 z}$.
    We now prove that $H''$ has fewer bridges and satisfies all invariants.
    Clearly, $H''$ has fewer bridges than $H$ and also satisfies \Cref{invariant:block-size} and \Cref{invariant:degree-lonely-vertex}.
    It remains to prove \Cref{invariant:credit}.
    The number of components in $H''$ compared to $H$ has been reduced by at least $|Q^{u_1 v_1}| + |Q^{u_2 y}| + |Q^{v_2 z}| -3$. Furthermore, the vertices $u_1, v_1, u_2, v_2$ are no leaves anymore, but are contained in the block $b$, which itself needs a credit of 1.
    We now have that $|S''| = |S| \cup |Q^{u_1 v_1}| \cup |Q^{u_2 y}| \cup |Q^{v_2 z}|$ and $\credits(H') \leq \credits(H) - (|Q^{u_1 v_1}| + |Q^{u_2 y}| + |Q^{v_2 z}| -3) +1 -4$ and hence we have that $\cost(H') \leq \cost(H)$, implying that \Cref{invariant:credit} holds.
\end{proof}

Observe that the above lemmas imply \Cref{lem:bridge-covering:main-iterative}, which finishes the bridge-covering step. 
\Cref{alg:bridge-covering} summarizes these steps.

\section{Gluing Algorithm}
\label{sec:gluing}

In this section we prove the following lemma.

\lemgluingmain*

Throughout this section we assume that we are given a solution $S \subseteq L$ such that $H=(V, E(\mcP) \cup S)$ contains no bridges and satisfies Invariants~\ref{invariant:credit}-\ref{invariant:block-size} as in the above lemma statement.
Our goal is to iteratively decrease the number of components of $H$ so as to maintain Invariants~\ref{invariant:credit}-\ref{invariant:block-size} and without introducing bridges.
Formally, we prove the following lemma.
Iteratively applying it clearly proves \Cref{lem:gluing:main}.

\begin{restatable}[]{lemma}{lemgluingmainiterative}
    \label{lem:gluing:main-iterative}
    Let $G$ be some structured instance of \PAP and $H=(V, E(\mcP) \cup S)$ be a solution satisfying the Invariants~\ref{invariant:credit}-\ref{invariant:block-size} that contains no bridges.
    In polynomial-time we can compute a solution $H'=(V, E(\mcP) \cup S')$ satisfying the Invariants~\ref{invariant:credit}-\ref{invariant:block-size} such that $H'$ contains no bridges and $H'$ has fewer connected components than $H$.
\end{restatable}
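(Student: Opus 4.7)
The plan is to realize the strategy sketched in Section~\ref{sec:overview:gluing}. Let $\tilde{G}$ denote the component graph obtained from $G$ by contracting each 2-edge-connected component of $H$ to a single vertex. Since every vertex of $G$ lies on a path of $\mcP$ and entire paths are contained in single components of $H$, the edges of $\tilde{G}$ are exactly links of $L \setminus S$. A cycle $K$ in $\tilde{G}$ through $k \geq 2$ component-vertices corresponds to $k$ links of $L$ that, when added to $H$, merge these $k$ components into a single 2-edge-connected component without introducing any bridge (using Fact~\ref{fact:structured:2EC-after-contraction} and Fact~\ref{fact:structured:2EC-contraction+2EC}).

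The core accounting step is the following. Call a component of $H$ \emph{heavy} if it has credit $2$ (large, or containing a degenerate path) and \emph{light} if it has credit $\frac{3}{2}$ (small and without degenerate paths). Augmenting $H$ with a cycle $K$ on $k$ components adds $k$ links and yields one merged component whose credit is at most $2$, so the change in $\cost$ is at most $k + 2 - \sum_i c_i$, where $c_i$ is the credit of the $i$-th merged component. For $k \geq 4$ this is already $\leq 0$. For $k = 3$ we need the cycle to save an extra $\frac{1}{2}$ unit; for $k = 2$ (a two-cycle in the multigraph $\tilde{G}$) we need to save an extra $1$. This is precisely what good cycles deliver: either (i) the cycle passes through two heavy components, yielding $\sum c_i \geq 2 + 2 + \frac{3}{2} = \frac{11}{2}$ for $k=3$ and change $\leq -\frac{1}{2}$, or (ii) the cycle passes through a light small component $C$ that can be \emph{shortcut}, meaning that the two links of $E(C) \cap L$ can be replaced by a Hamiltonian $V(C)$-path $Q$ with $|Q \cap L| \leq 1$ whose endvertices are the neighbors of $C$ on $K$, saving one link at the cost of dropping the $\frac{3}{2}$ credit of $C$. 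A careful case analysis of $k \in \{2,3\}$ combined with the shortcut operation shows that augmenting with a good cycle preserves Invariants~\ref{invariant:credit}-\ref{invariant:block-size} and strictly decreases the number of connected components; Invariants~\ref{invariant:degree-lonely-vertex} and~\ref{invariant:block-size} follow trivially because the new merged component is 2-edge-connected and contains at least two paths of $\mcP$ by Invariant~\ref{invariant:block-size} applied to the components being merged.

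Detecting a good cycle in polynomial time is routine: enumerate candidate pairs/triples of heavy components and search for cycles of length $\leq$ a suitable constant in $\tilde{G}$ through them, and enumerate every light component $C$ together with every ordered pair of neighbors in $\tilde{G}$ and test whether $G[V(C)]$ admits a Hamiltonian path meeting the shortcut requirements (which is decidable in polynomial time because a small component has $|V(C)| = O(|V(P_1) \cup V(P_2)|)$ bounded and the Hamiltonian path structure is explicit).

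The main obstacle, and the step that requires the structured property most heavily, is to prove that a good cycle always exists whenever $H$ has more than one component. The plan is to argue by contradiction. Suppose no good cycle exists. Using the fact that $G$ is 2-edge-connected, $\tilde{G}$ is 2-edge-connected as well, so cycles exist; the failure of goodness forces every cycle either to hit at most one heavy component and no shortcuttable light component, or to be a short cycle through only light components that cannot be shortcut. By examining the interfaces of a minimal such cycle through a light small component $C$ formed by two paths $P_1, P_2 \in \mcP$, we identify a candidate cycle separator of the form $P_1 \cup P_2 \cup \{e_1, e_2\}$ with $e_1, e_2 \in L$; the absence of a shortcut together with the assumption that no other heavy component completes a good cycle forces the two sides to have optimum augmentation cost $\geq 3$ each (using property~(P0) $\opt(G) > 3/\varepsilon$ and Invariant~\ref{invariant:block-size} on the components on each side), which exhibits a $C^2$-separator, contradicting property~(P5). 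The degenerate light-only configurations where no heavy component exists are ruled out by property~(P6) (few degenerate paths) combined with the fact that any small component must contain a degenerate path or is shortcuttable—and the non-shortcuttable case collapses to a forbidden $P^2$- or $C^2$-separator. Putting the augmentation analysis together with the existence proof yields Lemma~\ref{lem:gluing:main-iterative}, and iterating gives Lemma~\ref{lem:gluing:main}.
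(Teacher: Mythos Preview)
Your overall architecture matches the paper: define good cycles in the component graph $\tilde G$ (two heavy components, or a shortcuttable light small component), show that augmenting along a good cycle preserves the invariants, show good cycles can be found in polynomial time, and show one always exists. The accounting for Invariant~\ref{invariant:credit} is essentially the paper's Lemma~\ref{lem:gluing:adding-good-cycles}.

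There are, however, two genuine gaps.

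\textbf{Finding.} You search for cycles of length at most a constant through heavy components, and you claim that small components have bounded size. Neither is correct. Good cycles may be arbitrarily long, and a small component consists of two paths of $\mcP$ that can each have arbitrarily many vertices. The paper instead tests, for every pair of heavy components, whether two edge-disjoint paths connect them in $\tilde G$ (no length bound), and for every light small component $C$ and every eligible pair of endpoints $x,y$, whether there is any $x$--$y$ path in $\tilde G\setminus c$. The Hamiltonian path inside $G[V(C)]$ with one link is trivial to exhibit from the explicit structure (two paths joined by two links), not because $|V(C)|$ is bounded.

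\textbf{Existence.} Your invocation of property~(P6) is a red herring. Property~(P6) bounds the \emph{number} of degenerate paths; it plays no role here. The actual mechanism is that a \emph{light} small component is, by the credit rule~(E), precisely one that contains \emph{no} degenerate path. The paper uses this directly: since $C=(P_1,P_2)$ is light, some endpoint of $P_1$ and some endpoint of $P_2$ each have a link leaving $V(C)$. A short case analysis then produces endpoints $x\in V^*(P_1)$, $y\in V^*(P_2)$ with links outside $C$ and a Hamiltonian $x$--$y$ path in $G[V(C)]$ using one link; the one subcase where the needed diagonal link is missing is dispatched by showing $C$ would be a strongly $\tfrac32$-contractible subgraph, contradicting~(P1). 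Only after the shortcut is established does~(P5) enter: if the outside neighbors $C_x,C_y$ lie in different components of $\tilde G\setminus c$, then $C$ is a $C^2$-separator (each side contains a component with two paths of $\mcP$ by Invariant~\ref{invariant:block-size}, giving $\opt\geq 3$ on both sides). Your sketch conflates these steps, misses the~(P1) step entirely, and the ``minimal cycle / light-only configuration'' route is not how the contradiction is reached.
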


The rest of this section is devoted to proving \Cref{lem:gluing:main-iterative}.
Note that since each component of $H$ is 2-edge-connected, from now on only credit according to Rule~(E) is assigned, as we do not have bridges and complex components anymore.
Each (2EC) component is either small or large.
Recall that a 2EC component of $H$ is small if it consists of precisely 2 paths $P_1, P_2 \in \mathcal{P}$ such that the endpoints of $P_1$ are connected to the endpoints of $P_2$ using exactly two links. 
A 2EC component is large if it is not small, i.e., it contains at least three links.
Furthermore, according to the credit scheme, each large component and each small component containing a degenerate path receives a credit of 2 and each small component that does not contain a degenerate path receives a credit of $\frac{3}{2}$.
In the context here, a small component consisting of two paths $P_1, P_2 \in \mcP$ with endpoints $u_1, v_1$ and $u_2, v_2$, respectively, contains a degenerate path if either 
$N(u_1), N(v_1) \subseteq V(P_1) \cup V(P_2)$ or 
$N(u_2), N(v_2) \subseteq V(P_1) \cup V(P_2)$.

We define the \emph{component graph} $\Tilde{G}$ by contracting each component of $H$ to a single vertex.
$\Tilde{G}$ might not be a simple graph and can contain parallel edges and self-loops.
For convenience, we remove self-loops but keep parallel edges.
To distinguish between vertices of $\Tilde{G}$ and $G$, we call the vertices of $\Tilde{G}$ \emph{nodes} instead of vertices.
Note that $\Tilde{G}$ is a 2-edge connected multi-graph and not necessarily 2-node connected.
We identify edges of $\Tilde{G}$ with edges in $G$.
Hence, if we add edges from $\Tilde{G}$ to $H$, we add the corresponding edges from $G$ to $H$.

We will add certain cycles of $\Tilde{G}$ to $H$ in order to decrease the number of components. 
Note that adding a cycle $K$ of $\Tilde{G}$ to $H$ decreases the number of components by at least one. 
Furthermore, it is easy to see that the newly created component in $H \cup K$ that contains the nodes/components incident to $C$ is 2-edge connected. 
Additionally, since $H$ satisfied \Cref{invariant:degree-lonely-vertex} and \Cref{invariant:block-size}, these invariants are clearly also satisfied for $H \cup K$.
Hence, whenever we add a cycle $K$ of $\Tilde{G}$ to $H$, it only remains to prove \Cref{invariant:credit}.
In the following, we show which cycles can be safely added to $H$ such that \Cref{invariant:credit} is satisfied. Furthermore, we show how we can find such cycles and prove that they always exist (in structured graphs).

Let $K$ be a cycle in $\Tilde{G}$.
We say a small component $C_S$ is \emph{shortcut} w.r.t.\ $K$ if in $G[V(C_S)]$ there exists a Hamiltonian path from $u$ to $v$ containing exactly one link, where $u$ and $v$ are the vertices incident to $K$ when $C_S$ is expanded.
Note that if $C_S$ is shortcut by $K$, then $u$ and $v$ must be endpoints of the two paths $P_1$ and $P_2$ of $C_S$, respectively.

Let $K$ be a cycle in $\Tilde{G}$ that shortcuts some small component $C_S$.
We say that we \emph{augment} $H$ along $K$ by adding all links from $K$ to $H$ and for each small component $C_S$ that is shortcut by $K$, we replace all links in $H[V(S)]$ by the Hamiltonian path from $u$ to $v$ containing exactly one link, where $u$ and $v$ are the vertices incident to $K$ when $C_S$ is expanded. We denote this operation by $H \Dot{\Delta} K$.

We now define \emph{good cycles}. 
In \Cref{fig:gluing} there are two examples of good cycles.

\begin{definition}
    \label{def:good-cycle}
    A cycle $K$ in $\Tilde{G}$ is called \emph{good} if one of the following conditions is satisfied:
    \begin{itemize}
        \item $K$ is incident to at least $2$ components having a credit of $2$, i.e., that are large or small and contain a degenerate path.
        \item $K$ is incident to a small component $C_S$ that is shortcut. 
    \end{itemize}
\end{definition}

First, we show that if $K$ is a good cycle of $\Tilde{G}$, then $H \Dot{\Delta} K$ is a solution as desired in \Cref{lem:gluing:main-iterative}.

\begin{restatable}[]{lemma}{lemgluingaddinggoodcycles}
\label{lem:gluing:adding-good-cycles}
    Let $K$ be a good cycle in the component graph $\Tilde{G}$ and let $H' = H \Dot{\Delta} K$.
    Then $H'$ satisfies the Invariants~\ref{invariant:credit}-\ref{invariant:block-size}, contains no bridges and $H'$ has fewer connected components than $H$.
\end{restatable}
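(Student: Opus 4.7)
The plan is to verify the structural conclusions first and then handle the credit invariant via a short case split on the two clauses defining goodness. Since $K$ is a simple cycle in $\Tilde{G}$ of length $k := |K| \geq 2$, it touches $k$ distinct components of $H$, and the augment operation $H \Dot{\Delta} K$ merges these into a single component, so the component count drops by $k - 1 \geq 1$. In each shortcut small component $C_S$, the two existing links of $H[V(C_S)]$ are replaced by a Hamiltonian $u$-$v$ path using exactly one link, and together with the two $K$-edges incident to $C_S$ this yields a cycle spanning $V(C_S)$. Combining this with the 2-edge-connectedness of the non-shortcut touched components and \Cref{fact:structured:2EC-contraction+2EC} applied along $K$, the merged component is 2-edge-connected, so $H'$ has no bridges. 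Vertices that were lonely inside a touched component now lie inside the new 2EC block, preserving \Cref{invariant:degree-lonely-vertex}; and the new block spans at least two distinct paths of $\mcP$ (any two touched components contribute at least two paths), preserving \Cref{invariant:block-size}.

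For the credit calculation, let $s \geq 0$ be the number of small components that $K$ shortcuts. Each shortcut removes exactly one net link (two existing links replaced by one), and $K$ contributes $k$ new links, so $|S'| = |S| + k - s$. Each of the $k$ touched components carried credit at least $\tfrac{3}{2}$ by rule~(E) and disappears upon merging, while the merged component contains at least $3k - s \geq 2k \geq 4$ links, hence is large and receives credit exactly $2$ by rule~(E); no credits outside these components change. Thus
\[
\cost(H') - \cost(H) \;=\; (k - s) + 2 - \sum_{C} c(C),
\]
where the sum is over the $k$ touched components and $c(C) \in \{\tfrac{3}{2}, 2\}$.

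Now the two goodness clauses close the argument. If $K$ is incident to at least two credit-$2$ components, then $\sum_{C} c(C) \geq 2 \cdot 2 + (k - 2) \cdot \tfrac{3}{2} = 1 + \tfrac{3k}{2}$, so $\cost(H') - \cost(H) \leq 1 - s - \tfrac{k}{2} \leq 0$ for $k \geq 2$. If instead $K$ shortcuts some small component, then $s \geq 1$, and using only the generic bound $\sum_{C} c(C) \geq \tfrac{3k}{2}$ yields $\cost(H') - \cost(H) \leq 2 - s - \tfrac{k}{2} \leq 0$ whenever $k \geq 2$ and $s \geq 1$. Either way $\cost(H') \leq \cost(H)$, so $H'$ inherits \Cref{invariant:credit} from $H$. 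The only delicate point I expect to need to verify is the shortcut bookkeeping: that the Hamiltonian-path replacement really saves exactly one net link, and that no credit is incurred or removed outside the $k$ touched components. Once this is pinned down, the above computation closes the lemma.
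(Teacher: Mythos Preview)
Your proposal is correct and follows essentially the same approach as the paper's proof: both verify the structural claims (bridgelessness, fewer components, Invariants~\ref{invariant:degree-lonely-vertex} and~\ref{invariant:block-size}) routinely, and then handle \Cref{invariant:credit} by splitting on the two goodness clauses and comparing the link count change against the released component credits. The only cosmetic difference is that the paper parametrises by $k_\ell$ and $k_s$ (the numbers of credit-$2$ and credit-$\tfrac{3}{2}$ components on $K$) and in the shortcut case only uses that at least one link is saved, whereas you parametrise by $k$ and the number $s$ of shortcuts and track the full saving of $s$ links; your bookkeeping is slightly tighter but the argument is the same.
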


\begin{proof}
    We first show that $H'$ does not contain bridges.
    By the definition of $H \Dot{\Delta} K$, all components incident to $K$ are now in a single 2EC-component.
    This is obvious if $K$ does not contain small components that are shortcut.
    If $K$ also contains small components that are shortcut, then the statement follows since for each such small component we have a Hamiltonian path from $u$ to $v$, where $u$ and $v$ are the vertices incident to $K$ when $C_S$ is expanded.
    Hence, $H'$ can not have any bridges.
    Furthermore, it is easy to see that the number of components in $H'$ is at most the number of components in $H$ minus 1 as $K$ has length at least 2.

    It remains to show that $H'$ satisfies \Cref{invariant:credit}.
    Let $k_\ell$ be the number of components incident to $K$ that have a credit of 2 and let $k_s$ be the number of (small) components incident to $K$ that have a credit of $\frac 32$.
    First, assume that $K$ contains at least 2 components having a credit of 2, i.e., $k_\ell \geq 2$.
    Then we have that $\credits(H') \leq \credits(H) - 2 k_\ell - \frac{3}{2} k_s + 2 \leq \credits(H) - 2 (k_\ell -1) - k_s$, since the credits of all components incident to $K$ are not needed anymore, except for the newly created 2EC component, which receives a credit of~2.
    Let $S' = H' \cap L$. 
    Since $|S'| \leq |S| + k_\ell + k_s$ and $k_\ell \geq 2$, we have $\credits(H') + |S'| \leq \credits(H) - 2 (k_\ell -1) - k_s + |S| + k_\ell + k_s \leq \credits(H) + |S|$.
    
    Next, assume that $K$ contains at least one small component $C_S$ that is shortcut.
    Then we have that $\credits(H') \leq \credits(H) - 2 k_\ell - \frac{3}{2} k_s + 2$, since the credits of all components incident to $K$ are not needed anymore, except for the newly created 2EC component, which receives a credit of~2.
    Furthermore, since $H[C_S]$ contains precisely two links and $H'[C_S]$ only contains one link, we have that $|S'| \leq |S| + k_\ell + k_s - 1$.
    Now, since $k_\ell + k_s \geq 2$, we have $\credits(H') + |S'| \leq \credits(H) - 2 k_\ell - \frac{3}{2} k_s + |S| + k_\ell + k_s -1 \leq \credits(H) + |S| - k_\ell - \frac{1}{2} \cdot k_s + 1 \leq \credits(H) + |S|$.
\end{proof}

Second, we show that in polynomial time we can find a good cycle in $\Tilde{G}$, if one exists.

\begin{restatable}[]{lemma}{lemgluingfindinggoodcycles}
\label{lem:gluing:finding-good-cycles}
    In polynomial time we can find a good cycle in the component graph $\Tilde{G}$, if one exists.
\end{restatable}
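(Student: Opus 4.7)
The plan is to test the two clauses of \Cref{def:good-cycle} separately and to reduce each to a standard connectivity question on the multigraph $\Tilde{G}$ that is solvable in polynomial time.

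For the first clause, I would enumerate every pair $(C_1,C_2)$ of components of $H$ that carry credit $2$ (i.e., those that are large, or small and contain a degenerate path). For each such pair I would test whether $\Tilde{G}$ admits a simple cycle containing both nodes $C_1$ and $C_2$, which is equivalent to the existence of two internally node-disjoint paths from $C_1$ to $C_2$ in $\Tilde{G}$. This can be decided in polynomial time by a standard vertex-capacitated max-flow computation; two such paths can be recovered from the optimal flow and concatenated into the desired simple cycle (parallel edges are allowed, which handles the length-$2$ case). Since the number of candidate pairs is at most quadratic, the total running time is polynomial.

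For the second clause, I would enumerate every small component $C_S$ of $H$, with paths $P_1,P_2\in\mathcal{P}$ and endpoint set $V^*(\mathcal{P})\cap V(C_S)=\{u_1,v_1,u_2,v_2\}$, together with every admissible shortcut pattern. A shortcut pattern is a pair $(u,v)$ with $u\in\{u_1,v_1\}$ and $v\in\{u_2,v_2\}$ such that the link joining the other endpoint of $P_1$ to the other endpoint of $P_2$ belongs to $L$; there are at most four such patterns per $C_S$, each identified by a direct lookup in $L$. By definition of shortcutting, a good cycle realizing the pattern $(u,v)$ is exactly a simple cycle in $\Tilde{G}$ passing through node $C_S$ whose two edges incident to $C_S$ are, in $G$, incident to $u$ and to $v$ respectively. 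I would detect such a cycle by a standard node-splitting construction: replace the node $C_S$ in $\Tilde{G}$ by two new nodes $C_S^{u}$ and $C_S^{v}$, attach each edge formerly incident to $C_S$ to $C_S^{u}$ or $C_S^{v}$ according to whether its $G$-endpoint in $V(C_S)$ is $u$ or $v$, and drop every edge whose $G$-endpoint is neither. In the resulting multigraph, check in polynomial time whether some simple path joins $C_S^{u}$ and $C_S^{v}$; any such path lifts to the desired simple cycle through $C_S$ in $\Tilde{G}$.

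If either search produces a cycle, we return it; otherwise we report that no good cycle exists. The main subtlety, and the step I would verify carefully, is that the objects returned are genuine \emph{simple} cycles in $\Tilde{G}$, which is required so that \Cref{lem:gluing:adding-good-cycles} applies; this is taken care of by working with internally node-disjoint paths in the first case and simple paths in the node-split multigraph in the second case. Correctness in the reverse direction (every good cycle of type 2 is found by some choice of $(C_S,u,v)$) follows directly from the definition of shortcutting, which forces the two edges of the cycle incident to $C_S$ to touch $u$ and $v$ in $G$ and forces the required single-link link to exist in $L$.
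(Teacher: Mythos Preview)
Your proposal is correct and follows essentially the same approach as the paper: enumerate candidate pairs (for clause one) and candidate shortcut patterns (for clause two) and reduce each to a polynomial-time connectivity test in $\Tilde{G}$. Your first-clause test (two internally node-disjoint $C_1$--$C_2$ paths) is in fact slightly more careful than the paper's stated test (two edge-disjoint paths), since it directly certifies a \emph{simple} cycle through both credit-$2$ nodes, which is what \Cref{lem:gluing:adding-good-cycles} needs.
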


\begin{proof}
    We simply check both possibilities for a good cycle.
    In order to find a good cycle in $\Tilde{G}$ that contains two components receiving a credit of 2, we do the following:
    We check every pair of such components $C_1$, $C_2$ and check if there are two edge-disjoint paths from $C_1$ to $C_2$.
    If there is a good cycle in $\Tilde{G}$ that contains 2 large components, then this algorithm will find such a good cycle in polynomial time.

    In the other case, we want to find a good cycle $K$ that contains a small component $C$ that is shortcut w.r.t\ $K$.
    Recall that if $C$ is shortcut w.r.t.\ $K$, then the edges incident to $C$ must be endpoints $u$ and $v$ of the two paths $P_1$ and $P_2$ of $C$, such that there is a Hamiltonian path from $u$ to $v$ in $H[C]$ with exactly one link.
    Therefore, for each small component $C$ and each such pair $u$ and $v$ as above, we simply check if there is a path from $u$ to $v$ in the component graph of $G' = G[V(G) \setminus (V(H) \setminus \{u, v \})]$, i.e., in $\Tilde{G'}$.
    If there is a good cycle $K$ that shortcuts some small component $C$, then this algorithm will find a good cycle in polynomial time.
\end{proof}

Now, it remains to show that if $G$ is structured, then there is always a good cycle in the component graph $\Tilde{G}$.

\begin{restatable}[]{lemma}{lemgluingexistencegoodcycles}
\label{lem:gluing:existence-good-cycles}
    If $G$ is a structured instance, then there is a good cycle in the component graph $\Tilde{G}$.
\end{restatable}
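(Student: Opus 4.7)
The plan is to case-analyze the component graph $\tilde G$, using the fact that it is a 2-edge-connected multigraph (since $G$ is 2-edge-connected and $H$ partitions into 2EC components, by \Cref{fact:structured:2EC-after-contraction}). If $H$ has only one component there is nothing to prove, so assume $\tilde G$ has at least two nodes. First I would dispose of the easy case: if $\tilde G$ contains at least two nodes of credit $2$ (large components, or small components containing a degenerate path), then by Menger's theorem applied to the 2-edge-connected multigraph $\tilde G$ there is a simple cycle through both, and this is a good cycle of the first kind in \Cref{def:good-cycle}.

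From now on, assume $\tilde G$ has at most one credit-$2$ node. Since $\tilde G$ has at least two nodes, there is a small non-degenerate component $C_S$ with paths $P_1, P_2 \in \mathcal P$, endpoints $u_1, v_1$ and $u_2, v_2$, and internal links $u_1u_2, v_1 v_2 \in L(H)$ (the diagonal-link case is symmetric). Because $G$ is structured, (P1) forbids any $(7/4, 10, 1)$-contractible subgraph, and in particular any $(1,2,1)$-contractible subgraph; combined with the remark following \Cref{def:structured-graph} that $u'v' \in L$ whenever $N(u'), N(v') \subseteq V(P') \cup V(P)$ in a non-contractible graph, the non-degeneracy of $P_1$ and $P_2$ forces each $P_i$ to have at least one endpoint with a neighbor outside $V(C_S)$. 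After relabeling we may choose external links $e_1 = u_1 z_1$ and $e_2 = u_2 z_2$ with $z_1, z_2 \notin V(C_S)$; let $A, B$ be the components of $H$ containing $z_1, z_2$. My goal is to produce a cycle $K$ in $\tilde G$ entering $C_S$ at $u_1$, following a path in $\tilde G - C_S$ from $A$ to $B$, and leaving $C_S$ at $u_2$; such $K$ shortcuts $C_S$ via the Hamiltonian path $u_1 \to_{P_1} v_1 \to v_2 \to_{P_2} u_2$, which uses only the link $v_1 v_2 \in L(H)$.

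The hard part will be showing that $K$ (or an analogous shortcut cycle) always exists. If $\tilde G - C_S$ is connected, $K$ is immediate. Otherwise $C_S$ is a cut node of $\tilde G$, and one must carry out a case analysis on how the edges leaving $C_S$ distribute among the sides of $\tilde G - C_S$ and the four ports $u_1, v_1, u_2, v_2$. By 2-edge-connectivity each side has at least two edges to $C_S$; whenever two such edges enter at the corresponding ports $\{u_1, u_2\}$ or $\{v_1, v_2\}$ one immediately obtains a shortcut cycle (via $v_1 v_2$ or $u_1 u_2$, both in $L(H) \subseteq L$). In every remaining configuration I expect to extract a vertex set $Z$ consisting of a suitable subpath of $P_1 \cup P_2$, possibly augmented by the link $u_1 u_2$ or $v_1 v_2$, which forms a path-separator, a $P^2$-separator, or a $C^2$-separator of $G$; using (P0) to guarantee $\opt \geq 4$ on both sides and the existence of a link in $E(\mathcal P) \cap \delta(Z)$ where needed, this contradicts (P3), (P4), or (P5). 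This case analysis, analogous in spirit to the separator arguments in \Cref{sec:bridge-covering}, is the main obstacle, and I would structure it by the number of endpoints of $C_S$ that carry external links and by whether the two sides' ports coincide on a path, are diagonal, or cross correctly.
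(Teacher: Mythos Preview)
Your overall strategy matches the paper's: split on whether there are two credit-$2$ components (easy via 2-edge-connectivity of $\tilde G$), and otherwise find a small non-degenerate component $C_S$ that can be shortcut. However, the ``after relabeling'' step hides a genuine gap. Once you have fixed the parallel-link labeling $u_1u_2, v_1v_2 \in L(H)$, the only symmetries preserving it are $P_1 \leftrightarrow P_2$ and $(u_1,u_2) \leftrightarrow (v_1,v_2)$. Non-degeneracy gives one external endpoint on each $P_i$, but if these form a diagonal pair---say $u_1$ and $v_2$ external while $v_1$ and $u_2$ are not---no admissible relabeling makes both $u_1$ and $u_2$ external. A Hamiltonian shortcut from $u_1$ to $v_2$ would require the cross-link $v_1u_2 \in L$, which need not exist a priori. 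The paper handles this case explicitly: if that cross-link is absent, then the two non-external endpoints $v_1,u_2$ have all neighbors inside $V(C_S)$ and are non-adjacent, so every feasible solution uses at least two links inside $G[V(C_S)]$; since $|L(C_S)|=2$, this makes $C_S$ strongly $(1,2,1)$-contractible, contradicting (P1). You invoke the contractibility argument only to derive non-degeneracy, but it is needed a second time here.

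Once the correct shortcut pair $(x,y)$ with external links $e_x,e_y$ is fixed, the separator step is also much simpler than your proposed port-by-port case analysis. If the components $C_x,C_y$ reached by $e_x,e_y$ lie in the same connected component of $\tilde G \setminus c$, you have your good cycle immediately. If not, then $V(C_S)$ separates $G$ into parts $V_1,V_2$ with $C_x \subseteq V_1$ and $C_y \subseteq V_2$; each $V_i$ contains a full 2EC component of $H$ and hence (by \Cref{invariant:block-size}) at least two paths of $\mathcal P$, giving $\opt((G\setminus V_i)|C_S)\ge 3$ on both sides. Thus $C_S$ is a $C^2$-separator, contradicting (P5). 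No analysis of how other edges distribute over the four ports is required.
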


\begin{proof}
    If $H$ only contains components that have a credit of $2$, then since $G$ is 2EC, there must be a cycle of length two in the component graph.
    It is easy to see that this is a good cycle.

    Otherwise, if $H$ contains a small component $C$ that has a credit of $\frac{3}{2}$, i.e., it does not contain a degenerate path, then we show that there must be a good cycle incident to $C$ in the component graph $\Tilde{G}$.
    In particular, we show that there is a cycle $K$ in $\Tilde{G}$ such that $C$ is shortcut.
    Since $C$ is small, it consists of two distinct paths $P_1, P_2 \in \mathcal{P}$ with endpoints $u_1, v_1$ and $u_2, v_2$, respectively, such that there are links $u_1 u_2$ and $v_1 v_2$ in $H$.
    Furthermore, since $C$ does not contain a degenerate path, we have that there exists some $z_1 \in N(u_1) \cup N(v_1)$ and some $z_2 \in N(u_2) \cup N(v_2)$ with $z_1, z_2 \in V \setminus (V(P_1) \cup V(P_2))$.
    
    First, we show that there must be two vertices $x, y \in \{u_1, u_2, v_1, v_2 \}$ such that there is a Hamiltonian path between $x$ and $y$ in $G[V(C)]$ with exactly one link and $x$ and $y$ each have a link to $V(G) \setminus V(C)$. 
    If $u_1$ and $u_2$ have a link to $V(G) \setminus V(C)$ then it is easy to see that the desired Hamiltonian path is $P_1 v_1 v_2 P_2$. 
    Similarly, if $v_1$ and $v_2$ have a link to $V(G) \setminus V(C)$, then it is easy to see that the desired Hamiltonian path is $P_1 u_1 u_2 P_2$.
    Hence, assume that neither case occurs.
    Since $z_1$ and $z_2$ exist, we know that either both $v_1$ and $u_2$ have a link to $V(G) \setminus V(C)$ or both $u_1$ and $v_2$ have a link to $V(G) \setminus V(C)$.
    Since the two cases are symmetric, let us assume that $v_1$ and $u_2$ have a link to $V(G) \setminus V(C)$.
    If $u_1 v_2 \in L$, then $P_1 u_1v_2 P_2$ is the desired Hamiltonian Path. 
    Hence, let us assume that $u_1 v_2 \notin L$.
    But then $N(u_1) \subseteq V(P_1) \cup V(P_2) - v_2$ and $N(v_2) \subseteq V(P_1) \cup V(P_2) - u_1$.
    Now it can be easily observed that any feasible solution must pick at least two links from $G[V(P_1) \cup V(P_2)]$ and hence $C$ is a contractible cycle, a contradiction to the fact that $G$ is structured.

    Hence, let $x$ and $y$ be the desired vertices as above and let $e_x$ and $e_y$ be the links incident to $x$ and $y$, respectively, which go to $V(G) \setminus V(C)$.
    We claim that there is a good cycle using these two edges.
    Let $C_x$ and $C_y$ be the components distinct from $C$ to which $e_x$ and $e_y$ are incident to.
    If $C_x = C_y$, then we directly have a good cycle, since $C$ can be shortcut using $e_x$ and $e_y$.
    Else, if $C_x \neq C_y$, we claim that there is a simple path $Q^{xy}$ from $C_x$ to $C_y$ in $\Tilde{G} \setminus c$, where $c$ is the vertex in $\Tilde{G}$ corresponding to the component $C$ of $H$.
    If not, then this means that $C$ is a separator in $G$ such that there is a partition $(V_1, V_2)$ of $V(G) \setminus V(C)$ such that there are no edges between $V_1$ and $V_2$.
    However, since in $H$ each connected component contains at least two distinct paths $P, P' \in \mathcal{P}$, we have that $\opt((G \setminus V_i)|C) \geq 3$.
    But then $C$ is a $C^2$-separator, a contradiction to the fact that $G$ is structured.
    Hence, $K = e_x Q^{xy} e_y$ is a good cycle in $\Tilde{G}$.
    Moreover, note that all computational steps in this proof can be done in polynomial time.
\end{proof}

The algorithm for the gluing step is quite easy: As long as we have more than one component in our current solution $H$, we simply compute a good cycle $C$ and augment $H$ along $C$.
It is straightforward to see that the above three lemmas imply \Cref{lem:gluing:main-iterative}, which in turn proves \Cref{lem:gluing:main}.

\section{Preliminaries on FAP and Proof of \Cref{thm:FAP:main} }
\label{sec:preliminaries:FAP-PAP}
In this section we prove \Cref{thm:FAP:main} using \Cref{thm:PAP:main} and results from~\cite{grandoni2022breaching}.
We first give an overview on how the result for FAP is obtained in~\cite{grandoni2022breaching}. 
With these insights, we show how we can use the algorithm from \Cref{thm:PAP:main} to obtain \Cref{thm:FAP:main}.

\paragraph*{Overview of result for FAP from~\cite{grandoni2022breaching}.}
Consider some instance $G=(V, F \cup L)$ of FAP, where $n_{\rm comp}$ is the number of connected components of $(V, F)$.
Note that $\opt(G) \geq n_{\rm comp}$.
The algorithm in~\cite{grandoni2022breaching} is the first algorithm that achieves a better-than-2 approximation for FAP.
We briefly highlight the key ideas of their algorithm, which actually consists of two algorithms: One, which performs well if the number of connected components $n_{\rm comp}$ in the forest is small (and hence $(V, F)$ is \emph{almost} a tree) and another, which performs well if $n_{\rm comp}$ is large (and hence $(V, F)$ is \emph{far away} from being a tree).
Therefore, in the first case, it makes sense to augment the forest $F$ with some cleverly chosen links $S' \subseteq L$ such that $(V, F \cup S')$ is a spanning tree of~$G$.
We then use an approximation algorithm for TAP on the resulting instance.
Hence, one can obtain the following lemma, proven in~\cite{grandoni2022breaching}. 
This is an oversimplification of their actual algorithm, which needs several key ingredients to work as described.

\begin{lemma}[Lemma 3.1 in~\cite{grandoni2022breaching}]
\label{lem:tap}
Let $\eps > 0$ be a constant. Given an instance $(V, F \cup L)$ of FAP, we can compute in polynomial time a solution of size at most $n_{\rm comp} + (1 + \ln(2 - \frac{n_{\rm comp}}{\opt} + \eps) \cdot \opt$.
\end{lemma}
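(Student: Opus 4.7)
The plan is to guess the value of $\opt$ (trying all $O(|L|)$ possibilities and returning the cheapest feasible solution produced) and, for each guess, run a two-phase algorithm: first, augment the forest $F$ with a carefully chosen set $S' \subseteq L$ so that $(V, F \cup S')$ is a spanning tree; second, feed the resulting tree instance into the $(1.5 + \varepsilon)$-approximation for weighted TAP of Traub and Zenklusen~\cite{traub2022A}. The bound we aim at, namely $n_{\rm comp} + (1 + \ln(2 - \tfrac{n_{\rm comp}}{\opt} + \varepsilon)) \cdot \opt$, naturally splits into an ``additive'' $n_{\rm comp}$ term that pays for turning the forest into a tree and a ``multiplicative'' $(1 + \ln(\cdot))\cdot\opt$ term that pays for TAP; the logarithm will come from a greedy set-cover style charging.

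For Phase~1, I would set up a covering LP in which each link $\ell \in L$ can be fractionally selected and the constraints express that, in the auxiliary multigraph $G_{F} = (V, F)|{\rm comp}(F)$ obtained by contracting each component of $F$ to a single vertex, the chosen (fractional) links must connect all components. The optimum FAP solution $\OPT$ restricted to $G_F$ is a $2$-edge-connected spanning subgraph of $G_F$; in particular, it is \emph{twice-connected} and hence its LP value is at most $\opt$. A key combinatorial observation: if we only insist on connecting the component graph (instead of 2-edge-connecting it), the LP value is at most $(1 - \tfrac{n_{\rm comp}}{2 \opt} + o(1))\cdot\opt$, because roughly half of the links of $\OPT$ already suffice to witness connectivity (a spanning tree inside a $2$EC subgraph has at most half its edges). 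Using the classical greedy algorithm for min-cost set cover / Steiner forest style rounding, we can then pick an integral set of links of cardinality at most $(1 + \ln(2 - \tfrac{n_{\rm comp}}{\opt} + \varepsilon))\cdot\opt$ that, together with $F$, forms a connected spanning subgraph $T'$. Since $T'$ has $n - 1$ edges and $F$ contributes $n - n_{\rm comp}$ of them, at most $(1 + \ln(2 - \tfrac{n_{\rm comp}}{\opt} + \varepsilon))\cdot\opt$ links have been spent so far, and they produce a spanning tree after discarding cycles.

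For Phase~2, I would turn the tree $T = (V, F \cup S')$, where $S'$ is the link set from Phase~1, into an instance of weighted TAP. Each link in $L \setminus S'$ is kept at weight $1$, while the links picked in $S'$ are given weight~$0$ (they have already been paid for and can be freely reused). The optimum TAP value on this instance is at most $\opt$, because $\OPT \cup S'$ is a feasible TAP solution and the weighted cost of $\OPT \cup S'$ is exactly $|\OPT \setminus S'| \le \opt$. Running Traub-Zenklusen yields a set $S'' \subseteq L$ such that $(V, F \cup S' \cup S'')$ is $2$-edge-connected with $|S''| \le (1 + \varepsilon)\cdot\opt$ (using that we can make $\varepsilon$ absorb the $1.5/1$ factor by first handling a constant-size optimum separately and amortising, as in~\cite{grandoni2022breaching}). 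Combining the two phases, the total number of links used is at most $(1 + \ln(2 - \tfrac{n_{\rm comp}}{\opt} + \varepsilon))\cdot\opt + n_{\rm comp}$, which is the claimed bound once we absorb lower-order slack into $\varepsilon$.

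The main obstacle will be turning the rough ``half the links suffice to connect $G_F$'' intuition into a clean LP/integrality argument with the precise logarithmic constant $2 - n_{\rm comp}/\opt + \varepsilon$. In particular, one needs to argue that a random (or greedy) spanning subgraph of $\OPT|_{G_F}$ of appropriately chosen fractional density is still a feasible connectivity certificate for the component graph, so that standard Wolsey-type analysis of greedy covering pushes through. A second subtlety is that Phase~2 reuses the Phase~1 links at zero weight, so one has to verify that the TAP approximation is not spoiled when many edges have weight~$0$; this is fine because Traub-Zenklusen works for arbitrary non-negative weights, but it does require double-checking that the guarantee ``$(1.5+\varepsilon)\cdot\opt_{\text{TAP}}$'' translates correctly into the additive bound used here. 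Once these two points are in place, the rest of the proof is a direct computation combining the Phase~1 and Phase~2 guarantees.
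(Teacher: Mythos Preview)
The paper does not provide its own proof of this lemma; it is cited from~\cite{grandoni2022breaching} with only the one-line sketch that one completes $F$ to a spanning tree with carefully chosen links and then runs a TAP approximation (the paper even warns that ``this is an oversimplification of their actual algorithm, which needs several key ingredients to work as described''). So there is no in-paper argument to compare against.

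Your proposal, though, has a genuine gap: the two phases are doing the wrong jobs and the arithmetic does not close. Connecting the $n_{\rm comp}$ components into a tree takes exactly $n_{\rm comp}-1$ links---no LP rounding or set-cover analysis is needed for this, and it cannot cost $(1+\ln(2-\tfrac{n_{\rm comp}}{\opt}))\cdot\opt$ links as you write. Conversely, the claim that Traub--Zenklusen's $(1.5+\varepsilon)$-approximation can be ``amortised'' down to $(1+\varepsilon)\cdot\opt$ by handling small optima separately is simply false. And your final tally, $(1+\ln(2-\tfrac{n_{\rm comp}}{\opt}))\cdot\opt$ from Phase~1 plus $(1+\varepsilon)\cdot\opt$ from Phase~2, does not equal the claimed $n_{\rm comp}+(1+\ln(2-\tfrac{n_{\rm comp}}{\opt})+\varepsilon)\cdot\opt$. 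In the actual argument the $n_{\rm comp}$ term pays for tree completion and the logarithmic factor arises in the TAP phase; the technical content---which your sketch does not supply---is to choose the tree-completion links (via an LP) so that the residual TAP instance admits a $(1+\ln(2-\tfrac{n_{\rm comp}}{\opt})+\varepsilon)$-approximation relative to the \emph{FAP} optimum $\opt$. Your ``half the links of $\OPT$ suffice for connectivity'' observation is the right structural fact, but it belongs in the TAP-phase analysis, not in a greedy covering for Phase~1.
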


Note that this algorithm performs well if $\opt$ is sufficiently larger than $ n_{\rm comp}$.
Hence, in light of \Cref{lem:tap} it makes sense to design an approximation algorithm that performs well if $\opt$ is close to $n_{\rm comp}$.
Therefore, the authors of~\cite{grandoni2022breaching} introduced a $(\rho, K)$-approximation algorithm, which is an algorithm that computes a solution of cost at most $\rho \cdot \opt + K \cdot (\opt - n_{\rm comp})$.
Together with \Cref{lem:tap}, any $(\rho, K)$ approximation for FAP with $\rho < 2$ and constant $K$ implies a better-than-2 approximation algorithm for FAP.

Instead of designing a $(\rho, K)$-approximation algorithm directly for FAP (which might be quite difficult), they give a reduction to PAP in the following sense.

\begin{lemma}[Lemma 3.4 in~\cite{grandoni2022breaching}]
\label{lem:red-FAP-PAP}
Given a polynomial time $(\rho, K)$-approximation algorithm for PAP for some constant $\rho \geq 1$ and $K \geq 0$, there is a polynomial-time $(\rho, K + 2(\rho - 1))$-approximation algorithm for FAP.
\end{lemma}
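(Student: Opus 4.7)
The plan is to reduce FAP to PAP by decomposing each tree of the forest into vertex-disjoint paths, running the given PAP algorithm on the resulting instance, and translating the output back. Given a FAP instance $G = (V, F \cup L)$ with $n_{\rm comp}$ trees, I first compute for each tree $T_i$ a minimum vertex-disjoint path cover $\mcP_i$, setting $\mcP := \bigcup_i \mcP_i$, $k := |\mcP|$, and $F' := F \setminus E(\mcP)$. A tree covered by $k_i$ paths contributes $k_i - 1$ tree edges to $F'$, so $|F'| = k - n_{\rm comp}$. The PAP instance is $G_P := (V, E(\mcP) \cup L')$ with $L' := L \cup F'$, and the final FAP output is $S_{\rm FAP,alg} := S_{\rm PAP,alg} \cap L$; its feasibility follows from $F \cup S_{\rm FAP,alg} \supseteq E(\mcP) \cup S_{\rm PAP,alg}$, which is 2-edge-connected.

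Two inequalities drive the analysis. Concatenating any FAP solution with $F'$ yields a PAP solution, so $\opt_{\rm PAP} \leq \opt_{\rm FAP} + (k - n_{\rm comp})$. The leaf and isolated-vertex degree requirements in any FAP solution give $2 \opt_{\rm FAP} \geq L_F + 2 I_F$, and combined with the handshake bound $k_i - 1 \leq \ell_i - 2$ for any tree with $\ell_i \geq 2$ leaves, this yields $k - n_{\rm comp} \leq 2(\opt_{\rm FAP} - n_{\rm comp})$. If $S_{\rm PAP,alg}$ can be post-processed to contain $F'$ without increasing its cardinality, then $|S_{\rm FAP,alg}| = |S_{\rm PAP,alg}| - (k - n_{\rm comp})$. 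Plugging in $|S_{\rm PAP,alg}| \leq \rho \opt_{\rm PAP} + K(\opt_{\rm PAP} - k)$, using $\opt_{\rm PAP} - k \leq \opt_{\rm FAP} - n_{\rm comp}$, and applying $k - n_{\rm comp} \leq 2(\opt_{\rm FAP} - n_{\rm comp})$ rearranges to the target bound $\rho \opt_{\rm FAP} + \bigl(K + 2(\rho-1)\bigr)(\opt_{\rm FAP} - n_{\rm comp})$.

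The main obstacle is the post-processing step ensuring $F' \subseteq S_{\rm PAP,alg}$ at no cardinality cost. A naive single-edge swap, in which one adds $e \in F' \setminus S_{\rm PAP,alg}$ and deletes a link along a cycle containing $e$ in $E(\mcP) \cup S_{\rm PAP,alg} \cup \{e\}$, can fail on minimal PAP solutions: on $K_{1,3}$ with leaf-to-leaf links, the optimum PAP solution excludes the sole $F'$-edge and admits no cardinality-preserving swap at all. The resolution must invoke the cycle and ear structure of the 2-edge-connected graph $E(\mcP) \cup S_{\rm PAP,alg}$ more globally, exchanging $F'$-edges with bundles of links along shared cycles while preserving 2-edge-connectivity in aggregate. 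A possible alternative is to refine the decomposition itself, for example by vertex-splitting at high-degree internal vertices of $F$ so that $\opt_{\rm PAP}$ is close to $\opt_{\rm FAP}$ from the outset; this sidesteps the exchange but demands careful bookkeeping when translating the PAP solution back to the original vertex set of $G$.
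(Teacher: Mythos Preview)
This lemma is not proved in the present paper; it is quoted verbatim from Grandoni, Jabal Ameli, and Traub and used as a black box. There is therefore no in-paper argument to compare your attempt against, and your write-up should be read as an independent reconstruction of a result from a different paper.

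Your framework is the natural one and the supporting inequalities are correct: with a minimum path cover one indeed has $|F'| = k - n_{\rm comp}$, the bound $\opt_{\rm PAP} \le \opt_{\rm FAP} + (k - n_{\rm comp})$ follows by padding a FAP optimum with $F'$, and the leaf count gives $k - n_{\rm comp} \le 2(\opt_{\rm FAP} - n_{\rm comp})$. Your arithmetic also shows precisely why the subtraction of $k - n_{\rm comp}$ is indispensable: without it the argument yields only a $(\rho, K+2\rho)$ guarantee rather than $(\rho, K+2(\rho-1))$.

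The gap you flag is genuine and is not closed by what you wrote. Your $K_{1,3}$ example is valid: with $P_1 = a\,c\,b$, $P_2 = \{d\}$, $F' = \{cd\}$, and $L = \{ab, ad, bd\}$, every optimal PAP solution is $\{ad, bd\}$ and excludes $cd$, so no cardinality-preserving single swap exists. Your two proposed rescues are both left at the level of a wish. The ``global exchange'' idea would need a concrete exchange lemma guaranteeing that an arbitrary minimal $2$-edge-connected subgraph of $E(\mcP)\cup S$ can be rewritten to contain a prescribed forest $F'$ at no extra cost, and you have not supplied one; the $K_{1,3}$ instance already shows this cannot be done link-for-link. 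The vertex-splitting alternative is closer to what one expects in a clean reduction, but as stated it has the same accounting problem: after splitting, the copies of a high-degree vertex must still be made $2$-edge-connected inside the PAP solution, and you have not explained why the black-box PAP algorithm can be assumed to pay for that with the ``dummy'' links rather than with links from $L$. As written, the proof is incomplete at exactly the point you identify.
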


Therefore, it suffices to give such an approximation algorithm for PAP.

\begin{lemma}[Lemma 3.5 in~\cite{grandoni2022breaching}]
\label{lem:old-PAP}
There is a $(\frac{7}{4}, \frac{7}{4})$-approximation algorithm for PAP.
\end{lemma}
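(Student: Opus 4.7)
The plan is to re-purpose the four-step framework developed in Sections~\ref{sec:structured}--\ref{sec:gluing} (reduction to structured instances, TPP-based starting solution, bridge-covering, gluing) with a simpler, uniform credit scheme that tracks a bi-criteria bound instead of a single-parameter bound. The key observation is that the extra budget $\frac{7}{4}(\opt - |\mathcal{P}|)$ in the $(\frac{7}{4}, \frac{7}{4})$-guarantee is tailored to absorb the cost of gluing 2-edge-connected components together after obtaining a 2-edge-cover relaxation; this decouples the cost of the 2-edge-cover (whose trivial lower bound is $|\mathcal{P}|$) from the additional cost of connectivity augmentation, which is captured by $\opt - |\mathcal{P}|$.

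First, I would establish an analog of \Cref{lem:structured:main} with $\alpha = \frac{7}{4}$ that preserves bi-criteria guarantees. Each reduction step (contraction of contractible subgraphs, splitting along path-, $P^2$-, or $C^2$-separators, and handling of degenerate paths) must be revisited to verify that both $\opt$ and $|\mathcal{P}|$ decrease by comparable amounts, so that a $(\frac{7}{4}, \frac{7}{4})$-approximation on structured sub-instances lifts to the same guarantee, up to a $(1 + O(\varepsilon))$ factor absorbed by choosing $\varepsilon$ small. For structured instances I would invoke Algorithms A, B, and C of Section~\ref{sec:initial-solution}, outputting the best of up to $|\mathcal{P}| + 2$ candidates.

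Second, I would replace the non-uniform credit scheme of Section~\ref{sec:initial-solution} (which distinguishes small, large, and degenerate-path-containing components and expensive leaves) by the uniform scheme $\frac{3}{2}$ credits per 2EC component, $\frac{3}{4}$ credits per bridge, $1$ credit per lonely leaf vertex, and $1$ credit per non-simple block. Under this scheme I would re-derive a factor-revealing LP analogous to \Cref{lem:starting-solution:bound:final}, now with the cost constraint $\cost(H) \leq \frac{7}{4}\opt + \frac{7}{4}(\opt - |\mathcal{P}|)$ in place of $\cost(H) \leq \apxr \cdot \opt$. Because the bi-criteria budget grows linearly in $\opt$, the LP constraints are strictly looser than those of Section~\ref{sec:initial-solution}, and a standard case analysis on the tracks in the optimum TPP solution should yield emptiness of the polytope. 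Finally, \Cref{lem:bridge-covering:main} and \Cref{lem:gluing:main} can be invoked essentially as-is to transform the starting solution into a feasible one, since both lemmas are phrased as preserving a generic cost invariant $\cost(H') \leq \cost(H)$; verifying that the simpler credits are still sufficient to pay for the pseudo-ears and good cycles is a routine check.

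The main obstacle is the factor-revealing LP re-analysis in the second step. One needs to verify emptiness of the polytope defined by the TPP parameters together with the lower bound $\opt_C \geq 2|\mathcal{P}| - \sum_{i \geq 1} \omega_i$ from \Cref{lem:starting-solution:bound:opt} and the new bi-criteria constraint. Since the $(\frac{7}{4}, \frac{7}{4})$-bound is strictly weaker than $\apxr \cdot \opt$ whenever $\opt$ is significantly larger than $|\mathcal{P}|$, I expect the LP analysis to be much simpler than the one in \Cref{lem:starting-solution:bound:final}: in the regime where $\opt$ is close to $|\mathcal{P}|$ the matching-based Algorithm~A already gives a near-optimal 2-edge-cover, while in the regime where $\opt - |\mathcal{P}|$ is large the extra term $\frac{7}{4}(\opt - |\mathcal{P}|)$ alone pays for the gluing overhead, so the two regimes can be handled by a clean case split rather than a full re-solve of the LP.
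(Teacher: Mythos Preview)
This lemma is not proved in the paper; it is simply cited from~\cite{grandoni2022breaching} as background for the FAP discussion in Section~\ref{sec:preliminaries:FAP-PAP}. The paper does, however, establish a closely related (and for the FAP application, stronger) bound of its own in that section: \Cref{lem:starting-solution:bound:credits:FAP2} gives a $(\tfrac{7}{4}+\varepsilon,1)$-approximation for PAP. Crucially, the paper's route to that bi-criteria bound is \emph{not} to simplify the credit scheme. It keeps the full non-uniform rules (A)--(E) intact and uses the extra term $\opt-n_{\rm comp}$ only at the \emph{starting-solution} stage (\Cref{lem:starting-solution:bound:credits:FAP}), via \Cref{lem:FAP:helper}, to compensate for giving each lonely leaf $\tfrac{1}{4}$ less initial credit. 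Bridge-covering and gluing are then invoked verbatim.

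Your proposal takes a different path---replacing (A)--(E) by a uniform scheme with $\tfrac{3}{2}$ credits per 2EC component and $1$ per lonely leaf---and this is where it breaks. The claim that \Cref{lem:bridge-covering:main} and \Cref{lem:gluing:main} go through ``essentially as-is'' is not a routine check; it fails. In the gluing step, a good cycle (\Cref{def:good-cycle}) must either contain two components of credit $2$ or shortcut a \emph{small} component. With a uniform $\tfrac{3}{2}$, a $2$-cycle between two \emph{large} components satisfies neither criterion, yet adding it costs $2$ links while freeing only $\tfrac{3}{2}$ net credit; the cost invariant is violated. Shortcuts cannot rescue this, since by definition they apply only to small components. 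Similarly, several bridge-covering cases (e.g.\ Case~1.2.1 in the proof of \Cref{lem:bridge-covering:simple}, and the expensive-leaf arguments throughout Section~\ref{sec:bridge-covering}) spend the extra $\tfrac{1}{4}$ credits from rules A(i)/A(ii); dropping them leaves those cases unpaid. The bi-criteria slack $\tfrac{7}{4}(\opt-|\mathcal{P}|)$ cannot be tapped here, because bridge-covering and gluing are purely local transformations that must each satisfy $\cost(H')\le\cost(H)$---the global budget is already spent at the starting-solution stage. If you want a bi-criteria proof via this framework, the correct move is the one the paper makes in Section~\ref{sec:preliminaries:FAP-PAP}: keep the full credit scheme and use the extra budget to bound $\cost$ of the starting solution.
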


Hence, by \Cref{lem:red-FAP-PAP} this implies a $(\frac{7}{4}, \frac{13}{4})$-approximation for FAP.
By setting $n_{\rm comp} = \delta \cdot  \opt$ for some $\delta \in [0, 1]$ and computing the minimum of $\{ \delta + 1 + \ln (2 - \delta), \frac{7}{4} + \frac{13}{4}(1- \delta) \} + \varepsilon$ one obtains the worst-case approximation ratio of $1.9973$ for FAP.
Similarly, one obtains the worst-case approximation ratio of $1.9913$ for PAP (by substituting $\frac{13}{4}$ with $\frac{7}{4}$ in the calculation).

\paragraph*{Improved Approximation Algorithm for FAP.}

We now show how to obtain \Cref{thm:FAP:main}.
The algorithm is the same as for \Cref{thm:PAP:main}, but we use a slightly different bound.
We did not optimize the bound and there might be further room for improvement.
We show that we can derive a different bound on the approximation ratio of the starting solution while still satisfying the credit scheme and all invariants (yet, for \Cref{invariant:credit} we have a different bound).
In particular, we show that we can also bound the cost of the starting solution $H$ by $\cost(H) \leq \frac 74 \cdot \opt + (\opt - n_{\rm comp})$. Using \Cref{lem:bridge-covering:main} and \Cref{lem:gluing:main}, we can then obtain a feasible solution $H = (V, E(\mcP) \cup S)$ with $|S| \leq \frac 74 \cdot \opt + (\opt - n_{\rm comp})$, i.e., the solution is also a $(\frac 74, 1)$-approximation for \PAP.
In fact, to obtain the bound on the starting solution, we only need Algorithm~B from \Cref{sec:initial-solution}.
Before we state this lemma, we need one additional lemma which is a slight adaptation of Lemma~5.3 from~\cite{grandoni2022breaching}.
\begin{lemma}[See also Lemma~5.3 from~\cite{grandoni2022breaching}]
\label{lem:FAP:helper}
    The number of lonely leaf vertices in the solution computed by Algorithm~B is at most $4 \cdot (\opt - n_{\rm comp})$.
\end{lemma}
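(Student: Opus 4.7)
The plan is to count the lonely leaf vertices of $H_B$ exactly in terms of the parameters $\beta_1$ and $\beta_2$ of Algorithm~B's output, and then derive the desired upper bound by combining this with the inequalities on $\opt$ and $\beta_1$ already established in \Cref{lem:starting-solution:bound:opt} and \Cref{lem:starting-solution:bound:ALGA-B-C-parameters}. Recall that since $G$ is a PAP instance, $n_{\rm comp} = |\mcP|$.

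First I would classify each path $P \in \mcP$ according to its role in $S_B$. If $P$ appears as the interior path $P''$ of some length-$3$ track $T$, then $P$ together with the link $u''v'' \in I(T)$ forms a cycle; combined with the two links $uw, wv$ of $Q(T)$ attaching the endvertices of the two boundary paths, this places all of $V(P)$ into a 2-edge-connected block, so $P$ contributes $0$ lonely leaves. If $P$ is a boundary path of some track $T \in S_B$ (of length $1$ or $3$) at endvertex $u$, but the other endpoint $u'$ of $P$ is not used by any track, then $u'$ has degree $1$ in $H_B$ and lies in a complex component, so $u'$ is a lonely leaf and contributes $1$. Finally, a path uninvolved in any track forms a trivial complex component with two lonely leaves, contributing $2$. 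Since tracks in $S_B$ are vertex-disjoint and each length-$1$ track uses two distinct boundary paths while each length-$3$ track uses at most two (possibly equal) boundary paths plus one interior path, this tally always collapses to
\begin{equation*}
\#\{\text{lonely leaves of } H_B\} \;=\; 2|\mcP| - 2\beta_1 - 4\beta_2,
\end{equation*}
as one can check by splitting $\beta_2$ into the sub-cases $P=P'$ and $P \neq P'$; the formula is identical in both.

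It then suffices to show $2|\mcP| - 2\beta_1 - 4\beta_2 \leq 4(\opt - |\mcP|)$, i.e., $\beta_1 + 2\beta_2 \geq 3|\mcP| - 2\opt$. Combining \Cref{prop:pap-relaxation} with \Cref{lem:starting-solution:bound:opt} yields $\opt \geq 2|\mcP| - \sum_{i \geq 1}\omega_i$, and the identity $\omega_0 + \sum_{i \geq 1} 2i\omega_i = 2|\mcP|$ with $\omega_0 \geq 0$ gives $\sum_{i \geq 1} i\omega_i \leq |\mcP|$. A short arithmetic manipulation then produces
\begin{equation*}
3|\mcP| - 2\opt \;\leq\; -|\mcP| + 2\textstyle\sum_{i\geq 1}\omega_i \;\leq\; \textstyle\sum_{i \geq 1}(2-i)\omega_i \;\leq\; \omega_1.
\end{equation*}
Finally, \Cref{lem:starting-solution:bound:ALGA-B-C-parameters} (Inequality~4) gives $\beta_1 \geq \omega_1$ for the iteration of Algorithm~B corresponding to the correct guess of $\omega_1^e$, so $\beta_1 + 2\beta_2 \geq \beta_1 \geq \omega_1 \geq 3|\mcP| - 2\opt$, concluding the proof.

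The only genuine subtlety will be the lonely-leaf accounting in the first step: one must treat the degenerate configurations of length-$3$ tracks (the two endpoints of $Q(T)$ lying on the same path, or sharing a path with a neighboring track's interior) carefully to confirm that the compact identity $2|\mcP| - 2\beta_1 - 4\beta_2$ holds uniformly. Once the count is pinned down, the remainder is a direct substitution into the linear-programming inequalities already proved.
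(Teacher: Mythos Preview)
Your proof is correct and takes a genuinely different route from the paper's. The paper argues directly about the optimal PAP solution: it fixes a maximal matching $M_{\text{OPT}} \subseteq \text{OPT} \cap \bar{L}$, bounds the number of ``bad'' links $\text{opt}_{\text{bad}} = |\text{OPT} \cap \hat{L}|$ via $\opt \geq n_{\rm comp} + \text{opt}_{\text{bad}}$, and combines this with a second counting inequality to conclude that at most $4(\opt - n_{\rm comp})$ leaves are unmatched by $M_{\text{OPT}}$, hence also by $B_1$. Your argument instead stays entirely within the TPP relaxation: you write the lonely-leaf count exactly as $2|\mcP| - 2\beta_1 - 4\beta_2$, then chain together \Cref{prop:pap-relaxation}, \Cref{lem:starting-solution:bound:opt}, the packing identity $\omega_0 + \sum_i 2i\omega_i = 2|\mcP|$, and inequality~4 of \Cref{lem:starting-solution:bound:ALGA-B-C-parameters} to obtain $\beta_1 \geq \omega_1 \geq 3|\mcP| - 2\opt$.

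Each approach has its merits. The paper's argument is self-contained and does not invoke the TPP machinery at all, which is why it can be lifted almost verbatim from~\cite{grandoni2022breaching}. Your argument is shorter and more systematic in this paper's context: it reuses the LP infrastructure already built in \Cref{sec:initial-solution} and needs no new counting lemma about $\text{OPT}$. One small remark: the ``subtlety'' you flag about degenerate length-$3$ tracks is in fact vacuous, since for a track $T$ with $|Q(T)|=2$ the links $uw, wv \in Q(T)$ force $u,v \notin \{u'',v''\}$ (as $u''w, wv'' \in E(\mcP)$, not $L$), so $|V^*(T)|=4$ always and the identity $2|\mcP| - 2\beta_1 - 4\beta_2$ holds without case analysis.
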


\begin{proof}
    The proof is almost identical to the proof of Lemma~5.3 from~\cite{grandoni2022breaching}.

Let $\text{OPT} \subseteq L$ be an optimal solution and let $M_{\text{OPT}} \subseteq \text{OPT} \cap \bar{L}$ be a maximal matching in $\text{OPT}$ between the endvertices of paths that contains no links from $\hat{L}$.
Recall that $B_1 \subseteq \bar{L}$ is a matching such that $|B_1| \geq |M_{\text{OPT}}|$, where $B_1$ is defined as in Algorithm~B.
Thus it suffices to show that at most $4 \cdot (\text{opt} - n_{\text{comp}})$ leaves of the forest $(V, E(\mcP))$ have no incident edge in the matching $M_{\text{OPT}}$.

First, we observe that every connected component of the forest $(V, E(\mcP))$ has at least two edges incident to it that have their other endpoint in a different connected component. 
Because links in $\hat{L}$ have both endpoints in the same connected component of the forest, this implies
\begin{equation}
\label{eq1}
\text{opt} \geq n_{\text{comp}} + \text{opt}_{\text{bad}},
\end{equation}
where $\text{opt}_{\text{bad}}$ denotes the number of links in $\text{OPT} \cap \hat{L}$. 
Let $V_{\text{leaf}} \subseteq V$ denote the set of leaves of the forest $(V, E(\mcP))$. Because every connected component of this forest is a path, we have
\[
2 \cdot n_{\text{comp}} = |V_{\text{leaf}}| = 2 \cdot |M_{\text{OPT}}| + \left| \{ v \in V_{\text{leaf}} : M_{\text{OPT}} \cap \delta (v) = \emptyset \} \right|.
\]
Using the maximality of $M_{\text{OPT}}$ and the fact that every leaf of $(V, F)$ is the endpoint of a link in $\text{OPT}$, this implies
\begin{align*}
\text{opt} & \ \geq |M_{\text{OPT}}| + \left| \{ v \in V_{\text{leaf}} : M_{\text{OPT}} \cap \delta (v) = \emptyset \} \right| - \text{opt}_{\text{bad}} \tag{2} \\
& \ = n_{\text{comp}} + \frac{1}{2} \left| \{ v \in V_{\text{leaf}} : M_{\text{OPT}} \cap \delta (v) = \emptyset \} \right| - \text{opt}_{\text{bad}}.
\end{align*}
Adding up the Inequalities~(1) and~(2), we obtain
\[
2 \cdot (\text{opt} - n_{\text{comp}}) \geq \frac{1}{2} \left| \{ v \in V_{\text{leaf}} : M_{\text{OPT}} \cap \delta (v) = \emptyset \} \right|.
\]
Thus, the number of unmatched leaves is at most $4 \cdot (\text{opt} - n_{\text{comp}})$.
\end{proof}

\begin{restatable}[]{lemma}{lemstartingsolutionboundcreditsfap}
    \label{lem:starting-solution:bound:credits:FAP}
    Let $S$ be a solution computed by Algorithm B.
    If each track of length $1$ receives a credit of $\frac 34$, each track of length $3$ a credit of $2$, each lonely leaf vertex a credit of $\frac 54$ if it is not expensive and $\frac 32$ if it is expensive, each degenerate path an additional credit of $\frac 12$, and we additionally have a credit of $\opt - n_{\rm comp}$, then we can redistribute the credits such that the credit rules (A)-(E) are satisfied.
    Furthermore, Invariants~\ref{invariant:degree-lonely-vertex} and~\ref{invariant:block-size} are satisfied.
\end{restatable}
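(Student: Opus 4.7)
The proof will closely mirror that of \Cref{lem:starting-solution:bound:credits}, with the only subtlety being how to absorb the $\tfrac14$ per-leaf deficit using the extra global credit of $\opt - n_{\rm comp}$.

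The plan is as follows. First, I will perform the identical credit redistribution as in \Cref{lem:starting-solution:bound:credits}: tracks of length $3$ contribute $\tfrac34$ to each of their two bridges and $\tfrac14$ to each of the (up to two) adjacent lonely leaves; every track of length $1$ keeps its $\tfrac34$ as the bridge credit (B); each complex component's credit (C) is assembled from a $\tfrac12$ contribution from each of its two lonely leaves; and the additional $\tfrac12$ credit per degenerate path is used to upgrade the small 2EC components containing a degenerate path from credit $\tfrac32$ to credit $2$, as needed by (E). This handles rules (B)--(E) and the expensive-leaf bonus (A)(i) exactly as before, because the $\tfrac14$ extra I hand to an expensive leaf under the new scheme is preserved verbatim.

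The only place where the new, weaker per-leaf allotment is felt is the base of rule (A): a non-expensive lonely leaf receives $\tfrac54$ instead of $\tfrac32$, and an expensive lonely leaf receives $\tfrac32$ instead of $\tfrac74$. In both cases the deficit is exactly $\tfrac14$ per lonely leaf vertex, all of it needed to cover the $\tfrac12$ contribution each leaf owes to its complex component's credit (C). Writing $\ell$ for the number of lonely leaf vertices in $H = (V, E(\mcP) \cup S)$, the global shortfall is therefore $\tfrac14 \ell$.

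The key step is to bound $\ell$. Because $S$ is produced by Algorithm~B, \Cref{lem:FAP:helper} yields $\ell \leq 4\,(\opt - n_{\rm comp})$, so the total deficit is at most $\opt - n_{\rm comp}$. This is precisely the additional credit allowed in the statement, so assigning it uniformly (e.g.\ $\tfrac14$ per lonely leaf) finishes the redistribution and verifies rules (A)--(E) in full.

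Finally, Invariants~\ref{invariant:degree-lonely-vertex} and~\ref{invariant:block-size} follow verbatim from the structural argument in \Cref{lem:starting-solution:bound:credits}, since only tracks of length $1$ or $3$ are used, simple blocks produced by length-$3$ tracks have degree exactly $2$ in $H^C$ and are non-adjacent (no two interior vertices of paths are merged), and every non-simple 2EC block in $H$ must subsume the vertex sets of at least two distinct paths by the very structure of the tracks. The only obstacle worth flagging is ensuring that the per-leaf $\tfrac14$ deficit is the \emph{only} new shortfall relative to the original credit proof; this is immediate because the new scheme differs from the old one exclusively in the lonely-leaf allocation, and the $\tfrac14$ bonus for expensive leaves is unchanged, so rule (A)(i) continues to be paid for locally as before.
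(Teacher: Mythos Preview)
Your proposal is correct and follows essentially the same approach as the paper: reduce to \Cref{lem:starting-solution:bound:credits}, observe that the only shortfall is a $\tfrac14$ credit per lonely leaf vertex, and cover this deficit using \Cref{lem:FAP:helper}, which bounds the number of lonely leaves by $4(\opt - n_{\rm comp})$. The paper's proof is slightly terser but the logic is identical.
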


\begin{proof}
    Note that the only difference between the statement of this lemma and \Cref{lem:starting-solution:bound:credits} is that here each lonely leaf vertex receives a credit of $\frac 54$ if it is not expensive and $\frac 32$ if it is expensive, while in \Cref{lem:starting-solution:bound:credits}, it receives an additional $+ \frac 14$ credit in both cases. 
    Furthermore, in the lemma statement here, there is an additional credit of $\opt - n_{\rm comp}$.
    Hence, following the proof of \Cref{lem:starting-solution:bound:credits}, we can redistribute all credits such that all invariants are satisfied and also the credit assignment works out, except for a missing $\frac 14$ credit for each lonely leaf vertex.
    We show that the additional credit of $\opt - n_{\rm comp}$ compensates for this. Indeed, \Cref{lem:FAP:helper} implies that the number of lonely leaf vertices in the solution computed by Algorithm~B is bounded by $4 \cdot (\opt - n_{\rm comp})$, and hence the additional $\opt - n_{\rm comp}$ credits compensate for the loss of $\frac 14$ credit for each lonely leaf vertex. This finishes the proof.
\end{proof}

Next, we give a different bound on the cost of our Algorithm. Here, we only need to consider the solution that is returned by Algorithm~B.

\begin{restatable}[]{lemma}{lemstartingsolutionboundcreditsfap2}
    \label{lem:starting-solution:bound:credits:FAP2}
    Let $S \subseteq L$ be a solution computed by Algorithm B and let $H= (V, E(\mcP) \cup S)$.
    We have $\cost(H) \leq (\frac{7}{4} + \varepsilon) \cdot \opt + (\opt - n_{\rm comp})$.
\end{restatable}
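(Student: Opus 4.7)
My plan is to mirror the factor-revealing LP analysis of \Cref{lem:starting-solution:bound:final}, specialized to Algorithm~B and equipped with the modified credit scheme supplied by \Cref{lem:starting-solution:bound:credits:FAP}. First I invoke \Cref{lem:starting-solution:bound:credits:FAP} to bound $\credits(H)$ in the canonical scheme of rules (A)--(E) by the total redistribution budget, namely $\tfrac{3}{4}$ per length-$1$ track, $2$ per length-$3$ track, $\tfrac{5}{4}$ per non-expensive lonely leaf vertex ($\tfrac{3}{2}$ per expensive one), $\tfrac{1}{2}$ per degenerate path, plus a global additive term $\opt - n_{\rm comp}$.

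Expanding $\cost(H) = |S| + \credits(H)$ with these budgets and the variables $\beta_1, \beta_2, \beta_0^e$ from \Cref{sec:initial-solution}, and using the exact identity $\ell = 2|\mcP| - 2\beta_1 - 4\beta_2$ for the number $\ell$ of lonely leaves of $H$ (each length-$1$ track covers two endpoints of $V^*(\mcP)$, each length-$3$ track covers four, and remaining endpoints are precisely the leaves of the trivial path-components), a direct calculation reveals that the contribution of $\beta_2$ cancels completely, yielding
\[
\cost(H) \leq \tfrac{5}{2}|\mcP| - \tfrac{3}{4}\beta_1 + \tfrac{1}{4}\beta_0^e + \tfrac{1}{2}\varepsilon' |\mcP| + (\opt - n_{\rm comp}).
\]
Since the additive term $(\opt - n_{\rm comp})$ appears on both sides of the target inequality, it suffices to show $\tfrac{5}{2}|\mcP| - \tfrac{3}{4}\beta_1 + \tfrac{1}{4}\beta_0^e + \tfrac{1}{2}\varepsilon' |\mcP| \leq (\tfrac{7}{4} + \varepsilon)\opt$.

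I then encode this residual inequality as a factor-revealing LP using only the Algorithm-B-relevant constraints from \Cref{lem:starting-solution:bound:opt} and \Cref{lem:starting-solution:bound:ALGA-B-C-parameters}, namely $\opt \geq 2|\mcP| - \omega_1 - \omega_2 - \omega_3$, $\omega_0 + 2\omega_1 + 4\omega_2 + 6\omega_3 \leq 2|\mcP|$, $\beta_1 \geq \omega_1$, $\beta_0^e \leq \beta_1$, $\beta_0^e \leq \omega_0$, non-negativity, and $\varepsilon' \leq \varepsilon$ (which follows from property~(P6) of structured instances, noting that $n_{\rm comp} = |\mcP|$ for PAP). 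A standard LP-solver call then certifies that for every sufficiently small fixed constant $\varepsilon > 0$, the reverse inequality defines an empty polyhedron, completing the proof.

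The main technical concern is whether the ratio $\tfrac{7}{4}$ is actually attainable using only Algorithm~B, given that the original LP of \Cref{lem:starting-solution:bound:final} crucially combined the guarantees of Algorithms~A, B, and~C to drive the ratio all the way down to $\apxr$. A sanity check at the worst-case extreme point $\omega_1 = |\mcP|$ (all optimum tracks are length-$1$ matching edges) is reassuring: the constraints $\beta_1 \geq \omega_1 = |\mcP|$ and $\beta_0^e \leq \omega_0 = 0$ together give LHS $= (\tfrac{7}{4} + \tfrac{\varepsilon'}{2})|\mcP|$ and RHS $\geq (\tfrac{7}{4} + \varepsilon)|\mcP|$, so $\varepsilon' \leq \varepsilon$ from~(P6) closes the gap---exactly matching the tight regime of the $(\tfrac{7}{4}, \tfrac{7}{4})$-approximation of \Cref{lem:old-PAP}, which is precisely the ratio that the additive slack $(\opt - n_{\rm comp})$ is designed to absorb.
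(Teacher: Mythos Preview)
Your proposal is correct and follows essentially the same approach as the paper: you invoke \Cref{lem:starting-solution:bound:credits:FAP}, derive the same expression $\cost(H) \leq \tfrac{5}{2}|\mcP| - \tfrac{3}{4}\beta_1 + \tfrac{1}{4}\beta_0^e + \tfrac{1}{2}\varepsilon'|\mcP| + (\opt - n_{\rm comp})$ after observing that $\beta_2$ cancels, and then combine it with the same constraints $\opt \geq 2|\mcP| - \sum_i \omega_i$, $\omega_0 + \sum_i 2i\,\omega_i \leq 2|\mcP|$, $\beta_1 \geq \omega_1$, and $\omega_0 \geq \beta_0^e$. The only difference is that you defer the final step to an LP solver (as in \Cref{lem:starting-solution:bound:final}), whereas the paper---after remarking that the solver route also works---eliminates $\omega_2$ by hand from the packing constraint to obtain $\opt \geq \tfrac{3}{2} - \tfrac{1}{2}\beta_1 + \tfrac{1}{4}\beta_0^e$ and then directly checks that the ratio is maximized at $\beta_1 = 1$, $\beta_0^e = 0$, yielding exactly $\tfrac{7}{4}$.
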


\begin{proof}
    Using the definitions established in \Cref{sec:initial-solution}, and following the proof of \Cref{lem:starting-solution:bound:ALGA-B-C}, we first bound $\cost(H)$ as follows.
    Note that the number of lonely leaf vertices in $H$ is bounded by $2|\mcp| - 2 \beta_1 - 4 \beta_2$.
    Furthermore, we assign a credit of $\frac 34$ for each of the $\beta_1$ many tracks containing one link, a credit of $2$ for each of the $\beta_2$ many tracks containing three links and a credit of $\frac{5}{4}$ for each lonely leaf vertex of $H_A$ and an additional credit of $\frac{1}{4}$ for each expensive lonely leaf vertex.
    Furthermore, observe that we give an additional credit of $\frac{1}{2}$ to each degenerate path.
    Finally, we have an additional credit of $\opt - n_{\rm comp}$.
    Hence, we can bound $\cost(H)$ by
    \begin{align*}\cost(H) & \ = |S| + \credits(H) \\
    & \ \leq \beta_1 + 3 \beta_2 + \frac 54 (2|\mcp| - 2 \beta_1 - 4 \beta_2) + \frac 34 \beta_1 + 2 \beta_2 + \frac 14 \beta_0^e + \frac{1}{2}\varepsilon' |\mcP| + \opt - n_{\rm comp} \\
    & \ = \frac 52 |\mcp| - \frac 34 \beta_1 + \frac 14 \beta_0^e + \frac{1}{2}\varepsilon' |\mcP| + \opt - n_{\rm comp}  \ \ . 
    \end{align*}  
    We could now just put this inequality together with the other inequalities into a solver as we did in \Cref{sec:initial-solution}, and obtain the desired result. However, in this case we can even simply compute the bound directly, which we do in the following.
    
    Observe that $\beta_2$ is canceled out in the above inequality. 
    Hence, the number of tracks containing 3 links in $S$ does not matter.
    Furthermore, for obtaining a valid bound on $\opt$, this means that the number of tracks of size 3 in the optimum solution can be chosen freely and hence $\omega_3$ is no longer needed.
    Following the bounds we established in \Cref{sec:initial-solution}, we have the following additional inequalities:
    \begin{enumerate}
        \item $\opt \geq 2 - \omega_1 - \omega_2$,
        \item $\omega_0 + 2\omega_1 + 4 \omega_2 \leq 2$,
        \item $\beta_1 \geq \omega_1$, and
        \item $\omega_0 \geq \beta_0^e$
    \end{enumerate}
    Note that all variables are non-negative.
    Hence, since the second inequality is always satisfied with equality in an optimum solution, we have 
    $$ \omega_2 = \frac 12 - \frac 12 \omega_1 - \frac 14 \omega_0 \ .$$
    Plugging this into the first inequality, and using the third and fourth inequallity, we obtain 
    $$ \opt \geq 2 - \omega_1 - \omega_2 = \frac 32 - \frac 12 \omega_1 + \frac 14 \omega_0 \geq \frac 32 - \frac 12 \beta_1 + \frac 14 \beta_0^e \ .$$
    Now, we have the following inequalities:
    \begin{enumerate}
        \item $\opt \geq \frac 32 - \frac 12 \beta_1 + \frac 14 \beta_0^e$, and
        \item $\cost(H) \leq \frac 52 |\mcp| - \frac 34 \beta_1 + \frac 14 \beta_0^e + \frac{1}{2}\varepsilon' |\mcP| + \opt - n_{\rm comp}$
    \end{enumerate}
    Leaving out the additive $\frac{1}{2}\varepsilon' |\mcP| + \opt - n_{\rm comp}$ in the latter bound, the ratio of $\frac{\cost(H) - \frac{1}{2}\varepsilon' |\mcP| - \opt + n_{\rm comp}}{\opt}$ is maximized for $\beta_0^e = 0$ and $\beta_1 = 1$.
    Hence, since $\frac{1}{2}\varepsilon' |\mcP| \leq \frac 12 \varepsilon \opt$, we get 
    $$\cost(H) \leq (\frac{7}{4} + \varepsilon) \cdot \opt + (\opt - n_{\rm comp}) \ ,$$
    proving our claim.
\end{proof}

We are now ready to prove \Cref{thm:FAP:main}.

\begin{proof}[Proof of \Cref{thm:FAP:main}]
\Cref{lem:starting-solution:bound:credits:FAP2} together with \Cref{lem:structured:main}, \Cref{lem:bridge-covering:main}, and \Cref{lem:gluing:main} imply a $(\frac 74 + \varepsilon, 1)$-approximation for \PAP, which together with \Cref{lem:red-FAP-PAP} implies a $(\frac 74 + \eps, \frac{10}{4} + \varepsilon)$-approximation for FAP.
This bound together with \Cref{lem:tap} implies a $\apxrf$-approximation for FAP by computing the minimum of $\{ \delta + 1 + \ln (2 - \delta), \frac 74 + \frac{10}{4}(1- \delta) \} + \varepsilon$, which is attained for $\delta \approx 0.902$.
\end{proof}

\section{Conclusion}
\label{sec:conclusion}

In this we paper we designed a $\apxr$-approximation for the Path Augmentation Augmentation problem, implying a $\apxrf$-approximation for the Forest Augmentation using previous results of~\cite{grandoni2022breaching}.
We established several key ingredient: (1) a $(\frac{7}{4}+ \varepsilon)$-approximation preserving reduction to structured instances of PAP; (2) a new relaxation of the problem together with a related packing problem, which we used to obtain a good starting solution using a factor-revealing LP; (3) leveraging concepts of other connectivity augmentation problems to turn the starting solution into a feasible solution. 

There might be room for improvement, in particular in bounding the cost of the starting solution. Obtaining a better bound on the cost of the starting solution, while still maintaining the credit invariants, directly gives an improved approximation ratio for \PAP, and therefore also FAP. However, to improve the approximation ratio below $\frac{7}{4}$ requires some new insights.

The previous algorithm for FAP and PAP introduced in~\cite{grandoni2022breaching} heavily relied on techniques used for the Tree Augmentation Problem (TAP), without which a better-than-2 approximation would not be possible, neither for FAP nor for PAP.
Our algorithm for PAP is independent of these techniques. Hence, we hope that the techniques we introduced might lead to further improvements for PAP and FAP without relying on the techniques for TAP, which can be used to obtain better approximation algorithms.
In particular, we believe that our reduction to structured instances and the new relaxation 2ECPC and its corresponding packing problem TPP might be of independent interest.

\bibliography{bib.bib}

\end{document}